\documentclass[12pt]{article}
\usepackage[a4paper]{geometry}
\usepackage[utf8]{inputenc}
\usepackage[T1]{fontenc}
\usepackage{lmodern}
\usepackage[british]{babel}
\usepackage{csquotes}
\usepackage[style=alphabetic]{biblatex}
\bibliography{References.bib}
\usepackage{enumitem}
\usepackage{ebproof}
\usepackage{hhline}
\usepackage{ circuitikz }

\usepackage{amsmath}
\usepackage{amsfonts}
\usepackage{amssymb}
\usepackage{ntheorem}
\usepackage{mathrsfs}
\usepackage{stmaryrd}
\usepackage{proof}
\usepackage{phonetic}
\usepackage{upgreek}
\usepackage{tipa}
\usepackage{graphicx}
\usepackage{yhmath}
\usepackage{mathdots}
\usepackage{MnSymbol}
\usepackage{calc}
\usepackage{layout}
\usepackage{thmtools}
\usepackage{fancyhdr}
\usepackage{longtable}
\usepackage[dvipsnames]{xcolor}

\usepackage{tikz}
\usetikzlibrary{shapes, arrows.meta}
\usepackage{tabularx,colortbl}
\usepackage[twopage]{rotating}
\usepackage{musicography}
\usepackage{multirow}
\usepackage{diagbox}
\definecolor{oxfordblue}{RGB}{4,30,66}
 
\allowdisplaybreaks

\theorembodyfont{\fontshape{rm}}

\newtheorem{definition}{Definition}
\newtheorem{theorem}{Theorem}
\newtheorem{lemma}{Lemma}
\newtheorem{corollary}{Corollary}
\newtheorem{conjecture}{Conjecture}

\newtheorem{example}{Example}

\renewtheorem{definition}{Definition}
\newtheorem*{proof}{Proof}

\newcommand{\F}{\mathbf{\mathsf{F}}}

\makeatletter
\providecommand{\leftsquigarrow}{%
  \mathrel{\mathpalette\reflect@squig\relax}%
}
\newcommand{\reflect@squig}[2]{%
  \reflectbox{$\m@th#1\rightsquigarrow$}%
}
\makeatother

\usepackage[flushleft]{threeparttable}

\usepackage{setspace}
\usepackage{tcolorbox}
\usepackage{turnstile}

\usepackage{hyperref}
\title{Inference-Behaviour Semantics for All$^\ast$ Connectives in Two-Dimensional Sequent Calculi}
\author{Sophie Nagler\footnote{Institute for Logic, Language and Computation, University of Amsterdam, and Arché Research Centre, University of St Andrews}}
\date{
\today}

\begin{document}
\maketitle
\begin{abstract}
Inference-behaviour semantics (I-bS) is a new approach to proof-theoretic semantics, grounded in two inferentialist principles: (1) the use of an expression in reasoning determines its meaning, and (2) a connective is defined by its operational rules. I-bS operationalises these ideas by measuring the syntactic use of a connective in the proof of its definability, against a substructurally minimal derivability relation. I-bS thereby gives the meaning of a connective in terms of its semantic clause, i.e. minimal substructural rule pair.

This paper validates and verifies I-bS by applying it to all $10,816$ connective rule pairs that can be formulated in two-dimensional sequent calculi using at most two premiss sequents and at most two active formulae. As a result, we find semantic clauses for exactly $21$ meaningful connectives, namely bottom, top, two negations (intuitionistic and dual-intuitionistic), group and lattice conjunction, disjunction and implication, as well as their converses and inverses.

We use these results to precisely map the semantic interrelations among the connectives, across linear, classical, intuitionistic, dual-intuitionistic, minimal, and lattice logic. Most notably, we find that intuitionistic negation, disjunction and implication each capture half of the meaning of their classical counterparts.
\end{abstract}

\section{Research aim: formalising, validating and verifying inference-behaviour semantics}

Proof-theoretic semantics (P-tS) is a family of methodologies in formal semantics \cite[e.g.][]{wansing2000idea, francez2015proof, sep-proof-theoretic-semantics}. It rests on three main pillars. First, P-tS conceptualises `the meaning of a word [as] its use in the language' \cite{wittgenstein1953philosophische}, thus connecting to the wider programme of \textit{inferentialism} \cite[e.g.][]{ dummett1991logical, brandom1994making, incurvati2023reasoning}. Second, it follows Gentzen in viewing the formal rules of a connective as defining this use \cite{gentzen1935auntersuchungen}. Third, P-tS assumes that formal logical proofs capture (possibly idealised or normative) actual reasoning. Thus, P-tS investigates the meaning of a logical connective by turning to its use in proofs as encoded by its formal rules. 

In this paper, we focus on \textit{inference-behaviour semantics (I-bS)}, a recent approach in the P-tS paradigm \cite{nagler2026measuring}. I-bS operationalises the inferential use of a connective by measuring its \textit{minimal inference behaviour}, i.e. the occurrence pattern the connective exhibits when we prove its \textit{definability} using only minimal structural resources and no other connectives. Although initial results for \textbf{LK}-style sequent calculi are promising \cite{nagler2026measuring}, I-bS currently lacks both a systematic formal implementation and a detailed technical assessment of its broader potential.

This paper aims to remedy these shortcomings by applying I-bS to all $10,816$ connectives whose operational rules can be formulated in two-dimensional sequent calculi with at most two premiss sequents and at most two active formulae. In this way, we seek to minimise design bias in favour of any specific logic. Our testing environment, the \textit{minimal derivability relation}, consists of sequent rules for \textsc{Identity} over primitive formulae and context-differentiating \textsc{Cut}, both defined over multisets. \textit{Conservativity} and \textit{uniqueness} serve as our notion of definability, in line with \cite{belnap1962tonk,nagler2026measuring}.

Among the $150$ connectives that are definable in this minimal setting, we identify $21$ as minimally meaningful, in the sense that they have distinct minimal inference behaviour and minimal premiss types and frameworks. Each of them corresponds to a familiar connective, specifically, top $\top$, bottom $\bot$, two forms of negation $\neg/\sim$---each filling half of the inferential role of classical negation---converse negation $\neg^{cv}$ (`the trivial connective'), additive conjunction $\sqcap$, multiplicative/group conjunction $\otimes$, additive/lattice disjunction $\sqcup$, multiplicative/group disjunction $\oplus$ and each of their converses $^{cv}$, as well as additive right-implication $\rightsquigarrow$, multiplicative right-implication $\rightarrow$, additive left-implication $\leftsquigarrow$, multiplicative left-implication $\leftarrow$, and each of their inverses $^{iv}$ (see Table 1).
\begin{table}[h]
    \centering
    \begin{threeparttable}
    \begin{tabular}{| c | c | c | c | c | c | c | c |}
    \hline
        \multicolumn{4}{|c|}{$\bot$} & \multicolumn{4}{|c|}{$\top$}  \\ \hline
        \multicolumn{4}{|c|}{$\neg$/$\sim$\tnote{\textcolor{gray}{\ddag}}} & \multicolumn{4}{|c|}{$\neg ^{cv}$} \\ \hline
        $\otimes$ &  $\otimes ^{cv}$ & $\oplus $ & $\oplus ^{cv}$&
        $\rightarrow$ &  $\rightarrow ^{iv}$ & $\leftarrow $ & $\leftarrow ^{iv}$ \\ \hline
        $\sqcap$ &  $\sqcap ^{cv}$ & $\sqcup $ & $\sqcup ^{cv}$ &
        $\rightsquigarrow$ &  $\rightsquigarrow ^{iv}$ & $\leftsquigarrow $ & $\leftsquigarrow ^{iv}$ \\ \hline
    \end{tabular}
    \begin{tablenotes}
    \footnotesize
        \item[\textcolor{gray}{\ddag}]            \textcolor{gray}{two \textit{prima facie} identical rule pairs with different frameworks}
    \end{tablenotes}
    \end{threeparttable}
    \caption{results overview, $21$ minimally meaningful connectives}
\end{table}

Considering the vast and unstructured input set of merely formulable operational rules, with no reference to specific calculi, our results \textit{verify} I-bS as a formal theory of connective meanings: relative to our notion of definability and our minimal derivability relation, it yields primitive meanings for \textit{exactly} and \textit{only} these standard connectives. Moreover, the resulting picture leads to highly parsimonious semantic explanations for use-differences across logics: for instance, we can explain the differences in connective use between classical and (classical) linear logic in terms of \textit{conflating} semantically distinct connectives such as $\otimes$ and $\sqcap$. The same holds for differences between classical and intuitionistic logic: e.g. classical `and' combines the meanings of $\otimes$ and $\sqcap$, while intuitionistic `and' only has the latter meaning. Hence, the results of this exploratory study \textit{validate} I-bS by showing that it meets its main conceptual promise: logic-independent semantic explanations for the interrelations of the connectives \cite{nagler2026measuring}.

In \S 2, we justify our choice of I-bS by elaborating its advantages over the main alternatives in P-tS. In \S 3, we then specify our object of inquiry, i.e. all$^*$ connectives formulable in two-dimensional sequent calculi. In \S 4, we introduce the methodology of I-bS in general and our design choices in particular. In \S5, we generate the relevant data in the form of \cite{belnap1962tonk}-style definability proofs---or undefinability proofs---for each connective. In \S6, we analyse these data, taking note of the minimal inferential resources needed in each proof such that it succeeds. Finally, we present our results in \S7 and discuss their semantic significance in \S8.

\section{Motivation: a substructural approach to proof-theoretic semantics}

We first introduce standard approaches to P-tS in relation to \textit{model-theoretic semantics (M-tS)}, the dominant methodology in formal semantics. We then justify our alternative choice of I-bS by outlining its overarching ideas and relative benefits.

M-tS gives the truth values of the non-logical components of a fixed language relative to a \textit{model} $\mathfrak{M}$. This semantic base is then extended by specifying how the logical vocabulary contributes to the truth conditions of complex formulae. For example, via a \textit{satisfaction-in-$\mathfrak{M}$ relation} $\Vdash_\mathfrak{M}$, we give \textit{model-independent semantic clauses} for the connectives and recursively build up the semantic values of connective-containing expressions. This results in an \textit{entailment relation} $\models$, defined as \textit{truth-preservation over all models}, which we prove to be \textit{sound} and \textit{complete} relative to the derivability relation of a proof system $\vdash$ (see Figure 1).
\begin{figure}[h]
    \centering
    \caption{model-theoretic semantics (M-tS)}
    \begin{tikzpicture}
        \draw (1.5,0) node[anchor=north]{proof theory};
        \draw (1.5,3) node[anchor=south]{$\vdash$};
        \draw (0,0) -- (3,0) -- (3,4) -- (0,4) -- (0,0);
        \draw[->] (3,2.75) -- (5,2.75);
        \draw (4,2.75) node[anchor=south]{sound};
        \draw[<-] (3,1.25) -- (5,1.25);
        \draw (4,1.25) node[anchor=south]{complete};
        \draw (6.5,0) node[anchor=north]{model theory};
        \draw (5,0) -- (8,0) -- (8,4) -- (5,4) -- (5,0);
        \draw[] (5,1) -- (8,1);
        \draw[->] (6.5,1) -- (6.5,3);
        \draw (6.5,2) node[anchor=west]{$\Vdash_\mathfrak{M}$};
        \draw (6.5,3) node[anchor=south]{$\models$};
        \draw (6.5,0.5) node{model $\mathfrak{M}$};
    \end{tikzpicture}
\end{figure}

This general schema is mirrored by the two most prominent current approaches to P-tS, \textit{proof-theoretic validity (P-tV)} \cite[e.g.][]{prawitz1971ideas, prawitz1973towards, prawitz1974idea, schroeder2006validity} and \textit{base-extension semantics (B-eS)} \cite[e.g.][]{sandqvist2015base, gheorghiu2024proof}.\footnote{We bracket more philosophically rooted bilateral approaches in this paper \cite[e.g.][]{incurvati2023reasoning, ayhan2023introductiona, ayhan2023introductionb}.} For instance, B-eS starts from a (semantic) \textit{base} $\mathfrak{B}$, i.e. a set of rules representing assertion conditions for the non-logical part of the vocabulary. This base defines a calculus for atomic sentences and the corresponding basic derivability relation $\vdash_{\mathfrak{B}}$. Using an \textit{assertability-in-$\mathfrak{B}$ relation} $\Vdash_\mathfrak{B}$, we extend $\vdash_{\mathfrak{B}}$ to the full language. Once again, this is built up recursively via \textit{base-independent semantic clauses}, one for each logical operator. Validity is defined as \textit{assertability-preservation across all bases}: we can read `$\Gamma \models \Delta$' as `given $\Gamma$, one can assert $\Delta$, in every base $\mathfrak{B}$' ($\Gamma\Vdash_\mathfrak{B} \Delta$) (see Figure 2).\footnote{The story for P-tV is somewhat more complex, yet some direct parallels remain, e.g. between models $\mathfrak{M}$/bases $\mathfrak{B}$ and atomic systems $\mathfrak{S}$. The following critiques of B-eS largely apply here as well.}
\begin{figure}
    \centering
    \caption{base-extension semantics (B-eS)}
    \begin{tikzpicture}
        \draw (1.5,0) node[anchor=north]{proof theory};
        \draw (1.5,3) node[anchor=south]{$\vdash$};
        \draw (0,0) -- (3,0) -- (3,4) -- (0,4) -- (0,0);
        \draw[->] (3,2.75) -- (5,2.75);
        \draw (4,2.75) node[anchor=south]{sound};
        \draw[<-] (3,1.25) -- (5,1.25);
        \draw (4,1.25) node[anchor=south]{complete};
        \draw (6.5,0) node[anchor=north]{proof theory};
        \draw (5,0) -- (8,0) -- (8,4) -- (5,4) -- (5,0);
        \draw[] (5,1) -- (8,1);
        \draw[->] (6.5,1) -- (6.5,3);
        \draw (6.5,2) node[anchor=west]{$\Vdash_{\mathfrak{B}}$};
        \draw (6.5,3) node[anchor=south]{$\models$};
        \draw (6.5,0.5) node{base $\mathfrak{B}$};
    \end{tikzpicture}
\end{figure}

M-tS and B-eS share a form of \textit{globalism}: they generally do not allow us to compare the meanings of logical connectives \textit{across different logics}.
\begin{example}\

\noindent The B-eS clause for fusion $\otimes$  in intuitionistic (multiplicative linear) logic is
\[
\Vdash^{\Gamma_{At}}_\mathfrak{B} A\otimes B \text{ iff, for all }\mathfrak{C}\supseteq \mathfrak{B}, p\text{ atomic and }\Delta_{At}, \text{ if }A,B\Vdash^{\Delta_{At}}_\mathfrak{C}p\text{ then }\Vdash^{\Gamma_{At}, \Delta_{At}}_\mathfrak{C}p
\]
\cite{gheorghiu2024proof}, whereas its counterpart in classical (linear) logic  is:
\[
\Vdash^{\Gamma_{At}}_\mathfrak{B} A\otimes B \text{ iff, for all }\mathfrak{C}\supseteq \mathfrak{B}, \Delta_{At}, \text{ if }A,B\Vdash^{\Delta_{At}}_\mathfrak{C}\bot\text{ then }\Vdash^{\Gamma_{At}, \Delta_{At}}_\mathfrak{C}\bot
\]
\cite{barroso2025proof}, where $\Gamma_{At}, \Delta_{At}$ are fixed multisets of atoms.\footnote{These multisets help keep book of resource use in the linear setting. Their presence does not bear on our discussion.} Crucially, the latter takes a detour via $\bot$, which one must include in one's base. The former does not require this constraint, allowing for a more direct definition.
\end{example}

Clearly, these semantic clauses are different: they give incompatible use instructions for $\otimes$. Given compositionality and the recursive definition of $\Vdash_\mathfrak{M}/\Vdash_\mathfrak{B}$, the use of any complex expression containing the connective differs between the calculi. In Quine's words, if we \textit{change logic}, we also \textit{change meaning} \cite{quine1986philosophy}.

Consequently, we find ourselves struggling to give semantic explanations of the differences in connective use between the two calculi---beyond stating \textit{that} the meaning is different. We are unable to distinguish which use-phenomena result from semantic features of a connective itself, which derive from interactions between different connectives, and which derive from other structural features of the derivability relation.

This stands in tension with the inferentialist underpinnings of P-tS. In Priest's words:
\begin{quote}
    `If we give different truth conditions for the connectives, we are giving the formal connectives different meanings. When we apply the logics to vernacular reasoning we are, therefore, giving different theories of the meanings of the vernacular connectives.' \cite[204]{priest2005doubt}
\end{quote}
Originally levelled at Beall and Restall's supposedly-meaning-invariant logical pluralism \cite{beall2000logical, beall2001defending, beall2006logical}, Priest's critique is no less relevant here. If connectives have different proof-theoretic semantic clauses, they have different assertability conditions. Yet, P-tS assumes that formal proofs represent actual reasoning and that connective rules define connective use (see \S1). Hence, a connective's assertability conditions---the rule guiding its use---is an (idealised) model of how we use the corresponding vernacular connective in reasoning. If we were to use B-eS to model argumentative exchanges, e.g. between classical and intuitionistic (linear) reasoners, we would find that their logical disagreements would be reduced to different meanings of `and', `or', `if', etc. in their arguments. By providing different assertability conditions, B-eS renders logical disagreements primarily into communication breakdowns. Although some philosophers argue that meaningful disagreement can be possible even if word meanings differ \cite{chalmers2011verbal, hjortland2022disagreement}, this at least casts doubt on whether B-eS can be a suitable choice for the aim of this paper: modelling connective-meanings across logics.

In \cite{nagler2026measuring}, Nagler responds with an alternative P-tS framework, dubbed \textit{inference-behaviour semantics (I-bS)}, which avoids these concerns by giving \textit{calculus-independent} semantic clauses. Instead of presenting them in a proof-theoretic construction designed to parallel the proof system under investigation, she proposes to situate them \textit{within} the proof system. This takes the form of a \textit{substructural kernel} of a sequent calculus, which contains one set of substructural operational rules per semantically distinct connective in the system, its \textit{semantic clause}. To obtain this semantic clause, one must isolate the inferential resources needed to define the corresponding connective in a suitable connective-free and structurally minimal fragment of the proof system (more on these fragments in \S4.3). To re-emerge the full calculus, one enriches the kernel with the structural resources removed during this reduction. This process of reduction and emergence of structural properties serves as the core `sanity check' of I-bS in lieu of the traditional proof of soundness and completeness, yet playing an analogous role (Figure 3).

\begin{figure}
    \centering
    \caption{inference-behaviour semantics (I-bS)}
    \begin{tikzpicture}
        \draw (6.5,0) node[anchor=north]{proof theory};
        \draw (5,0) -- (8,0) -- (8,4) -- (5,4) -- (5,0);
        \draw[<-] (6.25,1) -- (6.25,3);
        \draw (6.25,2) node[anchor=east, fill=white]{reduction};
        \draw[->] (6.75,1) -- (6.75,3);
        \draw (6.75,2) node[anchor=west, fill=white]{emergence};
        \draw (6.5,3) node[anchor=south]{$\vdash$};
        \draw (6.5,0.5) ellipse (1.45cm and 0.45cm);
        \draw (6.5,0.5) node{semantic kernel};
    \end{tikzpicture}
\end{figure}

For instance, the semantic clause for $\otimes$ takes the form of the sequent rule pair:
\[
\Bigg \langle
\begin{prooftree}
    \hypo{A,B\vdash C}
    \infer1[L$\otimes$]{A\otimes B\vdash C}
\end{prooftree},
\begin{prooftree}
    \hypo{C\vdash A}
    \hypo{D\vdash B}
    \infer2[R$\otimes$]{C,D\vdash A\otimes B}
\end{prooftree}
\Bigg \rangle.
\]
We may read semantic clauses using a Sandqvist-style hypothetical reading of sequents \cite{sandqvist2015base}. For instance, one may read R$\otimes$ as: `If we can assert $A$, given $C$, and if we can assert $B$ given $D$, we may assert $A\otimes B$, given both $C$ and $D$.'

Crucially, this semantic clause gives the meaning of $\otimes$ across a wide range of logics, including classical (\textbf{LK}) and intuitionistic (\textbf{LJ}) logic as well as their linear logic counterparts. Each calculus emerges from the same semantic kernel of the $\otimes$ clause (together with clauses for the other connectives) by adding the relevant structural framework. For instance, to obtain \textbf{LJ}, we drop set-theoretic cardinality constraints on the left of `$\vdash$', and for \textbf{LK}, we additionally remove constraints on the right. In either case, we add structural rules for a fully Tarskian derivability relation (\textsc{Identity}, \textsc{Cut}, \textsc{Weakening}, \textsc{Contraction}), or just for a linear derivability relation if we want to reach the corresponding linear calculi (\textsc{Identity}, \textsc{Cut}). Technical narratives of this kind can already be found in \cite{gentzen1935auntersuchungen, gentzen1935buntersuchungen}. However, Nagler builds on \cite{restall2014pluralism, dicher2016proof} and the ability of the sequent calculus to make structural resources explicit to tell a new \textit{semantic} story \cite{nagler2026measuring}. Equating a connective with its definitional core---a move in Gentzenian spirit---she avoids the \textit{change of logic, change of meaning} problem. 

However, this is not the only benefit I-bS has over B-eS. First, B-eS appears to favour intuitionistic logic by design, posing a potential problem for logicians with pluralistic or non-constructive leanings. This is largely due to its reliance on natural deduction proof systems. For instance, the semantic clauses for the connectives in \cite{sandqvist2015base}---the basis of much current B-eS research---largely follow the corresponding standard natural deduction elimination clauses. As we would expect in the absence of modifications such as double negation elimination, the resulting validity relation is that of sentential intuitionistic logic. Classical modifications are more difficult to implement, requiring more resources or yielding less stable results \cite[e.g.][]{sandqvist2009classical, makinson2014inferential, barroso2025proof}.\footnote{There are recent advances towards addressing this issue, intriguingly shifting towards more sequent-oriented technology \cite{barroso2025proof, barroso2025sequent}.}

In contrast, I-bS does not display such a bias. Whether we take a classical, intuitionistic, or substructural starting point, the reduction follows the same procedure and yields results specific to the proof system under study. This shows the ecumenical potential of sequent calculus formalisations underlying I-bS and their aptitude for tracking structural resource expenditure. This makes I-bS a particularly suitable candidate for modelling logical disagreements and, relatedly, meaning-invariant logical pluralisms, in line with its predecessors \cite{dicher2016proof, ferrari2021proof}. At the same time, left- and right-rules are treated on par and considered in tandem. As a result, debates about the priority of introduction or elimination rules are largely superfluous in this context, as are qualms about the `hypothetical character' of disjunction rules \cite[cf.][]{sandqvist2015base}.

Finally, I-bS is notably more ontologically frugal than B-eS, P-tV or M-tS, giving semantics with resources present in the (sub-)structure of the calculus under investigation rather than requiring any model- or proof-theoretic parallel constructions.

These merits provide sufficient motivation to put I-bS to the test. Its core strength lies in its modular, logic-independent semantic clauses. What happens, then, if we remove entirely the ties between a connective and its calculus? For the remainder of this paper, we give I-bS-style semantic clauses for \textit{any} connective whose rules can be formulated in standard two-dimensional sequent calculi (using $\leq 2$ premiss sequents, $\leq 2$ active formulae).

\section{Object of inquiry: all$^*$ connectives formulable in two-dimensional sequent calculi}

\subsection{All$^*$ connectives}

In this section, we define the object of investigation: the logical connectives. Specifically, we study all connectives whose operational rules are \textit{two-dimensional} and contain at most \textit{two premiss sequents} and \textit{two active formulae}. We consider rules \textit{operational} iff they are \textit{separate}, \textit{explicit}, and \textit{symmetrical}.

\begin{definition}[two-dimensional sequent]\

\noindent
\textit{Two-dimensional sequents} are of the shape $\Phi \vdash \Psi$, where $\Phi, \Psi$ are finite multisets of formulae of a sentential language $\mathcal{L}$ and `$\vdash$' denotes a binary inner derivability relation between them, also known as the `sequent symbol'. We call $\Phi$ the component in the first dimension (or: \textit{first component}) and $\Psi$ the component in the second dimension (\textit{second component}) of the sequent.
\end{definition}
We can read `$\Phi \vdash \Psi$' as `From all formulae in $\Phi$, derive at least one formula in $\Psi$.'
\begin{definition}[sequent rule, calculus]\

\noindent
A \textit{sequent rule} expresses a relation between a multiset of sequents and a sequent. We will write sequent rules as 
\begin{prooftree}
    \hypo{\mathfrak{s}_1}
    \hypo{\mathfrak{s}_2}
    \hypo{\dots}
    \hypo{\mathfrak{s}_n}
        \infer4{\mathfrak{s}_i}
\end{prooftree}
, where $\mathfrak{s}_1, \mathfrak{s}_2, \dots, \mathfrak{s}_n, \mathfrak{s}_i$ are sequents, the sequents above the inference line form a multiset, and the inference line expresses the external derivability relation. We call $\mathfrak{s}_1, \mathfrak{s}_2, \dots, \mathfrak{s}_n$ \textit{premiss sequents} and $\mathfrak{s}_i$ the \textit{conclusion sequent}. 

A \textit{(sequent) calculus} is a set of sequent rules.
\end{definition}

Thus, a connective is \textit{two-dimensional} iff its operational rules are defined on two-dimensional sequents. While this leaves higher-dimensional connectives \cite[e.g.][]{baaz1993elimination} to future work, the majority of sequent calculi are two-dimensional, hence this restriction does not unduly limit the significance of the investigation.\footnote{Note that this requirement also excludes most model-theoretically (`semantically') polluted calculi such as labelled sequents, see for instance \cite{martinot2025formalising}.} This restriction is primarily a matter of containing the scope of the study.

For the same reason, we restrict ourselves to operational rules with at most \textit{two premiss sequents} and at most \textit{two active formulae}. Again, most connectives in the literature, including the Boolean connectives, satisfy this restriction. However, unlike two-dimensionality, these constraints can be relaxed without difficulty and are even redundant in the case of repeated premiss sequents as we will see in part in \S 8.

What exactly are operational rules? Following \cite{gentzen1935auntersuchungen}, we can think of them as the definition of a connective.

\begin{example}[$\neg_{\textbf{LK}/\textbf{LJ}}$]\

\noindent
For example, consider Gentzen's original sequent calculi for classical logic, \textbf{LK}, and for intuitionistic logic, \textbf{LJ}, \cite{gentzen1935auntersuchungen, gentzen1935buntersuchungen}. The following rule pairs define the unary connective $\neg$:
\begin{itemize}
    \item in \textbf{LK}: $\Biggl\{$ \begin{prooftree}
\hypo{\Gamma &\vdash \Delta, A}
\infer1[L$\neg_{\textbf{LK}}$]{\neg A, \Gamma  &\vdash \Delta}
\end{prooftree}, 
\begin{prooftree}
\hypo{A, \Gamma &\vdash \Delta}
\infer1[R$\neg_{\textbf{LK}}$]{\Gamma &\vdash \Delta, \neg A}
\end{prooftree} $\Biggl\}$;
\item in \textbf{LJ}: $\Biggl\{$ \begin{prooftree}
\hypo{\Gamma &\vdash A}
\infer1[L$\neg_{\textbf{LJ}}$]{\neg A, \Gamma  &\vdash}
\end{prooftree} ,
\begin{prooftree}
\hypo{A, \Gamma &\vdash }
\infer1[R$\neg_{\textbf{LJ}}$]{\Gamma &\vdash \neg A}
\end{prooftree} $\Biggl\}$.
\end{itemize}
\end{example}
We use Roman capitals for arbitrary formulae, and Greek capitals for finite multisets of formulae of $\mathcal{L}$. We call a formula \textit{active} iff it only occurs above the inference line, \textit{principal} iff it only occurs below the inference line, and \textit{parametric} iff it occurs above as well as below the inference line. Our rule pairs specify the conditions for introducing the principal formula $\neg A$, given in terms of the active formula $A$ and the parametric formulae $\Gamma$ and $\Delta$. 
If we delete all active and principal formulae of an operational rule, we obtain its \textit{context} (sequent), e.g. $\Gamma \vdash \Delta$ for $\neg_{\textbf{LK}}$ and $\Gamma \vdash$ for $\neg_{\textbf{LJ}}$. As a general definition:

\begin{definition}[operational rules]\

    \noindent A set of rules \textbf{R} is \textit{operational} for the connective $\#$ (or simply: the `$\#$-rules') iff each $\mathfrak{r}\in\textbf{R}$ is
    \begin{enumerate}
        \item \textit{separate}, i.e. no connective other than $\#$ occurs in $\mathfrak{r}$ \cite{zucker1978adequacy};\footnote{Zucker and Tragesser attribute the term to Feferman \cite{zucker1978adequacy}.}
        \item \textit{explicit}, i.e. $\#$ only occurs once in $\mathfrak{r}$, namely in the principal formula \cite{zucker1978adequacy, prawitz1979proofs}. If $\#$ occurs in the:
        \begin{enumerate}
            \item first component, $\mathfrak{r}$ is a L$\#$-rule; 
            \item second component, $\mathfrak{r}$ is a R$\#$-rule; 
        \end{enumerate}
        \item \textit{symmetrical}, i.e. $\mathfrak{r}$ is either a L$\#$- or a R$\#$-rule, and there is at least one L$\#$- and at least one R$\#$-rule in \textbf{R} \cite{wansing2000idea}.
    \end{enumerate}
\end{definition}
Indeed, we will give \textit{exactly one} L$\#$- and \textit{exactly one} R$\#$-rule. However, this is a notational matter, depending on one's reading of sequent rules (cf. Definition 2). One can understand a rule either as expressing
\begin{itemize}
    \item `If you have a derivation for \textit{each} sequent in $\{\mathfrak{s_1}, \mathfrak{s_2}, \dots \mathfrak{s_n}\}$, derive $\mathfrak{s_i}$', or as
    \item `If you have a derivation for \textit{at least one} sequent in $\{\mathfrak{s_1}, \mathfrak{s_2}, \dots \mathfrak{s_n}\}$, derive $\mathfrak{s_i}$'.
\end{itemize} 
We write `$s_1 \_ s_2 \_ \dots \_ s_n $' to indicate the former reading, and `$s_1 + s_2 + \dots + s_n $' to indicate the latter reading.

Note that \textit{explicitness} excludes several connective rules with absorbed structural properties such as $L\supset$ in Negri and van Plato's \textbf{G3ip} \cite{negri2001structural} while others are unaffected such as the \textsc{Weakening}-absorbent version of our L$\neg_{\textbf{LK}}$,
\[
\begin{prooftree}
    \hypo{A, \Gamma \vdash\Delta}
        \infer1[L${\neg_{\textbf{LK}}}^{\prime}$]{\Sigma, \Gamma \vdash \neg A, \Delta, \Pi}
\end{prooftree}.
\]
Rules of the latter type will not affect the results of this investigation as they will turn out to be structurally more-than-minimal. We will come back to this in \S7. If a connective can be explicitly defined in a sequent calculus, then it also has rules satisfying these three restrictions \cite{wansing2000idea}.

We can now make explicit what is meant by the asterisk `$^\ast$' in the title of this paper: `all$^\ast$ connectives' are defined by a rule pair that is \textit{operational} (in the sense of \textit{separation}, \textit{explicitness} and \textit{symmetry}) and has \textit{$\leq 2$ premiss sequents} and \textit{$\leq 2$ active formulae}.

\subsection{Notation}

What are all$^{\ast}$ connectives in two-dimensional sequent calculi, and how many are there? To answer these questions, we must look at the syntactically unique rule pairs we can formulate given our restrictions, in terms of their \textit{framework}, \textit{premiss shape} and \textit{arity}.

\begin{definition}[sequent framework]\

\begin{itemize}
\item A sequent $\Phi \vdash \Psi$ is of the \textit{framework} $l:r$, iff $|\Phi|=l$ and $|\Psi|=r$, where $l,r\in \mathbb{N}^0$. 
\item A sequent rule/calculus is of the \textit{framework} $ L: R$, iff for all sequents $\Phi \vdash \Psi$ derivable from the rule/calculus, $|\Phi|\in L$ and $|\Psi|\in R$, where $L,R\subseteq \mathbb{N}^0$.\footnote{This notion of frameworks is inspired by \cite{humberstone2011connectives}.}
\end{itemize}
\end{definition}

For instance in Example 2, $\neg_{\textbf{LK}}$ has the framework $\mathbb{N}^0:\mathbb{N}^0$, allowing for finite multisets of arbitrary cardinality on each side of $\vdash$. In contrast, $\neg_{\textbf{LJ}}$ restricts the second component of sequents to at most one formula occurrence, resulting in the framework $\mathbb{N}^0:\{0,1\}$.

To account for unbounded sets, we use standard ordinal arithmetic in the following, employing Hessenberg (natural) sums $\oplus$ and identifying the cardinality of set $A$ with its initial ordinal, denoted as $|A|$.
\begin{definition}[framework size]\

\noindent 
The \textit{size} of the framework of a sequent rule/calculus $L:R$, $size(L:R)$, is the ordinal pair
$\big\langle sup(L)\oplus sup(R),|L|\oplus |R|\big\rangle$, ordered lexicographically, such that for any two frameworks $L_1:R_1$ and $L_2:R_2$: $size(L_1:R_1) < size(L_2:R_2)$ iff
\begin{enumerate}
    \item $sup(L_1)\oplus sup(R_1)<sup(L_2)\oplus sup(R_2)$, or
    \item $sup(L_1)\oplus sup(R_1)=sup(L_2)\oplus sup(R_2)$ and $|L_1|\oplus |R_1| < |L_2|\oplus |R_2|$.
\end{enumerate}
\end{definition}
For instance, the framework of classical \textbf{LK}, $\mathbb{N}^0:\mathbb{N}^0$, is larger than that of intuitionistic \textbf{LJ}, $\mathbb{N}^0:\{0,1\}$ (since $2\omega>\omega+1$). The latter is of equal size to that of \cite{urbas1996dual}'s dual-intuitionistic \textbf{LDJ}, $\{0,1\}:\mathbb{N}^0$, but larger than that of \cite{johansson1937minimalkalkul}'s minimal \textbf{LM}, $\mathbb{N}^0:\{1\}$ (as $\omega+1=\omega+1$ but $\omega+1<\omega+2$). The latter, in turn, is larger than that of \cite{restall2005geometry}'s lattice logic \textbf{LaL}, $\{0,1\}:\{0,1\}$ (since $\omega+1>2$). This notion of framework size will be especially helpful in \S 6 when we analyse our results.

For the remainder of this paper, we will use a shorthand notation for sequents, inspired by \cite{baaz1993elimination}. It allows us to quantify into component position and thereby to consider several rule configurations at once. This enables a more compact presentation of formulable rule combinations and helps limit the length of proofs in this study.

\begin{definition}[shorthand]\

\noindent
\begin{enumerate}[label=(\roman*)]
\item $[1:\Phi; 2:\Psi]$ is a sequent whose first component is $\Phi$ and whose second component is $\Psi$. If a component is unspecified, it defaults to $\emptyset$.
\item We write `$s$', `$t$' for sequents whose components are any finite multisets of $\mathcal{L}$-formulae. We will use this to express
contexts.
\item We write `$s \times t$' or, simply, `$st$' for a sequent whose $i$th component is the union of the $i$th component of $s$ and the $i$th component of $t$, for all $i \in \{1,2\}$. We will drop `$\times$' in most cases.
\end{enumerate}
\end{definition}
\begin{table}[htbp]
\centering
\caption{example transcriptions}
\begin{threeparttable}
\begin{tabular}{|l|l|l|}
\hline
   shorthand  & traditional notation & \color{gray} transcription note \\ \hline \hline
   $[1: \Gamma; 2:A, B]$ & $\Gamma \vdash A, B$ & \color{gray} immediate\\ \hline
   $[2:A]$ & $\emptyset \vdash A$ & \color{gray} $[1:\emptyset; 2:A]$ \\ \hline
   $s$ & $\Gamma \vdash \Delta$ & \color{gray} $[1: \Gamma; 2: \Delta]^\dag$ \\ \hline
   $[1:A]st$ & $\{A, \Gamma, \Sigma \vdash \Delta, \Pi\}$ & \color{gray} $[1:A]\times s\times t$\\  & &   \color{gray} $[1:A]\times[1:\Gamma; 2:\Delta]\times [1:\Sigma; 2:\Pi]$\\ & & \color{gray} $[1:A\cup \Gamma\cup \Sigma; 2: \Delta\cup \Pi]$\\ & & \color{gray} $[1:A, \Gamma, \Sigma; 2: \Delta, \Pi]$ \\ \hline
   $[i:A; j:B]s$ & $A,B, \Gamma \vdash \Delta$ or & \color{gray} if $i=j=1$\\
   & $A, \Gamma \vdash \Delta, B$ or & \color{gray} if $i=1$ and $j=2$\\
   & $B, \Gamma \vdash \Delta, A$ or & \color{gray} if $i=2$ and $j=1$\\
   & $ \Gamma \vdash \Delta, A, B$ & \color{gray} if $i=j=2$\\\hline
\end{tabular}
\begin{tablenotes}
    \footnotesize
    \item Note: $\Gamma, \Delta, \Sigma, \Pi$ are arbitrary and $i,j\in \{1,2\}$.
\end{tablenotes}
\end{threeparttable}
\end{table}
We can now re-state the definition of $\neg$ in \textbf{LK/LJ} using our shorthand:
\begin{example}[$\neg_{\textbf{LK}/\textbf{LJ}}$, again]\

\noindent
\[
\begin{prooftree}
\hypo{[i:A]s}
\infer1[j$\neg$]{[j:\neg A]s}
\end{prooftree},
\]
where $i\neq j$ and $i,j\in \{1,2\}$.
\end{example}
The benefit of this notation is clear: a single rule gives the operational rule for negation across several calculi, including \textbf{LK} and \textbf{LJ}. The difference between them results from the framework of context $s$: e.g. in \textbf{LK}, the framework of $s$ is unconstrained; in \textbf{LJ}, $s$ must be of $n:m$, for $n\in \mathbb{N}^0$ and $m\in\{0, 1\}$.

\subsection{Formulability}

Using this shorthand, we now identify which operational rule pairs can be formulated in a syntactically unique fashion, beginning with their premiss sequents.
\begin{definition}[premiss type $\mathtt{SP}$]\

\noindent
An \textit{operational rule} $k\#$, with $k\in \{1,2\}$, is of \textit{premiss type} $\mathtt{SP}(k\#)$ iff its premiss sequents are specified by the ordered pair $\langle \mathtt{S}, \mathtt{P} \rangle$, where 
\begin{enumerate}
\item $\mathtt{S}$ is the \textit{shape} of the premises, and
\item $\mathtt{P}$ is the \textit{position} of its active formulae.
\end{enumerate}
A \textit{connective} $\#$  is of premiss type $\mathtt{SP}(\#)$ iff  $\mathtt{SP}(\#)=\langle \mathtt{SP}(1\#), \mathtt{SP}(2\#)\rangle$.
\end{definition}
\begin{definition}[premiss shape $\mathtt{S}$]\

\noindent Let $k\#$ be an operational rule and $\mathfrak{S}$ the multiset of its premiss sequents. \\
Then, $\mathtt{S}(k\#)=$
\begin{itemize}
    \item $\alpha$ iff $\mathfrak{S}=\emptyset$;
    \item $\beta$ iff $\mathfrak{S}=\{\emptyset + s_1\}$;
    \item $\gamma$ iff $\mathfrak{S}=\{s_1\}$;
    \item $\delta cs$ iff $\mathfrak{S}=\{s_1 \_ s_2\}$ and $s_1, s_2$ share contexts;
    \item $\delta cd$ iff $\mathfrak{S}=\{s_1 \_ s_2\}$ and $s_1, s_2$ have different contexts;
    \item $\epsilon$ iff $\mathfrak{S}=\{s_1+ s_2\}$.
\end{itemize}
\end{definition}
Rules of shape $\alpha$ are \textit{axiomatic}, whereas those of shape $\beta$ are \textit{potentially axiomatic}. The latter can be understood as split rules:
\[
\begin{prooftree}
    \hypo{\emptyset}
        \infer1[k$\#$1]{\mathfrak{s_i}}
\end{prooftree}\, ,
\qquad\qquad 
\begin{prooftree}
    \hypo{\mathfrak{s_1}}
        \infer1[k$\#$2]{\mathfrak{s_i}}
\end{prooftree}\, ,
\]
where $\mathfrak{s_1}$ only contains active formulae.

The same holds for rules of shape $\epsilon$: deriving one of the two premiss sequents is sufficient to derive the conclusion. We call such rules \textit{additive}.

In contrast, rules of shape $\gamma$ are \textit{multiplicative}. They can be understood as products under sequent multiplication $\times$. For instance, the premiss sequent $[1:A,B]s\  =\ [1:A] \times [1:B] \times s$.

In the case of $\delta$, derivations of both premiss sequents are needed to derive the conclusion sequent $\mathfrak{s_i}$; this is expressed by a \textit{single-lined} rule schema: \[\begin{prooftree} \hypo{\mathfrak{s_1}} \hypo{\mathfrak{s_2}} \infer2[k$\#$]{\mathfrak{s_i}} \end{prooftree}\, .
\] 

The context $c$ of the conclusion sequent is the set of contexts of its premiss sequents. Thus, in rules of shape $\delta cs$, $c=s$ (both premises have context $s$); in rules of shape $\delta cd$, $c=s \times t$ (one premiss has context $s$, the other $t$). Note that \textit{context-differentiation} and \textit{multiplicativity}  are frequently conflated, as are \textit{context-sharing} and \textit{additivity}. However, we keep all four properties separate here as rule pairs crossing those categories are just as formulable.

\begin{definition}[active formula position $\mathtt{P}$]\

\noindent Let $i,j\in \{1,2\}$. In the operational rule $k\#$, the active formula position $\mathtt{P}(k\#)=$
\begin{itemize}
    \item $0$ iff there are no active formulae;
    \item $iF$ iff one active formula occurs in the $i^{th}$ component of a premiss sequent;
    \item $iFjG$ iff one active formula occurs in the $i^{th}$ and one in the $j^{th}$ component of the \textit{same} premiss sequent;
    \item $iF-jG$ iff one active formula occurs in the $i^{th}$ and one in the $j^{th}$ component of \textit{different} premiss sequents.
\end{itemize}
\end{definition}
\begin{example}\

    \begin{itemize}
        \item $\mathtt{SP(\neg_\textbf{LK})}=\mathtt{SP(\neg_\textbf{LJ})}=\langle \gamma 2A, \gamma 1A \rangle$;
        \item $\mathtt{SP}(\textsc{tonk})=\langle \gamma 1B,\gamma 2A \rangle$, for \\ $\Bigg\langle 
        \begin{prooftree}
            \hypo{[1:B]s}
                \infer1[1\textsc{tonk}]{[1:A\textsc{tonk}B]s}
        \end{prooftree}, \
        \begin{prooftree}
            \hypo{[2:B]s}
                \infer1[2\textsc{tonk}]{[2:A\textsc{tonk}B]s}
        \end{prooftree}
        \Bigg \rangle$;
        \item $\mathtt{SP}(\rightarrow)=\langle \delta cd 1B-2A, \gamma 1A2B\rangle$ for\\ $\Bigg\langle 
        \begin{prooftree}
            \hypo{[1:B]s}
            \hypo{[2:A]t}
                \infer2[1$\rightarrow$]{[1:A\rightarrow B]st}
        \end{prooftree}, \
        \begin{prooftree}
            \hypo{[1:A; 2:B]s}
                \infer1[2$\rightarrow$]{[2:A\rightarrow B]s}
        \end{prooftree}
        \Bigg \rangle$;
        \item $\mathtt{SP}(\sqcup)=\langle \delta cs 1B-1A, \epsilon 1A2B \rangle $ for \\ $\Bigg\langle 
        \begin{prooftree}
            \hypo{[1:B]s}
            \hypo{[1:A]s}
                \infer2[1$\sqcup$]{[1:A\sqcup B]s}
        \end{prooftree}, \
        \begin{prooftree}
            \hypo{[1:A]s\ +\ [2:B]s}
                \infer1[2$\sqcup$]{[2:A\sqcup B]s}
        \end{prooftree}
        \Bigg \rangle$;
        \item $\mathtt{SP}(\star)=\langle \beta 2A, \delta cd 2A2B \rangle $ for \\ $\Bigg\langle 
        \begin{prooftree}
            \hypo{[2:A]\ + \ \emptyset}
                \infer1[1$\star$]{[1:\star]}
        \end{prooftree}, \
        \begin{prooftree}
            \hypo{[2:A, B]s}
            \hypo{t}
                \infer2[2$\star$]{[2:\star]st}
        \end{prooftree}
        \Bigg \rangle$
    \end{itemize}
\end{example}
Using this notion of premiss type, we can give a general definition of a connective.
\begin{definition}[connective $\#$]\

\noindent A connective $\#$ is an ordered triple $\langle \mathtt{SP}, n, L:R \rangle$, where `$n\in \mathbb{N}^0$' denotes the \textit{arity} of $\#$ in the principal formula, and `$L:R$' its \textit{framework}.
\end{definition}
\begin{example}[$\neg_{\textbf{LK}/\textbf{LJ}}$, once more]\

\noindent
$\neg_{\textbf{LK}}=\langle \gamma 2A, \gamma 1A, 1, \mathbb{N}^0:\mathbb{N}^0 \rangle$; $\neg_{\textbf{LJ}}=\langle \gamma 2A, \gamma 1A, 1, \mathbb{N}^0:0 \rangle$.
\end{example}

Table 3 gives all premiss types which can be formulated in a syntactically unique way, given our constraints. Columns indicate shapes and rows indicate positions.
\begin{sidewaysfigure}
\begin{threeparttable}[h]
\begin{longtable}{c!{\vrule width 1pt}c|c|c|c|c|c}
\caption{premiss types of two-premiss operational rules, for $i,j\in \{1,2\}$ and $F, G\in \{A,B\}$}
\endfirsthead
\endhead
\multirow{2}*{\diagbox{\texttt{P}}{\texttt{S}}} &$\alpha$ & $\beta $ & $\gamma $ & \multicolumn{2}{c|}{$\delta$} & $\epsilon$\\
\cline{5-6}
&&&&$cs$&$cd$&\\
\hline\hline
$0$&$\emptyset$&\cellcolor{lightgray} (inexpressible)&$s$&$s\ \_ \ s$&$s\ \_ \ t$ & \cellcolor{lightgray} (cf. $\gamma 0$)\\
\hline
$iF$&\cellcolor{lightgray} (cf. $\alpha 0$)&$[i:F]\ +\ \emptyset$&$[i:F]s$&$[i:F]s\ \_ \ s$&$[i:F]s\ \_ \ t$ & $[i:F]s\ + \ s$
\\
\hline
$iFjG$&\cellcolor{lightgray} (cf. $\alpha 0$)& $[i:F;j:G]\ + \ \emptyset$ & $[i:F;j:G]s$& $[i:F; j:G]s\ \_ \ s$ & $[i:F; j:G]s\ \_ \ t$ & $[i:F; j:G]s\ + \ s$ \\ \hline
$iF-jG$& \cellcolor{lightgray} (inexpressible) & \cellcolor{lightgray} (inexpressible) & \cellcolor{lightgray} (inexpressible) & $[i:F]s\ \_ \ [j:G]s$ & $[i:F]s\ \_ \ [j:G]t$ & $[i:F]s\ + \ [j:G]s$  
\\\end{longtable}
\end{threeparttable}
\end{sidewaysfigure} 

Why is Table 3 exhaustive? For instance, one might also expect a `con\-text-differenti\-at\-ing' version of $\epsilon$. However, this is not formulable. Since only one of the premiss sequents is active, the parametric formulae of the other premiss sequent would automatically become principal. For example:
\[
\begin{prooftree}
    \hypo{[1:A]s}
    \infer1[1\#1]{[1:\#A]s\boxed{t}}
\end{prooftree}
\qquad
\begin{prooftree}
    \hypo{[1:A]t}
    \infer1[1\#2]{[1:\#A]\boxed{s}t}
\end{prooftree}\, .
\]
Similarly, we cannot formulate a single-lined version of $\beta$. The premiss multiset $\emptyset\ \_\ \mathfrak{s}$ is indistinguishable from $\mathfrak{s}$ given that $\emptyset$ holds vacuously, irrespective of $\mathfrak{s}$. For the same reason, we do not list $\epsilon 0$ as a separate premiss type as it is the same as $\alpha$. Naturally, no premiss shape with fewer than two premiss sequents can take the position $iF-jG$ as its definition requires at least two premiss sequents. 

Using Table 3, we can formulate $$\underbrace{\overbrace{4}^{\texttt{S}=\alpha, \dots, \delta cd}}_{\mathtt{P}=0}+\underbrace{\overbrace{5}^{\beta, \dots, \epsilon}\cdot \overbrace{2}^{i}\cdot \overbrace{2}^{F}}_{iF}+(\underbrace{\overbrace{5}^{\beta, \dots, \epsilon}\cdot \overbrace{3}^{i, j}\cdot \overbrace{2}^{F}}_{iFjF}+\underbrace{\overbrace{5}^{\beta, \dots, \epsilon}\cdot \overbrace{4}^{i, j}}_{iAjB})+(\underbrace{\overbrace{3}^{\delta cs, \dots, \epsilon}\cdot \overbrace{3}^{i, j}\cdot \overbrace{2}^{F}}_{iF-jF}+\underbrace{\overbrace{3}^{\delta cs, \dots, \epsilon}\cdot \overbrace{4}^{i, j}}_{iA-jB})=104$$ syntactically unique premiss types given our discussed constraints. Each summand (underbrace) of the calculation represents the number of unique rules we can formulate relative to a possible active formula position. This is straightforward combinatorics, although we need to show some care in the case of two-premiss rules using the same active formula twice. To avoid double-counting identical cases, we considered this separately from the case with different active formulae. As we define connectives on rule \textit{pairs}, we must consider $104^2=10,816$ unique cases (for each framework and arity) in this study.

Having specified our object of study, the $10,816$ operational rule pairs formulable in two-dimensional sequent calculi ($\leq2$ premiss sequents; $\leq 2$ active formulae), we now turn to measuring their meaning.

\section{Methodology: measuring connective meaning}

\subsection{Measurement device: inference behaviour}

With candidate connectives delineated, we now specify how to give I-bS clauses for them, largely following \cite{nagler2026measuring}.

At the core of I-bS lies the \textit{minimal inference behaviour} of a connective. This notion serves as an identity criterion for connective use. If and only if two connectives share minimal inference behaviour, they share inferential use and, given inferentialism, meaning. Let us start with inference behaviour \textit{simpliciter}.

Inference behaviour is defined relative to a proof \cite[e.g.][]{restall2023logical}:
\begin{definition}[derivation/proof tree]\

    \noindent
    A \textit{derivation} (tree) of a conclusion sequent $\mathfrak{s}$ from the set of premiss sequents $\mathfrak{S}$ in a calculus \textbf{Cal} is a non-downward branching partial order on a finite set of sequents whose root is $\mathfrak{s}$, each of whose leaves is $\emptyset$ or some $\mathfrak{s}^\prime \in \mathfrak{S}$, and each step is an application of some \textbf{Cal}-rule.
    
    If $\mathfrak{S}=\emptyset$, we call the derivation a \textit{proof} (tree).
\end{definition}
Intuitively, the \textit{inference behaviour} of a given connective \# tracks (1) \textit{how} we use \# (2) \textit{whenever} we use \# in a given proof. Figure 4 illustrates this idea.

\begin{figure}[h]
    \centering
    \caption{schematic illustration of inference behaviour $\texttt{ib}$}
    \begin{tikzpicture}
        \draw[->,ultra thick] (7.5,0)--(0,0) node[below]{$H(\#,V)$};
        \draw (3.5,2.5) node{\begin{prooftree}
            \hypo{\emptyset}
                \infer1{\bullet \vdash \bullet}
            \hypo{\emptyset}
                \infer1{\bullet \vdash \bullet}
                \infer2{\bullet \vdash \bullet}
                \infer1{\bullet \vdash \bullet}
            \hypo{\emptyset}
                \infer1{\bullet \vdash \bullet}
            \hypo{\emptyset}
                \infer1{\bullet \vdash \bullet}
                \infer2{\bullet \vdash \bullet}
                \infer2{\bullet \vdash \bullet}
            \end{prooftree}};
        \draw[->,ultra thick] (7,-0.5)--(7,5) node[right]{$V$}; 
    \end{tikzpicture}
\end{figure}

(2) is recorded in terms of the current proof step, called a \textit{vertical location} $V$ \cite{nagler2026measuring}. As we are operating in sequent calculi, we give a proof in the form of a proof tree. Thus, vertical locations represent tree nodes (`vertical $x$-axis' in Figure 4).

(1) tracks whether \# is used as a premiss or a conclusion in each proof step. Following \cite{nagler2026measuring}, we call this the \textit{horizontal location(s)} $H$ of \# in $V$. The function $H(\#, V)$ tells us which component \#-formulae and their immediate precursor formulae occupy in $V$ (`horizontal $y$-axis' in Figure 4). Let us capture this idea formally.
\begin{definition}[vertical and horizontal location $V$/$H$]\

    \noindent
    Let $\mathfrak{p}$ be a list of nodes forming a proof tree.\footnote{We call $\mathfrak{p}$ a `proof tree' although latter technically refers to a more fine-grained structure (cf. Definition 11). However, this level of description suffices for our present purposes.}\\
    A \textit{horizontal location} $H$ is a function, $H: \langle \#,V \rangle \mapsto d$, such that:
    \begin{enumerate}
        \item \# is a fixed connective,
        \item $V\in \mathfrak{p}$ is an occupied node in a proof tree (\textit{vertical location}), and
        \item $d$ is a multiset over $\{1,2\}$. Each element of $d$ corresponds to the component in which a \#-formula or an active formula of a \#-rule application occurs, for any such occurrence in the sequent at $V$. 
    \end{enumerate} 
\end{definition}
\begin{definition}[inference behaviour $\texttt{ib}$]\

    \noindent 
        The \textit{inference behaviour} $\texttt{ib}^\#_\mathfrak{p}$ of connective $\#$ in a proof (tree) $\mathfrak{p}$ is the list of function values of $H(\#,V)$, for each $V\in \mathfrak{p}$. We call the set of all elements of $\texttt{ib}^\#_\mathfrak{p}$ the `inference behaviour profile' $\texttt{ibp}^\#_\mathfrak{p}$. 
\end{definition}
We use `$\{\dots\}$' to denote sets, `$[\dots]$' for multisets, and `$\langle \dots \rangle$' for tuples/lists.
\begin{example}\

\noindent
Consider the proof tree
\[
    \mathfrak{p} =
    \begin{prooftree}
        \hypo{\emptyset^0}
        \infer1[\textsc{Id}]{[1:A;2:A]^1}
        \hypo{\emptyset^0}
        \infer1[\textsc{Id}]{[1:B;2:B]^2}\
        \infer2[2$\wedge$]{[1:A,B; 2: A\wedge B]^3}
        \infer1[2$\neg$]{[1:A; 2: \neg B, A\wedge B]^4}
        \infer1[2$\neg$]{[2: \neg A, \neg B, A\wedge B]^5}
        \infer1[1$\neg$]{[1:\neg(A\wedge B); 2: \neg A, \neg B]^6}
        \infer1[2$\vee$]{[1:\neg(A\wedge B); 2: \neg A\vee \neg B]^7}
    \end{prooftree}\, .
\]
We label each node $V \in \mathfrak{p}$ with a distinct Arabic numeral, displayed as a superscript. For example, $V=1$ in the top-left-most occupied node; $V=2$ in the top-right-most occupied node; etc. We tag empty nodes (\textit{null-nodes}) with `$V=0$' to indicate that $H$ is undefined for them, i.e. $H(\#, 0)=\texttt{null}$, for any connective \#. We will ignore null-nodes henceforth.

First, consider conjunction. Although `$\wedge$' does not occur in the first vertical location, we use the $A$ in the second component as an active formula when we apply $2\wedge$ in the following step. Hence, we track $H(\wedge,1)=[2]$. The same reasoning holds for $B$ in the second component at $V=2$, yielding $H(\wedge,2)=[2]$. At $V=3$, $A\wedge B$ occurs in the second component and, thus, $H(\wedge,3)=[2]$. Continuing this reasoning, $H(\wedge,4)=[2]$ and $H(\wedge,5)=[2]$. Now, although `$\wedge$' also occurs in `$\neg (A\wedge B)$', it is not the main connective. Accordingly, $\neg (A\wedge B)$ is not a $\wedge$-formula but a $\neg$-formula. Therefore, $H(\wedge,6)=\emptyset$ and $H(\wedge,7)=\emptyset$.\footnote{For technical simplicity, we do not track $\#$-\textit{sub}formulae as part of the inference behaviour of connective \#. We can think of the main connective as being \textit{explicitly} used in our reasoning (e.g. we \textit{introduce} $\neg$ in the sixth step of this proof via 1$\neg$). In contrast, connectives that are subformula constituents are not the main focus of the proof, only reasoned with \textit{implicitly}. However, this choice does affect our results as we will only consider single-connective calculi here.} In summary, we find that $\mathtt{ib}^\wedge_\mathfrak{p}=\langle [2], [2], [2], [2], [2], \emptyset, \emptyset \rangle$, and $\mathtt{ibp}^\wedge_\mathfrak{p}=\{\emptyset, [2]\}$.

We now apply the same process to negation, this time going through the proof tree bottom-up. As only one $\neg$-formula, $\neg (A\wedge B)$, occurs in the first component at the final vertical location, $H(\neg,7)=[1]$. At $V=6$, $\neg A$ and $\neg B$ are both $\neg$-formulae in the second component. As we also find $\neg (A\wedge B)$ in the first dimension, we obtain: $H(\neg,6)=[1,2,2]$. The latter formula is justified by an application of $1\neg$. We therefore locate the active formula of this $\neg$-rule application, $A\wedge B$, which occurs in the second component at $V=5$. Thus, we record its dimension, yielding $H(\neg,5)=[2,2,2]$. We repeat this process for both $2\neg$ applications in the previous steps, while still keeping track of our previously discovered active formulae. We obtain $H(\neg,4)=[1,2,2]$ and $H(\neg,3)=[1,1,2]$. Although $A\wedge B$ is an active formula of a $\neg$-rule, as we said, $A$ and $B$ in the second component of $V=1$ and $V=2$, respectively, are \textit{not}. Thus, $H(\neg,1)=[1]$ and $H(\neg,2)=[1]$. Hence, $\mathtt{ib}^\neg_\mathfrak{p}= \langle [1], [1], [1,1,2], [1,2,2]$, $[2,2,2]$, $[1,2,2]\rangle$, and  $\mathtt{ibp}^\neg_\mathfrak{p}=\{[1], [1,1,2], [1,2,2], [2,2,2]\}$.

Finally, we obtain for disjunction: $\mathtt{ib}^\vee_\mathfrak{p}=\langle \emptyset, \emptyset, \emptyset, [2], [2,2], [2,2], [2]\rangle$ and, thus, $\mathtt{ibp}^\vee_\mathfrak{p}=\{\emptyset, [2], [2,2]\}$.
\end{example}
We can think of $\texttt{ib}^\#_\mathfrak{p}$ as the syntactic fingerprint of \# in the series of inferences formalised by the proof $\mathfrak{p}$. Its profile, $\texttt{ibp}^\#_\mathfrak{p}$, tells us what kind of patterns can be found in this fingerprint. Inference behaviour is central to I-bS as the core identity criterion for connectives. The profile $\texttt{ibp}^\#_\mathfrak{p}$ is a useful technical tool that helps us find the right \textit{semantic clause} for \#, i.e. the minimal sequent rule yielding the minimal inference behaviour of \#. We will discuss semantic clauses and what we mean by the `minimal sequent rule' in \S6.

Note that we use a simplified notion of inference behaviour relative to \cite{nagler2026measuring}. We do not record which proof rules are applied at each step. This is acceptable since we investigate only definability proofs that succeed in the minimal derivability relation; Nagler \cite{nagler2026measuring} additionally allows for extensions of the derivability relation when a definability proof fails. In part, this is a pragmatic constraint to limit the scope of the investigation. However, there is also a more substantial reason: as argued in \S 6.1, connectives definable in the minimal derivability relation alone are \textit{minimally meaningful} and therefore semantically primitive.\footnote{In the absence of additional structural rules, tracking active formulae  automatically records at least the use of \textit{operational} rules.}

Let us now see how we can find the \textit{minimal} inference behaviour of a connective.

\subsection{Data type: definability proofs}

According to Nagler \cite{nagler2026measuring}, connectives share meaning if and only if they share minimal inference behaviour. Inference behaviour is \textit{minimal} if it is sufficient to demonstrate that a connective \textit{has meaning}. Given her inferentialism, Nagler connects this notion of \textit{meaningfulness} to \textit{usability}. We thus encounter a connective's minimal inference behaviour precisely when we prove that the connective \textit{can be used} in reasoning.

Showing that a connective \textit{can be used} means proving that its operational rules \textit{can be defined} in a given proof system. This is in keeping with the Gentzenian roots of P-tS. Following \cite{nagler2026measuring}, we embrace the definition in \cite{belnap1962tonk}:
\begin{definition}[definability \cite{belnap1962tonk}]\

    \noindent The connective $\#$ with operational rules $1\#,2\#$ is definable in the calculus \textbf{Cal} iff $\textbf{Cal}\cup \{1\#,2\#\}$ 
    \begin{enumerate}
        \item \textit{conservatively} extends \textbf{Cal}, and
        \item \textit{uniquely} determines the inferential role of $\#$ up to isomorphism.
    \end{enumerate}
\end{definition}
Intuitively, conservativity ensures that $\#$ has distinct inference behaviour in \textbf{Cal}, while uniqueness guarantees that it is the only connective with that behaviour. Put formally:
\begin{definition}[conservativity \cite{belnap1962tonk}]\

    \noindent $\textbf{Cal}\cup \{1\#,2\#\}$ conservatively extends \textbf{Cal} iff any sequents which are provable in $\textbf{Cal}\cup \{1\#,2\#\}$ but not provable in \textbf{Cal} contain `\#'.
\end{definition}
In plain terms, we can prove nothing new in the extended system, except with the rules we added. 
\begin{definition}[uniqueness \cite{belnap1962tonk}]\

    \noindent $\textbf{Cal}\cup \{1\#,2\#\}$ \textit{uniquely} determines the inferential role of $\#$ up to isomorphism iff for any $i\in\{1,2\}$ and any $i\natural=i\#$ (modulo substituting $\natural$ for $\#$), there are some $\textbf{Cal}\cup\{ 1\#,2\#, 1\natural, 2\natural\}$-derivations $\mathfrak{d}, \mathfrak{d}^{\prime}$ such that
    \[
        \begin{prooftree}
        \hypo{[i:\texttt{\#}]s}
        \infer[rule code={\hbox{\tikz
    \draw (0,0) -- (\hsize,0) -- (0.5\hsize,-2em) (0.5\hsize,-0.875em) node{\small $\mathfrak{d}$}  (0.5\hsize,-2em) -- (0,0);}}]1{[i:\na]s}
        \end{prooftree}\qquad \text{and} \qquad
        \begin{prooftree}
        \hypo{[i:\na]s}
        \infer[rule code={\hbox{\tikz
    \draw (0,0) -- (\hsize,0) -- (0.5\hsize,-2em) (0.5\hsize,-0.875em) node{\small $\mathfrak{d^\prime}$}  (0.5\hsize,-2em) -- (0,0);}}]1{[i:\texttt{\#}]s}
        \end{prooftree}\, ,
    \]
    where `\texttt{\#}' denotes any $\#$-formula and `$\na$' denotes any $\natural$-formula.
\end{definition}
Informally, connectives with the same operational rules behave identically in all contexts.

We now know in what type of data we will be measuring inference behaviour: in proofs of a connective's \textit{definability}. However, we have yet to specify in \textit{which calculus} definability is to be proved.

\subsection{Testing environment: minimal derivability relation}

While Nagler \cite{nagler2026measuring} is a pluralist about the choice of derivability relation for Belnap-style definability proofs, she requires it to be \textit{minimal}. To defend this, she invokes a form of \textit{(meta-)semantic anti-exceptionalism}.

Nagler \cite{nagler2026measuring} holds that measurements of inference behaviour, and thus of meaning, are continuous with scientific measurements. Both are subject to similar desiderata of test design; in particular, one must isolate the variables under investigation and exclude confounding variables. In the context of I-bS, these `confounders' are the structural properties of the background calculus, specifically, properties exceeding what is required to formulate a definability proof.

I-bS thus involves a trade-off: to discern the meaning of a connective independently of a specific logic, we must study its minimal (i.e. definitional) use; however, we can only study this minimal use in the context of a logical calculus. Full isolation of connectives from the structural properties of a derivability relation is impossible; one can only \textit{minimise} structural confounders.

What structural resources are minimal, i.e. indispensable for a connective's definability proof? Most obviously, an axiom is needed to initiate proofs, especially for non-axiomatic connectives. Moreover, as we are operating in two-dimensional calculi, it seems reasonable to require some rule that allows for cross-component interactions. To satisfy these two constraints, \cite{nagler2026measuring} argues for a background calculus of $\{\textsc{Id}, \textsc{Cut}\}$ yielding a solely reflexive and transitive derivability relation as is typical of substructural calculi such as the Lambek calculus \cite{lambek2011mathematics} or linear logic \cite{girard1989proofs}. This is a reasonable starting point, as the results are compatible with a large range of sub- and fully structural calculi. We therefore follow Nagler \cite{nagler2026measuring} in this choice. 

While Nagler \cite{nagler2026measuring} discusses both context-differentiating and context-sharing formulations of \textsc{Cut} without clearly favouring either, we adopt the former:
\[
\begin{prooftree}
    \hypo{[2:A]s}
    \hypo{[1:A]t}
    \infer2[\textsc{Cut}$cd$]{st}
\end{prooftree}\, .
\]
While replicating this study with context-sharing \textsc{Cut} remains worthwhile, \textsc{Cut}$cd$ is the more commonly used formulation and corresponds more straightforwardly to the Tarskian property of transitivity as \cite{nagler2026measuring} argues.

However, in the case of \textsc{Id}, we find that \cite{nagler2026measuring} does not choose the most minimal version available. Restrictions of \textsc{Id} to only primitive formulae are standard in the literature \cite{troelstra2000basic, negri2001structural} and prove strictly fewer theorems than depth-unrestrained formulations. Therefore, we will limit ourselves to the restricted version, \textsc{Id}$^{pr}$:
\[
\begin{prooftree}
    \hypo{\emptyset}
    \infer1[\textsc{Id}$^{pr}$]{1:A^{pr};2:A^{pr}}
\end{prooftree}\, ,
\]
where A$^{pr}$ is primitive.

This design choice will simplify and shorten some of our later proofs. Nevertheless, it does not affect the results since full \textsc{Id} remains derivable for all connectives uniquely definable in ${\textsc{Cut}cd, \textsc{Id}^{pr}}$.\footnote{Thus, we will drop the superscript $^{pr}$ in the following.}

As we will only prove definability for one connective at a time, we can stipulate that our language $\mathcal{L}$ only consists of the sequent symbol $\vdash$, arbitrarily many atomic sentential variables $p_0, \dots, p_{n\in \mathbb{N}}$, the connective $\#$, and nothing else, unless indicated otherwise.\footnote{This is our official language; our shorthand sequent notation is defined in terms of it.}

\begin{definition}[formula depth $d$]\

\begin{itemize}
\item If formula $A$ is an atomic variable, we call it \textit{primitive}, and $d(A)=0$;
\item otherwise, $A$ contains $\#$, we call it \textit{complex}, and 
\begin{itemize}
    \item  if \# is nullary, $A=\#$ and $d(\#)=1$,
    \item if $\#$ is unary, $A=\#B$ and $d(\#B)=1+d(A)$, 
    \item if $\#$ is binary, $A=B\#C$ and $d(A\#B)=1+max\{d(B), d(C)\}$.
\end{itemize}
\end{itemize}
\end{definition}
\begin{theorem}[derivability of \textsc{Id}]\

\noindent
Consider a calculus consisting of the operational rules of an arbitrary uniquely definable connective $\#$, \textsc{Id$^{pr}$} and potentially further structural rules such as \textsc{Cut}$cd$ (or \textsc{Cut}$cs$). In this calculus, full \textsc{Id} is derivable, i.e. for any formula $A$ of arbitrary depth:
\begin{prooftree*}\hypo{\emptyset} \infer1[\textsc{Id}]{[1:A;2:A]} \end{prooftree*}
\end{theorem}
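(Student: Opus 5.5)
The proof will be an induction on the formula depth $d(A)$ from Definition 17. For $d(A)=0$, $A$ is primitive and $[1:A;2:A]$ is an instance of \textsc{Id}$^{pr}$. For the inductive step, $A$ is a $\#$-formula: $\#$ itself if $\#$ is nullary, $\#B$ if unary, $B\#C$ if binary. By the induction hypothesis, $[1:B;2:B]$ and $[1:C;2:C]$ are derivable. The aim is to derive $[1:\#\ldots;2:\#\ldots]$ from these identities together with the rules $1\#$ and $2\#$, without using \textsc{Cut}. This matters because the statement allows \textsc{Cut} to be absent.

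The central tool is the uniqueness clause of Definition 14. Because $\#$ is uniquely definable, for each $i\in\{1,2\}$ there are derivations $\mathfrak{d},\mathfrak{d}'$ in $\mathbf{Cal}\cup\{1\#,2\#,1\natural,2\natural\}$ taking $[i:\texttt{\#}]s$ to $[i:\natural]s$ and back.

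The plan is to read these derivations schematically. Choose the instance in which $s$ is empty and the $\#$-formula is $\#(\vec p)$ over fresh primitive atoms. The derivation $\mathfrak{d}$ must contain a $1\#$- or $2\#$-step that introduces $\#(\vec p)$ on the opposite side. It ends with a $\natural$-rule whose premiss sequents pair the components of $\#$ with those of $\natural$. Then rename $\natural$ to $\#$ and substitute $B,C$ for $\vec p$; the rules are schematic, so this yields a derivation of $[1:\#(B,C);2:\#(B,C)]$. Its leaves are \textsc{Id}$^{pr}$ instances $[1:p;2:p]$, which become $[1:B;2:B]$ or $[1:C;2:C]$. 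These are supplied by the induction hypothesis.

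The subtle point is that the uniqueness derivations may use \textsc{Cut}. If \textsc{Cut} is present, the argument simply goes through. If it is not, I would argue as follows. Since $\#$ and $\natural$ share the same rules, a uniqueness proof of the required form is essentially forced to be the \textquotedblleft expand then reintroduce\textquotedblright\ pattern: apply $i\natural$ at the bottom, with premisses obtained by applying the dual $\#$-rule to identities on the active formulae. This pattern is exactly an identity-expansion step and needs no \textsc{Cut}.

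The main obstacle is making this extraction rigorous without a case analysis over all $104^2$ premiss types. Axiomatic shapes $\alpha$ and $\beta$, context-differentiated premisses, and the multiplicity of the context $s$ all have to fit the framework. I would try to handle these uniformly with the shorthand $[i:F;j:G]s$: treat each premiss type by its shape column in Table 3 and check that the rule pair obtained with $s=\emptyset$ produces the identity sequent. I would fall back on explicit cases only where a shape ($\beta$, $\epsilon$) allows discarding a premiss.
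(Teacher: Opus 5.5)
Your skeleton matches the paper's: induction on $d(A)$, \textsc{Id}$^{pr}$ in the base case, renaming $\natural$ to $\#$, and closing identity leaves by the induction hypothesis. The gap is the step that is supposed to supply the identity-expansion derivation. Definition~16 only asserts that \emph{some} derivation leads from $[i:\#]s$ to $[i:\natural]s$. Nothing forces that derivation to contain a $\#$-step on the opposite side, or to end in a $\natural$-rule pairing $\#$ with $\natural$. Two examples show this:
\begin{itemize}
\item For $\bot_{min}$ with $i=1$ and $s$ empty, $\mathfrak{d}$ can be a single application of the axiomatic rule $1\natural$.
\item For $\neg^{cv}_{min}$, $\mathfrak{d}$ can cut the premiss $[1:\#p]$ against $[1:p;2:\#p]$ (obtained from \textsc{Id} and $2\#$) and then apply $1\natural$ to $[1:p]$.
\end{itemize}
In both cases, renaming $\natural$ to $\#$ just yields a derivation of the premiss from itself, with no identity sequent in sight.

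Choosing a non-empty $s$ does not help either. With $s=[j:\#]$ or $s=[j:\natural]$, the vertical derivation needs the identity (or horizontal) sequent as its premiss, which is circular. Your two fallback claims are also unsupported. First, \textsc{Cut} is exactly what licenses derivations like the $\neg^{cv}_{min}$ one, so its presence does not make the extraction ``go through''. Second, the ``expand then reintroduce'' pattern is a \emph{proof} of $[i:\natural;j:\#]$, not a derivation \emph{from} $[i:\#]s$. Without \textsc{Cut} the premiss $[i:\#]s$ cannot even be discharged, so nothing forces $\mathfrak{d}$ into that shape.

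The missing idea is to take horizontal interderivability as the input, rather than trying to extract it from vertical uniqueness. The paper only shows that horizontal implies vertical in the presence of \textsc{Cut}, not the converse. The paper's proof takes ``by uniqueness'' to supply a proof of $[1:\#;2:\natural]$ from axioms alone. Its leaves are either instances of axiomatic $\#$/$\natural$-rules or identities $[1:B;2:B]$ on active formulae. For every connective in \texttt{PU}, Theorem~5 exhibits such a proof explicitly, and all of them are \textsc{Cut}-free, which is why the result also holds when \textsc{Cut} is absent. Starting from that proof, your renaming step and your substitution of $B,C$ for fresh atoms go through, closing the leaves by \textsc{Id}$^{pr}$ or the induction hypothesis. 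No case analysis over the $104^2$ premiss types is needed beyond what Theorem~5 already provides.
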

\begin{proof}\

\noindent
By induction on the depth of $A$. If $A$ is primitive, $[1:A;2:A]$ by \textsc{Id$^{pr}$}. Otherwise, $A$ is a $\#$-formula, \texttt{\#}, as all complex formulae can only be a result of 1\# or 2\#. Then, by \textit{uniqueness}, there is a proof $\mathfrak{p}$ such that \[
\begin{prooftree}
\hypo{~~~\mathfrak{A}~~~}
\infer[rule code={\hbox{\tikz
    \draw (0,0) -- (\hsize,0) -- (0.5\hsize,-2em) (0.5\hsize,-0.875em) node{\small $\mathfrak{p}$}  (0.5\hsize,-2em) -- (0,0);}}]1{\texttt{\#} \vdash \na}
\end{prooftree}\, ,
\]
where $\natural$ is defined by the same operational rules as $\#$ and $\na$ is the same formula as \texttt{\#}, modulo symbol substitution, and $\mathfrak{A}$ is a set of axioms. All $\mathfrak{s}\in \mathfrak{A}$ are either of form
\begin{enumerate}
\item $[k:\texttt{\#}]s$ (if k$\#$ is an axiomatic rule),
\item $[k:\na]s$ (if k$\natural$ is an axiomatic rule), or
\item $[1:B;2:B]$,
\end{enumerate} 
for $k\in\{1,2\}$. 

1./2. hold trivially. If $B$ is primitive, 3. is a result of \textsc{Id$^{pr}$}; otherwise $B$ is of shape $\texttt{\#}^\prime$, which holds by virtue of the inductive hypothesis. \hfill $\blacksquare$
\end{proof}
Note that Nagler \cite{nagler2026measuring}, by defining sequents over multisets, implicitly assumes full \textsc{Exchange}. We adopt this constraint, though we regard an \textsc{Exchange}-free derivability relation as the more minimal option. However, this choice significantly limits the length of the investigation, which is an important consideration here. We conjecture that dropping \textsc{Exchange} will result in a finer-grained non-commutative picture akin to the ordered logic connectives \cite{lambek1958mathematics, polakow1999relating}.

We now have all the ingredients needed to begin the investigation. In addition to our object of inquiry, the measurement device of \textit{inference behaviour} and the testing environment of $\{\textsc{Id}, \textsc{Cut}cd\}$ are at our disposal. Let us now generate our dataset: definability proofs for each formulable connective.

\section{Data generation: the proofs}

To generate the data for our measurement of inference behaviour, let us see which formulable connectives are definable in $\{\textsc{Id}, \textsc{Cut}cd\}$. In \S 4.3, we looked at the 10,816 connectives we can formulate in two-dimensional sequent calculi with $\leq 2$ premiss sequents,  $\leq 2$ active formulae (filter 1). In \S5.1, we will find that  exactly $376$ among these connectives (non-trivially) satisfy \textit{conservativity} (filter 2), and $150$ of these are also (non-trivially) unique (filter 3), as shown in \S5.2.

\begin{figure}[h]
    \centering
    \begin{tikzpicture}
        \filldraw[fill=gray!20] (4.25,10) -- (9.75,10) -- (8,9) -- (6,9) -- (4.25,10);
        \filldraw[fill=gray!20] (5,8) -- (9,8) -- (7.75,7.25) -- (6.25, 7.25) -- (5,8);
        \filldraw[fill=gray!20] (5.75,6.25) -- (8.25,6.25) -- (7.5,5.75) -- (6.5,5.75) -- (5.75,6.25);
        
        \draw[->] (7,10.25) -- (7,9.75);
        \draw[->] (7,9.25) -- (7,7.75);
        \draw[->] (7,7.45) -- (7,6.05);
        \draw[->] (7,5.85) -- (7,5.5);

        \draw (7,8.5) node[fill= white]{$10,816$};
        \draw (7,6.75) node[fill=white]{$376$};
        \draw (7,5.25) node[]{$150$};
        
        \draw (3,9.5) node{filter 1: \textit{formulability}};
        \draw (3.5,7.65) node{filter 2: \textit{conservativity}};
        \draw (4,6) node{filter 3: \textit{uniqueness}};
        
    \end{tikzpicture}
    \caption{data generation, definability filters}
\end{figure}

\subsection{Conservativity}

Let us first establish the positive cases, i.e. which of our connectives are \textit{conservative} in ${\textsc{Id}, \textsc{Cut}cd}$. We then show that our set of positive cases is exhaustive by proving that all other (negative) cases are non-conservative.

\subsubsection{Positive cases}

We prove \textsc{Cut}-elimination for the positive cases and then use this result to establish conservativity.

\begin{definition}[positive cases, conservativity]\

    \noindent Let $\mathtt{PC}$ be the set of the following $376$ connectives; for $k,l,m,n \in \{1,2\}$, $F,G \in \{A, B\}$ and $a \in \mathbb{N}^0$:
        \begin{itemize}
            \item[1.1)] $\langle \alpha, \gamma 0, a, \mathbb{N}^0:\mathbb{N}^0 \rangle, \langle \gamma 0, \alpha, a, \mathbb{N}^0:\mathbb{N}^0 \rangle$ (2);
            \item[1.2)] $\langle \alpha, \delta cs 0, a, \mathbb{N}^0:\mathbb{N}^0 \rangle, \langle \delta cs 0, \alpha, a, \mathbb{N}^0:\mathbb{N}^0 \rangle$ (2);
            \item[2.1)] $\langle \alpha, \delta cd^- kF, a, \mathbb{N}^0:\mathbb{N}^0 \rangle, \langle \delta cd^- kF, \alpha, a, \mathbb{N}^0:\mathbb{N}^0 \rangle$ (8);
            \item [2.2)] $\langle \gamma 0, \beta kF, a, \mathbb{N}^0:\mathbb{N}^0 \rangle, \langle \beta kF, \gamma 0, a, \mathbb{N}^0:\mathbb{N}^0 \rangle$ (8);
            \item [2.3)] $\langle \delta cs 0, \beta kF, a, \mathbb{N}^0:\mathbb{N}^0 \rangle, \langle \beta kF, \delta cs 0, a, \mathbb{N}^0:\mathbb{N}^0 \rangle$ (8);
            \item[3.1)] $\langle \alpha, \delta cd^- kFlG, a, \mathbb{N}^0:\mathbb{N}^0 \rangle, \langle \delta cd^- kFlG, \alpha, a, \mathbb{N}^0:\mathbb{N}^0 \rangle$ (20);
            \item [3.2)] $\langle \gamma 0, \beta kFlG, a, \mathbb{N}^0:\mathbb{N}^0 \rangle, \langle \beta kFlG, \gamma 0, a, \mathbb{N}^0:\mathbb{N}^0 \rangle$ (20);
            \item [3.3)] $\langle \gamma 0, \gamma^- 1F2F, a, \mathbb{N}^0:\mathbb{N}^0 \rangle, \langle \gamma^- 1F2F, \gamma 0, a, \mathbb{N}^0:\mathbb{N}^0 \rangle$ (4);
            \item [3.4)] $\langle \delta cs 0, \beta kFlG, a, \mathbb{N}^0:\mathbb{N}^0 \rangle, \langle \beta kFlG, \delta cs 0, a, \mathbb{N}^0:\mathbb{N}^0 \rangle$ (20);
            \item [3.5)] $\langle \delta cs 0, \gamma^- 1F2F, a, \mathbb{N}^0:\mathbb{N}^0 \rangle, \langle \gamma^- 1F2F, \delta cs 0, a, \mathbb{N}^0:\mathbb{N}^0 \rangle$ (4);
            \item[4.1)] $\langle \alpha, \delta cd 1F-2F, a, \mathbb{N}^0:\mathbb{N}^0 \rangle, \langle \delta cd 1F-2F, \alpha, a, \mathbb{N}^0:\mathbb{N}^0 \rangle$ (4);
            \item[5.1)] $\langle \beta kF, \delta cd^- lG, a, \mathbb{N}^0:\mathbb{N}^0 \rangle, \langle \delta cd^- kF, \beta lG, a, \mathbb{N}^0:\mathbb{N}^0 \rangle$ (32);
            \item[5.2)] $\langle \gamma kF, \gamma lF, a, \mathbb{N}^0:\mathbb{N}^0 \rangle, k\neq l$ (4);
            \item[6.1)] $\langle \beta kF, \delta cd^- lFmG, a, \mathbb{N}^0:\mathbb{N}^0 \rangle, \langle \delta cd^- lFmG, \beta kF, a, \mathbb{N}^0:\mathbb{N}^0 \rangle$ (40);
            \item[6.2)] $\langle \delta cd^- kF, \beta lFmG, a, \mathbb{N}^0:\mathbb{N}^0 \rangle, \langle \beta lFmG, \delta cd^- kF, a, \mathbb{N}^0:\mathbb{N}^0 \rangle$ (40);
            \item[7.1)] $\langle \beta kF, \delta cd 1F-2F, a, \mathbb{N}^0:\mathbb{N}^0 \rangle, \langle \delta cd 1F-2F, \beta kF, a, \mathbb{N}^0:\mathbb{N}^0 \rangle$ (8);
            \item[7.2)] $\langle \gamma kF, \delta cs lF-lF, a, \mathbb{N}^0:\mathbb{N}^0 \rangle, \langle \delta cs lF-lF, \gamma kF, a, \mathbb{N}^0:\mathbb{N}^0 \rangle, k\neq l$ (4);
            \item[8.1)] $\langle \beta kFlG, \delta cd^- mFnG, a, \mathbb{N}^0:\mathbb{N}^0 \rangle, \langle \delta cd^- mFnG, \beta kFlG, a, \mathbb{N}^0:\mathbb{N}^0 \rangle$ (68);
            \item[8.2)] $\langle \gamma^- 1F2F, \delta cd^- 1F2F, a, \mathbb{N}^0:\mathbb{N}^0 \rangle, \langle \delta cd^- 1F2F, \gamma^- 1F2F, a, \mathbb{N}^0:\mathbb{N}^0 \rangle$ (4);
            \item[9.1)] $\langle \beta kFlF, \delta cd 1F-2F, a, \mathbb{N}^0:\mathbb{N}^0 \rangle, \langle \delta cd 1F-2F,\beta kFlF, a, \mathbb{N}^0:\mathbb{N}^0 \rangle$ (12);
            \item[9.2)] $\langle \gamma kFlG, \delta cd mF-nG, a, \mathbb{N}^0:\mathbb{N}^0 \rangle, \langle \delta cd mF-nG, \gamma kFlG, a, \mathbb{N}^0:\mathbb{N}^0 \rangle, k\neq m, l\neq n$ (20);
            \item[9.3)] $\langle \delta cs kFlG, \delta cd mF-nG, a, \mathbb{N}^0:\mathbb{N}^0 \rangle, \langle \delta cd mF-nG, \delta cs kFlG, a, \mathbb{N}^0:\mathbb{N}^0 \rangle, k\neq m, l\neq n$ (20);
            \item[10.1)] $\langle \delta cs kF-kF, \delta cs mF-mF, a, \mathbb{N}^0:\mathbb{N}^0 \rangle, k\neq m$ (4);
            \item[10.2)] $\langle \delta cs kF-lG, \epsilon mF-nG, a, \mathbb{N}^0:\mathbb{N}^0 \rangle, \langle \epsilon mF-nG, \delta cs kF-lG, a, \mathbb{N}^0:\mathbb{N}^0 \rangle, k\neq m, l\neq n$ (20).
    \end{itemize}
 We write `$\delta cd^-$' for rules of shape $\delta cd$ for which the context of (at least) one premiss sequent (say, $s$) is restricted to the empty sequent (i.e. $s=[1:\emptyset;2: \emptyset]$). Analogously, we use `$\gamma ^-$'  to designate rules of shape $\gamma$ in which the context is restricted to the empty sequent.

 Our enumeration of \texttt{PC} follows the order of cases in the proof of Lemma 1, iterating through all possible \texttt{P}-constellations. Setting aside arities and frameworks, there are $376$ distinct members of \texttt{PC}. The brackets next to each bullet point indicate how many of these $376$ distinct connectives are captured in each case.
\end{definition}
Let us prove \textsc{Cut}-elimination in $\{\textsc{Id}, \textsc{Cut}cd, 1\#, 2\#\}$ for each $\# \in \mathtt{PC}$. Our proof follows \cite{mancosu2021introduction}'s variant of \cite{gentzen1935auntersuchungen, gentzen1935buntersuchungen}'s \textit{Hauptsatz}. 
\begin{definition}[derivation height $h$]\

\noindent The height $h$ of a derivation is the maximum number of successive rule applications from a leaf (initial sequent) to the root (end-sequent).
\end{definition}
\begin{definition}[\textsc{Cut}-depth $\mathcal{D}$, \textsc{Cut}-height $\mathcal{H}$]\

\noindent Consider any \textsc{Cut}-derivation $\mathfrak{d}$ of the shape:
\[ \begin{prooftree}
\hypo{}
\infer[rule code={\hbox{\tikz
    \draw (0,0) -- (\hsize,0) -- (0.5\hsize,-2em) (0.5\hsize,-0.875em) node{\small $\mathfrak{d}_1$}  (0.5\hsize,-2em) -- (0,0);}}]1{[1:A]s}
\hypo{}
\infer[rule code={\hbox{\tikz
    \draw (0,0) -- (\hsize,0) -- (0.5\hsize,-2em) (0.5\hsize,-0.875em) node{\small $\mathfrak{d}_2$}  (0.5\hsize,-2em) -- (0,0);}}]1{[2:A]t}
\infer2[\textsc{Cut}]{st}
\end{prooftree}\, .
\]
The \textsc{Cut}-depth $\mathcal{D}$ of $\mathfrak{d}$ is the depth of its active formula: $\mathcal{D}(\mathfrak{d})=d(A)$. The \textsc{Cut}-height $\mathcal{H}$ of $\mathfrak{d}$ is the sum of heights of the sub-derivations of each \textsc{Cut}-premiss: $\mathcal{H}(\mathfrak{d})=h(\mathfrak{d}_1)+h(\mathfrak{d}_2)$.
\end{definition}
\begin{definition}[Reducibility]\

\noindent We call any \textbf{Cal}-derivation using \textsc{Cut} once in its final step \textit{reducible} iff we can construct another \textbf{Cal}-derivation from the same premises to the same conclusion but without using \textsc{Cut}.
\end{definition}
Our tools at hand, we first prove a core lemma for the \textsc{Cut}-elimination proof, focussing on the principal formulae. 
\begin{lemma}\

\noindent 
Let our calculus \textbf{Cal} be $\{$\textsc{Id}$^{pr}$, \textsc{Cut}$cd$, 1$\#$, 2$\#\}$.  Let \texttt{\#} be any $\mathcal{L}$-formula containing $\#$ as its main connective (for $\#$ of any arity).

Assume $\mathfrak{d}$ to be a \textbf{Cal}-derivation which cuts over the conclusions of two operational rules in its final step with $\mathcal{D}(\mathfrak{d})>0$ and $\mathcal{H}(\mathfrak{d})=2$:
\[ \begin{prooftree}
\hypo{\mathfrak{S}_1}
\infer1[1\#]{[1:\texttt{\#}]s}
\hypo{\mathfrak{S}_2}
\infer1[2\#]{[2:\texttt{\#}]t}
\infer2[\textsc{Cut}]{st}
\end{prooftree}\, ,
\]
where $\mathfrak{S}_1, \mathfrak{S}_2$ are sets of sequents.

For any $\#\in \mathtt{PC}$, there is a \textbf{Cal}-derivation $\mathfrak{d}^\prime$ of the form
\[ \begin{prooftree}
\hypo{\mathfrak{S}_1 \cup \mathfrak{S}_2}
\infer[rule code={\hbox{\tikz
    \draw (0,0) -- (\hsize,0) -- (0.5\hsize,-2em) (0.5\hsize,-0.875em) node{\small $\mathfrak{d}^\prime$}  (0.5\hsize,-2em) -- (0,0);}}]1{st}
\end{prooftree}\, ,
\]
such that $\mathfrak{d}^\prime$ is \textsc{Cut}-free, or $\mathcal{H}(\mathfrak{d}^\prime)<\mathcal{H}(\mathfrak{d})$, or $\mathcal{D}(\mathfrak{d}^\prime)<\mathcal{D}(\mathfrak{d})$.
\end{lemma}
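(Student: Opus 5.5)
The plan is a case analysis over the members of $\mathtt{PC}$ listed in Definition 14, following the order of its enumeration. The key observation is that the principal cut of the lemma can only arise when \emph{both} rules of the pair have actually been applied, and that $\mathcal{H}(\mathfrak{d})=2$ forces $\mathfrak{S}_1,\mathfrak{S}_2$ to be leaves: either $\emptyset$ or premiss sequents used as assumptions. We then group the cases by the premiss shapes of $1\#$ and $2\#$.

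\textbf{Axiomatic and degenerate cases.} For pairs with one rule of shape $\alpha$ or $\beta$ (cases 1.1--3.5, 4.1, 5.1, 6.1, 6.2, 7.1, 8.1, 9.1), the axiomatic side yields $[1:\texttt{\#}]\emptyset$ or $[2:\texttt{\#}]\emptyset$. In the $\beta$ case this holds whenever the $\emptyset$-disjunct is used; otherwise the premiss contains only active formulae. Consequently one of $s,t$ is empty, and the other premiss set $\mathfrak{S}_j$ can be rearranged so that it already yields $st$ without \textsc{Cut}.

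Two facts do the work here. First, $\gamma 0$ and $\delta cs 0$ have no active formulae, so their premiss \emph{is} the context $s$ (resp.\ $t$). Second, the restricted shapes $\delta cd^-$ and $\gamma^-$ were introduced precisely so that the non-axiomatic rule's conclusion context coincides with a premiss context. In the remaining subcases, where the $\beta$ rule is used via its active premiss, we replace the single cut on $\texttt{\#}$ by cuts on the active formulae $A$ or $B$. These have strictly smaller depth, giving $\mathcal{D}(\mathfrak{d}')<\mathcal{D}(\mathfrak{d})$.

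\textbf{Harmonious cases.} For the genuinely harmonious pairs (5.2, 7.2, 8.2, 9.2, 9.3, 10.1, 10.2) we perform the usual Gentzen principal reduction. The side conditions $k\neq l$ and $k\neq m$, $l\neq n$ guarantee that each active formula occurs in the first component in one rule's premiss and in the second component in the other's, so they can be cut against each other. Three patterns arise:
\begin{itemize}
\item $\gamma kFlG$ against $\delta cd\,mF\!-\!nG$, the $\otimes$/$\rightarrow$ pattern: two successive \textsc{Cut}$cd$s on $F$ and $G$, with contexts combining multiplicatively to give exactly $st$.
\item $\delta cs$ against $\epsilon$, the $\sqcap/\sqcup$ pattern: choose the disjunct actually used in the $\epsilon$-premiss and make a single cut on it.
\item 10.1 and 7.2: a single cut, after identifying the repeated formula $F$.
\end{itemize}
In every pattern each new cut has depth at most $\max\{d(A),d(B)\}<d(\texttt{\#})$, so $\mathcal{D}$ drops. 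For nullary $\#$ there are no active formulae, and the reduction yields a \textsc{Cut}-free $\mathfrak{d}'$ directly.

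\textbf{Main obstacle.} The hard part is getting the context bookkeeping right in the mixed cases. The clearest example is 9.3, where a context-sharing $\delta cs$ rule meets a context-differentiating rule: the reduced cuts produce $s\times t$ only if the shared context is duplicated exactly once, and multisets do not permit contraction. I would handle this by checking, case by case, that the multiset union of the contexts of the new cut premisses equals $st$, using $\delta cd^-$ emptiness where needed. I expect this to be the step where hidden appeals to \textsc{Contraction} or \textsc{Weakening} must be ruled out.
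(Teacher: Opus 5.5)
Your strategy (walk through $\mathtt{PC}$ and, for each pair, either read $st$ off the leaves or replace the cut on the $\#$-formula by cuts on active formulae) is the paper's. However, several of your case assignments are wrong, and one family of cases is left without a working reduction. In 4.1, 7.1 and 9.1 the partner of the $\alpha$/$\beta$ rule is the \emph{unrestricted} rule $\delta cd\,1F\!-\!2F$, whose premisses are $[1:F]s$ and $[2:F]t$. Contrary to your claim, these do not ``already yield $st$ without \textsc{Cut}''; in general $st$follows from them only by a cut on $F$. Your fallback is attached to the wrong condition. Case 4.1 contains no $\beta$-rule at all. In 7.1/9.1 it does not matter how the $\beta$-rule was applied: its conclusion is context-free either way, and its premiss is never used. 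The missing idea is that this $\delta cd$ rule is a cut in disguise. As in the paper, the reduction cuts its two premisses against each other, $[1:F]s,\ [2:F]t \Rightarrow st$, independently of the axiomatic partner. Similarly, 8.2 ($\gamma^-1F2F$ against $\delta cd^-1F2F$) is not harmonious and fits none of your three patterns. A principal reduction fails there: cutting the two active premisses $[1:F;2:F]$ against each other just returns $[1:F;2:F]$. The case is in fact degenerate. The $\gamma^-$ conclusion is context-free, so the cut conclusion is simply the bare context premiss of the $\delta cd^-$ rule.

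The ``main obstacle'' you locate in 9.3 does not arise. Your proposed remedy via $\delta cd^-$ emptiness is also unavailable there, since the $\delta cd$ rule in 9.3 is unrestricted. The $\delta cs\,kFlG$ rule has premisses $[k:F;l:G]s$ and $s$. The reduction discards the second and performs the two cuts of 9.2, landing exactly on $stu$. In every $\mathtt{PC}$ case only one premiss of a context-sharing rule is ever consumed: the other is a duplicate, the bare context, or the alternative not chosen by the $\epsilon$ partner. So contraction never threatens. Finally, two of your claims presuppose subformula-connectedness, which the paper deliberately does not assume: that nullary $\#$ has no active formulae, and that every new cut has depth below $d(\#)$. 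The arity $a\in\mathbb{N}^0$ is a free parameter, so for instance $\langle\gamma 1F,\gamma 2F,0,\dots\rangle$ is nullary yet reduces to a cut on $F$. The lemma does not need your depth bound, though. Every reduced derivation either is \textsc{Cut}-free or ends in a cut whose premisses are leaves (in 9.2/9.3, a cut conclusion and a leaf), so $\mathcal{H}(\mathfrak{d}')\le 1<2=\mathcal{H}(\mathfrak{d})$ in all cases.
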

\begin{proof}\

\noindent 
Let $i,j,k,l\in \{1,2\}$ and $i \neq j$. Let $\F, G\in \{A, B\}$.\\ \\
Case 1) \texttt{P}$(i\#)=\texttt{P}(j\#)=0$.
\begin{enumerate}[label=1.\arabic*)]
\item $\texttt{S}(i\#)=\alpha; \texttt{S}(j\#)=\gamma $:
\[
\begin{prooftree}
\hypo{\emptyset}
\infer1[i\#]{[i:\texttt{\#}]}
\hypo{s}
\infer1[j\#]{[j:\texttt{\#}]s}
\infer2[\textsc{Cut}]{s}
\end{prooftree}
\]
\text{can be reduced to} 
\[\begin{prooftree}
\hypo{s}
\infer1{s}
\end{prooftree}\, .
\]
\item $\texttt{S}(i\#)=\alpha; \texttt{S}(j\#)=\delta cs$: cf. 1.1).
\end{enumerate}
Case 2) \texttt{P}$(i\#)=0$; \texttt{P}$(j\#)=kF$.
\begin{enumerate}[label=2.\arabic*)]
\item $\texttt{S}(i\#)=\alpha; \texttt{S}(j\#)=\delta cd^-$: 
    \[
    \begin{prooftree}
    \hypo{\emptyset}
    \infer1[i\#]{[i:\texttt{\#}]}
    \hypo{[k:F]}
    \hypo{s}
    \infer2[j\#]{[j:\texttt{\#}]s}
    \infer2[\textsc{Cut}]{s}
    \end{prooftree}
\]
    \text{can be reduced to} 
\[
    \begin{prooftree}
    \hypo{s}
    \infer1{s}
    \end{prooftree}\, .
    \] 
\item $\texttt{S}(i\#)=\gamma; \texttt{S}(j\#)=\beta$: cf. 1.1), and
\[
\begin{prooftree}
\hypo{s}
\infer1[i\#]{[i:\texttt{\#}]s}
\hypo{[k:F]}
\infer1[j\#]{[j:\texttt{\#}]}
\infer2[\textsc{Cut}]{s}
\end{prooftree}
\]
    \text{can be reduced to} 
\[
\begin{prooftree}
\hypo{s}
\infer1{s}
\end{prooftree}\, .
\] 
\item $\texttt{S}(i\#)=\delta cs; \texttt{S}(j\#)=\beta$: cf. 1.1).
\end{enumerate}
Case 3) \texttt{P}$(i\#)=0$; \texttt{P}$(j\#)=kFlG$:
\begin{enumerate}[label=3.\arabic*)]
    \item $\texttt{S}(i\#)=\alpha; \texttt{S}(j\#)=\delta cd^-$: cf. 2.1).
    \item $\texttt{S}(i\#)=\gamma; \texttt{S}(j\#)=\beta$: cf. 2.2).
    \item $\texttt{S}(i\#)=\gamma; \texttt{S}(j\#)=\gamma^-$: let $k\neq l$, $F=G$, then
    \[
\begin{prooftree}
\hypo{s}
\infer1[i\#]{[i:\texttt{\#}]s}
\hypo{[k:F;l:G]}
\infer1[j\#]{[j:\texttt{\#}]}
\infer2[\textsc{Cut}]{s}
\end{prooftree}
\]
    \text{can be reduced to} 
\[
\begin{prooftree}
\hypo{s}
\infer1{s}
\end{prooftree}\, .
\]
\item $\texttt{S}(i\#)=\delta cs; \texttt{S}(j\#)=\beta$: cf. 1.1).
\item $\texttt{S}(i\#)=\delta cs; \texttt{S}(j\#)=\gamma^-$: cf. 3.3).
\end{enumerate}
Case 4) $\texttt{P}(i\#)=0; \texttt{P}(j\#)=kF-lG$.
\begin{enumerate}[label=4.\arabic*)]
    \item $\texttt{S}(i\#)=\alpha; \texttt{S}(j\#)=\delta cd$: let $k\neq l$, $F=G$, then
\[
\begin{prooftree}
\hypo{\emptyset}
\infer1[i\#]{[i:\texttt{\#}]}
\hypo{[k:F]s}
\hypo{[l:F]t}
\infer2[j\#]{[j:\texttt{\#}]st}
\infer2[\textsc{Cut}]{st}
\end{prooftree}
\]
    \text{can be reduced to} 
\[
\begin{prooftree}
\hypo{[k:F]s}
\hypo{[l:F]t}
\infer2[\textsc{Cut}]{st}
\end{prooftree}\, .
\]
\end{enumerate}
Case 5) $\texttt{P}(i\#)=kF; \texttt{P}(j\#)=lG$.
\begin{enumerate}[label=5.\arabic*)]
\item $\texttt{S}(i\#)=\beta; \texttt{S}(j\#)=\delta cd^-$: cf. 2.1).
\item $\texttt{S}(i\#)=\gamma; \texttt{S}(j\#)=\gamma$, let $k\neq l, F=G$:
\[
\begin{prooftree}
\hypo{[k:F]s}
\infer1[i\#]{[i:\texttt{\#}]s}
\hypo{[l:F]t}
\infer1[j\#]{[j:\texttt{\#}]t}
\infer2[\textsc{Cut}]{st}
\end{prooftree}
\]
    \text{can be reduced to} 
\[
\begin{prooftree}
\hypo{[k:F]s}
\hypo{[l:F]t}
\infer2[\textsc{Cut}]{st}
\end{prooftree}\, .
\]
\end{enumerate}
Case 6) $\texttt{P}(i\#)=kF; \texttt{P}(j\#)=lFmG$.
\begin{enumerate}[label=6.\arabic*)]
\item $\texttt{S}(i\#)=\beta; \texttt{S}(j\#)=\delta cd^-$: cf. 2.1).
\item $\texttt{S}(i\#)=\delta cd^-; \texttt{S}(j\#)=\beta$: cf. 2.1).
\end{enumerate}
Case 7) $\texttt{P}(i\#)=kF; \texttt{P}(j\#)=lF-mG$.
\begin{enumerate}[label=7.\arabic*)]
    \item $\texttt{S}(i\#)=\beta; \texttt{S}(j\#)=\delta cd$: if $F=G, l\neq m$,
    \[
    \begin{prooftree}
        \hypo{[k:F]+\emptyset}
        \infer1[i\#]{[i:\texttt{\#}]}
        \hypo{[l:F]s}
        \hypo{[m:F]t}
        \infer2[j\#]{[j:\texttt{\#}]st}
        \infer2[\textsc{Cut}]{st}
    \end{prooftree}
\]
    \text{can be reduced to} 
\[
    \begin{prooftree}
        \hypo{[l:F]s}
        \hypo{[m:F]t}
        \infer2[\textsc{Cut}]{st}
    \end{prooftree}\, .
    \]
    \item $\texttt{S}(i\#)=\gamma; \texttt{S}(j\#)=\delta cs$: if $k\neq l$,
    \[
    \begin{prooftree}
        \hypo{[k:F]s}
        \infer1[i\#]{[i:\texttt{\#}]s}
        \hypo{[l:F]t}
        \hypo{[m:G]t}
        \infer2[j\#]{[j:\texttt{\#}]t}
        \infer2[\textsc{Cut}]{st}
    \end{prooftree}
\]
    \text{can be reduced to} 
\[
    \begin{prooftree}
        \hypo{[k:F]s}
        \hypo{[l:F]t}
        \infer2[\textsc{Cut}]{st}
    \end{prooftree}\, .
    \]    

\end{enumerate}
Case 8) $\texttt{P}(i\#)=kFlG; \texttt{P}(j\#)=mFnG.$
\begin{enumerate}[label=8.\arabic*)]
\item $\texttt{S}(i\#)=\beta; \texttt{S}(j\#)=\delta cd^-$: cf. 2.1).
\item $\texttt{S}(i\#)=\gamma^-; \texttt{S}(j\#)=\delta cd^-$: cf. 2.1).
\end{enumerate}
Case 9) $\texttt{P}(i\#)=kFlG; \texttt{P}(j\#)=mF-nG$.
\begin{enumerate}[label=9.\arabic*)]
\item $\texttt{S}(i\#)=\beta; \texttt{S}(j\#)=\delta cd$: if $m\neq n, F=G$, cf. 7.1).
\item $\texttt{S}(i\#)=\gamma; \texttt{S}(j\#)=\delta cd$: if $k\neq m, l\neq n$,
\[
\begin{prooftree}
\hypo{[k:F;l:G]s}
\infer1[i\#]{[i:\texttt{\#}]s}
\hypo{[m:F]t}
\hypo{[n:G]u}
\infer2[j\#]{[j:\texttt{\#}]tu}
\infer2[\textsc{Cut}]{stu}
\end{prooftree} 
\]
    \text{can be reduced to} 
\[
\begin{prooftree}
\hypo{[k:F;l:G]s}
\hypo{[m:F]t}
\infer2[\textsc{Cut}]{[l:G]st}
\hypo{[n:G]u}
\infer2[\textsc{Cut}]{stu}
\end{prooftree}\, ,
\]
or
\[
\begin{prooftree}
\hypo{[k:F;l:G]s}
\hypo{[n:G]u}
\infer2[\textsc{Cut}]{[k:F]su}
\hypo{[m:F]t}
\infer2[\textsc{Cut}]{stu}
\end{prooftree}\, .
\]
\item $\texttt{S}(i\#)=\delta cs; \texttt{S}(j\#)=\delta cd$: if $k\neq m, l\neq n$, cf. 9.2).
\end{enumerate}
Case 10) $\texttt{P}(i\#)=kF-lG; \texttt{P}(j\#)=mF-nG$.
\begin{enumerate}[label=10.\arabic*)]
\item $\texttt{S}(i\#)=\delta cs; \texttt{S}(j\#)=\delta cs$: if $k\neq m, l\neq n$,
\[
\begin{prooftree}
\hypo{[k:F]s}
\hypo{[l:G]s}
\infer2[i\#]{[i:\texttt{\#}]s}
\hypo{[m:F]t}
\hypo{[n:G]t}
\infer2[j\#]{[j:\texttt{\#}]t}
\infer2[\textsc{Cut}]{st}
\end{prooftree} 
\]
    \text{can be reduced to} 
\[
\begin{prooftree}
\hypo{[k:F]s}
\hypo{[m:F]t}
\infer2[\textsc{Cut}]{st}
\end{prooftree}\, .
\]
\item $\texttt{S}(i\#)=\delta cs; \texttt{S}(j\#)=\epsilon$: if $k\neq m, l\neq n$,
\[
\begin{prooftree}
\hypo{[k:F]s}
\hypo{[l:G]s}
\infer2[i\#]{[i:\texttt{\#}]s}
\hypo{[m:F]t}
\infer1[j\#]{[j:\texttt{\#}]t}
\infer2[\textsc{Cut}]{st}
\end{prooftree} 
\]
    \text{can be reduced to} 
\[
\begin{prooftree}
\hypo{[k:F]s}
\hypo{[m:F]t}
\infer2[\textsc{Cut}]{st} 
\end{prooftree}\, ,
\]
and
\[
\begin{prooftree}
\hypo{[k:F]s}
\hypo{[l:G]s}
\infer2[i\#]{[i:\texttt{\#}]s}
\hypo{[n:G]t}
\infer1[j\#]{[j:\texttt{\#}]t}
\infer2[\textsc{Cut}]{st}
\end{prooftree} 
\]
    \text{can be reduced to} 
\[
\begin{prooftree}
\hypo{[l:G]s}
\hypo{[n:G]t}
\infer2[\textsc{Cut}]{st} 
\end{prooftree}\, .
\]
\end{enumerate}
\hfill $\blacksquare$
\end{proof}
Using this lemma, let us now prove \textsc{Cut}-elimination:
\begin{theorem}\

\noindent
Let \textbf{Cal} be $\{\textsc{Id}, \textsc{Cut}cd,1\#,2\#\}$, for any $\# \in \mathtt{PC}$.

For any \textbf{Cal}-proof containing \textsc{Cut}, we can construct another \textbf{Cal}-proof which does not contain \textsc{Cut} but terminates in the same root.
\end{theorem}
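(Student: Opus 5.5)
We follow the standard Gentzen/Mancosu strategy. It suffices to show that every derivation ending in a single \textsc{Cut}, whose two premiss subderivations are \textsc{Cut}-free, is reducible. Given this, an arbitrary \textbf{Cal}-proof is handled by repeatedly eliminating a topmost \textsc{Cut}, i.e. one with no \textsc{Cut} above it. Each such step strictly decreases the number of \textsc{Cut} applications, so the process terminates in a \textsc{Cut}-free proof with the same root.

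For the topmost case we argue by a double induction. The primary induction is on the \textsc{Cut}-depth $\mathcal{D}$; the secondary induction, for fixed depth, is on the \textsc{Cut}-height $\mathcal{H}$. We then split according to the last rules $\mathfrak{r}_1,\mathfrak{r}_2$ of the two premiss derivations $\mathfrak{d}_1,\mathfrak{d}_2$.

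\emph{(a) One premiss is an axiom.} Suppose one premiss is \textsc{Id}, i.e. $[1:A;2:A]$ with $A$ primitive. Then the \textsc{Cut} conclusion coincides with the other premiss, and we simply delete the \textsc{Cut}.

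\emph{(b) The cut formula is parametric in one premiss.} Suppose the cut formula is not principal in, say, $\mathfrak{r}_1$. Since $\mathfrak{r}_1$ is \textsc{Id} (handled in (a)) or an operational rule, this means $\mathfrak{r}_1$ is a $\#$-rule and $A$ sits in its context. We permute the \textsc{Cut} upwards into the premiss (or premisses) of $\mathfrak{r}_1$ that carry $A$ in their context, and then reapply $\mathfrak{r}_1$. This lowers $\mathcal{H}$, so the secondary inductive hypothesis applies.

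Some shapes need extra care here. For $\delta cs$ and $\epsilon$ the context is shared, so the \textsc{Cut} has to be duplicated across both premisses; each copy still has smaller height. For $\delta cd$ we cut only into the premiss whose context contains $A$. For axiomatic $\alpha$ and $\beta$ rules the context is empty, so $A$ cannot be parametric and this case does not arise. For the $^-$-restricted shapes ($\delta cd^-$, $\gamma^-$) the restricted premiss has empty context, so again $A$ must lie in the unrestricted premiss.

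\emph{(c) The cut formula is principal in both premisses.} Then $A$ is a $\#$-formula with $\mathcal{D}>0$, introduced by $1\#$ on one side and $2\#$ on the other. This is exactly the situation of Lemma 1, applied to the last steps. Lemma 1 yields a derivation that is \textsc{Cut}-free, or whose new \textsc{Cut}s are on active formulae $F,G$ of strictly smaller depth. The latter are handled by the primary inductive hypothesis.

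One subtlety is that Lemma 1 is stated with $\mathcal{H}=2$ and premisses $\mathfrak{S}_1,\mathfrak{S}_2$. We read it as a schema in which $\mathfrak{S}_1,\mathfrak{S}_2$ are the conclusions of arbitrary \textsc{Cut}-free subderivations; the replacement derivations only cut on active formulae taken from these premisses. Some reductions, such as 9.2, introduce two nested \textsc{Cut}s. We eliminate the upper one first, since both its premisses are \textsc{Cut}-free, and then the lower one, each time by the primary hypothesis, as $d(F),d(G)<d(A)$.

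I expect the main obstacle to be step (b): checking that the permutation really is uniform across all premiss shapes, frameworks and $^-$-restrictions in \texttt{PC}. The awkward cases are the context-sharing $\delta cs$ and $\epsilon$ rules, where duplicating the \textsc{Cut} makes the contexts $st$ appear twice. To repair this, I would first check that shared contexts may be instantiated as $st$, which is the case since contexts are arbitrary multisets in the $\mathbb{N}^0:\mathbb{N}^0$ framework. I would also check that the conclusion of $\mathfrak{r}_1$ re-applied with context $st$ yields precisely the original \textsc{Cut} conclusion. The multiset reading of sequents (implicit \textsc{Exchange}) ensures that the order of context formulae causes no trouble.
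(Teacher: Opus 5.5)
Your proposal is correct. Its skeleton is the paper's: lexicographic induction on \textsc{Cut}-depth and then \textsc{Cut}-height, Lemma~1 for principal--principal cuts, and upward permutation when the cut formula is primitive and parametric. You differ from the paper in one case.

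When the cut formula is a $\#$-formula that is parametric in one premiss, the paper (its case 2(a)) does not permute. It traces the formula back to the $i\#$- and $j\#$-applications that introduced it on each side and applies Lemma~1 at those points. It then rebuilds the intervening segments ($\mathfrak{p}_4'$, $\mathfrak{p}_5'$) by substituting the opposite context for the tracked occurrence. Whenever a context-sharing rule has a sibling branch with its own introduction of that formula, it calls Lemma~1 again.

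You instead permute locally, treating primitive and $\#$-formula cut formulae uniformly. All you then need is that height drops and that every rule shape tolerates a change of context, which you check: contexts are arbitrary or forced empty, and the framework is $\mathbb{N}^0:\mathbb{N}^0$. Your route is the standard rank reduction of the Gentzen/Mancosu argument the paper says it follows, and it is easier to verify. The paper's route reaches the depth-reducing step in one move. The cost is a substitution argument over whole derivation segments, with multiple ancestors under context-sharing rules, which the paper only sketches. You also dispose of \textsc{Id} premisses directly, whereas the paper reaches them only through permutation down to its \textsc{Id}--\textsc{Id} base case.

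Two minor corrections. An $\epsilon$-application has a single premiss node in a derivation, so nothing is duplicated there. A $\gamma^-$ conclusion has empty context, so the parametric case simply does not arise for it.

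Finally, your case (c) rests on $d(F),d(G)<d(A)$. It also needs the active formulae on the $1\#$- and $2\#$-sides to coincide, which Lemma~1's reductions presuppose. Both hold only under subformula-connectedness. The paper relies on this tacitly in exactly the same way, so it is not a gap relative to its proof, but you should state it as an assumption.
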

\begin{proof}\

\noindent By induction on the \textsc{Cut}-height and sub-induction on the \textsc{Cut}-depth of any \textbf{Cal}-proof $\mathfrak{p}$ using \textsc{Cut} once in its final step.
\begin{itemize}
\item \textit{Base case: }we can reduce any minimal $\mathfrak{p}$, where $\mathcal{D}(\mathfrak{p})=0$ and $\mathcal{H}(\mathfrak{p})=2$.
\item \textit{Inductive hypothesis (IH):} we can reduce any $\mathfrak{p}^\prime$ if its \textsc{Cut}-depth or -height is lower than that of $\mathfrak{p}$, i.e. $\mathcal{D}(\mathfrak{p}^\prime)<\mathcal{D}(\mathfrak{p})$, or $\mathcal{D}(\mathfrak{p}^\prime)=\mathcal{D}(\mathfrak{p})$ and $\mathcal{H}(\mathfrak{p}^\prime)<\mathcal{H}(\mathfrak{p})$.
\item \textit{Inductive step:} we can reduce any $\mathfrak{p}$.
\end{itemize}
\textit{Base case.} $\mathcal{H}(\mathfrak{p})=1$. The \textsc{Cut}-formula is introduced immediately before the \textsc{Cut}. Then, it must be a conclusion of \textsc{Id}, since no $\#\in\mathtt{PC}$ satisfies $\mathtt{SP}(\#)=\langle \alpha, \alpha \rangle$ and \textsc{Cut} is excluded by hypothesis:
\[ \begin{prooftree}
\hypo{[1:A; 2:A]}
\hypo{[1:A; 2:A]}
\infer2[\textsc{Cut}]{[1:A; 2:A]}
\end{prooftree}\, .
\]
This reduces trivially, as the conclusion sequent can be derived from \textsc{Id} directly.\\ \\
\textit{Inductive step.} 
\begin{enumerate}
\item $\mathcal{D}(\mathfrak{p})>0$ and $\mathcal{H}(\mathfrak{p})=2$.
\begin{enumerate}
\item One of the premises could be an instance of \textsc{Id}. Since \textsc{Id} is restricted to primitives, the base case already proven is the only available one. All other rules of the calculus only derive formulae of higher depth, and therefore their principal formula cannot be used for a \textsc{Cut} together with the principal formula of \textsc{Id}.
\item Otherwise, both of the \textsc{Cut}-premises are introduced by connective rules. Then, there is a derivation $\mathfrak{p}^\prime$ with the same root and leaves as $\mathfrak{p}$ and $\mathcal{D}(\mathfrak{p}^\prime)<\mathcal{D}(\mathfrak{p})$, as we have proven in Lemma 1. By the IH, $\mathfrak{p}^\prime$ is reducible.
\end{enumerate} 
\item $\mathcal{H}(\mathfrak{p})>2$. Then, $\mathfrak{p}$ is of the shape
\[
\begin{prooftree}
\hypo{~~~~~~~~~}
\infer[rule code={\hbox{\tikz
    \draw (0,0) -- (\hsize,0) -- (0.5\hsize,-2em) (0.5\hsize,-0.875em) node{\small $\mathfrak{p}_1$}  (0.5\hsize,-2em) -- (0,0);}}]1{[i:A]s}
    \hypo{~~~~~~~~~}
\infer[rule code={\hbox{\tikz
    \draw (0,0) -- (\hsize,0) -- (0.5\hsize,-2em) (0.5\hsize,-0.875em) node{\small $\mathfrak{p}_2$}  (0.5\hsize,-2em) -- (0,0);}}]1{[j:A]t}
    \infer2[\textsc{Cut}]{st}
\end{prooftree}\, ,
\]
where $i, j \in \{1,2\}, i\neq j$, and $h(\mathfrak{p}_1)>1, h(\mathfrak{p}_2)\geq 1$ .

\begin{enumerate}
\item $A=\texttt{\#}$ (the \textsc{\textsc{Cut}}-formula is a $\#$-formula):
\[
\begin{prooftree}
\hypo{~~~~~~~~~}
\infer[rule code={\hbox{\tikz
    \draw (0,0) -- (\hsize,0) -- (0.5\hsize,-2em) (0.5\hsize,-0.875em) node{\small $\mathfrak{p}_3$}  (0.5\hsize,-2em) -- (0,0);}}]1{\mathfrak{S}_1}
    \infer1[i\#]{[i:\texttt{\#}]s^\prime}
    \infer[rule code={\hbox{\tikz
    \draw (0,0) -- (\hsize,0) -- (0.5\hsize,-2em) (0.5\hsize,-0.875em) node{\small $\mathfrak{p}_4$}  (0.5\hsize,-2em) -- (0,0);}}]1{[i:\texttt{\#}]s}
    \hypo{~~~~~~~~~}
\infer[rule code={\hbox{\tikz
    \draw (0,0) -- (\hsize,0) -- (0.5\hsize,-2em) (0.5\hsize,-0.875em) node{\small $\mathfrak{p}_2$}  (0.5\hsize,-2em) -- (0,0);}}]1{[j:\texttt{\#}]t}
    \infer2[\textsc{Cut}]{st}
\end{prooftree}\, ,
\]
where $\mathfrak{S}_1$ is the set of premiss sequents of $i\#$.

Then, $\texttt{\#}$ must have been introduced at some point in $\mathfrak{p}_2$ using the $j\#$-rule. This is because \textsc{Id} is restricted to primitives, applications of $i\#$ by definition only yield new $\#$-formulae in the $i$th dimension, and $\mathfrak{p}_2$ is \textsc{\textsc{Cut}}-free by the specification of $\mathfrak{p}$.
\[
\begin{prooftree}
\hypo{~~~~~~~~~}
\infer[rule code={\hbox{\tikz
    \draw (0,0) -- (\hsize,0) -- (0.5\hsize,-2em) (0.5\hsize,-0.875em) node{\small $\mathfrak{p}_3$}  (0.5\hsize,-2em) -- (0,0);}}]1{\mathfrak{S}_1}
    \infer1[i\#]{\boxed{[i:\texttt{\#}]}s^\prime}
    \infer[rule code={\hbox{\tikz
    \draw (0,0) -- (\hsize,0) -- (0.5\hsize,-2em) (0.5\hsize,-0.875em) node{\small $\mathfrak{p}_4$}  (0.5\hsize,-2em) -- (0,0);}}]1{[i:\texttt{\#}]s}
\hypo{~~~~~~~~~}
\infer[rule code={\hbox{\tikz
    \draw (0,0) -- (\hsize,0) -- (0.5\hsize,-2em) (0.5\hsize,-0.875em) node{\small $\mathfrak{p}_5$}  (0.5\hsize,-2em) -- (0,0);}}]1{\mathfrak{S}_2}
    \infer1[j\#]{\boxed{[j:\texttt{\#}]}t^{\prime}}
    \infer[rule code={\hbox{\tikz
    \draw (0,0) -- (\hsize,0) -- (0.5\hsize,-2em) (0.5\hsize,-0.875em) node{\small $\mathfrak{p}_6$}  (0.5\hsize,-2em) -- (0,0);}}]1{[j:\texttt{\#}]t}
    \infer2[\textsc{Cut}]{st}
\end{prooftree}\, ,
\]
where $\mathfrak{S}_2$ is the set of premiss sequents of $j\#$.

We can transform this proof into $\mathfrak{p}^\prime$:
\[
\begin{prooftree}
    \hypo{~~~~~~~~~}
    \infer[rule code={\hbox{\tikz
    \draw (0,0) -- (\hsize,0) -- (0.5\hsize,-2em) (0.5\hsize,-0.875em) node{\small $\mathfrak{p}_3$}  (0.5\hsize,-2em) -- (0,0);}}]1{\mathfrak{S}_1}
    \hypo{~~~~~~~~~}
    \infer[rule code={\hbox{\tikz
    \draw (0,0) -- (\hsize,0) -- (0.5\hsize,-2em) (0.5\hsize,-0.875em) node{\small $\mathfrak{p}_5$}  (0.5\hsize,-2em) -- (0,0);}}]1{\mathfrak{S}_2}
    \infer[rule code={\hbox{\tikz
    \draw (0,0) -- (\hsize,0) -- (0.5\hsize,-2em) (0.5\hsize,-0.875em) node{\small $\mathfrak{p}_7$}  (0.5\hsize,-2em) -- (0,0);}}]2{~~~s^{\prime}t^{\prime}~~~}
    \infer[rule code={\hbox{\tikz
    \draw (0,0) -- (\hsize,0) -- (0.5\hsize,-2em) (0.5\hsize,-0.875em) node{\small $\mathfrak{p}_4^\prime$}  (0.5\hsize,-2em) -- (0,0);}}]1{~~~st^\prime~~~}
    \infer[rule code={\hbox{\tikz
    \draw (0,0) -- (\hsize,0) -- (0.5\hsize,-2em) (0.5\hsize,-0.875em) node{\small $\mathfrak{p}_5^\prime$}  (0.5\hsize,-2em) -- (0,0);}}]1{st}
\end{prooftree}\, .
\]
We construct $\mathfrak{p}_7$ as specified in Lemma 1.

We construct $\mathfrak{p}_4^\prime$ in parallel to $\mathfrak{p}_4$. We copy each step of $\mathfrak{p}_4$, except that we replace the boxed occurrence of sub-sequent `$[i:\texttt{\#}]$' with `$t^\prime$' in any sequent $\mathfrak{s}$ derived from the displayed sequent $[i:\texttt{\#}]s^\prime$. This replacement does not affect the derivation in any relevant way, regardless of the rule we applied to $\mathfrak{s}$ at step $m$ of $\mathfrak{p}_4$. Let us prove this claim.

First, note that the boxed occurrence of `$[i:\texttt{\#}]$' was introduced one step prior to the first step of $\mathfrak{p_4}$. Thus, it only occurs in the context of any sequent in $\mathfrak{p}_4$.

As we specified that a rule was applied \textit{to} $\mathfrak{s}$ at step $m$, $\mathfrak{s}$ is in the set of premiss sequents of said rule application. Hence, we did not apply an axiom. Further, $\mathfrak{p_4}$ is \textsc{Cut}-free by the specification of $\mathfrak{p}$.

Therefore, we applied a non-axiomatic operational rule at $m$. Call it $k\#$ with $k\in \{1,2\}$. Let $u, v$ be the sub-sequent(s) containing exclusively the active formulae of $k\in \{1,2\}$. Let $\mathfrak{p}_m, \mathfrak{p}_n (\mathfrak{p}^\prime_m, \mathfrak{p}^\prime_n)$ be the subproof(s) of $\mathfrak{p}_4 (\mathfrak{p}^\prime_4)$ up to step $m$.
\begin{enumerate}
\item $k\#$ is a single-premiss rule. Then, $\mathfrak{p}_m$ is of the following form in $\mathfrak{p}_4$:
\[
\begin{prooftree}
\hypo{~~~~~~~~~}
\infer[rule code={\hbox{\tikz
    \draw (0,0) -- (\hsize,0) -- (0.5\hsize,-2em) (0.5\hsize,-0.875em) node{\small $\mathfrak{p}_m$}  (0.5\hsize,-2em) -- (0,0);}}]1{[i:\texttt{\#}]s^{\prime\prime}u}
\infer1[k\#]{[i:\texttt{\#}]s^{\prime\prime}}
\end{prooftree}\, .
\]
In $\mathfrak{p^\prime_4}$, we replace this step with:
\[
\begin{prooftree}
\hypo{~~~~~~~~~}
\infer[rule code={\hbox{\tikz
    \draw (0,0) -- (\hsize,0) -- (0.5\hsize,-2em) (0.5\hsize,-0.875em) node{\small $\mathfrak{p}^\prime_m$}  (0.5\hsize,-2em) -- (0,0);}}]1{s^{\prime\prime}t^{\prime}u}
\infer1[k\#]{~~s^{\prime\prime}t^{\prime}~~}
\end{prooftree}\, .
\]
\item $k\#$ is a context-differentiating two-premiss rule. We obtain for $\mathfrak{p}_4$:
\[
\begin{prooftree}
\hypo{~~~~~~~~~}
\infer[rule code={\hbox{\tikz
    \draw (0,0) -- (\hsize,0) -- (0.5\hsize,-2em) (0.5\hsize,-0.875em) node{\small $\mathfrak{p}_m$}  (0.5\hsize,-2em) -- (0,0);}}]1{[i:\texttt{\#}]s^{\prime\prime}u}
    \hypo{~~~~~~~~~}
\infer[rule code={\hbox{\tikz
    \draw (0,0) -- (\hsize,0) -- (0.5\hsize,-2em) (0.5\hsize,-0.875em) node{\small $\mathfrak{p}_n$}  (0.5\hsize,-2em) -- (0,0);}}]1{s^{\prime\prime\prime}v}
\infer2[k\#]{[i:\texttt{\#}]s^{\prime\prime}s^{\prime\prime\prime}}
\end{prooftree}\, .
\] 
In $\mathfrak{p_4^\prime}$, we replace this with:
\[
\begin{prooftree}
\hypo{~~~~~~~~~}
\infer[rule code={\hbox{\tikz
    \draw (0,0) -- (\hsize,0) -- (0.5\hsize,-2em) (0.5\hsize,-0.875em) node{\small $\mathfrak{p}^\prime_m$}  (0.5\hsize,-2em) -- (0,0);}}]1{s^{\prime\prime}t^{\prime}u}
    \hypo{~~~~~~~~~}
\infer[rule code={\hbox{\tikz
    \draw (0,0) -- (\hsize,0) -- (0.5\hsize,-2em) (0.5\hsize,-0.875em) node{\small $\mathfrak{p}_n$}  (0.5\hsize,-2em) -- (0,0);}}]1{s^{\prime\prime\prime}v}
\infer2[k\#]{s^{\prime\prime}s^{\prime\prime\prime}t^{\prime}}
\end{prooftree}\, .
\]

\item $k\#$ is a context-sharing two-premiss rule. In $\mathfrak{p}_4$, we get:
\[
\begin{prooftree}
\hypo{~~~~~~~~~}
\infer[rule code={\hbox{\tikz
    \draw (0,0) -- (\hsize,0) -- (0.5\hsize,-2em) (0.5\hsize,-0.875em) node{\small $\mathfrak{p}_m$}  (0.5\hsize,-2em) -- (0,0);}}]1{\boxed{[i:\texttt{\#}]}s^{\prime\prime}u}
\hypo{~~~~~~~~~}
\infer[rule code={\hbox{\tikz
    \draw (0,0) -- (\hsize,0) -- (0.5\hsize,-2em) (0.5\hsize,-0.875em) node{\small $\mathfrak{p}_n$}  (0.5\hsize,-2em) -- (0,0);}}]1{[i:\texttt{\#}]s^{\prime\prime}v}
\infer2[k\#]{[i:\texttt{\#}]s^{\prime\prime}}
\end{prooftree}\, .
\]
Here, we need a more sophisticated construction for $\mathfrak{p}_4^\prime$. Assume the boxed `$[i:\texttt{\#}]$' in the left branch is our tracked occurrence. Then, the occurrence of `$[i:\texttt{\#}]$' must have been introduced from a set of sequents $\mathfrak{S_3}$ using an application of $i\#$ at some prior step in $\mathfrak{p_n}$: 
\[
\begin{prooftree}
\hypo{~~~~~~~~~}
\infer[rule code={\hbox{\tikz
    \draw (0,0) -- (\hsize,0) -- (0.5\hsize,-2em) (0.5\hsize,-0.875em) node{\small $\mathfrak{p}_m$}  (0.5\hsize,-2em) -- (0,0);}}]1{[i:\texttt{\#}]s^{\prime\prime}u}
\hypo{~~~~~~~~~}
\infer[rule code={\hbox{\tikz
    \draw (0,0) -- (\hsize,0) -- (0.5\hsize,-2em) (0.5\hsize,-0.875em) node{\small $\mathfrak{p}_o$}  (0.5\hsize,-2em) -- (0,0);}}]1{\mathfrak{S_3}}
    \infer1[i\#]{[i:\texttt{\#}]s^{\prime\prime\prime}}
    \infer[rule code={\hbox{\tikz
    \draw (0,0) -- (\hsize,0) -- (0.5\hsize,-2em) (0.5\hsize,-0.875em) node{\small $\mathfrak{p}_p$}  (0.5\hsize,-2em) -- (0,0);}}]1{[i:\texttt{\#}]s^{\prime\prime}v}
\infer2[k\#]{[i:\texttt{\#}]s^{\prime\prime}}
\end{prooftree}\, .
\]
We can construct the step in $p_4^\prime$ as follows:
\[
\begin{prooftree}
    \hypo{~~~~~~~~~}
    \infer[rule code={\hbox{\tikz
    \draw (0,0) -- (\hsize,0) -- (0.5\hsize,-2em) (0.5\hsize,-0.875em) node{\small $\mathfrak{p}^\prime_m$}  (0.5\hsize,-2em) -- (0,0);}}]1{s^{\prime\prime}t^{\prime}u}
    \hypo{~~~~~~~~~}
    \infer[rule code={\hbox{\tikz
    \draw (0,0) -- (\hsize,0) -- (0.5\hsize,-2em) (0.5\hsize,-0.875em) node{\small $\mathfrak{p}_5$}  (0.5\hsize,-2em) -- (0,0);}}]1{\mathfrak{S}_2}
    \hypo{~~~~~~~~~}
    \infer[rule code={\hbox{\tikz
    \draw (0,0) -- (\hsize,0) -- (0.5\hsize,-2em) (0.5\hsize,-0.875em) node{\small $\mathfrak{p}_o$}  (0.5\hsize,-2em) -- (0,0);}}]1{\mathfrak{S}_3}
    \infer[rule code={\hbox{\tikz
    \draw (0,0) -- (\hsize,0) -- (0.5\hsize,-2em) (0.5\hsize,-0.875em) node{\small $\mathfrak{p}_7^\prime$}  (0.5\hsize,-2em) -- (0,0);}}]2{s^{\prime\prime\prime}t^\prime}
    \infer[rule code={\hbox{\tikz
    \draw (0,0) -- (\hsize,0) -- (0.5\hsize,-2em) (0.5\hsize,-0.875em) node{\small $\mathfrak{p}_p^\prime$}  (0.5\hsize,-2em) -- (0,0);}}]1{~~s^{\prime\prime}t^{\prime}v~~}
    \infer2[k\#]{st^{\prime\prime}}
\end{prooftree}\, ,
\]
where $\mathfrak{p}_7^\prime$ is constructed as per the relevant reduction case of Lemma 1. We construct $\mathfrak{p}_5^\prime$ from $\mathfrak{p}_5$ analogously to the construction of $\mathfrak{p}_4^\prime$, replacing the boxed occurrence of sub-sequent `$[j:\texttt{\#}]$' with `$s$'. Intermediate steps follow 2ai-iii, using $\mathfrak{p}_3$ in lieu of $\mathfrak{p}_5$ in 2aiii. Proof $\mathfrak{p^\prime}$ is generally \textsc{Cut}-free, except if $\mathfrak{p}_7$ and $\mathfrak{p}_7^\prime$ (and the analogues in $\mathfrak{p}_5^\prime$) require \textsc{Cut} according to the relevant reduction case in Lemma 1. However, in each reduction case of Lemma 1, the \textsc{Cut}-depth is reduced by 1 compared to the unreduced case as we cut over the active formulae rather than the principal formula. Thus, $\mathcal{D}(\mathfrak{p^\prime})<\mathcal{D}(\mathfrak{p})$ and the theorem holds by the inductive hypothesis.

\end{enumerate}
\item $A \neq \texttt{\#}$ (the \textsc{Cut}-formula is not a \#-formula): As $h(\mathfrak{p_1})>1$, the last rule applied in $\mathfrak{p_1}$ must have been a (non-axiomatic) \#-rule since, if it was an axiom, $h(\mathfrak{p_1})=1$ and $\mathfrak{p_1}$ is \textsc{Cut}-free by the specification of $\mathfrak{p}$:
\[
\begin{prooftree}
\hypo{~~~~~~~~~}
\infer[rule code={\hbox{\tikz
    \draw (0,0) -- (\hsize,0) -- (0.5\hsize,-2em) (0.5\hsize,-0.875em) node{\small $\mathfrak{p}_3$}  (0.5\hsize,-2em) -- (0,0);}}]1{\mathfrak{S}_1}
    \infer1[n\#]{[i:A; n:\texttt{\#}]s}
    \hypo{~~~~~~~~~}
\infer[rule code={\hbox{\tikz
    \draw (0,0) -- (\hsize,0) -- (0.5\hsize,-2em) (0.5\hsize,-0.875em) node{\small $\mathfrak{p}_2$}  (0.5\hsize,-2em) -- (0,0);}}]1{[j:A]t}
    \infer2[\textsc{Cut}]{[n:\texttt{\#}]st}
\end{prooftree}\, ,
\]
where $h(\mathfrak{p}_{3})=h(\mathfrak{p}_1)-1, \mathfrak{S}_1$ is the set of premiss sequents of n\#, and $n \in \{1,2\}$. Let $v,w$ be the sub-sequents only containing the active formula(e) of n\# in their respective dimensions.
\begin{enumerate}
\item n\# is a single-premiss rule:
\[
\begin{prooftree}
\hypo{~~~~~~~~~}
\infer[rule code={\hbox{\tikz
    \draw (0,0) -- (\hsize,0) -- (0.5\hsize,-2em) (0.5\hsize,-0.875em) node{\small $\mathfrak{p}_3$}  (0.5\hsize,-2em) -- (0,0);}}]1{[i:A]sv}
    \infer1[n\#]{[i:A; n:\texttt{\#}]s}
    \hypo{~~~~~~~~~}
\infer[rule code={\hbox{\tikz
    \draw (0,0) -- (\hsize,0) -- (0.5\hsize,-2em) (0.5\hsize,-0.875em) node{\small $\mathfrak{p}_2$}  (0.5\hsize,-2em) -- (0,0);}}]1{[j:A]t}
    \infer2[\textsc{Cut}]{[n:\texttt{\#}]st}
\end{prooftree}
\]
 We can transform this proof into $\mathfrak{p}^\prime$, with $\mathcal{H}(\mathfrak{p}^\prime)<\mathcal{H}(\mathfrak{p})$ by pushing the \textsc{Cut} one step up:
\[
\begin{prooftree}
\hypo{~~~~~~~~~}
\infer[rule code={\hbox{\tikz
    \draw (0,0) -- (\hsize,0) -- (0.5\hsize,-2em) (0.5\hsize,-0.875em) node{\small $\mathfrak{p}_3$}  (0.5\hsize,-2em) -- (0,0);}}]1{[i:A]sv}
    \hypo{~~~~~~~~~}
\infer[rule code={\hbox{\tikz
    \draw (0,0) -- (\hsize,0) -- (0.5\hsize,-2em) (0.5\hsize,-0.875em) node{\small $\mathfrak{p}_2$}  (0.5\hsize,-2em) -- (0,0);}}]1{[j:A]t}
    \infer2[\textsc{Cut}]{stv}
     \infer1[n\#]{[n:\texttt{\#}]st}
\end{prooftree}\, .
\]
\item n\# is a context-differentiating two-premiss rule:
\[
\begin{prooftree}
\hypo{~~~~~~~~~}
\infer[rule code={\hbox{\tikz
    \draw (0,0) -- (\hsize,0) -- (0.5\hsize,-2em) (0.5\hsize,-0.875em) node{\small $\mathfrak{p}_3$}  (0.5\hsize,-2em) -- (0,0);}}]1{[i:A]sv}
\hypo{~~~~~~~~~}
    \infer[rule code={\hbox{\tikz
    \draw (0,0) -- (\hsize,0) -- (0.5\hsize,-2em) (0.5\hsize,-0.875em) node{\small $\mathfrak{p}_4$}  (0.5\hsize,-2em) -- (0,0);}}]1{uw}
    \infer2[n\#]{[i:A; n:\texttt{\#}]su}
    \hypo{~~~~~~~~~}
\infer[rule code={\hbox{\tikz
    \draw (0,0) -- (\hsize,0) -- (0.5\hsize,-2em) (0.5\hsize,-0.875em) node{\small $\mathfrak{p}_2$}  (0.5\hsize,-2em) -- (0,0);}}]1{[j:A]t}
    \infer2[\textsc{Cut}]{[n:\texttt{\#}]stu}
\end{prooftree}\, .
\]
We can transform this proof into $\mathfrak{p}^\prime$, with $\mathcal{H}(\mathfrak{p}^\prime)<\mathcal{H}(\mathfrak{p})$ analogously to the previous step:
\[
\begin{prooftree}
\hypo{~~~~~~~~~}
\infer[rule code={\hbox{\tikz
    \draw (0,0) -- (\hsize,0) -- (0.5\hsize,-2em) (0.5\hsize,-0.875em) node{\small $\mathfrak{p}_3$}  (0.5\hsize,-2em) -- (0,0);}}]1{[i:A]sv}
    \hypo{~~~~~~~~~}
\infer[rule code={\hbox{\tikz
    \draw (0,0) -- (\hsize,0) -- (0.5\hsize,-2em) (0.5\hsize,-0.875em) node{\small $\mathfrak{p}_2$}  (0.5\hsize,-2em) -- (0,0);}}]1{[j:A]t}
    \infer2[\textsc{Cut}]{stv}
\hypo{~~~~~~~~~}
    \infer[rule code={\hbox{\tikz
    \draw (0,0) -- (\hsize,0) -- (0.5\hsize,-2em) (0.5\hsize,-0.875em) node{\small $\mathfrak{p}_4$}  (0.5\hsize,-2em) -- (0,0);}}]1{uw}
    \infer2[n\#]{[n:\texttt{\#}]stu}
\end{prooftree}\, .
\]
\item n\# is a context-sharing two-premiss rule:
\[
\begin{prooftree}
\hypo{~~~~~~~~~}
\infer[rule code={\hbox{\tikz
    \draw (0,0) -- (\hsize,0) -- (0.5\hsize,-2em) (0.5\hsize,-0.875em) node{\small $\mathfrak{p}_3$}  (0.5\hsize,-2em) -- (0,0);}}]1{[i:A]sv}
\hypo{~~~~~~~~~}
    \infer[rule code={\hbox{\tikz
    \draw (0,0) -- (\hsize,0) -- (0.5\hsize,-2em) (0.5\hsize,-0.875em) node{\small $\mathfrak{p}_4$}  (0.5\hsize,-2em) -- (0,0);}}]1{[i:A]sw}
    \infer2[n\#]{[i:A; n:\texttt{\#}]s}
    \hypo{~~~~~~~~~}
\infer[rule code={\hbox{\tikz
    \draw (0,0) -- (\hsize,0) -- (0.5\hsize,-2em) (0.5\hsize,-0.875em) node{\small $\mathfrak{p}_2$}  (0.5\hsize,-2em) -- (0,0);}}]1{[j:A]t}
    \infer2[\textsc{Cut}]{[n:\texttt{\#}]st}
\end{prooftree}\, .
\]
We can again transform this proof into $\mathfrak{p}^\prime$, with $\mathcal{H}(\mathfrak{p}^\prime)<\mathcal{H}(\mathfrak{p})$:
\[
\begin{prooftree}
\hypo{~~~~~~~~~}
\infer[rule code={\hbox{\tikz
    \draw (0,0) -- (\hsize,0) -- (0.5\hsize,-2em) (0.5\hsize,-0.875em) node{\small $\mathfrak{p}_3$}  (0.5\hsize,-2em) -- (0,0);}}]1{[i:A]sv}
    \hypo{~~~~~~~~~}
\infer[rule code={\hbox{\tikz
    \draw (0,0) -- (\hsize,0) -- (0.5\hsize,-2em) (0.5\hsize,-0.875em) node{\small $\mathfrak{p}_2$}  (0.5\hsize,-2em) -- (0,0);}}]1{[j:A]t}
    \infer2[\textsc{Cut}]{stv}
\hypo{~~~~~~~~~}
    \infer[rule code={\hbox{\tikz
    \draw (0,0) -- (\hsize,0) -- (0.5\hsize,-2em) (0.5\hsize,-0.875em) node{\small $\mathfrak{p}_4$}  (0.5\hsize,-2em) -- (0,0);}}]1{[i:A]sw}
        \hypo{~~~~~~~~~}
\infer[rule code={\hbox{\tikz
    \draw (0,0) -- (\hsize,0) -- (0.5\hsize,-2em) (0.5\hsize,-0.875em) node{\small $\mathfrak{p}_2$}  (0.5\hsize,-2em) -- (0,0);}}]1{[j:A]t}
    \infer2[\textsc{Cut}]{stw}
    \infer2[n\#]{[n:\texttt{\#}]st}
\end{prooftree} \, .
\]
\end{enumerate}
In all three cases, the theorem holds by virtue of the IH. \hfill $\blacksquare$
\end{enumerate}
\end{enumerate}
\end{proof}
We can now prove the first tenet of our notion of definability, \textit{conservativity}.

\begin{theorem}[conservativity]\

    \noindent Let $\textbf{Cal}=\{\textsc{Id}, \textsc{Cut} cd\}$ and let $\textbf{Cal}^+=\textbf{Cal}\cup \{1\#,2\#\}$.
    
    For any $\#\in \mathtt{PC}$,
        $\textbf{Cal}^+$ conservatively extends \textbf{Cal}.
\end{theorem}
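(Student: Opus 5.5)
The plan is to derive conservativity from the \textsc{Cut}-elimination result just proven (Theorem 2), combined with a subformula-style observation about which rules can introduce or remove occurrences of `$\#$'. Let $\mathfrak{s}$ be any sequent provable in $\textbf{Cal}^+$ that does not contain `$\#$'. It suffices to show that $\mathfrak{s}$ is already provable in $\textbf{Cal}$; contrapositively, every $\textbf{Cal}^+$-theorem not provable in $\textbf{Cal}$ must then contain `$\#$', which is exactly Definition 13.

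First, take any $\textbf{Cal}^+$-proof $\mathfrak{p}$ of $\mathfrak{s}$. By Theorem 2, applied repeatedly to the topmost \textsc{Cut} (each application removes a \textsc{Cut} without adding new ones above it), we obtain a \textsc{Cut}-free $\textbf{Cal}^+$-proof $\mathfrak{p}^\ast$ of $\mathfrak{s}$. Its only rules are \textsc{Id}$^{pr}$, $1\#$ and $2\#$.

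Second, argue that no $\#$-rule is applied in $\mathfrak{p}^\ast$. By explicitness (Definition 4), every application of $1\#$ or $2\#$ yields a conclusion containing a principal $\#$-formula. I will then show by induction down the tree that, once a $\#$-formula appears in a sequent of a \textsc{Cut}-free proof, some $\#$-formula persists in every sequent below it. The reason is that each operational rule carries its context parametrically into the conclusion and deletes only active formulae. Active formulae are the immediate arguments of the principal formula, and since $\#$-formulae can be active only as subformulae, whatever is deleted is reassembled inside a new principal $\#$-formula. This holds for every shape in Table 3, including $\delta cs$ and $\epsilon$, since context-sharing merely identifies parametric contexts. Consequently the root $\mathfrak{s}$ would contain `$\#$', a contradiction. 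So $\mathfrak{p}^\ast$ consists solely of \textsc{Id}$^{pr}$ steps, and is therefore a $\textbf{Cal}$-proof.

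The main obstacle is the persistence step in the presence of axiomatic and potentially axiomatic shapes ($\alpha$, $\beta$). There, a $\#$-formula may be introduced with an empty context and then consumed. Consider, for example, a $\delta cd^-$ rule whose restricted premiss is $[k:F]$ with $F$ itself a $\#$-formula. In that case the active formula is a $\#$-formula and it is removed. My first attempt will be to observe that even then the conclusion contains a fresh principal $\#$-formula, so `$\#$' still occurs below. In the paper's official language every complex formula contains `$\#$', so the count of $\#$-containing formulae never drops to zero. I will make this precise by checking the invariant that any sequent derived using a $\#$-rule contains `$\#$'. I will also treat separately the degenerate case in which $\mathfrak{p}^\ast$ is a single \textsc{Id}$^{pr}$ instance, which is trivially a $\textbf{Cal}$-proof.
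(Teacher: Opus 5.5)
Your proposal is correct and follows the paper's own route. The paper likewise eliminates \textsc{Cut} via Theorem 2 and then observes that a $\#$-free root cannot be the conclusion of a $\#$-rule, so the \textsc{Cut}-free proof is already a \textbf{Cal}-proof.

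Your persistence induction is more than you need, and its subformula-based justification relies on subformula-connectedness, which the paper does not assume. The fallback you give is enough on its own: every $\#$-rule conclusion contains a principal $\#$-formula, and \textsc{Id}$^{pr}$ has no premisses. Hence the last rule of the \textsc{Cut}-free proof must be \textsc{Id}$^{pr}$, so the whole proof is that single instance, which is exactly how the paper closes.
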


\begin{proof}\

\noindent
    By Definition 15, to show that $\textbf{Cal}^+$ conservatively extends \textbf{Cal}, we must demonstrate that for any sequent $\mathfrak{s}$ in which `$\#$' does not occur and for any proof $\mathfrak{p}$ of $\mathfrak{s}$ in $\textbf{Cal}^+$, there is a proof $\mathfrak{p^\prime}$ of $\mathfrak{s}$ in \textbf{Cal}.
    
    We prove this via an exhaustive inverted proof search on $\mathfrak{p}$. The final proof step in $\mathfrak{p}$ must have been an application of a \#-rule, \textsc{Id}, or \textsc{\textsc{Cut}}.
    \begin{enumerate}
        \item It could not have been a \#-rule for if it was, $\mathfrak{s}$ would contain `\#' by the specification of operational rules, which it cannot by the definition of $\mathfrak{s}$.
        \item Assume that the final step was an application of \textsc{Id}. Thus, $h(\mathfrak{p})=1$, i.e. this application of \textsc{Id} is the only rule used in $\mathfrak{p}$. As $\textsc{Id}\in \textbf{Cal}$ and \textsc{Id} is restricted to primitive formulae, $\mathfrak{p}^\prime=\mathfrak{p}$ and the theorem holds.
        \item Assume it was a \textsc{Cut}-application. Then, by virtue of Theorem 2, there is a \textsc{Cut}-free proof $\mathfrak{p}^{\prime\prime}$ of $\mathfrak{s}$ in $\textbf{Cal}^+$.
        The final step of $\mathfrak{p}^{\prime\prime}$ could not have been an application of a \#-rule (cf. case 1). Therefore, it was an application of \textsc{Id}. Hence, $\mathfrak{p}^{\prime}=\mathfrak{p}^{\prime\prime}$ and the theorem holds (cf. case 2). \hfill $\blacksquare$
    \end{enumerate} 
\end{proof}

\subsubsection{Negative cases}

Our set of positive cases, \texttt{PC}, is exhaustive, i.e. its members are the only connectives with operational rules formulable in two-dimensional sequent calculi containing at most two premiss sequents and two active formulae which are conservative over $\{\textsc{Id}, \textsc{Cut} cd\}$. The non-conservativity of most other cases is a consequence of the following lemma. 
\begin{lemma}\

    \noindent For any sequent $\mathfrak{s}$, if $\mathfrak{s}$ is provable in $\{\textsc{Id}, \textsc{Cut} cd\}$, $\mathfrak{s}$ is an instance of \textsc{Id}$^{pr}$.
\end{lemma}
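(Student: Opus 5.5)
We prove Lemma 2 by induction on the height $h$ of a proof $\mathfrak{p}$ of $\mathfrak{s}$ in $\{\textsc{Id}, \textsc{Cut}cd\}$. The claim to carry through the induction is the following: every sequent derivable in this calculus is of the form $[1:p;2:p]$ for some primitive $p$. Here we recall that \textsc{Id} has been restricted to \textsc{Id}$^{pr}$, and that in this calculus the language contains no connective. So all formulae are primitive, and the restriction costs nothing.

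\emph{Base case.} Let $h(\mathfrak{p})=1$. The only rule with an empty premiss multiset is \textsc{Id}$^{pr}$, since \textsc{Cut}$cd$ requires two premiss sequents. Hence $\mathfrak{s}=[1:p;2:p]$ for some primitive $p$, as required.

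\emph{Inductive step.} Let $h(\mathfrak{p})>1$. Then the last step is \textsc{Cut}$cd$ with premisses $[2:A]s$ and $[1:A]t$, and conclusion $st$. By the inductive hypothesis, each premiss is an instance of \textsc{Id}$^{pr}$.
\begin{itemize}
\item From $[2:A]s=[1:q;2:q]$ we read off $A=q$ and $s=[1:q]$.
\item From $[1:A]t=[1:r;2:r]$ we read off $A=r$ and $t=[2:r]$.
\end{itemize}
Therefore $q=r=A$, and the conclusion $st=[1:A;2:A]$ is again an instance of \textsc{Id}$^{pr}$. This completes the induction.

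The only step needing care is the multiset decomposition of the premisses. Since \textsc{Id}$^{pr}$ instances contain exactly one formula in each component, removing the cut formula $A$ from the relevant component forces the contexts $s$ and $t$ uniquely. In particular, the cut formula cannot be "hidden" elsewhere in the context. This is routine, and I expect no real obstacle: the lemma just says that $\{\textsc{Id}, \textsc{Cut}cd\}$ is closed in a trivial way. I would also note, for later use, that the argument never uses the shape of $A$ beyond its being primitive. So the same induction gives the lemma for any language, provided \textsc{Id} is restricted to primitives.
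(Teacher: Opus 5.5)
Your proof is correct and follows the same route as the paper's: induction on proof height, with \textsc{Id}$^{pr}$ as the base case and \textsc{Cut}$cd$ as the only inductive case. Your explicit multiset decomposition ($s=[1:q]$, $t=[2:r]$, hence $q=r=A$ and $st=[1:A;2:A]$) supplies the step the paper compresses into ``by virtue of the inductive hypothesis''; one small correction is that the paper's official language $\mathcal{L}$ does contain $\#$, and the lemma is later applied in exactly that setting (e.g.\ Theorem 4, Example 8), so drop the claim that the language has no connective and rely instead on your closing observation that only the restriction of \textsc{Id} to primitives matters.
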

\begin{proof}\

    \noindent By induction on the proof height $h(\mathfrak{p})$. Let $h(\mathfrak{p})=1$. The last rule used is \textsc{Id}$^{pr}$ and the lemma holds trivially. Let $h(\mathfrak{p})>1$. The last rule used is \textsc{Cut}, each of the premises of which are the conclusion of a proof of lower height than $\mathfrak{p}$. The lemma holds by virtue of the inductive hypothesis.
    \hfill $\blacksquare$
\end{proof}
We can now prove that only the $\#\in \mathtt{PC}$ satisfy conservativity, among all candidate connectives defined in \S 4  .
\begin{theorem}\

\noindent Theorem 3 does not hold for any $\#\not{\in} \mathtt{PC}$, where \# is some connective whose operational rules are formulable in two-dimensional sequent calculi using at most two premiss sequents and at most two active formulae.

\end{theorem}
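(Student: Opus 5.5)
Given any formulable connective $\#\notin\mathtt{PC}$, I will exhibit a $\#$-free sequent that is provable in $\textbf{Cal}^+=\{\textsc{Id},\textsc{Cut}cd,1\#,2\#\}$ but is not an instance of \textsc{Id}$^{pr}$. By Lemma 2, such a sequent is unprovable in $\{\textsc{Id},\textsc{Cut}cd\}$, so conservativity fails. The standard witness is a \textsc{Cut} on a $\#$-formula: derive $[1:\texttt{\#}]s$ by $1\#$ and $[2:\texttt{\#}]t$ by $2\#$, then cut to obtain $st$. The task is to pick premisses of $1\#,2\#$ that are themselves provable and $\#$-free (instances of \textsc{Id}$^{pr}$ on fresh atoms $p,q,\dots$, or empty premisses for $\alpha$ and $\beta$ shapes). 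Then $st$ is $\#$-free, and we need $st$ not to be of the form $[1:p;2:p]$. The typical bad outcomes are the empty sequent, sequents like $[1:p,q;2:p,q]$, or tonk-like conclusions $[1:p;2:q]$.

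I will organise the argument by the same active-position constellations as Lemma 1 (cases 1 to 10, plus their mirror images), and within each by the pair of premiss shapes. For each pair not listed in $\mathtt{PC}$ I give a counter-derivation. The cases fall into a few families. (i) If both rules are \emph{$\alpha$ or $\beta$}, I cut the two axioms to get the empty sequent; this covers $\langle\alpha,\alpha\rangle$ and any $\beta$ rule used through its empty disjunct. (ii) If the active formulae of the two rules do not ``match'', I instantiate $A:=p$ and $B:=q$. Non-matching means the wrong components (e.g.\ same side, as in $\langle\gamma 1F,\gamma 1F\rangle$) or different letters (tonk, $\langle\gamma 1B,\gamma 2A\rangle$). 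In such cases the cut yields something like $[1:p;2:q]$ or $[1:p,p;2:p,p]$. (iii) If one rule discharges more than the other can supply, the cut residue contains leftover atoms and becomes a non-identity. Examples are an unrestricted $\delta cd$ with a spare context $t$ where $\mathtt{PC}$ requires $\delta cd^-$, or a $\gamma$ with nonempty context where $\mathtt{PC}$ requires $\gamma^-$: I choose $t=[1:p;2:p]$, so the conclusion picks up an extra identity pair and becomes $[1:p;2:p]\times[1:q;2:q]$. (iv) Additive or context-sharing rules paired with incompatible partners fall under the same idea: the $\epsilon$ rule lets me choose a single disjunct, which I instantiate so that it mismatches the partner's active formula.

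For bookkeeping, I will show that each of the $104^2$ premiss-type pairs either appears in the list defining $\mathtt{PC}$, or is killed by one of (i)–(iv). I will do this by iterating the table of premiss types in the order used for $\mathtt{PC}$, and noting where the symmetry $\langle x,y\rangle\leftrightarrow\langle y,x\rangle$ halves the work. Framework and arity are irrelevant: the counterexamples use frameworks inside $\mathbb{N}^0:\mathbb{N}^0$ and only the atoms in the active positions. For non-unary arities, the extra argument places can be filled with arbitrary atoms, or with nothing in the nullary case.

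The main obstacle is exhaustiveness together with the delicate borderline cases. Examples are pairs like $\langle\gamma kF,\gamma lF\rangle$ with $k=l$, or $\langle\delta cs\,kF\text{-}lG,\epsilon\,mF\text{-}nG\rangle$ where only one of $k\neq m$, $l\neq n$ holds. There the naive cut happens to give an identity, and I must instead cut against a cleverly chosen second derivation. One option is cutting the conclusion of $1\#$ against itself via a second $\#$-introduction on the same side, producing duplicated atoms. Another is to use the failing half of the additive rule. I would handle these first by trying the mismatched disjunct of $\epsilon$ and the duplicated-context trick, and only then fall back to longer derivations with two cuts.
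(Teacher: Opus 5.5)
Your core witness is the paper's own strategy: cut the conclusion of $1\#$ against that of $2\#$, feed both rules with $\#$-free \textsc{Id}$^{pr}$-instances (or with nothing, for $\alpha$ and $\beta$), and use Lemma 2 to see that the resulting $\#$-free non-identity sequent is new. The gap is your exhaustiveness step. You promise that every one of the $104^2$ pairs is either in $\mathtt{PC}$ or refuted by one of (i)--(iv). That cannot be done, because some pairs outside $\mathtt{PC}$ are \emph{conservative}.

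Take $\langle\alpha,\delta cs\,kF\rangle$.
\begin{enumerate}
\item $2\#$ needs both $[k:F]s$ and $s$ to be provable.
\item Every theorem of $\{1\#,\textsc{Id},\textsc{Cut}cd\}$ is an \textsc{Id}$^{pr}$-instance or $[1:\texttt{\#}]$. The latter can never be cut away there, since nothing yields $[2:\texttt{\#}]t$.
\item No choice of $s$ makes both $s$ and $[k:F]s$ of that form.
\end{enumerate}
So $2\#$ never fires and no new $\#$-free sequent is provable. The same holds for $\langle\delta cs\,kFlG,\delta cd\,mFnG\rangle$ and the other entries marked `s' in the overview tables of \S5.1.4.

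For these connectives the statement, read literally, is false, and the paper does not prove it for them. Its proof (steps 1--3) covers only the non-special cases, and \S5.1.3 explicitly concedes that these trivially conservative connectives are covered by neither Theorem 3 nor Theorem 4. Your family (iv) is where this slips through, because it treats context-sharing $\delta cs$ like additive $\epsilon$. With $\epsilon$ you may pick one disjunct. A $\delta cs$ rule with a bare context premiss $s$, however, can never be fed by identity axioms, since $s$ and $[k:F]s$ cannot both be identities.

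Relatedly, the paper's `exceptional' cases (marked `!') are non-conservative only through derivations in which a $\#$-formula produced by one $\#$-rule becomes the \emph{active} formula of a later $\#$-rule. The precise counterexample there also depends on whether subformula-connectedness is assumed. Your fallback of longer derivations with extra cuts over $\#$-free identity premisses will not reach these; you need to chain $\#$-rules. A correct version of your plan therefore has to do two things. First, restrict the claim to the non-special cases, identifying and setting aside the trivially conservative ones, such as those in which some $\#$-rule can never fire. Second, allow $\#$-formula premisses when handling the exceptional cases.
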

\begin{proof}\

\noindent
To give a counterexample to Theorem 3, we must show that there is a sequent $\mathfrak{s}$ in which \# does not occur and a proof $\mathfrak{p}$ of $\mathfrak{s}$ in \textbf{Cal}$^+$ but no proof $\mathfrak{p^\prime}$ of $\mathfrak{s}$ in \textbf{Cal}.
        
\textit{Proof idea:} for each specification of $\# \not{\in} \mathtt{PC}$, we attempt to prove a result akin to Lemma 1. We then use failure of this attempt to construct a counterexample to Theorem 3. Most instances are immediate consequences of Lemma 2. We give a general strategy for how we can give counterexamples for these in cases 1. and 2. In case 3., we give direct counterexamples to all non-immediate instances not covered in 1. or 2.

\textit{Notation:} we write `$n(A, i, \mathfrak{s})$' for the number of occurrences of formula $A$ in the $i$th dimension of sequent $\mathfrak{s}$. Further, for any set of sequents $\mathfrak{S}$, $n(A, i, \mathfrak{S})=\sum_{\mathfrak{s}\in\mathfrak{S}}n(A, i, \mathfrak{s}).$

\textit{Proof:} Let connective \# have the operational rules 
\[
\Bigg \langle
\begin{prooftree}
    \hypo{\mathfrak{S_1}}
    \infer1[1\#]{[i:\texttt{\#}]s}
\end{prooftree},
\begin{prooftree}
    \hypo{\mathfrak{S_2}}
    \infer1[2\#]{[j:\texttt{\#}]t}
\end{prooftree}
\Bigg \rangle\, ,
\]
where $\mathfrak{S_1}, \mathfrak{S_2}$ are sets of (premiss) sequents. 

\begin{enumerate}
    \item Consider only the instances of \# such that $1\#$ or $2\#$ does \textit{not} have balanced active formula occurrence numbers across dimensions. That is, for any $\# \not{\in} \mathtt{PC}$ and any non-empty $\mathfrak{S}^{\prime}\subseteq \mathfrak{S}_1\cup \mathfrak{S_2}$, there is an active formula $A$ of $1\#$ or $2\#$ with $n(A,i,\mathfrak{S}^{\prime})\neq n(A,j,\mathfrak{S}^{\prime})$. In other words, we cannot form a subset of premiss sequents with equal occurrence numbers per dimension for each active formula.

 As our counterexample, we show that there is a \textbf{Cal}$^+$-proof $\mathfrak{p}$,
\[
\begin{prooftree}
    \hypo{~~~~~~~~~}
            \infer[rule code={\hbox{\tikz\draw (0,0) -- (\hsize,0) -- (0.5\hsize,-2em) (0.5\hsize,-0.875em) node{\small $\mathfrak{p}_1$}  (0.5\hsize,-2em) -- (0,0);}}]1{\mathfrak{S_1}}
    \infer1[1\#]{[i:\texttt{\#}]s}
    \hypo{~~~~~~~~~}
            \infer[rule code={\hbox{\tikz\draw (0,0) -- (\hsize,0) -- (0.5\hsize,-2em) (0.5\hsize,-0.875em) node{\small $\mathfrak{p}_2$}  (0.5\hsize,-2em) -- (0,0);}}]1{\mathfrak{S_2}}
    \infer1[2\#]{[j:\texttt{\#}]t}
    \infer2[\textsc{Cut}]{st}
\end{prooftree},
\]
with $\mathfrak{p_1},\mathfrak{p_2}$ not containing $\#$, with no corresponding \#-free proof in the base system; i.e. there is no \textbf{Cal}-proof $\mathfrak{p}^\prime$ of $st$.

Given our specification of $\#$, there must be some active formula $A$ of $1\#$ or $2\#$ in $\mathfrak{p}$ such that $n(A,i,\mathfrak{S_1}\cup \mathfrak{S_2})\neq n(A,j,\mathfrak{S_1}\cup \mathfrak{S_2})$. Since $\#$ does not occur in $\mathfrak{p_1}$ or $\mathfrak{p_2}$, nor do $1\#$ or $2\#$. Hence, $\mathfrak{p_1}, \mathfrak{p_2}$ are \textbf{Cal}-proofs. By Lemma 2, $\mathfrak{S_1}$ and $ \mathfrak{S_2}$ only contain instances of \textsc{Id}. Hence for any formula $C$ occurring in $\mathfrak{S_1}\cup \mathfrak{S_2}$, $n(C,i,\mathfrak{S_1}\cup \mathfrak{S_2})=n(C,j,\mathfrak{S_1}\cup \mathfrak{S_2})$. All occurrences in $\mathfrak{S_1}$ and $\mathfrak{S_2}$ are either active or contextual. Therefore, there must be at least one more \textit{contextual} occurrence of our $A$ in one dimension than in the other, i.e. $n(A,i,\{s,t\})\neq n(A,j,\{s,t\})$. Hence, there is an $A$ in $st$ such that $n(A,i,st)\neq n(A,j,st)$.

Now suppose there was a $\mathfrak{p}^\prime$. Then, $st$ is provable in \textbf{Cal}. Thus by Lemma 2, $st$ is an instance of \textsc{Id}. By the definition of \textsc{Id}, for any formula $B$ occurring in $st$, $n(B,i,st)= n(B,j,st)$. This is a contradiction.

Let us illustrate this result with an example. Consider the connectives with premiss type $\langle \gamma iF, \delta jFiG \rangle$ for $i,j\in\{1,2\}, i\neq j$, $F,G\in \{A, B\}$. Then, $[j:A]$ (for primitive $A$) is not provable in the base system of $\{\textsc{Id}^{pr}, \textsc{Cut}cd\}$ (cf. Lemma 2) but becomes provable after adding rules for $\#$ to the language and applying them to instances of \textsc{Id} as per our proof strategy:
\[
\begin{prooftree}
  \hypo{[i:A;j:A]}
  \infer1[1\#]{[1:\texttt{\#};j:A]}
  \hypo{[i:A;j:A]}
  \infer1[2\#]{[2:\texttt{\#}]}
  \infer2[\textsc{Cut}]{[j:A]}
\end{prooftree}\, .
\]
\item Consider only the instances of \# such that $1\#$ and $2\#$ \textit{do} have balanced active formula occurrence numbers across dimensions, but exclusively in subsets of the premiss sequents which contain \textit{no active formulae}. Further, $st$ is not in said subsets. That is, for any $\# \not{\in} \mathtt{PC}$, for any non-empty $\mathfrak{S}^{\prime}\subseteq \mathfrak{S}_1\cup \mathfrak{S_2}$, and for any active formula $A$ of $1\#$ or $2\#$ with $n(A,i,\mathfrak{S}^{\prime})= n(A,j,\mathfrak{S}^{\prime})=0$, $st\not{\in}\mathfrak{S}^{\prime}$.

We use the same strategy as in the previous case to construct a counterexample. Given the previous case, we can disregard all cases with unbalanced active formula occurrence numbers. As $s$ and $t$ do not contain active formulae, they must be instances of \textsc{Id} by our specification of $\mathfrak{p_1, p_2}$, i.e. of the form $[1:A;2:A]$, for any primitive $A$. Then, $st$ is of the form $[1:A,B;2:A,B]$ given the definition of our sequent multiplication $\times$ and, thus, not an instance of \textsc{Id}. However, $st$ must be an instance of \textsc{Id} by virtue of Lemma 2. This is a contradiction.

Example: consider the connectives with premiss type $\langle \gamma 0, \gamma iF \rangle$ for $i,j\in\{1,2\}, i\neq j$, $F,G\in \{A, B\}$. From 
\[
\begin{prooftree}
  \hypo{~~~~~~~~~}
            \infer[rule code={\hbox{\tikz\draw (0,0) -- (\hsize,0) -- (0.5\hsize,-2em) (0.5\hsize,-0.875em) node{\small $\mathfrak{p}_1$}  (0.5\hsize,-2em) -- (0,0);}}]1{s}
\infer1[i\#]{[1:\texttt{\#}]s}
  \hypo{~~~~~~~~~}
            \infer[rule code={\hbox{\tikz\draw (0,0) -- (\hsize,0) -- (0.5\hsize,-2em) (0.5\hsize,-0.875em) node{\small $\mathfrak{p}_1$}  (0.5\hsize,-2em) -- (0,0);}}]1{[i:F]t}
\infer1[j\#]{[2:\texttt{\#}]t}
\infer2[\textsc{Cut}]{st}
\end{prooftree}\, ,
\]
we construct
\[
\begin{prooftree}
  \hypo{\emptyset}
            \infer1{[i:A;j:A]}
\infer1[i\#]{[1:\texttt{\#};i:A;j:A]}
  \hypo{\emptyset}
            \infer1{[i:B;j:B]t}
\infer1[j\#]{[2:\texttt{\#};j:B]}
\infer2[\textsc{Cut}]{[i:A;j:A,B]}.
\end{prooftree}\, .
\]

\item We provide direct counterexamples for all remaining instances of \# not covered by 1. or 2. (i.e. no active formulae in $1\#,2\#$ or $n(A,i,\mathfrak{S}^{\prime})= n(A,j,\mathfrak{S}^{\prime})$). To do so, we provide a \textbf{Cal}$^+$-proof of a sequent that is unprovable in \textbf{Cal} (by virtue of Lemma 2) for each such instance. The enumeration of cases again follows that in Lemma 1. Let $i,j,k,l\in \{1,2\}$ and $i \neq j$. Let $\F, G\in \{A, B\}$.\\ \\
Case 1) $\texttt{P}(i\#)=\texttt{P}(j\#)=0$.
\begin{enumerate}[label=1.\arabic*)]
\item $\texttt{S}(i\#)=\alpha; \texttt{S}(j\#)=\alpha$:
\[
\begin{prooftree}
\hypo{\emptyset}
\infer1[i\#]{[i:\texttt{\#}]}
\hypo{\emptyset}
\infer1[j\#]{[j:\texttt{\#}]}
\infer2[\textsc{Cut}]{\emptyset\vdash\emptyset}
\end{prooftree}\, .
\]
\end{enumerate}
Case 2) $\texttt{P}(i\#)=0; \texttt{P}(j\#)=kF$: all cases not already covered in Lemma 1 fall under step 1. or 2. of this proof.\\ \\
Case 3) $\texttt{P}(i\#)=0; \texttt{P}(j\#)=kFlG$. We will only consider cases where $k\neq l$ and $F=G$, as all other negative cases fall under step 1. or 2. of this proof.
\begin{enumerate}[label=3.\arabic*)]
    \item $\texttt{S}(i\#)=\alpha; \texttt{S}(j\#)=\beta$: cf. 1.1).
    \item $\texttt{S}(i\#)=\alpha; \texttt{S}(j\#)=\gamma$: from
    \[
    \begin{prooftree}
        \hypo{\emptyset}
        \infer1[i\#]{[i:\texttt{\#}]}
        \hypo{~~~~~~~~~}
            \infer[rule code={\hbox{\tikz\draw (0,0) -- (\hsize,0) -- (0.5\hsize,-2em) (0.5\hsize,-0.875em) node{\small $\mathfrak{p}_1$}  (0.5\hsize,-2em) -- (0,0);}}]1{[k:F;l:F]s}
        \infer1[j\#]{[j:\texttt{\#}]s}
        \infer2[\textsc{Cut}]{s}
    \end{prooftree}\, ,
    \]
    we get
    \[
    \begin{prooftree}
        \hypo{\emptyset}
        \infer1[i\#]{[i:\texttt{\#}]}
        \hypo{\emptyset}
        \infer1[\textsc{Id}]{[k:A;l:A]}
        \infer1[j\#]{[j:\texttt{\#}]}
        \infer2[\textsc{Cut}]{\emptyset \vdash \emptyset}
    \end{prooftree}\, .
    \]
    \item $\texttt{S}(i\#)=\gamma; \texttt{S}(j\#)=\delta cd$:
    from
    \[
    \begin{prooftree}
        \hypo{~~~~~~~~~}
            \infer[rule code={\hbox{\tikz\draw (0,0) -- (\hsize,0) -- (0.5\hsize,-2em) (0.5\hsize,-0.875em) node{\small $\mathfrak{p}_1$}  (0.5\hsize,-2em) -- (0,0);}}]1{s}
        \infer1[i\#]{[i:\texttt{\#}]s}
        \hypo{~~~~~~~~~}
            \infer[rule code={\hbox{\tikz\draw (0,0) -- (\hsize,0) -- (0.5\hsize,-2em) (0.5\hsize,-0.875em) node{\small $\mathfrak{p}_2$}  (0.5\hsize,-2em) -- (0,0);}}]1{[k:F;l:F]t}
        \hypo{~~~~~~~~~}
        \infer[rule code={\hbox{\tikz\draw (0,0) -- (\hsize,0) -- (0.5\hsize,-2em) (0.5\hsize,-0.875em) node{\small $\mathfrak{p}_3$}  (0.5\hsize,-2em) -- (0,0);}}]1{u}
        \infer2[j\#]{[j:\texttt{\#}]tu}
        \infer2[\textsc{Cut}]{stu}
    \end{prooftree}\, ,
    \]
    we get
    \[
    \begin{prooftree}
        \hypo{\emptyset}
        \infer1[\textsc{Id}]{[k:A;l:A]}
        \infer1[i\#]{[i:\texttt{\#};k:A;l:A]}
        \hypo{\emptyset}
        \infer1[\textsc{Id}]{[k:B;l:B]}
        \hypo{\emptyset}
        \infer1[\textsc{Id}]{[k:C;l:C]}
        \infer2[j\#]{[j:\texttt{\#}; k:C;l:C]}
        \infer2[\textsc{Cut}]{[k:B,C;l:B,C]}
    \end{prooftree}\, .
    \]
\item $\texttt{S}(i\#)=\delta cs; \texttt{S}(j\#)=\delta cd$: cf. 3.4).
\item $\texttt{S}(i\#)=\delta cd; \texttt{S}(j\#)=\gamma$: from
\[
    \begin{prooftree}
        \hypo{~~~~~~~~~}
            \infer[rule code={\hbox{\tikz\draw (0,0) -- (\hsize,0) -- (0.5\hsize,-2em) (0.5\hsize,-0.875em) node{\small $\mathfrak{p}_1$}  (0.5\hsize,-2em) -- (0,0);}}]1{s}
                    \hypo{~~~~~~~~~}
            \infer[rule code={\hbox{\tikz\draw (0,0) -- (\hsize,0) -- (0.5\hsize,-2em) (0.5\hsize,-0.875em) node{\small $\mathfrak{p}_2$}  (0.5\hsize,-2em) -- (0,0);}}]1{t}
        \infer2[i\#]{[i:\texttt{\#}]st}
        \hypo{~~~~~~~~~}
        \infer[rule code={\hbox{\tikz\draw (0,0) -- (\hsize,0) -- (0.5\hsize,-2em) (0.5\hsize,-0.875em) node{\small $\mathfrak{p}_3$}  (0.5\hsize,-2em) -- (0,0);}}]1{[i:F;j:F]u}
        \infer1[j\#]{[j:\texttt{\#}]u}
        \infer2[\textsc{Cut}]{stu}
    \end{prooftree}\, ,
    \]
    we get
    \[
    \begin{prooftree}
        \hypo{\emptyset}
        \infer1[\textsc{Id}]{[k:A;l:A]}
        \hypo{\emptyset}
        \infer1[\textsc{Id}]{[k:B;l:B]}
        \infer2[i\#]{[i:\texttt{\#};k:A, B;l:A, B]}
        \hypo{\emptyset}
        \infer1[\textsc{Id}]{[k:C;l:C]}
        \infer1[j\#]{[j:\texttt{\#}]}
        \infer2[\textsc{Cut}]{[k:A,B;l:A,B]}
    \end{prooftree}\, .
    \]
\item $\texttt{S}(i\#)=\delta cd; \texttt{S}(j\#)=\delta cd$: cf. 3.4; 3.5).
\end{enumerate}
Case 4) $\texttt{P}(i\#)=0; \texttt{P}(j\#)=kF-lG$. We will only consider cases where $k\neq l$ and $F=G$, as all other negative cases fall under step 1. or 2. of this proof.
\begin{enumerate}[label=4.\arabic*)]
\item $\texttt{S}(i\#)=\gamma; \texttt{S}(j\#)=\delta cd$: from
\[
\begin{prooftree}
    \hypo{~~~~~~~~~}
            \infer[rule code={\hbox{\tikz\draw (0,0) -- (\hsize,0) -- (0.5\hsize,-2em) (0.5\hsize,-0.875em) node{\small $\mathfrak{p}_1$}  (0.5\hsize,-2em) -- (0,0);}}]1{s}
    \infer1[i\#]{[i:\texttt{\#}]s}
    \hypo{~~~~~~~~~}
            \infer[rule code={\hbox{\tikz\draw (0,0) -- (\hsize,0) -- (0.5\hsize,-2em) (0.5\hsize,-0.875em) node{\small $\mathfrak{p}_2$}  (0.5\hsize,-2em) -- (0,0);}}]1{[k:F]t}
    \hypo{~~~~~~~~~}
            \infer[rule code={\hbox{\tikz\draw (0,0) -- (\hsize,0) -- (0.5\hsize,-2em) (0.5\hsize,-0.875em) node{\small $\mathfrak{p}_3$}  (0.5\hsize,-2em) -- (0,0);}}]1{[l:F]u}
\infer2[j\#]{[j:\texttt{\#}]tu}
\infer2[\textsc{Cut}]{stu}
\end{prooftree}\, ,
\]
we get
\[
\begin{prooftree}
    \hypo{\emptyset}
    \infer1[\textsc{Id}]{[k:B;l:B]}
    \infer1[i\#]{[i:\texttt{\#};k:B;l:B]}
    \hypo{\emptyset}
    \infer1[\textsc{Id}]{[k:A;l:A]}
    \hypo{\emptyset}
    \infer1[\textsc{Id}]{[k:A;l:A]}
\infer2[j\#]{[j:\texttt{\#}, k:A;l:A]}
\infer2[\textsc{Cut}]{[k:A, B;l:A, B]}
\end{prooftree}\, .
\]
\item $\texttt{S}(i\#)=\delta cs; \texttt{S}(j\#)=\delta cd$: cf. 4.1).
\item $\texttt{S}(i\#)=\delta cd; \texttt{S}(j\#)=\delta cd$: cf. 4.1).
\end{enumerate}
Case 5) $\texttt{P}(i\#)=kF; \texttt{P}(j\#)=lG$. We will only consider cases where $k\neq l$ and $F=G$, as all other negative cases fall under step 1. or 2. of this proof.
\begin{enumerate}[label=5.\arabic*)]
\item $\texttt{S}(i\#)=\gamma; \texttt{S}(j\#)=\delta cd$: from
\[
\begin{prooftree}
  \hypo{~~~~~~~~~}
            \infer[rule code={\hbox{\tikz\draw (0,0) -- (\hsize,0) -- (0.5\hsize,-2em) (0.5\hsize,-0.875em) node{\small $\mathfrak{p}_1$}  (0.5\hsize,-2em) -- (0,0);}}]1{[k:F]s}
\infer1[i\#]{[i:\texttt{\#}]s}
  \hypo{~~~~~~~~~}
            \infer[rule code={\hbox{\tikz\draw (0,0) -- (\hsize,0) -- (0.5\hsize,-2em) (0.5\hsize,-0.875em) node{\small $\mathfrak{p}_2$}  (0.5\hsize,-2em) -- (0,0);}}]1{[l:F]t}
  \hypo{~~~~~~~~~}
            \infer[rule code={\hbox{\tikz\draw (0,0) -- (\hsize,0) -- (0.5\hsize,-2em) (0.5\hsize,-0.875em) node{\small $\mathfrak{p}_3$}  (0.5\hsize,-2em) -- (0,0);}}]1{u}
\infer2[j\#]{[j:\texttt{\#}]tu}
\infer2[\textsc{Cut}]{stu}
\end{prooftree}\, ,
\]
we get
\[
\begin{prooftree}
  \hypo{\emptyset}
            \infer1[\textsc{Id}]{[k:A;l:A]}
\infer1[i\#]{[i:\texttt{\#};l:A]}
  \hypo{\emptyset}
            \infer1[\textsc{Id}]{[k:A;l:A]}
  \hypo{\emptyset}
            \infer1[\textsc{Id}]{[k:B;l:B]}
\infer2[j\#]{[j:\texttt{\#}, k:A,B;l:B]}
\infer2[\textsc{Cut}]{[k:A,B;l:A,B]}
\end{prooftree}\, .
\]
\item $\texttt{S}(i\#)=\delta cd; \texttt{S}(j\#)=\delta cd$: from
\[
\begin{prooftree}
\hypo{~~~~~~~~~}
            \infer[rule code={\hbox{\tikz\draw (0,0) -- (\hsize,0) -- (0.5\hsize,-2em) (0.5\hsize,-0.875em) node{\small $\mathfrak{p}_1$}  (0.5\hsize,-2em) -- (0,0);}}]1{[k:F]s}
\hypo{~~~~~~~~~}
            \infer[rule code={\hbox{\tikz\draw (0,0) -- (\hsize,0) -- (0.5\hsize,-2em) (0.5\hsize,-0.875em) node{\small $\mathfrak{p}_2$}  (0.5\hsize,-2em) -- (0,0);}}]1{t}
\infer2[i\#]{[i:\texttt{\#}]st}
\hypo{~~~~~~~~~}
            \infer[rule code={\hbox{\tikz\draw (0,0) -- (\hsize,0) -- (0.5\hsize,-2em) (0.5\hsize,-0.875em) node{\small $\mathfrak{p}_3$}  (0.5\hsize,-2em) -- (0,0);}}]1{[l:F]u}
\hypo{~~~~~~~~~}
            \infer[rule code={\hbox{\tikz\draw (0,0) -- (\hsize,0) -- (0.5\hsize,-2em) (0.5\hsize,-0.875em) node{\small $\mathfrak{p}_4$}  (0.5\hsize,-2em) -- (0,0);}}]1{v}
\infer2[j\#]{[j:\texttt{\#}]uv}
\infer2[\textsc{Cut}]{stuv}
\end{prooftree}\, ,
\]
we get
\[
\begin{prooftree}
\hypo{\emptyset}
            \infer1[\textsc{Id}]{[k:A;l:A]}
\hypo{\emptyset}
            \infer1[\textsc{Id}]{[k:B;l:B]}
\infer2[i\#]{[i:\texttt{\#};k:B;l:A,B]}
\hypo{\emptyset}
            \infer1[\textsc{Id}]{[k:A;l:A]}
\hypo{\emptyset}
            \infer1[\textsc{Id}]{[k:C;l:C]}
\infer2[j\#]{[j:\texttt{\#};k:C,A;l:C]}
\infer2[\textsc{Cut}]{[k:A,B,C;l:A,B,C]}
\end{prooftree}\, .
\]
\end{enumerate}
Case 6) $\texttt{P}(i\#)=kF; \texttt{P}(j\#)=lF-mG$: all cases not already covered in Lemma 1 fall under step 1. or 2. of this proof.\\
Case 7) $\texttt{P}(i\#)=kF; \texttt{P}(j\#)=lF-mG$.
\begin{enumerate}[label=7.\arabic*)]
    \item $\texttt{S}(i\#)=\gamma; \texttt{S}(j\#)=\delta cd$: from
\[
    \begin{prooftree}
        \hypo{~~~~~~~~~}
            \infer[rule code={\hbox{\tikz\draw (0,0) -- (\hsize,0) -- (0.5\hsize,-2em) (0.5\hsize,-0.875em) node{\small $\mathfrak{p}_1$}  (0.5\hsize,-2em) -- (0,0);}}]1{[k:F]s}
\infer1[i\#]{[i:\texttt{\#}]s}
\hypo{~~~~~~~~~}
            \infer[rule code={\hbox{\tikz\draw (0,0) -- (\hsize,0) -- (0.5\hsize,-2em) (0.5\hsize,-0.875em) node{\small $\mathfrak{p}_2$}  (0.5\hsize,-2em) -- (0,0);}}]1{[l:F]t}
            \hypo{~~~~~~~~~}
            \infer[rule code={\hbox{\tikz\draw (0,0) -- (\hsize,0) -- (0.5\hsize,-2em) (0.5\hsize,-0.875em) node{\small $\mathfrak{p}_2$}  (0.5\hsize,-2em) -- (0,0);}}]1{[m:G]u}
\infer2[j\#]{[j:\texttt{\#}]tu}
\infer2[\textsc{Cut}]{stu}
\end{prooftree}\, ,
\]
we get for $k\neq l$ or $l\neq m, F=G$ or $k\neq m, F=G$
\[
    \begin{prooftree}
\hypo{\emptyset}
            \infer1[\textsc{Id}]{[k:A;l:A]}
\infer1[i\#]{[i:\texttt{\#};l:A]}
\hypo{\emptyset}
            \infer1[\textsc{Id}]{[k:A;l:A]}
            \hypo{\emptyset}
            \infer1[\textsc{Id}]{[m:A;n:A]}
\infer2[j\#]{[j:\texttt{\#};k:A;n:A]}
\infer2[\textsc{Cut}]{[k:A;l:A;n:A]}
\end{prooftree}\, .
\]
\item $\texttt{S}(i\#)=\delta cs; \texttt{S}(j\#)=\delta cd$: from
\[
\begin{prooftree}
\hypo{~~~~~~~~~}
            \infer[rule code={\hbox{\tikz\draw (0,0) -- (\hsize,0) -- (0.5\hsize,-2em) (0.5\hsize,-0.875em) node{\small $\mathfrak{p}_1$}  (0.5\hsize,-2em) -- (0,0);}}]1{[k:F]s}
            \hypo{~~~~~~~~~}
            \infer[rule code={\hbox{\tikz\draw (0,0) -- (\hsize,0) -- (0.5\hsize,-2em) (0.5\hsize,-0.875em) node{\small $\mathfrak{p}_2$}  (0.5\hsize,-2em) -- (0,0);}}]1{s}
\infer2[i\#]{[i:\texttt{\#}]s}
\hypo{~~~~~~~~~}
            \infer[rule code={\hbox{\tikz\draw (0,0) -- (\hsize,0) -- (0.5\hsize,-2em) (0.5\hsize,-0.875em) node{\small $\mathfrak{p}_3$}  (0.5\hsize,-2em) -- (0,0);}}]1{[l:F]t}
            \hypo{~~~~~~~~~}
            \infer[rule code={\hbox{\tikz\draw (0,0) -- (\hsize,0) -- (0.5\hsize,-2em) (0.5\hsize,-0.875em) node{\small $\mathfrak{p}_2$}  (0.5\hsize,-2em) -- (0,0);}}]1{[m:G]u}
\infer2[j\#]{[j:\texttt{\#}]tu}
\infer2[\textsc{Cut}]{stuv}
\end{prooftree}\, ,
\]
we get for $m\neq l$ and $\texttt{\#}^\prime=\texttt{\#}$ (i.e. $\#$ is nullary)
{\scriptsize
\[
\begin{prooftree}
    \hypo{\emptyset}
    \infer1[\textsc{Id}]{[k:A;l:A]}
    \hypo{\emptyset}
    \infer1[\textsc{Id}]{[k:A;l:A]}
    \infer2[j\#]{[j:\texttt{\#}^\prime;k:A;l:A]}
        \hypo{\emptyset}
    \infer1[\textsc{Id}]{[k:A;l:A]}
    \infer2[i\#]{[i:\texttt{\#};k:A;l:A]}
    \hypo{\emptyset}
    \infer1[\textsc{Id}]{[k:B;l:B]}
    \hypo{\emptyset}
    \infer1[\textsc{Id}]{[k:C;l:C]}
    \infer2[j\#]{[j:\texttt{\#};k:B;l:C]}
    \infer2[\textsc{Cut}]{[k:B;l:C]}
\end{prooftree}\, .
\]
}
    \item $\texttt{S}(i\#)=\delta cd; \texttt{S}(j\#)=\delta cs$: from
    \[
    \begin{prooftree}
\hypo{~~~~~~~~~}
            \infer[rule code={\hbox{\tikz\draw (0,0) -- (\hsize,0) -- (0.5\hsize,-2em) (0.5\hsize,-0.875em) node{\small $\mathfrak{p}_1$}  (0.5\hsize,-2em) -- (0,0);}}]1{[k:F]s}
            \hypo{~~~~~~~~~}
            \infer[rule code={\hbox{\tikz\draw (0,0) -- (\hsize,0) -- (0.5\hsize,-2em) (0.5\hsize,-0.875em) node{\small $\mathfrak{p}_2$}  (0.5\hsize,-2em) -- (0,0);}}]1{t}
\infer2[i\#]{[i:\texttt{\#}]st}
\hypo{~~~~~~~~~}
            \infer[rule code={\hbox{\tikz\draw (0,0) -- (\hsize,0) -- (0.5\hsize,-2em) (0.5\hsize,-0.875em) node{\small $\mathfrak{p}_3$}  (0.5\hsize,-2em) -- (0,0);}}]1{[l:F]u}
            \hypo{~~~~~~~~~}
            \infer[rule code={\hbox{\tikz\draw (0,0) -- (\hsize,0) -- (0.5\hsize,-2em) (0.5\hsize,-0.875em) node{\small $\mathfrak{p}_2$}  (0.5\hsize,-2em) -- (0,0);}}]1{[m:G]u}
\infer2[j\#]{[j:\texttt{\#}]u}
\infer2[\textsc{Cut}]{stuv}
\end{prooftree}\, ,
\]
we get for $m=l, k\neq l$
\[
\begin{prooftree}
    \hypo{\emptyset}
    \infer1[\textsc{Id}]{[k:A;l:A]}
    \hypo{\emptyset}
    \infer1[\textsc{Id}]{[k:B;l:B]}
    \infer2[i\#]{[i:\texttt{\#};k:B;l:A,B]}
    \hypo{\emptyset}
    \infer1[\textsc{Id}]{[k:A;l:A]}
    \hypo{\emptyset}
    \infer1[\textsc{Id}]{[k:A;l:A]}
    \infer2[j\#]{[j:\texttt{\#};k:A,A]}
    \infer2[\textsc{Cut}]{[k:A,A;l:A,B]}
\end{prooftree}\, .
\]
   \item $\texttt{S}(i\#)=\delta cd; \texttt{S}(j\#)=\delta cd$: from
\[
    \begin{prooftree}
\hypo{~~~~~~~~~}
            \infer[rule code={\hbox{\tikz\draw (0,0) -- (\hsize,0) -- (0.5\hsize,-2em) (0.5\hsize,-0.875em) node{\small $\mathfrak{p}_1$}  (0.5\hsize,-2em) -- (0,0);}}]1{[k:F]s}
            \hypo{~~~~~~~~~}
            \infer[rule code={\hbox{\tikz\draw (0,0) -- (\hsize,0) -- (0.5\hsize,-2em) (0.5\hsize,-0.875em) node{\small $\mathfrak{p}_2$}  (0.5\hsize,-2em) -- (0,0);}}]1{t}
\infer2[i\#]{[i:\texttt{\#}]st}
\hypo{~~~~~~~~~}
            \infer[rule code={\hbox{\tikz\draw (0,0) -- (\hsize,0) -- (0.5\hsize,-2em) (0.5\hsize,-0.875em) node{\small $\mathfrak{p}_3$}  (0.5\hsize,-2em) -- (0,0);}}]1{[l:F]u}
            \hypo{~~~~~~~~~}
            \infer[rule code={\hbox{\tikz\draw (0,0) -- (\hsize,0) -- (0.5\hsize,-2em) (0.5\hsize,-0.875em) node{\small $\mathfrak{p}_2$}  (0.5\hsize,-2em) -- (0,0);}}]1{[m:G]v}
\infer2[j\#]{[j:\texttt{\#}]uv}
\infer2[\textsc{Cut}]{stuv}
\end{prooftree}\, ,
\]
we get for $k\neq l$ or $l\neq m, F=G$ or $k\neq m, F=G$
\[
\begin{prooftree}
    \hypo{\emptyset}
    \infer1[\textsc{Id}]{[k:A;l:A]}
    \hypo{\emptyset}
    \infer1[\textsc{Id}]{[1:B;2:B]}
    \infer2[i\#]{[i:\texttt{\#};1:B;2:B;l:A]}
    \hypo{\emptyset}
    \infer1[\textsc{Id}]{[k:A;l:A]}
    \hypo{\emptyset}
    \infer1[\textsc{Id}]{[m:A;n:A]}
    \infer2[j\#]{[j:\texttt{\#};k:A;n:A]}
    \infer2[\textsc{Cut}]{[1:B;2:B;k:A;l:A;n:A]}
\end{prooftree}\, .
\]
\end{enumerate}
Case 8) $\texttt{P}(i\#)=kFlG; \texttt{P}(j\#)=mFnG$.
\begin{enumerate}[label=8.\arabic*)]
\item $\texttt{S}(i\#)=\beta; \texttt{S}(j\#)=\beta$: from
\[
\begin{prooftree}
\hypo{~~~~~~~~~}
            \infer[rule code={\hbox{\tikz\draw (0,0) -- (\hsize,0) -- (0.5\hsize,-2em) (0.5\hsize,-0.875em) node{\small $\mathfrak{p}_1$}  (0.5\hsize,-2em) -- (0,0);}}]1{[k:F;l:G]}
\infer1[i\#]{[i:\texttt{\#}]}
\hypo{~~~~~~~~~}
            \infer[rule code={\hbox{\tikz\draw (0,0) -- (\hsize,0) -- (0.5\hsize,-2em) (0.5\hsize,-0.875em) node{\small $\mathfrak{p}_2$}  (0.5\hsize,-2em) -- (0,0);}}]1{[m:F;n:G]}
\infer1[j\#]{[j:\texttt{\#}]}
\infer2[\textsc{Cut}]{st}
\end{prooftree}\, ,
\]
we get for $k \neq l, m\neq n, F=G$
\[
\begin{prooftree}
\hypo{\emptyset}
            \infer1[\textsc{Id}]{[k:A;l:A]}
\infer1[i\#]{[i:\texttt{\#}]}
\hypo{\emptyset}
            \infer1[\textsc{Id}]{[k:A;l:A]}
\infer1[j\#]{[j:\texttt{\#}]}
\infer2[\textsc{Cut}]{\emptyset\vdash \emptyset}
\end{prooftree}\, .
\]
\item $\texttt{S}(i\#)=\beta; \texttt{S}(j\#)=\gamma$: cf. 8.1).
\item $\texttt{S}(i\#)=\beta; \texttt{S}(j\#)=\epsilon$: from
\[
\begin{prooftree}
\hypo{\emptyset}
\infer1[i\#]{[i:\texttt{\#}]}
\hypo{~~~~~~~~~}
            \infer[rule code={\hbox{\tikz\draw (0,0) -- (\hsize,0) -- (0.5\hsize,-2em) (0.5\hsize,-0.875em) node{\small $\mathfrak{p}_1$}  (0.5\hsize,-2em) -- (0,0);}}]1{[m:F;n:G]s}
\infer1[j\#]{[j:\texttt{\#}]s}
\infer2[\textsc{Cut}]{s}
\end{prooftree}\, ,
\]
we get for $k \neq l, m\neq n$
\[
\begin{prooftree}
\hypo{\emptyset}
\infer1[i\#]{[i:\texttt{\#}]}
\hypo{\emptyset}
            \infer1[\textsc{Id}]{[k:A;l:A]}
\infer1[j\#]{[j:\texttt{\#}]}
\infer2[\textsc{Cut}]{\emptyset\vdash \emptyset}
\end{prooftree}\, .
\]
\item $\texttt{S}(i\#)=\gamma; \texttt{S}(j\#)=\gamma$: from
\[
\begin{prooftree}
\hypo{~~~~~~~~~}
            \infer[rule code={\hbox{\tikz\draw (0,0) -- (\hsize,0) -- (0.5\hsize,-2em) (0.5\hsize,-0.875em) node{\small $\mathfrak{p}_1$}  (0.5\hsize,-2em) -- (0,0);}}]1{[k:F;l:G]s}
\infer1[i\#]{[i:\texttt{\#}]s}
\hypo{~~~~~~~~~}
            \infer[rule code={\hbox{\tikz\draw (0,0) -- (\hsize,0) -- (0.5\hsize,-2em) (0.5\hsize,-0.875em) node{\small $\mathfrak{p}_2$}  (0.5\hsize,-2em) -- (0,0);}}]1{[m:F;n:G]t}
\infer1[j\#]{[j:\texttt{\#}]t}
\infer2[\textsc{Cut}]{st}
\end{prooftree}\, ,
\]
we get for $k\neq l, m\neq n$
\[
\begin{prooftree}
\hypo{\emptyset}
            \infer1[\textsc{Id}]{[k:A;l:A]}
\infer1[i\#]{[i:\texttt{\#}]}
\hypo{\emptyset}
            \infer1[\textsc{Id}]{[k:A;l:A]}
\infer1[j\#]{[j:\texttt{\#}]}
\infer2[\textsc{Cut}]{\emptyset\vdash \emptyset}
\end{prooftree}\, .
\]
\item $\texttt{S}(i\#)=\gamma; \texttt{S}(j\#)=\epsilon$: cf. 8.4).
\item $\texttt{S}(i\#)=\delta cd; \texttt{S}(j\#)=\delta cd$: from 
\[
\begin{prooftree}
\hypo{~~~~~~~~~}
            \infer[rule code={\hbox{\tikz\draw (0,0) -- (\hsize,0) -- (0.5\hsize,-2em) (0.5\hsize,-0.875em) node{\small $\mathfrak{p}_1$}  (0.5\hsize,-2em) -- (0,0);}}]1{[k:F;l:G]s}
            \hypo{~~~~~~~~~}
            \infer[rule code={\hbox{\tikz\draw (0,0) -- (\hsize,0) -- (0.5\hsize,-2em) (0.5\hsize,-0.875em) node{\small $\mathfrak{p}_2$}  (0.5\hsize,-2em) -- (0,0);}}]1{t}
\infer2[i\#]{[i:\texttt{\#}]st}
\hypo{~~~~~~~~~}
            \infer[rule code={\hbox{\tikz\draw (0,0) -- (\hsize,0) -- (0.5\hsize,-2em) (0.5\hsize,-0.875em) node{\small $\mathfrak{p}_3$}  (0.5\hsize,-2em) -- (0,0);}}]1{[m:F;n:G]u}
            \hypo{~~~~~~~~~}
            \infer[rule code={\hbox{\tikz\draw (0,0) -- (\hsize,0) -- (0.5\hsize,-2em) (0.5\hsize,-0.875em) node{\small $\mathfrak{p}_4$}  (0.5\hsize,-2em) -- (0,0);}}]1{v}
\infer2[j\#]{[j:\texttt{\#}]uv}
\infer2[\textsc{Cut}]{stuv}
\end{prooftree}\, ,
\]
we get for $k\neq l, m\neq n$
\[
\begin{prooftree}
\hypo{\emptyset}
            \infer1[\textsc{Id}]{[k:A;l:A]}
\hypo{\emptyset}
            \infer1[\textsc{Id}]{[k:B;l:B]}
\infer2[i\#]{[i:\texttt{\#};k:B;l:B]}
\hypo{\emptyset}
            \infer1[\textsc{Id}]{[k:A;l:A]}
\hypo{\emptyset}
            \infer1[\textsc{Id}]{[k:C;l:C]}
\infer2[j\#]{[j:\texttt{\#};k:C;l:C]}
\infer2[\textsc{Cut}]{[k:B,C;l:B,C]}
\end{prooftree}\, .
\]
\item $\texttt{S}(i\#)=\delta cd; \texttt{S}(j\#)=\epsilon$: from 
\[
\begin{prooftree}
\hypo{~~~~~~~~~}
            \infer[rule code={\hbox{\tikz\draw (0,0) -- (\hsize,0) -- (0.5\hsize,-2em) (0.5\hsize,-0.875em) node{\small $\mathfrak{p}_1$}  (0.5\hsize,-2em) -- (0,0);}}]1{[k:F;l:G]s}
            \hypo{~~~~~~~~~}
            \infer[rule code={\hbox{\tikz\draw (0,0) -- (\hsize,0) -- (0.5\hsize,-2em) (0.5\hsize,-0.875em) node{\small $\mathfrak{p}_2$}  (0.5\hsize,-2em) -- (0,0);}}]1{t}
\infer2[i\#]{[i:\texttt{\#}]st}
\hypo{~~~~~~~~~}
            \infer[rule code={\hbox{\tikz\draw (0,0) -- (\hsize,0) -- (0.5\hsize,-2em) (0.5\hsize,-0.875em) node{\small $\mathfrak{p}_3$}  (0.5\hsize,-2em) -- (0,0);}}]1{u}
\infer1[j\#]{[j:\texttt{\#}]u}
\infer2[\textsc{Cut}]{stu}
\end{prooftree}\, ,
\]
we get for $k\neq l, m\neq n$
\[
\begin{prooftree}
\hypo{\emptyset}
            \infer1[\textsc{Id}]{[k:A;l:A]}
\hypo{\emptyset}
            \infer1[\textsc{Id}]{[k:B;l:B]}
\infer2[i\#]{[i:\texttt{\#};k:B;l:B]}
\hypo{\emptyset}
            \infer1[\textsc{Id}]{[k:C;l:C]}
\infer1[j\#]{[j:\texttt{\#};k:C;l:C]}
\infer2[\textsc{Cut}]{[k:B,C;l:B,C]}
\end{prooftree}\, .
\]
\item $\texttt{S}(i\#)=\epsilon; \texttt{S}(j\#)=\epsilon$: cf. 8.4).
\end{enumerate}
Case 9) $\texttt{P}(i\#)=kFlG; \texttt{P}(j\#)=mF-nG$.
\begin{enumerate}[label=9.\arabic*)]
\item $\texttt{S}(i\#)=\delta cd; \texttt{S}(j\#)=\delta cd$: from
\[
\begin{prooftree}
\hypo{~~~~~~~~~}
            \infer[rule code={\hbox{\tikz\draw (0,0) -- (\hsize,0) -- (0.5\hsize,-2em) (0.5\hsize,-0.875em) node{\small $\mathfrak{p}_1$}  (0.5\hsize,-2em) -- (0,0);}}]1{[k:F;l:G]s}
\hypo{~~~~~~~~~}
            \infer[rule code={\hbox{\tikz\draw (0,0) -- (\hsize,0) -- (0.5\hsize,-2em) (0.5\hsize,-0.875em) node{\small $\mathfrak{p}_2$}  (0.5\hsize,-2em) -- (0,0);}}]1{t}
\infer2[i\#]{[i:\texttt{\#}]st}
\hypo{~~~~~~~~~}
            \infer[rule code={\hbox{\tikz\draw (0,0) -- (\hsize,0) -- (0.5\hsize,-2em) (0.5\hsize,-0.875em) node{\small $\mathfrak{p}_3$}  (0.5\hsize,-2em) -- (0,0);}}]1{[m:F]u}
\hypo{~~~~~~~~~}
            \infer[rule code={\hbox{\tikz\draw (0,0) -- (\hsize,0) -- (0.5\hsize,-2em) (0.5\hsize,-0.875em) node{\small $\mathfrak{p}_4$}  (0.5\hsize,-2em) -- (0,0);}}]1{[n:G]v}
\infer2[j\#]{[j:\texttt{\#}]uv}
\infer2[\textsc{Cut}]{stuv}
\end{prooftree}\, ,
\]
we get for $k\neq m, l\neq n$ or $m\neq n, F=G$
\[
\begin{prooftree}
\hypo{\emptyset}
            \infer1[\textsc{Id}]{[k:A;l:A]}
\hypo{\emptyset}
            \infer1[\textsc{Id}]{[k:B;l:B]}
\infer2[i\#]{[i:\texttt{\#};k:B;l:B]}
\hypo{\emptyset}
            \infer1[\textsc{Id}]{[k:A;l:A]}
\hypo{\emptyset}
            \infer1[\textsc{Id}]{[k:A;l:A]}
\infer2[j\#]{[j:\texttt{\#};k:A;l:A]}
\infer2[\textsc{Cut}]{[k:A,B;l:A,B]}
\end{prooftree}\, .
\]
\item $\texttt{S}(i\#)=\epsilon; \texttt{S}(j\#)=\delta cd$: from
\[
\begin{prooftree}
\hypo{~~~~~~~~~}
            \infer[rule code={\hbox{\tikz\draw (0,0) -- (\hsize,0) -- (0.5\hsize,-2em) (0.5\hsize,-0.875em) node{\small $\mathfrak{p}_1$}  (0.5\hsize,-2em) -- (0,0);}}]1{s}
\infer1[i\#]{[i:\texttt{\#}]s}
\hypo{~~~~~~~~~}
            \infer[rule code={\hbox{\tikz\draw (0,0) -- (\hsize,0) -- (0.5\hsize,-2em) (0.5\hsize,-0.875em) node{\small $\mathfrak{p}_2$}  (0.5\hsize,-2em) -- (0,0);}}]1{[m:F]t}
\hypo{~~~~~~~~~}
            \infer[rule code={\hbox{\tikz\draw (0,0) -- (\hsize,0) -- (0.5\hsize,-2em) (0.5\hsize,-0.875em) node{\small $\mathfrak{p}_4$}  (0.5\hsize,-2em) -- (0,0);}}]1{[n:G]u}
\infer2[j\#]{[j:\texttt{\#}]tu}
\infer2[\textsc{Cut}]{stu}
\end{prooftree}\, ,
\]
we get for $m\neq n, F=G$
\[
\begin{prooftree}
\hypo{\emptyset}
            \infer1[\textsc{Id}]{[k:B;l:B]}
\infer1[i\#]{[i:\texttt{\#};k:B;l:B]}
\hypo{\emptyset}
            \infer1[\textsc{Id}]{[k:A;l:A]}
\hypo{\emptyset}
            \infer1[\textsc{Id}]{[k:A;l:A]}
\infer2[j\#]{[j:\texttt{\#};k:A;l:A]}
\infer2[\textsc{Cut}]{[k:A,B;l:A,B]}
\end{prooftree}\, .
\]
\end{enumerate}
Case 10) $\texttt{P}(i\#)=kF-lG; \texttt{P}(j\#)=mF-nG$.
\begin{enumerate}[label=10.\arabic*)]
\item $\texttt{S}(i\#)=\delta cs; \texttt{S}(j\#)=\delta cd$: from
\[
\begin{prooftree}
\hypo{~~~~~~~~~}
            \infer[rule code={\hbox{\tikz\draw (0,0) -- (\hsize,0) -- (0.5\hsize,-2em) (0.5\hsize,-0.875em) node{\small $\mathfrak{p}_1$}  (0.5\hsize,-2em) -- (0,0);}}]1{[k:F]s}
\hypo{~~~~~~~~~}
            \infer[rule code={\hbox{\tikz\draw (0,0) -- (\hsize,0) -- (0.5\hsize,-2em) (0.5\hsize,-0.875em) node{\small $\mathfrak{p}_2$}  (0.5\hsize,-2em) -- (0,0);}}]1{[l:G]s}
\infer2[i\#]{[i:\texttt{\#}]s}
\hypo{~~~~~~~~~}
            \infer[rule code={\hbox{\tikz\draw (0,0) -- (\hsize,0) -- (0.5\hsize,-2em) (0.5\hsize,-0.875em) node{\small $\mathfrak{p}_3$}  (0.5\hsize,-2em) -- (0,0);}}]1{[m:F]t}
\hypo{~~~~~~~~~}
            \infer[rule code={\hbox{\tikz\draw (0,0) -- (\hsize,0) -- (0.5\hsize,-2em) (0.5\hsize,-0.875em) node{\small $\mathfrak{p}_4$}  (0.5\hsize,-2em) -- (0,0);}}]1{[n:G]u}
\infer2[j\#]{[j:\texttt{\#}]tu}
\infer2[\textsc{Cut}]{stu}
\end{prooftree}\ , 
\]
we get for $F=G, k=l$ (assuming $k\neq k^\prime, m\neq m^\prime, n\neq n`^\prime$)
{\footnotesize
\[
\begin{prooftree}
\hypo{\emptyset}
            \infer1[\textsc{Id}]{[k:A;k^\prime:A]}
\hypo{\emptyset}
            \infer1[\textsc{Id}]{[k:A;k^\prime:A]}
\infer2[i\#]{[i:\texttt{\#};k^\prime:A]}
\hypo{\emptyset}
            \infer1[\textsc{Id}]{[m:A;m^\prime:A]}
\hypo{\emptyset}
            \infer1[\textsc{Id}]{[n:A;n^\prime:A]}
\infer2[j\#]{[j:\texttt{\#};m^\prime:A; n^\prime: A]}
\infer2[\textsc{Cut}]{[k^\prime:A; m^\prime:A; n^\prime: A]}
\end{prooftree}
\] 
}
\item $\texttt{S}(i\#)=\delta cd; \texttt{S}(j\#)=\delta cd$: from
\[
\begin{prooftree}
\hypo{~~~~~~~~~}
            \infer[rule code={\hbox{\tikz\draw (0,0) -- (\hsize,0) -- (0.5\hsize,-2em) (0.5\hsize,-0.875em) node{\small $\mathfrak{p}_1$}  (0.5\hsize,-2em) -- (0,0);}}]1{[k:F]s}
\hypo{~~~~~~~~~}
            \infer[rule code={\hbox{\tikz\draw (0,0) -- (\hsize,0) -- (0.5\hsize,-2em) (0.5\hsize,-0.875em) node{\small $\mathfrak{p}_2$}  (0.5\hsize,-2em) -- (0,0);}}]1{[l:G]t}
\infer2[i\#]{[i:\texttt{\#}]st}
\hypo{~~~~~~~~~}
            \infer[rule code={\hbox{\tikz\draw (0,0) -- (\hsize,0) -- (0.5\hsize,-2em) (0.5\hsize,-0.875em) node{\small $\mathfrak{p}_3$}  (0.5\hsize,-2em) -- (0,0);}}]1{[m:F]u}
\hypo{~~~~~~~~~}
            \infer[rule code={\hbox{\tikz\draw (0,0) -- (\hsize,0) -- (0.5\hsize,-2em) (0.5\hsize,-0.875em) node{\small $\mathfrak{p}_4$}  (0.5\hsize,-2em) -- (0,0);}}]1{[n:G]v}
\infer2[j\#]{[j:\texttt{\#}]uv}
\infer2[\textsc{Cut}]{stuv}
\end{prooftree}\, ,
\]
we get for $k\neq m, l\neq n$ or $F=G, k\neq l$ or $F=G, m\neq n$
\[
\begin{prooftree}
\hypo{\emptyset}
            \infer1[\textsc{Id}]{[k:A;l:A]}
\hypo{\emptyset}
            \infer1[\textsc{Id}]{[k:A;l:A]}
\infer2[i\#]{[i:\texttt{\#};k:A;l:A]}
\hypo{\emptyset}
            \infer1[\textsc{Id}]{[k:A;l:A]}
\hypo{\emptyset}
            \infer1[\textsc{Id}]{[k:A;l:A]}
\infer2[j\#]{[j:\texttt{\#};k:A;l:A]}
\infer2[\textsc{Cut}]{[k:A, A;l:A, A]}
\end{prooftree}\, .
\]
\item $\texttt{S}(i\#)=\delta cd; \texttt{S}(j\#)=\epsilon$: from
\[
\begin{prooftree}
\hypo{~~~~~~~~~}
            \infer[rule code={\hbox{\tikz\draw (0,0) -- (\hsize,0) -- (0.5\hsize,-2em) (0.5\hsize,-0.875em) node{\small $\mathfrak{p}_1$}  (0.5\hsize,-2em) -- (0,0);}}]1{[k:F]s}
\hypo{~~~~~~~~~}
            \infer[rule code={\hbox{\tikz\draw (0,0) -- (\hsize,0) -- (0.5\hsize,-2em) (0.5\hsize,-0.875em) node{\small $\mathfrak{p}_2$}  (0.5\hsize,-2em) -- (0,0);}}]1{[l:G]t}
\infer2[i\#]{[i:\texttt{\#}]st}
\hypo{~~~~~~~~~}
            \infer[rule code={\hbox{\tikz\draw (0,0) -- (\hsize,0) -- (0.5\hsize,-2em) (0.5\hsize,-0.875em) node{\small $\mathfrak{p}_3$}  (0.5\hsize,-2em) -- (0,0);}}]1{[m:F]u}
\infer1[j\#]{[j:\texttt{\#}]u}
\infer2[\textsc{Cut}]{stu}
\end{prooftree}\, ,
\]
we get for $k\neq m, l\neq n$ or $F=G, k\neq l$
\[
\begin{prooftree}
\hypo{\emptyset}
            \infer1{[k:A;l:A]}
\hypo{\emptyset}
            \infer1{[k:A;l:A]}
\infer2[i\#]{[i:\texttt{\#};k:A;l:A]}
\hypo{\emptyset}
            \infer1{[k:A;l:A]}
\infer1[j\#]{[j:\texttt{\#};k:A]}
\infer2[\textsc{Cut}]{[k:A, A;l:A]}
\end{prooftree}\, .
\]
\hfill $\blacksquare$
\end{enumerate}
\end{enumerate}
\end{proof}
\subsubsection{Special and Exceptional Cases}
We have proven that all and only the $376$ members of \texttt{PC} are \textit{minimally conservative}, i.e. conservative in $\{\textsc{Id}, \textsc{Cut} cd\}$. However, there are still some connectives among our $10,816$ formulable connectives that are neither covered by Theorem 3 nor by Theorem 4. These instances of \# are \textit{trivially} conservative.

This is because \textit{no} new \#-free sequent can be proven after adding the \#-rules to $\{\textsc{Id}, \textsc{Cut} cd\}$, and the antecedent of Definition 15 is therefore vacuously satisfied for these \#-specifications. Typically, this occurs because at least one of the \#-rules cannot be applied in $\{\textsc{Id}, \textsc{Cut} cd\}$ due to contextual constraints arising from the context-sharing premiss shape of at least one \#-rule. Let us consider two examples.
\begin{example}\

    \noindent Let \# be of premiss type $\langle \alpha, \delta cs kF \rangle$, $k\in \{1,2\}$, i.e. defined by the operational rules: 
    \[ \Bigg \langle
    \begin{prooftree}
        \hypo{\emptyset}
        \infer1[1\#]{[1:\texttt{\#}]}
    \end{prooftree},
    \begin{prooftree}
        \hypo{[k:F]s}
        \hypo{s}
        \infer2[2\#]{[2:\texttt{\#}]s}
    \end{prooftree}
    \Bigg \rangle\, .
    \]
    There is no proof $\mathfrak{p}$ in $\{1\#,2\#,\textsc{Id}, \textsc{Cut} cd\}$ of a sequent $\mathfrak{s}$ such that $\mathfrak{p}$ uses 1\# or 2\# but \# does not occur in $\mathfrak{s}$.
    \begin{proof}
    First, we show that there is no proof using rule 2\# in $\{1\#,2\#,\textsc{Id}, \textsc{Cut} cd\}$. Let us attempt an inverted proof search for such a proof $\mathfrak{p}^\prime=$
    \[
    \begin{prooftree}
       \hypo{~~~~~~~~~}
            \infer[rule code={\hbox{\tikz\draw (0,0) -- (\hsize,0) -- (0.5\hsize,-2em) (0.5\hsize,-0.875em) node{\small $\mathfrak{p}_1$}  (0.5\hsize,-2em) -- (0,0);}}]1{[k:F]s}
    \hypo{~~~~~~~~~}
            \infer[rule code={\hbox{\tikz\draw (0,0) -- (\hsize,0) -- (0.5\hsize,-2em) (0.5\hsize,-0.875em) node{\small $\mathfrak{p}_2$}  (0.5\hsize,-2em) -- (0,0);}}]1{s}
    \infer2[2\#]{[2:\#]s}
    \end{prooftree}\, .
    \]
    All theorems of $\{\textsc{Id}, \textsc{Cut} cd\}$ are instances of \textsc{Id}$^{pr}$ by Lemma 2, and 1\# only produces sequents of the form $[1:\texttt{\#}]$. Moreover, 1\# cannot serve as a premiss sequent of \textsc{Cut} in $\{1\#,\textsc{Id}, \textsc{Cut} cd\}$. If it were, $\texttt{\#}$---the only formula in the conclusion of 1\#---would need to be the \textsc{Cut}-formula, whereas no available rule produces sequents of form $[2:\texttt{\#}]t$. Therefore, all theorems of $\{1\#,\textsc{Id}, \textsc{Cut} cd\}$ are instances of \textsc{Id}$^{pr}$ or $[1:\texttt{\#}]$.

    Applying this insight to our proof search for $\mathfrak{p}^\prime$, it follows that there are no matching $\mathfrak{p}_1$ and $\mathfrak{p}_2$ in $\{1\#,\textsc{Id}, \textsc{Cut} cd\}$ such that we can apply 2\#. If $\mathfrak{p}_2$ ended in $s=[1:\texttt{\#}]$, we would not find a matching $\mathfrak{p}_1$ since $[k:F;1:\texttt{\#}]$ is neither an instance of \textsc{Id}$^{pr}$ nor of 1\# and, thus, unprovable in $\{1\#,\textsc{Id}, \textsc{Cut} cd\}$. If $s$ were of form $[1:A^{pr};2: A^{pr}]$, $[k:F]s=[k:F;1:A^{pr};2: A^{pr}]$ would also not be provable for the same reason. Finally, if $[k:F]s=[1:A^{pr};2: A^{pr}]$, $s=[1:A^{pr}]$ or $s=[2:A^{pr}]$, neither of which is provable. Therefore, 2\# cannot be applied in $\{1\#,\textsc{Id}, \textsc{Cut} cd\}$ and, \textit{a fortiori}, $\{1\#, 2\#,\textsc{Id}, \textsc{Cut} cd\}$ since no initial application of $2\#$ can be constructed.
    
    We can now show that there is no $\mathfrak{p}$ of $\mathfrak{s}$ using $\{1\#,2\#,\textsc{Id}, \textsc{Cut} cd\}$. Since 2\# cannot be applied, $\{1\#,2\#,\textsc{Id}, \textsc{Cut} cd\}$ has the same theorems as $\{1\#,\textsc{Id}, \textsc{Cut} cd\}$. Hence, $\mathfrak{s}$ is an instance of \textsc{Id}$^{pr}$, or $\mathfrak{s}=[1:\texttt{\#}]$. In the former case, it was derived using \textsc{Id}$^{pr}$ and \textsc{Cut} only, which violates the definition of $\mathfrak{p}$ as using 1\# or 2\# is required. In the latter case, $\mathfrak{s}$ would not be \#-free, contradicting the specification of $\mathfrak{s}$.
    \hfill $\blacksquare$ 
    \end{proof}
\end{example}
\begin{example}\

    \noindent     \noindent Let \# be of premiss type $\langle \delta cs kFlG, \delta cd mFnG \rangle$, $k,l,m,n\in \{1,2\}$, i.e. defined by the operational rules: 
    \[ \Bigg \langle
    \begin{prooftree}
        \hypo{[k:F;l:G]s}
        \hypo{s}
        \infer2[1\#]{[1:\texttt{\#}]s}
    \end{prooftree},
    \begin{prooftree}
        \hypo{[m:F]s}
        \hypo{[n:G]t}
        \infer2[2\#]{[2:\texttt{\#}]st}
    \end{prooftree}
    \Bigg \rangle\, .
    \]
    The constellations where $k\neq l, m\neq n$ are particularly interesting as they would allow Lemma 1 and, \textit{a fortiori}, conservativity to \textit{non-vacuously} succeed if the right \#-free $\mathfrak{s}$ were provable. We focus on these instances here; the others behave equivalently.
    
    There is no proof $\mathfrak{p}$ in $\{1\#,2\#,\textsc{Id}, \textsc{Cut} cd\}$ of a sequent $\mathfrak{s}$ such that $\mathfrak{p}$ uses 1\# or 2\# but \# does not occur in $\mathfrak{s}$.
    \begin{proof}
    We attempt to find such a proof top-down. We can immediately see that the premises of 1\# cannot be instances of \textsc{Id} alone. Hence, we must apply 2\# prior to 1\#.
    \begin{enumerate}
        \item Let $k\neq l$, then $m\neq n$. Then, initial 2\#-applications are proofs of the form $\mathfrak{p}_1=$
    \[
    \begin{prooftree}
        \hypo{\emptyset}
        \infer1[\textsc{Id}]{[1:A;2:A]}
        \hypo{\emptyset}
        \infer1[\textsc{Id}]{[1:A;2:A]}
        \infer2[2\#]{[1:A;2:A, \texttt{\#}]}
    \end{prooftree}\, .
    \]
    2\# takes one formula from the first and one from the second dimension of its premises. The same is true for chained 2\#-applications:
    \[
    \begin{prooftree}
        \hypo{~~~~~~~~~}
            \infer[rule code={\hbox{\tikz\draw (0,0) -- (\hsize,0) -- (0.5\hsize,-2em) (0.5\hsize,-0.875em) node{\small $\mathfrak{p}_1$}  (0.5\hsize,-2em) -- (0,0);}}]1{[1:A;2:A, \texttt{\#}]}
        \hypo{\emptyset}
        \infer1[\textsc{Id}]{[1:A;2:A]}
        \infer2[2\#]{[1:A,2:A,\texttt{\#}, \texttt{\#}]}
        \hypo{~~~~~~~~~}
            \infer[rule code={\hbox{\tikz\draw (0,0) -- (\hsize,0) -- (0.5\hsize,-2em) (0.5\hsize,-0.875em) node{\small $\mathfrak{p}_1$}  (0.5\hsize,-2em) -- (0,0);}}]1{[1:A;2:A, \texttt{\#}]}
        \infer2[2\#]{[1:A;2:A, A, \texttt{\#}, \texttt{\#}^\prime]}
    \end{prooftree}\, .
    \]
    Therefore, any theorem of $\{2\#,\textsc{Id}, \textsc{Cut} cd\}$ has exactly one formula in the first dimension and is of form $[1:F;2:F,\Gamma]$, for any multiset of formulae $\Gamma$.

    This means that 1\# cannot be used in $\{1\#, 2\#,\textsc{Id}, \textsc{Cut} cd\}$, equivalently to the previous example: looking for premises of 1\#, if $s=[1:F;2:F,\Gamma]$, $[1:F;2:G]s$ must be of the form $[1:F,F;2:F,G,\Gamma]$, and if $[1:F;2:G]s=[1:F;2:F,\Gamma]$, $s$ is of the form $[1:\emptyset;2:\Delta]$. The consequent of neither conditional is provable in $\{2\#,\textsc{Id}, \textsc{Cut} cd\}$ as already established.
    \item Let $k= l$, then $m= n$.
    \begin{enumerate}
        \item $m=1, k=2$ only results in new $\{2\#,\textsc{Id}, \textsc{Cut} cd\}$-theorems of the form $[2:A,A,\texttt{\#}]$, rendering 1\# inapplicable again. 
        \item The case where $m=2, k=1$ is more interesting. Let $\mathfrak{p}_1$ again refer to an initial 2\#-application: 
    \[
    \begin{prooftree}
        \hypo{\emptyset}
        \infer1[\textsc{Id}]{[1:A;2:A]}
        \hypo{\emptyset}
        \infer1[\textsc{Id}]{[1:A;2:A]}
        \infer2[2\#]{[1:A,A;2:\texttt{\#}]}
        \end{prooftree}\, .
    \]   
    Although there are again no fitting premisses for an application of 1\# if the active formulae of our \#-rules are subformulae of their principal formula, 1\# is \textit{applicable} to some conclusions of 2\# given $\#=\#^\prime$ (we will come back to this restriction momentarily):
    \[
    \begin{prooftree}
        \hypo{~~~~~~~~~}
        \infer[rule code={\hbox{\tikz\draw (0,0) -- (\hsize,0) -- (0.5\hsize,-2em) (0.5\hsize,-0.875em) node{\small $\mathfrak{p}_1$}  (0.5\hsize,-2em) -- (0,0);}}]1{[1:A,A;2:\texttt{\#}]}
        \hypo{~~~~~~~~~}
        \infer[rule code={\hbox{\tikz\draw (0,0) -- (\hsize,0) -- (0.5\hsize,-2em) (0.5\hsize,-0.875em) node{\small $\mathfrak{p}_1$}  (0.5\hsize,-2em) -- (0,0);}}]1{[1:A,A;2:\texttt{\#}]}
        \infer2[2\#]{[1:A,A,A,A;2:\texttt{\#}^{\prime}]}
        \hypo{~~~~~~~~~}
        \infer[rule code={\hbox{\tikz\draw (0,0) -- (\hsize,0) -- (0.5\hsize,-2em) (0.5\hsize,-0.875em) node{\small $\mathfrak{p}_1$}  (0.5\hsize,-2em) -- (0,0);}}]1{[1:A,A;2:\texttt{\#}]}
        \infer2[1\#]{[1:A,A, \texttt{\#};2: \texttt{\#}]}
    \end{prooftree}\, .
    \]
    Nevertheless, we cannot prove any \#-free $\mathfrak{s}$. As we have seen in the previous example, to reach such an $\mathfrak{s}$, we must cut out all \#-formulae previously introduced by 1\# or 2\#. Let us attempt an inverted proof search on one such $\mathfrak{s}=st$:
    \[
    \begin{prooftree}
        \hypo{~~~~~~~~~}
        \infer[rule code={\hbox{\tikz\draw (0,0) -- (\hsize,0) -- (0.5\hsize,-2em) (0.5\hsize,-0.875em) node{\small $\mathfrak{p}_2$}  (0.5\hsize,-2em) -- (0,0);}}]1{[1:\texttt{\#}]s}
        \hypo{~~~~~~~~~}
        \infer[rule code={\hbox{\tikz\draw (0,0) -- (\hsize,0) -- (0.5\hsize,-2em) (0.5\hsize,-0.875em) node{\small $\mathfrak{p}_3$}  (0.5\hsize,-2em) -- (0,0);}}]1{[2:\texttt{\#}]t}
        \infer2[\textsc{Cut}]{st}
    \end{prooftree}\, .
    \]
    Thus, $s,t$ must be \#-free. However, there is no $\mathfrak{p}_2$ of  $[1:\texttt{\#}]s$: as 1\# only takes sequents with \#-formulae in the second dimension as its premises and all its active formulae are in the first dimension,  \#-formulae will always remain in the second component of its conclusion. Eliminating them via \textsc{Cut} requires another $\#$-formula in the first dimension, itself introduced by $1\#$, leading to a regress.\hfill $\blacksquare$
    \end{enumerate}
    \end{enumerate} 
    \end{proof}
    \end{example}

These examples are paradigmatic for our trivially conservative \textit{special cases}, and we will leave the remaining special cases as an exercise to the reader. An exhaustive classification of all cases can be found in the overview tables in \S5.1.4.

Finally, we discuss two \textit{exceptional cases}. First, consider \# with rules: 
\[
    \begin{prooftree}
        \hypo{[k:F]s}
        \hypo{s}
        \infer2[i\#]{[i:\texttt{\#}]s}
    \end{prooftree}\, , \qquad
    \begin{prooftree}
        \hypo{[l:G]s+s}
        \infer1[j\#]{[j:\texttt{\#}]s}
    \end{prooftree}\, .
\]
Although this case closely resembles our vacuously conservative special cases, there is a counterexample for $k\neq j, i\neq l$:
    \[
    \begin{prooftree}
        \hypo{\emptyset}
        \infer1[\textsc{Id}]{[1:A,2:A]}
        \infer1[j\#]{[j:\texttt{\#};1:A,2:A]}
        \hypo{\emptyset}
        \infer1[\textsc{Id}]{[1:A,2:A]}
        \infer2[i\#]{[i:\texttt{\#}^{\prime};1:A,2:A]}
        \infer1[j\#]{[j:\texttt{\#}^{\prime};1:A,2:A]}
        \hypo{\emptyset}
        \infer1[\textsc{Id}]{[1:A,2:A]}
        \infer1[j\#]{[j:\texttt{\#};1:A,2:A]}
        \hypo{\emptyset}
        \infer1[\textsc{Id}]{[1:A,2:A]}
        \infer2[i\#]{[i:\texttt{\#}^{\prime};1:A,2:A]}
        \infer2[\textsc{Cut}]{[1:A,A,2:A,A]}
    \end{prooftree}\, .
    \]
In this proof, we assume \textit{subformula-connectedness}:
\begin{definition}[subformula-connectedness]\

    \noindent An operational rule for connective \# is \textit{subformula-connected} iff the subformulae of its principal \#-formula are the active formulae of its premiss sequents.
\end{definition}
We subsequently distinguish $\texttt{\#}$ (subformula $A$) and $\texttt{\#}^\prime$ (subformula $\texttt{\#}$). Our investigation up to this point was \textit{independent} of this assumption. If we were to drop this constraint, $\texttt{\#}$ would be a constant and $\texttt{\#}=\texttt{\#}^\prime$. We could then provide an even simpler counterexample to this case, also eliminating the constraint that $i=l$:
    \[
    \begin{prooftree}
        \hypo{\emptyset}
        \infer1[\textsc{Id}]{[1:A,2:A]}
        \infer1[j\#]{[j:\texttt{\#};1:A,2:A]}
        \hypo{\emptyset}
        \infer1[\textsc{Id}]{[1:A,2:A]}
        \infer2[i\#]{[i:\texttt{\#};1:A,2:A]}
        \hypo{\emptyset}
        \infer1[\textsc{Id}]{[1:A,2:A]}
        \infer1[j\#]{[j:\texttt{\#};1:A,2:A]}
        \infer2[\textsc{Cut}]{[1:A,A,2:A,A]}
    \end{prooftree}\, .
    \]
Our second exceptional case shows similar subformula-dependent behaviour. \# now has the rules: 
\[
    \begin{prooftree}
        \hypo{[k:F]s}
        \hypo{s}
        \infer2[i\#]{[i:\texttt{\#}]s}
    \end{prooftree}\, , \qquad
    \begin{prooftree}
        \hypo{[l:F]s}
        \hypo{[m:G]t}
        \infer2[j\#]{[j:\texttt{\#}]st}
    \end{prooftree}\, .
    \]
    For $k=j,l\neq m$ we again get a counterexample (see Figure 6).
\begin{sidewaysfigure}[htbp]
\fbox{
\parbox{\linewidth}{
    {\scriptsize
    \[
    \begin{prooftree}
        \hypo{\emptyset}
        \infer1[\textsc{Id}]{[l:A;m:A]}
        \hypo{\emptyset}
        \infer1[\textsc{Id}]{[l:A;m:A]}
        \infer2[j\#]{[j:\texttt{\#};l:A;m:A]}
        \hypo{\emptyset}
        \infer1[\textsc{Id}]{[l:A;m:A]}
        \infer2[i\#]{[i:\texttt{\#}^\prime;l:A;m:A]}
        \hypo{\emptyset}
        \infer1[\textsc{Id}]{[l:A;m:A]}
        \hypo{\emptyset}
        \infer1[\textsc{Id}]{[l:A;m:A]}
        \infer2[j\#]{[j:\texttt{\#};l:A;m:A]}
        \hypo{\emptyset}
        \infer1[\textsc{Id}]{[l:A;m:A]}
        \infer2[i\#]{[i:\texttt{\#}^\prime;l:A;m:A]}
        \hypo{\emptyset}
        \infer1[\textsc{Id}]{[l:A;m:A]}
        \infer2[j\#]{[j:\texttt{\#}^\prime; i:A,A]}
        \infer2[\textsc{Cut}]{[j:A,A;i:A,A,A]}
    \end{prooftree}
    \] 
    }}}
    \caption{counterexample---second exceptional case, $k=j, l\neq m$}
    \end{sidewaysfigure}

Here, i\# and j\# have different active formula numbers. It is unclear whether \# should be considered a unary or binary connective when subformula-connectedness is assumed in this scenario. If we drop this assumption, we obtain a simplified proof as in the other exceptional case:
{\footnotesize
    \[
    \begin{prooftree}
        \hypo{\emptyset}
        \infer1[\textsc{Id}]{[l:A;m:A]}
        \hypo{\emptyset}
        \infer1[\textsc{Id}]{[l:A;m:A]}
        \infer2[j\#]{[j:\texttt{\#};l:A;m:A]}
        \hypo{\emptyset}
        \infer1[\textsc{Id}]{[l:A;m:A]}
        \infer2[i\#]{[i:\texttt{\#};l:A;m:A]}
        \hypo{\emptyset}
        \infer1[\textsc{Id}]{[l:A;m:A]}
        \hypo{\emptyset}
        \infer1[\textsc{Id}]{[l:A;m:A]}
        \infer2[j\#]{[j:\texttt{\#};l:A;m:A]}
        \infer2[\textsc{Cut}]{[l:A,A;m:A,A]}
    \end{prooftree}\, .
    \]   
}

Although these exceptional cases are rare but informative illustrations of the influence of the principal formula on definability, they are largely irrelevant for the remainder of the investigation, as they are \textit{negative} cases.

\subsubsection{Overview of cases}
We now give an overview of all formulable connectives considered for the conservativity proof. Each table represents a possible active formula constellation in the operational rules. The rows indicate the premiss shape of the $i$-rule, and the columns that of the $j$-rule for $i,j\in \{1,2\}$.

Instances of \# marked with `$\checkmark$' are conservative in $\{i\#,j\#, \textsc{Id}, \textsc{Cut} cd\}$ as established in Theorem 3. `$\checkmark ^-$' indicates that only a context-restricted version is conservative. Trivially conservative special cases are shown using `s'. `!' marks the exceptional cases discussed in \S5.1.3. Non-conservative instances are marked using Arabic numerals, in accordance with the case in the proof of Theorem 4 under which they fall (i.e. 1., 2. or 3.).

We do not separately list instances that syntactically collapse into other instances. For example, $\langle \epsilon kF-kF, \texttt{SP}(2\#)\rangle$ collapses into $\langle \epsilon kF, \texttt{SP}(2\#)\rangle$, irrespective of the premiss type of $2\#$.

\begin{table}[h]
    \centering
    \begin{tabular}{|c|c|c|c|c|}
    \hline
       $\mathtt{S}(i\#)\backslash \mathtt{S}(j\#) $  & $\alpha$ & $\gamma$ & $\delta cs$  & $\delta cd$\\ \hline
       $\alpha$  &  3  & $\checkmark$  & $\checkmark$  &  2  \\ \hline
       $\gamma$  & \cellcolor{lightgray}  &  2  &  2  &  2  \\ \hline
       $\delta cs$   &  \cellcolor{lightgray}  &  \cellcolor{lightgray}  &  2  &  2  \\ \hline
       $\delta cd$   &  \cellcolor{lightgray}  &  \cellcolor{lightgray}  &  \cellcolor{lightgray}  &  2  \\ \hline
    \end{tabular}
    \caption{$\mathtt{P}(i\#)=0$; $\mathtt{P}(j\#)=0$}
\end{table}

\begin{table}[h]
    \centering
    \begin{tabular}{|c|c|c|c|c|c|}
    \hline
       $\mathtt{S}(i\#)\backslash \mathtt{S}(j\#) $  & $\beta$ & $\gamma$ & $\delta cs$  & $\delta cd$  & $\epsilon$\\ \hline
       $\alpha$  &  1  & 1  & s  &  $\checkmark^-$, otherwise 2  &  1  \\ \hline
       $\gamma$  & $\checkmark$  &  2  &  s  &  2  &  2  \\ \hline
       $\delta cs$   & $\checkmark$  &  2  &  s  &  2  &  2 \\ \hline
       $\delta cd$   & 2  &  2  &  s  &  2  &  2  \\ \hline
    \end{tabular}
    \caption{$\mathtt{P}(i\#)=0$; $\mathtt{P}(j\#)=kF$}
\end{table}

\begin{table}[h]
    \centering
    \begin{tabular}{|c|c|c|c|c|c|}
    \hline
       $\mathtt{S}(i\#)\backslash \mathtt{S}(j\#) $  & $\beta$ & $\gamma$ & $\delta cs$  & $\delta cd$  & $\epsilon$\\ \hline
       $\alpha$  &  3 if $k\neq l, F=G$,  & 3 if $k\neq l, F=G$,  & s  &  $\checkmark^-$,  &  1 \\
       & otherwise 1 & otherwise 1 & & otherwise 2 & \\ \hline
       $\gamma$  & $\checkmark$  &  $\checkmark^-$ if $k\neq l, F=G$,  &  s  &  3 if $k\neq l, F=G$, &  2 \\
       & & otherwise 2 & & otherwise 2 & \\ \hline
       $\delta cs$   & $\checkmark$  &  $\checkmark^-$ if $k\neq l, F=G$,   &  s  &  3 if $k\neq l, F=G$,   &  2 \\
       & & otherwise 2 & & otherwise 2 &\\ \hline
       $\delta cd$   & 2  &  3 if $k\neq l, F=G$,  &  s  &  3 if $k\neq l, F=G$, &  2 \\
       & & otherwise 2 & & otherwise 2 & \\ \hline
    \end{tabular}
    \caption{$\mathtt{P}(i\#)=0$; $\mathtt{P}(j\#)=kFlG$}
\end{table}

\begin{table}[h]
    \centering
    \begin{tabular}{|c|c|c|c|}
    \hline
       $\mathtt{S}(i\#)\backslash \mathtt{S}(j\#) $  &  $\delta cs$  & $\delta cd$ & $\epsilon$\\ \hline
       $\alpha$  &  s if $k\neq l, F=G$, otherwise 1  & $\checkmark$ if $k\neq l; F=G$, otherwise 1  & 1   \\ \hline
       $\gamma$  & s if $k\neq l, F=G$, otherwise 2  &  3 if $k\neq l; F=G$, otherwise 2  &  2   \\ \hline
       $\delta cs$   &  s if $k\neq l, F=G$, otherwise 2  &  3 if $k\neq l; F=G$, otherwise 2  &  2  \\ \hline
       $\delta cd$   &  s if $k\neq l, F=G$, otherwise 2  &  3 if $k\neq l; F=G$, otherwise 2  &  2  \\ \hline
    \end{tabular}
    \caption{$\mathtt{P}(i\#)=0$;$ \mathtt{P}(j\#)=kF-lG$}
\end{table}

\begin{table}[h]
    \centering
    \begin{tabular}{|c|c|c|c|c|c|}
    \hline
       $\mathtt{S}(i\#)\backslash \mathtt{S}(j\#) $  & $\beta$ & $\gamma$ & $\delta cs$  & $\delta cd$  & $\epsilon$\\ \hline
       $\beta$  &  1  & 1  &  s  &  $\checkmark^-$, otherwise 2  &  1  \\ \hline
       $\gamma$  & \cellcolor{lightgray}  &  $\checkmark$ if $k\neq l, F=G$,   &  s  &  3 if $k\neq l, F=G$,   &  2 if $k\neq l, F=G$,  \\
       & \cellcolor{lightgray} & otherwise 1 & & otherwise 2 & otherwise 1\\ \hline
       $\delta cs$   & \cellcolor{lightgray}  &  \cellcolor{lightgray}  &  s  &  s  &  ! if $k= j$, \\ 
       & \cellcolor{lightgray} & \cellcolor{lightgray} & & & otherwise s \\\hline 
       $\delta cd$   & \cellcolor{lightgray}  &  \cellcolor{lightgray}  &  \cellcolor{lightgray}  &  3 if $k\neq l, F=G$,  &  2  \\
       & \cellcolor{lightgray} & \cellcolor{lightgray} & \cellcolor{lightgray} & otherwise 2 & \\ \hline
       $\epsilon$   & \cellcolor{lightgray}  &  \cellcolor{lightgray}  &  \cellcolor{lightgray}  &  \cellcolor{lightgray}  &  2 if $k\neq l, F=G$,\\
       & \cellcolor{lightgray}  &  \cellcolor{lightgray}  &  \cellcolor{lightgray}  &  \cellcolor{lightgray}  &  otherwise 1 \\ \hline
    \end{tabular}
    \caption{$\mathtt{P}(i\#)=kF$; $\mathtt{P}(j\#)=lG$}
\end{table}

\begin{table}[h]
    \centering
    \begin{tabular}{|c|c|c|c|c|c|}
    \hline
       $\mathtt{S}(i\#)\backslash \mathtt{S}(j\#) $  & $\beta$ & $\gamma$ & $\delta cs$  & $\delta cd$  & $\epsilon$\\ \hline
       $\beta$  &  1  & 1  &  s  &  $\checkmark^-$, otherwise 2  &  1  \\ \hline
       $\gamma$  & 1  &  1  &  s  &  2  &  1  \\ \hline
       $\delta cs$   &  s  &  s  &  s  &  s  &  s \\ \hline
       $\delta cd$   & $\checkmark^-$, otherwise 2  &  2  &  s  &  2  &  1  \\ \hline
       $\epsilon$   & 1  &  1  &  s  &  2  &  1  \\ \hline
    \end{tabular}
    \caption{$\mathtt{P}(i\#)=kF$; $\mathtt{P}(j\#)=lFmG$}
\end{table}

\begin{table}[h]
    \centering
    \begin{tabular}{|c|c|c|c|}
    \hline
       $\mathtt{S}(i\#)\backslash \mathtt{S}(j\#) $  & $\delta cs$  & $\delta cd$  & $\epsilon$\\ \hline
       $\beta$  &  s if $l\neq m, F=G$,   & $\checkmark$ if $l\neq m, F=G$,   &  1  \\
       & otherwise 1 & otherwise 1 & \\ \hline
       $\gamma$  &  s if $l\neq m, F=G$,  &  3 if $k\neq l$ or    &  1\\
       & otherwise $\checkmark$ if $k\neq l$, & $l\neq m, F=G$, &    \\
       & otherwise 1 & otherwise 1 & \\ \hline
       $\delta cs$   &  s  &  ! if $k=j, l\neq m$,  &  s   \\
       & & otherwise s &\\ \hline
       $\delta cd$   & s if $m\neq l$,   &  3 if $k\neq l$ or  &  2  \\  & otherwise 3 if $k\neq l$, & $l\neq m, F=G$ or  & 
       \\  & otherwise 2 & $k\neq m, F=G$, &\\
        & & otherwise 2\\\hline
       $\epsilon$   & 2  &  2  &  2  \\ \hline
    \end{tabular}
    \caption{$\mathtt{P}(i\#)=kF$; $\mathtt{P}(j\#)=lF-mG$}
\end{table}

\begin{sidewaystable}[htbp]
    \centering
    \begin{tabular}{|c|c|c|c|c|c|}
    \hline
       $\mathtt{S}(i\#)\backslash \mathtt{S}(j\#) $  & $\beta$ & $\gamma$ & $\delta cs$  & $\delta cd$  & $\epsilon$\\ \hline
       $\beta$  &  3 if $k\neq l,$  & 3 if $k\neq l,$  &  s  &  $\checkmark^-$,   &  3 if $k\neq l,$  \\
       & $m\neq n, F=G$, & $m\neq n, F=G$, & & otherwise 2 & $m\neq n$,  \\
       & otherwise 1 & otherwise 1 & &  & otherwise 1  \\\hline
       $\gamma$  &  \cellcolor{lightgray}  &  s if $k=l$ or  &  s  &  s if $k=l, m=n$ or  &  3 if $k\neq l, m\neq n$,  \\ & \cellcolor{lightgray} & $m=n$, &  & $F\neq G$,   & otherwise 1 \\
       & \cellcolor{lightgray} &otherwise 3 &  & otherwise $\checkmark^-$ &\\\hline
       $\delta cs$   &  \cellcolor{lightgray}  &  \cellcolor{lightgray}  &  s  &  s  &  s \\ \hline
       $\delta cd$   & \cellcolor{lightgray}  &  \cellcolor{lightgray}  &  \cellcolor{lightgray}  &  s if $k=l, m=n$, & 3 if $k\neq l, m\neq n$,\\
       & \cellcolor{lightgray}  &  \cellcolor{lightgray}  &  \cellcolor{lightgray} & otherwise 3 if  &  otherwise 2  \\ 
       & \cellcolor{lightgray}  &  \cellcolor{lightgray}  &  \cellcolor{lightgray} &  $k\neq l, m\neq n$,  &    \\ 
       & \cellcolor{lightgray}  &  \cellcolor{lightgray}  &  \cellcolor{lightgray} &  otherwise 2  &    \\\hline
       $\epsilon$   & \cellcolor{lightgray}  &  \cellcolor{lightgray}  &  \cellcolor{lightgray}  &  \cellcolor{lightgray}  &  s if $k=l, m=n$, \\
       & \cellcolor{lightgray}  &  \cellcolor{lightgray}  &  \cellcolor{lightgray}  &  \cellcolor{lightgray}  &  3 if $k\neq l, m\neq n$,\\
       & \cellcolor{lightgray}  &  \cellcolor{lightgray}  &  \cellcolor{lightgray}  &  \cellcolor{lightgray}  &  otherwise 1    \\\hline
    \end{tabular}
    \caption{$\mathtt{P}(i\#)=kFlG$; $\mathtt{P}(j\#)=mFnG$}
\end{sidewaystable}

\begin{table}[h]
    \centering
    \begin{tabular}{|c|c|c|c|}
    \hline
       $\mathtt{S}(i\#)\backslash \mathtt{S}(j\#) $  & $\delta cs$  & $\delta cd$  & $\epsilon$\\ \hline
       $\beta$  &  s if $m\neq n$,  &  $\checkmark$ if $m\neq n, F=G$,  &  1  \\ 
       & otherwise 1 & otherwise 1 &  \\\hline
       $\gamma$  & s if $m\neq n$ or  &  $\checkmark$ if $k\neq m, l\neq n$,  &  2  \\
       & $k\neq m, l\neq n$, & otherwise 1 &  \\
       & otherwise 1 &  &  \\\hline
       $\delta cs$   &  s  &  s & s   \\ \hline
       $\delta cd$   &  s if $m\neq n$, &  3 if $k\neq m, l\neq n$ or  &  2 \\
       &  2 otherwise &  $m\neq n, F=G$,  &   \\
       &  &  2 otherwise  &   \\\hline
       $\epsilon$   & s if $m\neq n, F=G$,  &  3 if $m\neq n, F=G$,  &  1 \\
       & otherwise 1  &  otherwise 1  &   \\ \hline
    \end{tabular}
    \caption{$\mathtt{P}(i\#)=kFlG$; $\mathtt{P}(j\#)=mF-nG$}
\end{table}

\begin{table}[h]
    \centering
    \begin{tabular}{|c|c|c|c|}
    \hline
       $\mathtt{S}(i\#)\backslash \mathtt{S}(j\#) $  & $\delta cs$  & $\delta cd$  & $\epsilon$\\ \hline
       $\delta cs$   &  s if $m\neq n$ or $k\neq l$ or ,  &  3 if $F=G, k=l,$  &  $\checkmark$ if $k\neq m, l\neq n$,  \\
       & $F\neq G$, otherwise $\checkmark$ if & otherwise s & otherwise 1 \\
       & $k\neq m, l\neq n$, &  & \\
       & otherwise 1 &  & \\\hline
       $\delta cd$   & \cellcolor{lightgray} & 3 if $k\neq m$ or $l\neq n$ or  &  3 if $k\neq m, l\neq n$ or  \\
       & \cellcolor{lightgray} & $F=G, k\neq l$ or  &  $F=G, k\neq l$, \\ 
       & \cellcolor{lightgray} & $F=G, m\neq n$ & otherwise 1 \\
       & \cellcolor{lightgray} &  otherwise 1 & \\\hline
       $\epsilon$   & \cellcolor{lightgray}  &  \cellcolor{lightgray}  &  1  \\ \hline
    \end{tabular}
    \caption{$\mathtt{P}(i\#)=kF-lG$; $\mathtt{P}(j\#)=mF-nG$}
\end{table}
\clearpage

\subsection{Uniqueness}

Our conservative connectives at hand, let us now check which of them are also unique. As with conservativity, we will first consider all positive cases before discussing the negative and special ones.
\subsubsection{Positive cases}
We first show horizontal interderivability for the positive cases, then establish vertical interderivability, i.e. uniqueness.
\begin{definition}[positive cases, uniqueness]\

    \noindent Let \texttt{PU} be a subset of \texttt{PC} with the following $150$ members such that $k,l,m,n \in \{1,2\}$, $F,G \in \{A, B\}$ and $a \in \mathbb{N}^0$:
     \begin{itemize}
            \item[1.1)] $\langle \alpha, \gamma 0, a, \mathbb{N}^0:\mathbb{N}^0 \rangle, \langle \gamma 0, \alpha, a, \mathbb{N}^0:\mathbb{N}^0 \rangle$ (2);
            \item[1.2)] $\langle \alpha, \delta cs 0, a, \mathbb{N}^0:\mathbb{N}^0 \rangle, \langle \delta cs 0, \alpha, a, \mathbb{N}^0:\mathbb{N}^0 \rangle$ (2);
            \item [2.1)] $\langle \gamma 0, \beta kF, a, \mathbb{N}^0:\mathbb{N}^0 \rangle, \langle \beta kF, \gamma 0, a, \mathbb{N}^0:\mathbb{N}^0 \rangle$ (8);
            \item [2.2)] $\langle \delta cs 0, \beta kF, a, \mathbb{N}^0:\mathbb{N}^0 \rangle, \langle \beta kF, \delta cs 0, a, \mathbb{N}^0:\mathbb{N}^0 \rangle$ (8);
            \item[3.1)] $\langle \alpha, \delta cd^- 1F2F, a, \mathbb{N}^0:\mathbb{N}^0 \rangle, \langle \delta cd^- 1F2F, \alpha, a, \mathbb{N}^0:\mathbb{N}^0 \rangle$ (4);
            \item [3.2)] $\langle \gamma 0, \beta kFlG, a, \mathbb{N}^0:\mathbb{N}^0 \rangle, \langle \beta kFlG, \gamma 0, a, \mathbb{N}^0:\mathbb{N}^0 \rangle$ (20);
            \item [3.3)] $\langle \gamma 0, \gamma^- 1F2F, a, \mathbb{N}^0:\mathbb{N}^0 \rangle, \langle \gamma^- 1F2F, \gamma 0, a, \mathbb{N}^0:\mathbb{N}^0 \rangle$ (4);
            \item [3.4)] $\langle \delta cs 0, \beta kFlG, a, \mathbb{N}^0:\mathbb{N}^0 \rangle, \langle \beta kFlG, \delta cs 0, a, \mathbb{N}^0:\mathbb{N}^0 \rangle$ (20);
            \item [3.5)] $\langle \delta cs 0, \gamma^- 1F2F, a, \mathbb{N}^0:\mathbb{N}^0 \rangle, \langle \gamma^- 1F2F, \delta cs 0, a, \mathbb{N}^0:\mathbb{N}^0 \rangle$ (4);
            \item[5.1)] $\langle \gamma kF, \gamma lF, a, \mathbb{N}^0:\mathbb{N}^0 \rangle, k\neq l$ (4);
            \item[6.1)] $\langle \beta kF, \delta cd^- 1F2F, a, \mathbb{N}^0:\mathbb{N}^0 \rangle, \langle \delta cd^- 1F2F, \beta kF, a, \mathbb{N}^0:\mathbb{N}^0 \rangle$ (8);
            \item[7.1)] $\langle \gamma kF, \delta cs lF-lF, a, \mathbb{N}^0:\mathbb{N}^0 \rangle, \langle \delta cs lF-lF, \gamma kF, a, \mathbb{N}^0:\mathbb{N}^0 \rangle, k\neq l$ (8);
            \item[8.1)] $\langle \beta kFlF, \delta cd^- 1F2F, a, \mathbb{N}^0:\mathbb{N}^0 \rangle, \langle \delta cd^- 1F2F, \beta kFlF, a, \mathbb{N}^0:\mathbb{N}^0 \rangle$ (12);
            \item[8.2)] $\langle \gamma^- 1F2F, \delta cd^- 1F2F, a, \mathbb{N}^0:\mathbb{N}^0 \rangle, \langle \delta cd^- 1F2F, \gamma^- 1F2F, a, \mathbb{N}^0:\mathbb{N}^0 \rangle$ (4);
            \item[9.1)] $\langle \gamma kFlG, \delta cd mF-nG, a, \mathbb{N}^0:\mathbb{N}^0 \rangle, \langle \delta cd mF-nG, \gamma kFlG, a, \mathbb{N}^0:\mathbb{N}^0 \rangle, k\neq m, l\neq n$ (20);
            \item[10.1)] $\langle \delta cs kF-kF, \delta cs mF-mF, a, \mathbb{N}^0:\mathbb{N}^0 \rangle, k\neq m$ (4);
            \item[10.2)] $\langle \delta cs kF-lG, \epsilon mF-nG, a, \mathbb{N}^0:\mathbb{N}^0 \rangle, \langle \epsilon mF-nG, \delta cs kF-lG, a, \mathbb{N}^0:\mathbb{N}^0 \rangle, k\neq m, l\neq n$ (20).
    \end{itemize}
    \end{definition}
\begin{theorem}[horizontal interderivability]\

\noindent
Let our calculus be $\{$\textsc{Id}, \textsc{Cut}$cd$, 1$\#$, 2$\#$, 1$\natural, 2\natural\}$ with $1\#=1\natural,2\#=2\natural$ modulo symbol substitution. For all $\#\in \texttt{PU}$,
$$\texttt{\#} \dashv \vdash \na,$$
where  $\texttt{\#}, \na$ are well-formed formulae whose main connective is $\#$ or $\natural$, respectively.
\end{theorem}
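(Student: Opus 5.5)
The plan is to construct, for each $\#\in\texttt{PU}$, two explicit derivations in $\{\textsc{Id},\textsc{Cut}cd,1\#,2\#,1\natural,2\natural\}$: one of $[1:\texttt{\#};2:\na]$ and one of $[1:\na;2:\texttt{\#}]$. Since $\#$ and $\natural$ have the same rules modulo symbol substitution, the second derivation is obtained from the first by swapping the two symbols. So in each case only one direction needs to be built.

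The recipe is Belnap's, as in the $\otimes$ example of \S2. Start from identity sequents on the active formulae $F,G$. Apply the $2\natural$-rule to obtain a sequent whose second component contains $\na$ and whose remaining context holds the active formulae in exactly the components demanded by the premisses of $1\#$. Then apply $1\#$ with that context to obtain $[1:\texttt{\#};2:\na]$. The complementarity conditions built into \texttt{PU} ($k\neq l$, $k\neq m$, $l\neq n$, $F=G$, etc.) are exactly what make the leftover context of the first rule application match the premiss of the second. For the multiplicative case 9.1 (the $\rightarrow$ pattern), this means using the two $\delta cd$-premisses $[m:F]$ and $[n:G]$ as identities. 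For the additive/context-sharing cases 10.1 and 10.2, apply the $\epsilon$-rule to each of the two identity sequents separately, producing two sequents that share the context $[2:\na]$, as required by the $\delta cs$-rule.

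A subtlety is that $F,G$ are arbitrary formulae, while only \textsc{Id}$^{pr}$ is available. Theorem 1 presupposes uniqueness, so invoking it here would be circular. Instead I argue by induction on the depth of $\texttt{\#}$. The construction above, instantiated with $\natural:=\#$, yields $[1:\texttt{\#};2:\texttt{\#}]$ from identities on the immediate subformulae. Hence, proving full \textsc{Id} and horizontal interderivability simultaneously by induction on depth, the identities on $F,G$ are available at each stage: they are either primitive instances or covered by the inductive hypothesis, possibly for formulae built from $\natural$, which is treated identically.

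The degenerate cases need separate, simpler constructions, and I would go through them following the enumeration of \texttt{PU}. For axiomatic rules ($\alpha$) and potentially axiomatic rules ($\beta$, taking the $\emptyset$ disjunct), one starts directly from $[1:\texttt{\#}]$ or $[2:\na]$ and then applies the other $\natural$/$\#$-rule. That other rule is of shape $\gamma0$ or $\delta cs0$, or is a $\delta cd^-$-rule whose unrestricted premiss carries that axiom as its context. The main obstacle I expect is the context-restricted cases ($\delta cd^-$, $\gamma^-$; items 3.1, 3.3, 3.5, 6.1, 8.1, 8.2). There the order of rule applications matters: the restricted premiss must be an empty-context identity such as $[1:F;2:F]$, and the already-introduced principal formula has to be placed in the unrestricted premiss, or introduced last. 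For each such item I would check that an admissible ordering exists, trying first to apply the restricted rule as the final step.
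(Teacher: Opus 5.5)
Your plan is essentially the paper's proof. For each item of \texttt{PU} you write down an explicit derivation of the sequent with $\texttt{\#}$ in the $i$th component and its $\natural$-copy in the $j$th, starting from identities on the active formulae (or from an axiom instance for $\alpha$/$\beta$). The converse sequent then follows by exchanging $\#$ and $\natural$.

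What you genuinely add is the simultaneous induction on depth. The paper just applies \textsc{Id} to possibly complex $F,G$, tacitly relying on Theorem~1. But Theorem~1's proof presupposes uniqueness, which the paper derives from this very theorem, so your version removes that circularity. Where the rules are not subformula-connected, you can even skip the induction by choosing $F,G$ primitive.

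Two slips in your description need fixing.

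First, the fixed order ``$2\natural$ first, then $1\#$'' (with shared context in the second component) only works when the splitting rule ($\delta cd$ or $\epsilon$) is the $2$-rule. It fails for $\rightarrow_{min}$, $\leftarrow_{min}$, $\oplus_{min}$, $\otimes^{cp}_{min}$ and the $\langle\epsilon,\delta cs\rangle$ clauses such as $\sqcap_{min}$ and $\rightsquigarrow_{min}$. There the premiss of $2\natural$ cannot be obtained from identities alone; for $\rightarrow_{min}$, for example, it is $[1:F;2:G]s$. So you must apply $1\#$ to the identities first and $2\natural$ last. The paper avoids this by stating each case for a variable $i$, so the combining rule always comes last ($\gamma$ in 9.1, $\delta cs$ in 10.2).

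Second, case 10.1 has no $\epsilon$-rule. Both rules are $\delta cs$ with duplicated premisses. You feed two copies of $[k:F;m:F]$ to one of them, and two copies of the result to the other.
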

\begin{proof}\

\noindent 
Let $i,j,k,l\in \{1,2\}$ and $i \neq j$. Let $\F, G\in \{A, B\}$.\\ \\
Case 1) \texttt{P}$(i\#)=\texttt{P}(j\#)=0$.
\begin{enumerate}[label=1.\arabic*)]
\item $\texttt{S}(i\#)=\alpha; \texttt{S}(j\#)=\gamma$:
\[
\begin{prooftree}
\hypo{\emptyset}
\infer1[i\#]{[i:\texttt{\#}]}
\infer1[j$\natural$]{[i:\texttt{\#}; j:\na]}
\end{prooftree}\, .
\]
\item $\texttt{S}(i\#)=\alpha; \texttt{S}(j\#)=\delta cs$:
\[
\begin{prooftree}
\hypo{\emptyset}
\infer1[i\#]{[i:\texttt{\#}]}
\hypo{\emptyset}
\infer1[i\#]{[i:\texttt{\#}]}
\infer2[j$\natural$]{[i:\texttt{\#}; j:\na]}
\end{prooftree}\, .
\]
\end{enumerate}
Case 2) \texttt{P}$(i\#)=0; \texttt{P}(j\#)=kF$.
\begin{enumerate}[label=2.\arabic*)]
\item $\texttt{S}(i\#)=\gamma ; \texttt{S}(j\#)=\beta$: cf. 1.1).
\item $\texttt{S}(i\#)=\delta cs; \texttt{S}(j\#)=\beta$: cf. 1.2).
\end{enumerate}
Case 3) \texttt{P}$(i\#)=0; \texttt{P}(j\#)=kFlF$.
\begin{enumerate}[label=3.\arabic*)]
\item $\texttt{S}(i\#)=\alpha ; \texttt{S}(j\#)=\delta cd^-$, where $k\neq l, F=G$:
\[
\begin{prooftree}
    \hypo{\emptyset}
    \infer1[\textsc{Id}]{[k:F;l:F]}
    \hypo{\emptyset}
    \infer1[i\#]{[i:\texttt{\#}]}
    \infer2[j$\natural$]{[i:\texttt{\#}; j:\na]}
\end{prooftree}\, .
\]
\item $\texttt{S}(i\#)=\gamma ; \texttt{S}(j\#)=\beta$: cf. 1.1).
\item $\texttt{S}(i\#)=\gamma; \texttt{S}(j\#)=\gamma^-$, where $k\neq l, F=G$: 
\[
\begin{prooftree}
    \hypo{\emptyset}
    \infer1[\textsc{Id}]{[k:F;l:F]}
    \infer1[j$\natural$]{[j:\na]}
    \infer1[i\#]{[i:\texttt{\#}; j:\na]}
\end{prooftree}\, .
\]
\item $\texttt{S}(i\#)=\delta cs; \texttt{S}(j\#)=\beta$: cf. 1.2).
\item $\texttt{S}(i\#)=\delta cs; \texttt{S}(j\#)=\gamma^-$, where $k\neq l, F=G$: 
\[
\begin{prooftree}
    \hypo{\emptyset}
    \infer1[\textsc{Id}]{[k:F;l:F]}
    \infer1[j$\natural$]{[j:\na]}
    \hypo{\emptyset}
    \infer1[\textsc{Id}]{[k:F;l:F]}
    \infer1[j$\natural$]{[j:\na]}
    \infer2[i\#]{[i:\texttt{\#}; j:\na]}
\end{prooftree}\, .
\]
\end{enumerate}
Case 4) \texttt{P}$(i\#)=0; \texttt{P}(j\#)=kF-lG$: no unique options.\\
Case 5) \texttt{P}$(i\#)=kF; \texttt{P}(j\#)=lG$.
\begin{enumerate}[label=5.\arabic*)]
\item $\texttt{S}(i\#)=\gamma; \texttt{S}(j\#)=\gamma$, where $k\neq l, F=G$:
\[
\begin{prooftree}
    \hypo{\emptyset}
    \infer1[\textsc{Id}]{[k:F;l:F]}
    \infer1[j$\natural$]{[k:F;j:\na]}
    \infer1[i\#]{[i:\texttt{\#}; j:\na]}
\end{prooftree}\, .
\]
\end{enumerate}
Case 6) \texttt{P}$(i\#)=kF; \texttt{P}(j\#)=lFmG$.
\begin{enumerate}[label=6.\arabic*)]
\item $\texttt{S}(i\#)=\beta; \texttt{S}(j\#)=\delta cd^-$, where $l\neq m, F=G$:
\[
\begin{prooftree}
    \hypo{\emptyset}
    \infer1[\textsc{Id}]{[l:F;m:F]}
    \hypo{\emptyset}
    \infer1[i\#]{[i:\texttt{\#}]}
    \infer2[j$\natural$]{[i:\#;j:\na]}
\end{prooftree}\, .
\]
\end{enumerate}
Case 7) \texttt{P}$(i\#)=kF; \texttt{P}(j\#)=lF-mG$.
\begin{enumerate}[label=7.\arabic*)]
\item $\texttt{S}(i\#)=\gamma; \texttt{S}(j\#)=\delta cs$ with $l=m, F=G, k\neq l$:
\[
\begin{prooftree}
    \hypo{\emptyset}
    \infer1[\textsc{Id}]{[k:F;l:F]}
    \hypo{\emptyset}
    \infer1[\textsc{Id}]{[k:F;l:F]}
    \infer2[2$\natural$]{[k:F;j:\na]}
    \infer1[i\#]{[i:\texttt{\#}; j:\na]}\, .
\end{prooftree}\, .
\]
\end{enumerate}
Case 8) \texttt{P}$(i\#)=kFlG; \texttt{P}(j\#)=mFnG$.
\begin{enumerate}[label=8.\arabic*)]
\item $\texttt{S}(i\#)=\beta; \texttt{S}(j\#)=\delta cd^-$, where $m\neq n, F=G$: cf. 6.1).
\item $\texttt{S}(i\#)=\gamma^-; \texttt{S}(j\#)=\delta cd^-$, where $k\neq l, m\neq n, F=G$:
\[
\begin{prooftree}
    \hypo{\emptyset}
    \infer1[\textsc{Id}]{[k:F;l:F]}
    \hypo{\emptyset}
    \infer1[\textsc{Id}]{[k:F;l:F]}
    \infer1[i\#]{[i:\texttt{\#}]}
    \infer2[j$\natural$]{[i:\#;j:\na]}\, .
\end{prooftree}\, .
\]
\end{enumerate}
Case 9) \texttt{P}$(i\#)=kFlG; \texttt{P}(j\#)=mF-nG$.
\begin{enumerate}[label=9.\arabic*)]
\item $\texttt{S}(i\#)=\gamma; \texttt{S}(j\#)=\delta cd$ where $k\neq m, l\neq n$:
\[
\begin{prooftree}
    \hypo{\emptyset}
    \infer1[\textsc{Id}]{[k:F; m:F]}
    \hypo{\emptyset}
    \infer1[\textsc{Id}]{[l:G; n:G]}
    \infer2[j$\natural$]{[k:F; l:G; j:\na]}
    \infer1[i\#]{[i:\texttt{\#}; j:\na]}
\end{prooftree}\, .
\]
\end{enumerate}
Case 10) \texttt{P}$(i\#)=kF-lG; \texttt{P}(j\#)=mF-nG$.
\begin{enumerate}[label=10.\arabic*)]
\item $\texttt{S}(i\#)=\delta cs; \texttt{S}(j\#)=\delta cs$, where $m=n, k=l, F=G, k\neq m$:
\[
\begin{prooftree}
    \hypo{\emptyset}
    \infer1[\textsc{Id}]{[k:F; m:F]}
    \hypo{\emptyset}
    \infer1[\textsc{Id}]{[k:F; m:F]}
    \infer2[i\#]{[m:F; i:\texttt{\#}]}
    \hypo{\emptyset}
    \infer1[\textsc{Id}]{[k:F; m:F]}
    \hypo{\emptyset}
    \infer1[\textsc{Id}]{[k:F; m:F]}
    \infer2[i\#]{[m:F; i:\texttt{\#}]}
    \infer2[j$\natural$]{[i:\texttt{\#}; j:\na]}
\end{prooftree}\, .
\]
\item $\texttt{S}(i\#)=\delta cs; \texttt{S}(j\#)=\epsilon$, where $k\neq m, l\neq n$:
\[
\begin{prooftree}
    \hypo{\emptyset}
    \infer1[\textsc{Id}]{[k:F;m:F]}
    \infer1[j$\natural$]{[k:F;j:\na]}
    \hypo{\emptyset}
    \infer1[\textsc{Id}]{[l:G;n:G]}
    \infer1[j$\natural$]{[l:G;j:\na]}
    \infer2[i\#]{[i:\texttt{\#};j:\na]}\, .
\end{prooftree}
\]
\end{enumerate}
\end{proof}
We can now establish vertical interderivability.
\begin{corollary}[uniqueness]\

    \noindent For all $\#\in \texttt{PU}$, $\{$\textsc{Id}, \textsc{Cut}$cd$, 1$\#$, 2$\#\}$ uniquely determines the inferential role of $\#$ up to isomorphism.
\end{corollary}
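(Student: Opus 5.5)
We obtain uniqueness from horizontal interderivability (Theorem 5) by a single application of \textsc{Cut}$cd$. By Definition 16 we need, for each $i\in\{1,2\}$, \#-formula \texttt{\#} and context $s$, two derivations $\mathfrak{d}$ and $\mathfrak{d}^\prime$ in $\{\textsc{Id}, \textsc{Cut}cd, 1\#, 2\#, 1\natural, 2\natural\}$. The derivation $\mathfrak{d}$ goes from $[i:\texttt{\#}]s$ to $[i:\na]s$, and $\mathfrak{d}^\prime$ goes in the converse direction. Throughout, $\na$ is the $\natural$-counterpart of \texttt{\#}, that is, the same formula modulo symbol substitution.

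Theorem 5 provides proofs of both $[1:\texttt{\#};2:\na]$ and $[1:\na;2:\texttt{\#}]$. The theorem states $\texttt{\#}\dashv\vdash\na$, and each case there is displayed for a generic pair $i\neq j$. Since the rules of $\#$ and $\natural$ coincide modulo symbol substitution, renaming turns a proof of one sequent into a proof of the other. The two cuts are as follows.

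\emph{Case $i=2$.} Cut the premiss $[2:\texttt{\#}]s$ against the proof of $[1:\texttt{\#};2:\na]$, with cut formula \texttt{\#}. This gives $[2:\na]s$. For the converse, cut $[2:\na]s$ against $[1:\na;2:\texttt{\#}]$.

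\emph{Case $i=1$.} Cut the proof of $[1:\na;2:\texttt{\#}]$ against the premiss $[1:\texttt{\#}]s$, with cut formula \texttt{\#}. This gives $[1:\na]s$. The converse direction is symmetric.

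In each case $\mathfrak{d}$ consists of exactly one \textsc{Cut} whose other premiss is closed. The framework $\mathbb{N}^0:\mathbb{N}^0$ of all members of \texttt{PU} places no cardinality restriction on the resulting sequent $s\times[\,i:\na\,]$.

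The main obstacle is making sure Theorem 5 covers arbitrary \#-formulae. Uniqueness must hold for every \texttt{\#}, and the interderivability proofs in Theorem 5 appeal to \textsc{Id} on the active formulae $F,G$. When $F$ or $G$ is itself complex, we invoke Theorem 1 (derivability of full \textsc{Id}). Theorem 1 in turn presupposes uniqueness, so a naive appeal would be circular. To avoid this, I would arrange the argument as a simultaneous induction on formula depth. At depth $d$, Theorem 5 needs \textsc{Id} only for the active formulae, which have depth below $d$. Those instances are supplied either by \textsc{Id}$^{pr}$ or by the inductive hypothesis, exactly as in the proof of Theorem 1. Interderivability at depth $d$ then yields uniqueness at depth $d$ through the cut construction above.

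A minor additional check is the context-restricted members ($\gamma^-$, $\delta cd^-$). For these, Theorem 5's proofs use empty contexts, which those rules permit, so they go through unchanged.
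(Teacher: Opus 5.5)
Your proposal is correct and is essentially the paper's proof. Uniqueness follows from Theorem 5 by a single \textsc{Cut}$cd$ of the premiss $[i:\texttt{\#}]s$ (or its $\natural$-counterpart) against the closed horizontal-interderivability sequent that has the cut formula in the opposite component. Your extra depth-induction remark addresses a point the paper passes over: Theorem 5 applies \textsc{Id} to possibly complex active formulae, and full \textsc{Id} comes from Theorem 1, which itself appeals to uniqueness. That remark only matters under subformula-connectedness, since otherwise the active formulae can simply be chosen primitive.
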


\begin{proof}[vertical interderivability]\

\noindent
By virtue of Definition 16, we must prove for any $i\in\{1,2\}$ and any $i\natural=i\#$, modulo symbol substitution:
    \[
        \begin{prooftree}
        \hypo{[i:\texttt{\#}]s}
        \infer[]1{[i:\na]s}
        \end{prooftree}\qquad \text{and} \qquad
        \begin{prooftree}
        \hypo{[i:\na]s}
        \infer[]1{[i:\texttt{\#}]s}
        \end{prooftree}
    \]
    in $\{$\textsc{Id}, \textsc{Cut}$cd$, 1$\#$, 2$\#$, 1$\natural$, 2$\natural\}$. Using Theorem 5, we obtain:
\begin{itemize}
    \item from top to bottom:\[
\begin{prooftree}
\hypo{[i:\texttt{\#}]s}
\hypo{[i:\na; j:\texttt{\#}]}
\infer2[\textsc{Cut}]{[i:\na]s}
\end{prooftree}\, ,
\]
\item from bottom to top:
\[
\begin{prooftree}
\hypo{[i:\na]s}
\hypo{[i:\texttt{\#}; j:\na]}
\infer2[\textsc{Cut}]{[i:\texttt{\#}]s}
\end{prooftree}\, .
\]\hfill $\blacksquare$
\end{itemize}
\end{proof}

\subsubsection{Negative cases}
Let us now show that \texttt{PU} comprises all (non-vacuously) unique and conservative formulable \#. Specifically, we show that for all $\#$, if $\#\in \texttt{PC}$ but $\#\not{\in} \texttt{PU}$, then $\{$\textsc{Id}, \textsc{Cut}$cd$, 1$\#$, 2$\#\}$ does not uniquely determine the inferential role of $\#$ up to isomorphism.
\begin{theorem}[horizontal non-interderivability]\

    \noindent For all $\#$, if $\#\in \texttt{PC}$ but $\#\not{\in} \texttt{PU}$, then $[i:\texttt{\#};j:\na]$ is not provable in $\{$\textsc{Id}, \textsc{Cut}$cd$, 1$\#$, 2$\#\}$, where $i,j\in\{1,2\}, i\neq j$.
\end{theorem}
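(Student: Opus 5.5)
We plan to show, for each $\#\in\texttt{PC}\setminus\texttt{PU}$, that the sequent $[i:\texttt{\#};j:\na]$ admits no proof in the calculus with both $\#$- and $\natural$-rules. The main tool is \textsc{Cut}-elimination. Theorem 2 was proved for a single connective in $\texttt{PC}$. Its argument, built on Lemma 1, only uses reductions of principal cuts between an $i$-rule and a $j$-rule for the same connective, together with permutation of cuts upwards. We therefore first check that Theorem 2 extends to $\{\textsc{Id},\textsc{Cut}cd,1\#,2\#,1\natural,2\natural\}$. The point is that a $\#$-formula and a $\natural$-formula are never identical, so every principal cut is still a $\#$/$\#$ or $\natural$/$\natural$ cut, and Lemma 1 applies verbatim. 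Given this, it suffices to show that $[i:\texttt{\#};j:\na]$ has no \textsc{Cut}-free proof.

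We then carry out an inverted proof search on a putative \textsc{Cut}-free proof $\mathfrak{p}$ of $[i:\texttt{\#};j:\na]$. The last rule cannot be \textsc{Id}$^{pr}$, because the formulae involved are complex. So the last rule is either $i\#$ with conclusion $[i:\texttt{\#}]s$ and $s=[j:\na]$, or $j\natural$ with conclusion $[j:\na]t$ and $t=[i:\texttt{\#}]$. In either case we must determine, from $\mathtt{SP}(\#)$, whether the premiss sequents $\mathfrak{S}$ can be derived. These premisses consist of the active formulae (atoms $A,B$) placed according to $\mathtt{P}$, together with the context split according to $\mathtt{S}$. For instance, under $\delta cd$ the context $[j:\na]$ is distributed over the two premisses. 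Each such premiss is again either a pure $\textsc{Id}^{pr}$ instance or a sequent containing exactly one complex formula, which can only come from one further rule application. Because \textsc{Cut}-free proofs never shrink sequents, the search is finite, and we can state a simple invariant. Every \textsc{Cut}-free provable sequent of this language is an $\textsc{Id}^{pr}$ instance, or an axiomatic conclusion of an $\alpha$/$\beta$ rule, or is built by applying $\#,\natural$-rules to such sequents. In particular, atoms must appear balanced across the two components in the sense of the counting function $n(A,i,\cdot)$ from the proof of Theorem 4.

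The case analysis follows the enumeration of Lemma 1 and of $\texttt{PC}$, restricted to the items dropped in passing to $\texttt{PU}$. These fall into three groups.

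First, there are context-restricted versions in which the active atoms are not of the form $1F2F$. Examples are 2.1, 3.1 with $k=l$ or $F\neq G$, and 5.1, 6.1, 6.2, 8.1 outside the $1F2F$ pattern. Here the only rule that can eliminate the atoms, namely an $\textsc{Id}$-closure of the premiss, requires a premiss $[k:F]$ or $[k:F;l:G]$ with an unbalanced or non-identical atom pair, and this violates the balance invariant.

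Second, there are the $\alpha$/$\delta cd$ and $\beta$/$\delta cd$ cases with positions $1F-2F$ (4.1, 7.1, 9.1). The conclusion sequent $[i:\texttt{\#};j:\na]$ has no room for the context produced by the two-premiss $j$-rule. Its premisses $[k:F]s$ and $[l:F]t$ would then need $s,t$ to absorb $\texttt{\#}$ and be otherwise empty. This forces a premiss such as $[k:F;i:\texttt{\#}]$, which is underivable because the $i$-rule is axiomatic with an empty context.

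Third, there are the remaining 9.2 and 9.3 items with $\delta cs$, and the $\gamma kFlG$ items in which $F\ne G$ or the position indices do not match. We handle these analogously.

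The expected main obstacle is bookkeeping rather than ideas. We must confirm that the extension of Theorem 2 to two isomorphic connectives is harmless; the context-sharing case 2(a)iii of that proof is the delicate part. We must also make sure that the case split covers exactly the $226$ members of $\texttt{PC}\setminus\texttt{PU}$, without silently relying on subformula-connectedness. Where an earlier case reduces by symmetry, we will write ``cf.'', as the paper does elsewhere.
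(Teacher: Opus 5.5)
Your overall route is the paper's: \textsc{Cut}-elimination, then an inverted search for a \textsc{Cut}-free proof of $[i:\texttt{\#};j:\na]$. The paper works only the $\langle\alpha,\delta cd\ kF-lF\rangle$ case, leaves the rest to the reader, and simply asserts that \textsc{Cut}-elimination is available. Your check that principal cuts never mix $\#$ and $\natural$ is a genuine improvement on that.

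The gap is that your search treats the active formulae as atoms $A,B$, and three of your steps rest on this: that each premiss is an \textsc{Id}$^{pr}$-instance or carries one complex formula, that the search is finite, and every dead end in your three groups. Yet you also promise not to rely on subformula-connectedness, and without it the rules put no constraint on active formulae. The paper's worked case accordingly takes $F=\na$. The premiss $[i:F]$ is then derivable by the axiomatic $i\natural$, and the remaining branch $[j:F;i:\texttt{\#}]$ closes only through the regress, which is in effect an induction on proof height.

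In some families the search does not fail at all. Take $\langle\alpha,\delta cd^-\ kF\rangle$ with $k=i$, and let $F$ be any $\natural$-formula, e.g.\ $\na$ itself. Derive $[i:F]$ by $i\natural$ and $[i:\texttt{\#}]$ by $i\#$; one application of $j\natural$ with context $[i:\texttt{\#}]$ then yields $[i:\texttt{\#};j:\na]$. So you must do explicitly what you said you would avoid. Assume subformula-connectedness, and fix an instance $\texttt{\#}$ whose immediate subformulae are pairwise distinct atoms. Distinctness matters: for $\delta cd^-\ kFlG$ with $k\neq l$ and $F\neq G$, the instance $\#(p,p)$ makes $[k:p;l:p]$ an \textsc{Id}-instance. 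Then invoke the subformula property of \textsc{Cut}-free proofs to conclude that every active formula met in the search is one of these atoms. Refuting horizontal interderivability needs only this one failing instance.

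Second, your balance invariant is false: \textsc{Cut}-free provable sequents containing $\#$ or $\natural$ need not be atom-balanced. Negation $\langle\gamma 2F,\gamma 1F\rangle$ already turns $[1:A;2:A]$ into $[1:\texttt{\#},A]$, and the $\delta cd$ rule of family 9.3 does the same inside $\texttt{PC}\setminus\texttt{PU}$. Balance is guaranteed only for $\#$-free sequents, via Lemma 2, which is exactly how Theorem 4 uses it.

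You do not need the invariant, however. Once active formulae are atoms, the fact that does the work is this: a \textsc{Cut}-free provable sequent consisting only of atoms is an \textsc{Id}$^{pr}$-instance, since every operational rule leaves a complex principal formula in its conclusion. With it, each of your families closes within one or two inverted steps. The last rule must be $i\#$ or $j\natural$, and an $\alpha$- or $\beta$-rule cannot carry the other formula as context. Every other branch then demands a lone atom or a non-\textsc{Id} pair of atoms. This happens directly for the $\delta cd^-$ and $\delta cd$ rules. For the $\delta cs$ rule of 9.3 it happens after one further step: its bare context premiss is a lone $\#$- or $\natural$-formula in the $\delta cd$ rule's component, so it must come from that rule with empty contexts.

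Two smaller points. Your third group is just family 9.3, because every $\gamma kFlG$ pair in $\texttt{PC}$ belongs to 9.2, which lies wholly in $\texttt{PU}$. And ``\textsc{Cut}-free proofs never shrink sequents'' is false for $\gamma kFlG$ rules, though termination is moot once the search is this short.
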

\begin{proof}\

\noindent
    We can establish the failure of horizontal interderivability via a bottom-up proof search on $[i:\texttt{\#};j:\na]$. Since $\#\in \texttt{PC}$, \textsc{Cut}-elimination is at our disposal. This significantly contains the proof searche, since any proof can only use \textsc{Id}, 1$\#$, or 2$\#$. Consider, for example, the following rules of types $\alpha$ and $\delta cd kF-lF$, where $k\neq l$:
    \[
    \begin{prooftree}
        \hypo{\emptyset}
        \infer1[i\#]{[i:\texttt{\#}]}
    \end{prooftree}\, , \qquad 
    \begin{prooftree}
        \hypo{[k:F]s}
        \hypo{[l:F]t}
        \infer2[j\#]{[j:\texttt{\#}]st}
    \end{prooftree}\, .
    \]
    The final proof step could not have been \textsc{Id} as $\texttt{\#},\na$ are non-primitive. It also could not have been i\# as $\texttt{\#}$ cannot occur context-free in the $i$th component of the conclusion sequent. Thus, it must have been j$\natural$:
    \[
    \begin{prooftree}
        \hypo{~~~~~~~~~}
            \infer[rule code={\hbox{\tikz\draw (0,0) -- (\hsize,0) -- (0.5\hsize,-2em) (0.5\hsize,-0.875em) node{\small $\mathfrak{p}_1$}  (0.5\hsize,-2em) -- (0,0);}}]1{[k:F]}
        \hypo{~~~~~~~~~}
        \infer[rule code={\hbox{\tikz\draw (0,0) -- (\hsize,0) -- (0.5\hsize,-2em) (0.5\hsize,-0.875em) node{\small $\mathfrak{p}_2$}  (0.5\hsize,-2em) -- (0,0);}}]1{[l:F;i:\texttt{\#}]}
        \infer2[j$\natural$]{[i:\texttt{\#};j:\na]}
    \end{prooftree}\, .
    \]
    If $F\neq \texttt{\#}, F\neq \na$, we could not have applied any of our rules in either branch. Hence, $F=\na$ (or, equivalently, $F=\texttt{\#}$).
    \begin{enumerate}
        \item Let $j=k$. Then $l=i$ and $[l:F; i:\texttt{\#}]=[i:\texttt{\#}, \na]$. However, the latter sequent is not provable as $\#$-/$\natural$-formulae can only be introduced in the $i$th dimension using $i\#$ and $i\natural$, respectively. However, either rule requires empty contexts in the conclusion sequent, which conflicts with the given.
        \item Let $i=k$. Although $\mathfrak{p}_1$ could have been an application of i$\natural$, $[l:F;i:\texttt{\#}]=[i:\texttt{\#};j:\na]$. Hence, the proof search for $\mathfrak{p}_2$ would repeat our proof search thus far, proceeding \textit{ad infinitum}:
    \[
    \begin{prooftree}
        \hypo{\emptyset}
        \infer1[i$\natural$]{[i:\na]}
        \hypo{\emptyset}
        \infer1[i$\natural$]{[i:\na]}
        \hypo{\emptyset}
        \infer1[i$\natural$]{[i:\na]}
        \hypo{\emptyset}
        \infer1[i$\natural$]{[i:\na]}
        \hypo{\vdots}
        \infer1[j$\natural$]{[i:\texttt{\#};j:\na]}
        \infer2[j$\natural$]{[i:\texttt{\#};j:\na]}
        \infer2[j$\natural$]{[i:\texttt{\#};j:\na]}
        \infer2[j$\natural$]{[i:\texttt{\#};j:\na]}
        \infer2[j$\natural$]{[i:\texttt{\#};j:\na]}
    \end{prooftree}\, .
    \]
    We preclude infinite proof searches, as infinite proof trees---while technically possible---are typically considered malformed. 
    \end{enumerate}
    The remaining proof searches are left to the reader. 
    \hfill $\blacksquare$
\end{proof}  
This proof alone does not establish uniqueness as set out in Definition 16. To meet the definition's standard, we must show that \textit{vertical} interderivability fails, i.e. for some $s$ and some $i\in\{1,2\}$, $[i:\texttt{\#}]s$ cannot be derived from $[i:\na]s$, or vice versa. That is, we must prove that there is no proof tree in $\textbf{Cal}=\{1\#, 2\#, \textsc{Id}, \textsc{Cut}\}$ with $[j:\texttt{\#}]s$ in its root and only $[j:\na]s$ or \textbf{Cal}-axioms in its leaves.

As before, finding such a counterexample depends on the \#-rules under consideration. Unlike in the case of horizontal interderivability, finding the right counterexample---specifically a fitting $s$---is a non-trivial task. After fixing some specific $s$, one must perform an exhaustive bottom-up search for a derivation of $[i:\texttt{\#}]s$ from $[i:\na]s$. We cannot rely on \textsc{Cut}-elimination here, since we have an additional premiss, $[i:\na]s$, and all its \textbf{Cal}-consequences at our disposal. Using these may require \textsc{Cut} in an irreducible way. Furthermore, results will depend on whether any given connective satisfies subformula-connectedness and, if so, what the subformulae of $[i:\texttt{\#}]$ are.

Based on partial proof search attempts for $s$ for some of our $\#\in \texttt{PC}$ but $\#\not{\in} \texttt{PU}$, we conjecture that the right $s$ could be found for each case. However, the complexity of the search prevents us from establishing this result here; we leave it to future research.
\begin{conjecture}[vertical interderivability]\

    \noindent For all $\#$, if $\#\in \texttt{PC}$ but $\#\not{\in} \texttt{PU}$, then $\#$ is not vertically interderivable. Hence, it is non-unique according to Definition 16. 
\end{conjecture}
How significant is this limitation for our overall argument? As we have shown in \S 5.2.1, horizontal interderivability implies vertical interderivability in the presence of $\textsc{Cut}cd$. Thus, our ability to establish positive cases is not impaired. However, negative results cannot be established by relying on the structural rules of the underlying calculus alone.

 Moreover, it is not clear that vertical interderivability is the optimal concept for capturing uniqueness in our setting. Uniqueness is a philosophical notion. It relies on the philosophical concept of the \textit{inferential role} of a connective (cf. Definition 14). In using vertical interderivability as our uniqueness criterion (cf. Definition 16), we follow Nagler \cite{nagler2026measuring}, who argues that Dicher's \cite{dicher2016proof} standard of horizontal interderivability is too weak for \cite{belnap1962tonk}-style uniqueness in a substructural setting.
 
 In fact, it is not \textit{immediately} obvious how to translate Belnapian uniqueness into the context of I-bS since Belnap \cite{belnap1962tonk} works with a single-conclusion natural deduction system (\textbf{ND}), whereas I-bS is built on potentially multi-conclusion sequent calculus machinery. In the setting of \cite{belnap1962tonk}, a connective \# is unique in \textbf{ND} iff
\begin{enumerate}
    \item $\Gamma, \texttt{\#} \vdash_{\textbf{ND}} D$ iff $\Gamma, \na \vdash_{\textbf{ND}} D$, and
    \item $\Gamma \vdash_{\textbf{ND}} \texttt{\#}$ iff $\Gamma \vdash_{\textbf{ND}} \na$.
\end{enumerate}
As Belnap's system is reflexive, transitive, monotonic and defined on sets of formulae, uniqueness can be established by proving $\texttt{\#}\vdash_{ND} \na$ and $\na \vdash_{ND} \texttt{\#}$. 

There are four natural ways to import Belnap's definition into our sequent calculus context: a connective \# is unique in the sequent calculus \textbf{Cal} iff for any $i,j\in\{1,2\}, i\neq j$,
\begin{enumerate}
    \item $[i:\texttt{\#};j:\na]$ is provable in \textbf{Cal},
    \item $[i:\texttt{\#};j:\na]s$ is provable in \textbf{Cal},
    \item $[i:\na]s$ is derivable from $[i:\texttt{\#}]s$ in \textbf{Cal}, and vice versa, i.e. 
    \[\begin{prooftree}
        \hypo{[i:\texttt{\#}]s}
        \infer1{[i:\na]s}
    \end{prooftree} \quad \text{and} \quad \begin{prooftree}
        \hypo{[i:\na]s}
        \infer1{[i:\texttt{\#}]s}
    \end{prooftree}\, ,
    \]
    \item $[i:\texttt{\#}]s$ is provable in \textbf{Cal} iff $[i:\na]s$ is provable in \textbf{Cal}.
\end{enumerate}
Contextualised horizontal interderivability, 2., is a generalised version of 1., i.e. Dicher's notion \cite{dicher2016proof}. The failure of both 1. and 2. follows from Theorem 6. As argued in \cite{nagler2026measuring}, 1. (or 2.) is only equivalent to Belnap's \textit{official} definition in the presence of fully Tarskian structural resources, i.e. reflexivity, transitivity, monotonicity and a set-based sequent framework. In our substructural setting, however, this equivalence breaks down. Nevertheless, Belnap \cite{belnap1962tonk} does not consider such a setting. It remains unclear how his notion was intended to carry over to our scenario. Indeed, Belnap \cite{belnap1962tonk} exclusively works with horizontal interderivability, which still leaves 1. and 2. as contenders for the right notion of uniqueness.

In contrast, vertical interderivability, i.e. 3., is provably equivalent to the Belnap's \cite{belnap1962tonk} official notion, thus justifying our choice in Definition 16. Although 4. seems the most intuitive translation of Belnapian uniqueness, 3. and 4. are \textit{not} equivalent. While the two definitions may be equivalent in \textit{some} substructural settings, they are not equivalent in the present one, since we have restricted \textsc{Id} to primitive formulae. One example of this is our special cases: since at least one \#-rule cannot be used, both sides of the biconditional in 4. are false, rendering 4. vacuously true. The same is not true for 3.: as it requires the construction of a proof from the \textit{assumption} of $[i:\texttt{\#}]s$ to $[i:\na]s$, no such vacuous result can be obtained.\footnote{In a sense, 3. relies on constructive equivalence; 4. on classical equivalence.} 

While we favour vertical interderivability as the preferred notion of (Belnap-style) uniqueness, we cannot make a definitive philosophical choice here. However, we hope to have illuminated the philosophical landscape sufficiently for the reader to form their own assessment.\footnote{See also the related discussion on synonymy in \cite[579f.]{humberstone2011connectives}.}

\subsubsection{Special cases}
In \S5.1.3, we discussed some trivially conservative cases, for which adding \#-rules does not increase the set of provable \#-free sequents. Despite being trivially conservative, they are non-unique.
\begin{corollary}\

    \noindent Special cases are not horizontally unique. 
\end{corollary}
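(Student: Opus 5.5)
The plan is to show that for every special case \# (the trivially conservative instances marked `s' in \S5.1.4), the sequent $[i:\texttt{\#};j:\na]$ is not provable in $\{\textsc{Id}, \textsc{Cut}cd, 1\#, 2\#, 1\natural, 2\natural\}$. The strategy follows the analyses of Examples 6 and 7. Throughout, $i\neq j$ is the dimension of the rule that remains applicable, as in those examples.

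The key observation is that in each special case at least one operational rule can never be applied. In Example 6 it was $2\#$; in Example 7, case 1, it was $1\#$. The obstruction is a context-sharing premiss shape whose premisses demand sequents that Lemma 2 rules out. Adding a second copy $1\natural, 2\natural$ with the same rules does not remove this obstruction. The first step is therefore to characterise the theorems of the doubled calculus, following the pattern of Example 6. Such a theorem is either an instance of \textsc{Id}$^{pr}$ or a sequent whose \#/$\natural$-formulae all lie in the single dimension of the applicable rule, say $k$. In Example 6 this means the sequent is exactly $[k:\texttt{\#}]$ or $[k:\na]$; in Example 7 it means sequents of the form $[1:F;2:F,\Gamma]$ with $\Gamma$ built by $2\#/2\natural$.

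This characterisation is proved by induction on proof height, with three points to check:
\begin{enumerate}
\item An application of the applicable rule $k\#$ or $k\natural$ only places principal formulae in dimension $k$.
\item \textsc{Cut} on a \#- or $\natural$-formula would require an occurrence of that formula in the opposite dimension, which no available rule produces. So every cut is either on primitives, or on formulae that could only arise via the dead rule. The regress argument of Example 7, case 2(b), handles this.
\item The dead rule stays inapplicable in the doubled calculus: its premisses would require sequents falling outside the invariant. This is exactly the argument given in the examples, now applied to the enlarged invariant.
\end{enumerate}

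Given the invariant, the conclusion is immediate. The sequent $[i:\texttt{\#};j:\na]$ places a \#-type formula in both dimensions, so it is neither an \textsc{Id}$^{pr}$ instance nor of the one-sided form. Hence it is unprovable, and horizontal uniqueness fails.

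The main obstacle is that the special cases are heterogeneous. The dead rule and the precise invariant differ from row to row of Tables 4--13, and some cases become applicable once $\texttt{\#}=\texttt{\#}'$, as in Example 7, case 2(b). There I would first try the ``dimension-parity'' invariant: every \#/$\natural$-formula introduced by the live rule lies in a fixed dimension and cannot be cut away without a regress. For the remaining cases I would verify the invariant table by table, as the paper does for conservativity, and treat Examples 6 and 7 as the paradigmatic instances.
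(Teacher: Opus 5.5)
Your proposal is correct and is essentially the paper's own argument. The paper observes that a \#-formula can first enter the $i$th component only as the principal formula of $i\#$, which is inapplicable for some $i$. Your invariant (all \#/$\natural$-formulae sit in the live dimension) is a rigorous rendering of exactly that observation: you prove it by induction on height and also check that the dead rule stays dead once $1\natural,2\natural$ are added, a step the paper passes over. Two minor points. First, under your induction hypothesis a \textsc{Cut} on a \#/$\natural$-formula is simply impossible, so the appeal to the regress of case 2(b) is superfluous. Second, like the paper (cf.\ its footnote), your argument presupposes subformula-connectedness. Without it, $1\#$ becomes applicable in case 2(b) of the second special-case example and $[1:A,A,\texttt{\#};2:\texttt{\#}]$ is derivable, so your fixed-dimension fallback fails there. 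A counting invariant would still rule out $[1:\texttt{\#};2:\natural]$ in that case: every theorem has at least as many formulae in its first component as in its second, with equality only for \textsc{Id}$^{pr}$ instances.
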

\begin{proof}\

\noindent
    As we have seen in Examples 8 and 9, our special cases arise because at least one \#-rule cannot be applied in ${1\#, 2\#, \textsc{Id}, \textsc{Cut}}$. To prove $[i:\texttt{\#};j:\na]$, one must be able to introduce $\texttt{\#}$ at least once in the $i$th component. In our calculus, only $i\#$ can accomplish initial \#-introduction into the $i$th dimension. However, However, $i\#$ is inapplicable for at least one value of $i\in{1,2}$.\footnote{This proof presumes subformula-connectedness. Only if one were drop this requirement---such as in proof step 2b of Example 9---would a manual bottom-up proof search be necessary. Since \textsc{Cut}-eliminability is not available, proofs are somewhat more cumbersome, albeit doable.}
    \hfill $\blacksquare$
\end{proof}
\begin{conjecture}\

    \noindent Special cases are not vertically unique. 
\end{conjecture}
To prove this conjecture, we must show for some $s$ and some $i\in\{1,2\}$, that $[i:\texttt{\#}]s$ cannot be derived from $[i:\na]s$, or vice versa. That is, we must prove that there is no proof tree in $\textbf{Cal}=\{1\#, 2\#, \textsc{Id}\textsc{Cut}\}$ with $[i:\texttt{\#}]s$ in its root and only $[i:\na]s$ or \textbf{Cal}-axioms in its leaves. We know that $i\#$ is \textbf{Cal}-inapplicable for at least one value of $i$. Assume $i$ takes that value. Then, as there is no \textbf{Cal}-proof of $[i:\na]s$, we must have derived it from $[i:\texttt{\#}]s$. Now, choose some $s$ such that $[i:\texttt{\#}]s$ cannot be a premiss of $i\natural$. Hence, $[i:\na]s$ cannot be the root of such a proof tree as no $\natural$-formula can be introduced. We again conjecture that such an $s$ exists for each special case, but leave this to future research.

We have now generated the data for our measurement of inference behaviour, i.e. Belnap-style definability proofs in the minimal derivability relation ${\textsc{Id}^{pr}, \textsc{Cut}cd}$. We have found that 150 formulable connective rule pairs are definable in this setting; these are collected in \texttt{PU}. We have argued that these results are exhaustive, i.e. that the $\#\in \texttt{PU}$ are the only operational rule pairs that are formulable (with $\leq 2$ premiss sequents and $\leq 2$ active formulae) and definable in our minimal setting. Let us now analyse these data by measuring the meaning of our connectives in terms of their minimal inference behaviour.

\section{Data analysis}

\subsection{All$^\ast$ semantic clauses}

What is the meaning of our minimally definable formulable connectives $\# \in \texttt{PU}$? To answer this question, we must measure their inference behaviour in their minimal definability proofs (cf. \S 4). Let us first demonstrate by example how to measure minimal inference behaviour in our dataset and how to give semantic clauses. We then present the results for all remaining cases.
\begin{example}\

    \noindent Consider the rules with premiss type $\texttt{SP}(i\#)=\alpha; \texttt{SP}(j\#)=\gamma 0$. In Theorem 5, we have shown that connectives of this type are horizontally interderivable using proof $\mathfrak{p}=$
    \[
    \begin{prooftree}
       \hypo{\emptyset^0}
       \infer1[i\#]{[i:\texttt{\#}]^1}
       \infer1[j$\natural$]{[i:\texttt{\#};j:\na]^2}
    \end{prooftree}\, ,
    \]
    for $i,j\in \{1,2\}, i\neq j$.
    
    As in Example 7, we label tree nodes using superscript numerals. By fixing values for $i,j$, we can assess the inference behaviour in $\mathfrak{p}$: for $i=1$: $\mathtt{ib}^\#_\mathfrak{p}= \langle [1], [1,2] \rangle$, and for $i=2$: $\mathtt{ib}^\#_\mathfrak{p}= \langle [2], [1,2] \rangle$. Recall that \# is unique and, allowing up to $\texttt{\#}:=\na$. Here, the inference behaviour profile $\texttt{ibp}_\mathfrak{p}^\#$ is $\{ [1], [1,2] \}$ and $\{ [2], [1,2] \}$, respectively.
\end{example}
    As indicated in \S 4.1, a (minimally meaningful) connective is individuated by its (proof-theoretic) \textit{semantic clause}, a notion adopted from Nagler \cite{nagler2026measuring}.
    \begin{definition}[semantic clause]\
        
        \noindent The semantic clause of a connective \# is the \textit{minimal} operational rule pair that yields the minimal inference behaviour of \#.
    \end{definition}
    
    Given a connective's minimal inference behaviour, how can we find its \textit{minimal} rules? In Definition 10, we defined connectives in terms of their \textit{premiss type}, \textit{framework}, and \textit{arity}. We will discuss the minimality of premiss types in \S 6.3. Provisionally, we assume that all minimally formulable premiss types are of equal size. After all, only $\# \in \texttt{PU}$ \textit{have} minimal inference behaviour, in the sense that only they are definable in our minimal derivability relation. Note further, that the proofs in \S5 do not have any framework and arity constraints; we surveyed only \texttt{SP}-configurations. We now discuss how to determine \textit{minimality} for framework and arity. 
    
    We argue that the \textit{minimal framework} for a connective is the \textit{smallest} framework within which we can prove its definability in the minimal derivability relation.
    
    \begin{definition}[framework minimality]\

        \noindent Let $\mathfrak{C}$ be a set of connectives with the same premiss type and arity.
        
        $\#\in \mathfrak{C}$ is \textit{minimal} iff it has the smallest framework.
    \end{definition}
    A connective's smallest framework can be inferred directly from its minimal inference behaviour profile. For instance, in Example 10, if $\texttt{SP}(1\#)=\alpha; \texttt{SP}(2\#)=\gamma 0$, the smallest framework is $\{1\}:\{0,1\}$ since we must accommodate one formula in the first component with none in the second ($[1]$), as well as one in each component ($[1,2]$). Equivalently, if $\texttt{SP}(1\#)=\gamma 0; \texttt{SP}(2\#)=\alpha$, the minimal framework is $\{0,1\}:\{1\}$. If we were to choose any smaller framework, the proof would fail.

    Therefore, the proof in Example 10 yields two minimal connectives with the respective semantic clauses:
    \begin{enumerate}
        \item $\langle \alpha, \gamma 0, a, \{1\}:\{0,1\} \rangle$,
        \item $\langle \gamma 0, \alpha, a, \{0,1\}:\{1\} \rangle$.
    \end{enumerate}
    We can accommodate the corresponding conservativity proof using the respective minimal framework. Focussing exemplarily on the key proof step in Lemma 1, we obtain for the former connective (with context $s=[1:C]$):
    \[
    \begin{prooftree}
        \hypo{\emptyset}
        \infer1[1\#]{[1:\texttt{\#}]}
        \hypo{[1:C]}
        \infer1[2\#]{[1:C; 2:\texttt{\#}]}
        \infer2[\textsc{Cut}]{[1:C]}
        \end{prooftree}
\]
can be reduced to
\[
    \begin{prooftree}
        \hypo{[1:C]}
        \infer1{[1:C]}
    \end{prooftree}\, ,
    \]
    and for the latter connective (with $s=[2:C]$):
    \[
    \begin{prooftree}
        \hypo{[2:C]}
        \infer1[1\#]{[1:\texttt{\#}; 2: C]}
        \hypo{\emptyset}
        \infer1[2\#]{[2:\texttt{\#}]}
        \infer2[\textsc{Cut}]{[2: C]}
    \end{prooftree}
\]
can be reduced to
\[
    \begin{prooftree}
        \hypo{[2: C]}
        \infer1{[2: C]}
    \end{prooftree}\, .
    \]
    Let us now consider three options for determining the \textit{minimal arity} ($a$) of a connective in its semantic clause. The first, arguably most intuitive, option is to set $a$ to the smallest available number, i.e. $a=0$. Although minimal definability proofs fail if we choose too small a framework and succeed if we expand the framework to a suitable size, reducing or increasing $a$ does not affect the success of our minimal definability proofs, for any $\#\in \texttt{PU}$. Hence, arity is \textit{independent} of minimal inference behaviour, rendering merely numeric minimalism unsuitable for our purposes.

    A second option would be to base the arity of a connective on the number of active formulae in its operational rules. This is a common approach in Logic, especially if one assumes subformula-connectedness. Under this standard of \textit{arity minimalism}, the two connectives in Example 10 correspond to $\top$ and $\bot$:
    \begin{enumerate}
        \item $\bot_{min}=\langle \alpha, \gamma 0, a, \{1\}:\{0,1\} \rangle$,
        \item $\top_{min}=\langle \gamma 0, \alpha, a, \{0,1\}:\{1\} \rangle$.
    \end{enumerate}
    However, subformula-connectedness was not required for our definability proofs to succeed. In fact, we have not made any assumptions about the structure of the principal formula in (at least) the positive part of our results. There is thus no good inner-theoretical reason to exclude connectives that violate subformula-connectedness. Indeed, one might consider it a feature that I-bS can provide semantics for operational rules with built-in \textsc{Contractions} or \textsc{Cut}s at the subformula level. 
    
    We therefore adopt a third option: leaving the arity and general structure of the principal formula unconstrained ($a\in \mathbb{N}^0$). That said, we will give an additional subformula-connectedness-adjusted version of each of our resulting semantic clauses as these \textit{special versions} can aid understanding of our findings.\footnote{Unlike \cite{nagler2026measuring}, we disregard the fourth parameter of operational rules in this paper: adding structural rules to get previously failing definability proofs to work (cf. \S 4.1). If one were to implement this feature, however, one might want to absorb the minimally necessary structural rules into their minimal operational rules. The resulting semantic clauses would be akin to highly substructural \textbf{G3}-style rules; such an endeavour would require a suitable notion of minimality for added structural rules, which we must leave to future work.}

    Before we apply our newfound notions of framework and arity minimality to give semantic clauses for all $\#\in\texttt{PU}$, let us consider a special case.
\begin{example}\

    \noindent Let $\texttt{SP}(i\#)=\gamma kF; \texttt{S}(j\#)=\gamma lF$, where $i,j,k,l\in\{1,2\}, k\neq l, i\neq j$. In Theorem 5, we have established horizontal interderivability via proof $\mathfrak{p}$:
    \[
\begin{prooftree}
    \hypo{\emptyset^0}
    \infer1[\textsc{Id}]{[k:F;l:F]^1}
    \infer1[j$\natural$]{[k:F;j:\na]^2}
    \infer1[i\#]{[i:\texttt{\#}; j:\na]^3}
\end{prooftree}\, .
\]
Naturally, the inference behaviour in $\mathfrak{p}$ depends on the values of our variables $i,j,k,l$:
\begin{itemize}
    \item for $i=k$, $\mathtt{ib}^\#_\mathfrak{p}= \langle  [1,2], [1,2], [1,2] \rangle$ and $\mathtt{ibp}^\#_\mathfrak{p}= \{  [1,2] \}$,
    \item for $i=1; k=2$, $\mathtt{ib}^\#_\mathfrak{p}= \langle  [1,2], [2,2], [1,2] \rangle$ and $\mathtt{ibp}^\#_\mathfrak{p}= \{  [1,2], [2,2] \}$,
    \item for $i=2; k=1$, $\mathtt{ib}^\#_\mathfrak{p}= \langle  [1,2], [1,1], [1,2] \rangle$ and $\mathtt{ibp}^\#_\mathfrak{p}= \{  [1,2], [1,1] \}$.
\end{itemize}
For now, we focus on the latter two cases ($i\neq k$). Recall that we must prove horizontal interderivability in both directions, i.e. both $[1:\texttt{\#}; 2:\na]$ and $[1:\na; 2:\texttt{\#}]$. Hence, $\mathfrak{p}$ is one of 4 proof pairs, $\mathfrak{p}_1, \mathfrak{p}_2, \mathfrak{p}_3, \mathfrak{p}_4$, such that:
\[
\mathfrak{p}_1=
\begin{prooftree}
    \hypo{\emptyset}
    \infer1[\textsc{Id}]{[1:F;2:F]}
    \infer1[2$\natural$]{[2:F,\na]}
    \infer1[1\#]{[1:\texttt{\#}; 2:\na]}
\end{prooftree}\, ,
\begin{prooftree}
    \hypo{\emptyset}
    \infer1[\textsc{Id}]{[1:F;2:F]}
    \infer1[2$\#$]{[2:F,\texttt{\#}]}
    \infer1[1$\natural$]{[1:\na; 2:\texttt{\#}]}
\end{prooftree}\, ;
 \qquad \mathfrak{p}_2=
 \begin{prooftree}
    \hypo{\emptyset}
    \infer1[\textsc{Id}]{[1:F;2:F]}
    \infer1[2$\natural$]{[2:F,\na]}
    \infer1[1\#]{[1:\texttt{\#}; 2:\na]}
\end{prooftree}\, ,
\begin{prooftree}
    \hypo{\emptyset}
    \infer1[\textsc{Id}]{[1:F;2:F]}
    \infer1[1$\na$]{[1:F,\na]}
    \infer1[2\#]{[1:\na; 2:\texttt{\#}]}
\end{prooftree}\, ;
\]
\[
\mathfrak{p}_3=
\begin{prooftree}
    \hypo{\emptyset}
    \infer1[\textsc{Id}]{[1:F;2:F]}
    \infer1[1\#]{[1:F,\texttt{\#}]}
    \infer1[2$\natural$]{[1:\texttt{\#}; 2:\na]}
\end{prooftree}\, ,
\begin{prooftree}
    \hypo{\emptyset}
    \infer1[\textsc{Id}]{[1:F;2:F]}
    \infer1[2\#]{[2:F,\texttt{\#}]}
    \infer1[1$\natural$]{[1:\na; 2:\texttt{\#}]}
\end{prooftree}\, ;
 \qquad \mathfrak{p}_4=
 \begin{prooftree}
    \hypo{\emptyset}
    \infer1[\textsc{Id}]{[1:F;2:F]}
    \infer1[1\#]{[1:F,\texttt{\#}]}
    \infer1[2$\natural$]{[1:\texttt{\#}; 2:\na]}
\end{prooftree}\, ,
\begin{prooftree}
    \hypo{\emptyset}
    \infer1[\textsc{Id}]{[1:F;2:F]}
    \infer1[1$\natural$]{[1:F,\na]}
    \infer1[2\#]{[1:\na; 2:\texttt{\#}]}
\end{prooftree}\, .
\]

Each option has distinct inference behaviour. However, $\mathfrak{p}_2$ and $\mathfrak{p_3}$ do not yield a semantic clause, as they require a more-than-minimal framework of $\{1,2\}:\{1,2\}$ ($\texttt{ibp}_{\mathfrak{p_2}}^\#=\texttt{ibp}_{\mathfrak{p_3}}^\#=\{[1,2], [1,1], [2,2]\}$). In contrast, $\mathfrak{p}_1$ and $\mathfrak{p}_4$ use strictly smaller frameworks of $\{1\}:\{1,2\}$ ($\texttt{ibp}_{\mathfrak{p}_1}^\#=\{[1,2], [2,2]\}$), and $\{1,2\}:\{1\}$ ($\texttt{ibp}_{\mathfrak{p}_4}=\{[1,2], [1,1]\}$), respectively.

However, $\mathfrak{p}_1$ and $\mathfrak{p}_4$ have distinct frameworks of equally minimal size. This results in an anomaly for our dataset: \textit{one} premiss type corresponds to \textit{two} distinct semantic clauses and therefore to two distinct meaningful connectives:
\begin{enumerate}
    \item $\langle \gamma 2F, \gamma 1F, a, \{0,1\}:\{1,2\} \rangle$,
    \item $\langle \gamma 2F, \gamma 1F, a, \{1,2\}:\{0,1\} \rangle$.
\end{enumerate}
We included the option of $0$ formulae in the first and second component, respectively, as it is required in the corresponding conservativity proof. For example, consider the corresponding step in the proof of Lemma 1 for the latter connective:
\[
\begin{prooftree}
    \hypo{[1:C; 2:F]^1}
    \infer1[1\#]{[1:\texttt{\#}, C]^3}
    \hypo{[1:F,D]^2}
    \infer1[2\#]{[1:D; 2:\texttt{\#}]^4}
    \infer2[\textsc{Cut}]{[1:C, D]^5}
\end{prooftree}
\]
can be reduced to
\[
\begin{prooftree}
    \hypo{[1:C; 2:F]}
    \hypo{[1:F, D]}
    \infer2[\textsc{Cut}]{[1: C, D]}
\end{prooftree}\, .
\]
Although we could try to eliminate the empty second component in $V=2$ by fixing $t=[1:D; 2:E]$ (rather than $t=[1:D]$), this would yield the sequent $[1:D;2:\texttt{\#}, E]$ in the second component. The resulting framework of $\{1,2\}:\{1,2\}$ would be larger than $\{1,2\}:\{0,1\}$. Therefore, the latter framework is indeed \textit{minimal}.

Stipulating subformula-connectedness and $a=1$, we can see that each semantic clause corresponds to a minimal version of \textit{negation}, matching the result for classical negation in \cite{nagler2026measuring}. Moreover, Nagler \cite{nagler2026measuring} finds that the individual versions of negation correspond to (a minimal version of) intuitionistic negation and of dual-intuitionistic negation, respectively. By replicating this result in a fully calculus-independent fashion, we \textit{validate} and \textit{verify} these findings: while Nagler \cite{nagler2026measuring} shows that intuitionistic, dual-intuitionistic and classical negation \textit{have} semantic clauses, we show that they are the \textit{only} (minimally) meaningful forms of negation according to I-bS.

We obtain the semantic clause for the only other unary connective by setting $i=k$:
\begin{enumerate}
    \item[3.] $\langle \gamma 1F, \gamma 2F, a, \{1\}:\{1\} \rangle$.
\end{enumerate}
We can think of 3.---the converse of negation---as the \textit{trivial} connective: we can add it to any formula at any point. A potential parallel is the use of `!' in informal written online communication as in the move from `I am right!' to `I am right!!!'. Although adding `!'s is certainly semantically \textit{meaningful}, it is also highly \textit{uninformative}. This may explain why one rarely encounters this connective in logical treatises.
\end{example}
\subsection{21 minimally meaningful connectives}
In addition to the five semantic clauses discussed in Examples 10 and 11, we find a further 16. As promised in \S 1, each of these corresponds to a familiar binary connective, or its inverse or converse (cf. Table 1).

Due to space constraints, we will only state the connective, as individuated by its semantic clause. We then give the minimal horizontal interderivability proof and the core part of the conservativity proof (Lemma 1), following the format of the previous examples. To ease readability, we use boxes to highlight the active and principal formulae at each \#-rule application. Lastly, we present the subformula-connectedness-adjusted \textit{special version} of the connective.

Let $F,G \in \{A, B\}$, and $a \in \mathbb{N}^0$. We obtain the following semantic clauses for $\#_{min}$:
\begin{enumerate}
\item Let \texttt{SP}($i\#)=\alpha$, \texttt{SP}($j\#)=\gamma 0$.
    \begin{enumerate}
    \item for $i=1; j=2, \#_{min}=\langle \alpha, \gamma 0, a, \{1\}:\{0,1\} \rangle$: 

        Lemma 1: let $s=[1:C]$, then
            \[
            \begin{prooftree}
                \hypo{\emptyset}
                \infer1[1\#]{[1:\boxed{\texttt{\#}}]}
                \hypo{[1:C]}
                \infer1[2\#]{[1:C; 2:\boxed{\texttt{\#}}]}
                \infer2[\textsc{Cut}]{[1:C]}
            \end{prooftree}
\]
can be reduced to
\[
            \begin{prooftree}
                \hypo{[1:C]}
                \infer1{[1:C]}
            \end{prooftree}\, .
            \]
        Lemma 3:
            \[
            \begin{prooftree}
                \hypo{\emptyset}
                \infer1[1\#]{[1:\boxed{\texttt{\#}}]}
                \infer1[2$\natural$]{[1:\texttt{\#}; 2: \boxed{\na}]}
            \end{prooftree}\, .
            \]
        Special version: $\bot_{min} = \langle \alpha, \gamma 0, 0, \{1\}:\{0,1\} \rangle$.
    \item for $i=2; j=1, \#_{min}=\langle \gamma 0, \alpha, a, \{1, 0\}:\{1\} \rangle$:

        Lemma 1: $s=[2:C]$
            \[
            \begin{prooftree}
                \hypo{\emptyset}
                \infer1[2\#]{[2:\boxed{\texttt{\#}}]}
                \hypo{[2:C]}
                \infer1[1\#]{[1:\boxed{\texttt{\#}}; 2: C]}
                \infer2[\textsc{Cut}]{[1: C]}
            \end{prooftree}
\]
can be reduced to
\[
            \begin{prooftree}
                \hypo{[2: C]}
                \infer1{[2: C]}
                \end{prooftree}\, .
            \]
        Lemma 3:
            \[
            \begin{prooftree}
                \hypo{\emptyset}
                \infer1[2$\natural$]{[2: \boxed{\na}]}
                \infer1[1\#]{[1:\boxed{\texttt{\#}}; 2: \boxed{\na}]}
            \end{prooftree}\, .
            \]
        Special version: $\top_{min} = \langle \alpha, \gamma 0, 0, \{1, 0\}:\{1\}  \rangle$.
    \end{enumerate}
\item  Let \texttt{SP}($i\#)=\gamma kF$, \texttt{SP}($j\#)=\gamma lF$.
    \begin{enumerate}
        \item for $i=k=1, j=l=2, \#_{min}= \langle \gamma 1F, \gamma 2F, a, \{1\}:\{1\} \rangle:$ 

            Lemma 1: $s=[2:C]; t=[2:D]$
                \[
                \begin{prooftree}
                    \hypo{[1:\boxed{F}; 2:C]}
                    \infer1[1\#]{[1:\boxed{\texttt{\#}}; 2: C]}
                    \hypo{[1:D; 2:\boxed{F}]}
                    \infer1[2\#]{[1:D; 2:\boxed{\texttt{\#}}]}
                    \infer2[\textsc{Cut}]{[1:D; 2: C]}
                \end{prooftree}
\]
can be reduced to
\[
                \begin{prooftree}
                    \hypo{[1:F; 2:C]}
                    \hypo{[1:D; 2:F]}
                    \infer2[\textsc{Cut}]{[1:D; 2: C]}
                \end{prooftree}\, .
                \]
            Lemma 3:
                \[
                \begin{prooftree}
                    \hypo{\emptyset}
                    \infer1[\textsc{Id}]{[1:F; 2:\boxed{F}]}
                    \infer1[2$\natural$]{[1:\boxed{F}; 2: \boxed{\na}]}
                    \infer1[1\#]{[1: \boxed{\texttt{\#}}; 2: \boxed{\na}]}
                \end{prooftree}\, .
                \]

            Special version: $\neg^{cp/cv}_{min} = \langle \gamma 1A, \gamma 2A, 1, \{1\}:\{1\} \rangle$.
        \item for $k=j=2, l=i=1,$
        \begin{enumerate}
            \item either $\#_{min}= \langle \gamma 2F, \gamma 1F, a, \{1,2\}:\{0,1\} \rangle$: 

            Lemma 1: $s=[1:C], t=[1:D]$
                \[
                \begin{prooftree}
                    \hypo{[1:C; 2:\boxed{F}]}
                    \infer1[1\#]{[1:\boxed{\texttt{\#}}, C]}
                    \hypo{[1:\boxed{F},D]}
                    \infer1[2\#]{[1:D; 2:\boxed{\texttt{\#}}]}
                    \infer2[\textsc{Cut}]{[1:C, D]}
                \end{prooftree}
\]
can be reduced to
\[
                \begin{prooftree}
                    \hypo{[1:C; 2:F]}
                    \hypo{[1:F, D]}
                    \infer2[\textsc{Cut}]{[1: C, D]}
                \end{prooftree}\, .
                \]
            Lemma 3:
                \[
                \begin{prooftree}
                    \hypo{\emptyset}
                    \infer1[\textsc{Id}]{[1:F; 2:\boxed{F}]}
                    \infer1[1\#]{[1: \boxed{\texttt{\#}}, \boxed{F}]}
                    \infer1[2$\natural$]{[1: \texttt{\#}; 2: \boxed{\na}]}
                \end{prooftree}\, .
                \]
            Special version: $\neg_{min} = \langle \gamma 2A, \gamma 1A, 1, \{1,2\}:\{0,1\} \rangle$,
            \item or $\#_{min}=\langle \gamma 2F, \gamma 1F, a, \{0,1\}:\{1,2\} \rangle$:

            Lemma 1: $s=[2:C]; t=[2:D]$
                \[
                \begin{prooftree}
                    \hypo{[2:\boxed{F}, C]}
                    \infer1[1\#]{[1:\boxed{\texttt{\#}}; 2: C]}
                    \hypo{[1:\boxed{F}; 2:D]}
                    \infer1[2\#]{[2:\boxed{\texttt{\#}}, D]}
                    \infer2[\textsc{Cut}]{[2:C, D]}
                \end{prooftree}
\]
can be reduced to
\[
                \begin{prooftree}
                    \hypo{[2:F, C]}
                    \hypo{[1:F; 2:D]}
                    \infer2[\textsc{Cut}]{[2: C, D]}
                \end{prooftree}\, .
                \]
            Lemma 3:
                \[
                \begin{prooftree}
                    \hypo{\emptyset}
                    \infer1[\textsc{Id}]{[1:\boxed{F}; 2:F]}
                    \infer1[2$\natural$]{[2: \boxed{F}, \boxed{\na}]}
                    \infer1[1\#]{[1: \boxed{\texttt{\#}}; 2: \na]}
                \end{prooftree}\, .
                \]
            Special version: $\sim_{min}=\langle \gamma 2A, \gamma 1A, 1, \{0,1\}:\{1,2\} \rangle$.
        \end{enumerate} 
    \end{enumerate}
\item Let \texttt{SP}($i\#)=\gamma kFlG$, \texttt{SP}($j\#)=\delta cd nFmG$.
    \begin{enumerate}
    \item for $i=1, j=2$ and
        \begin{enumerate}
            \item $k=l=1, m=n=2, \#_{min}=\langle \gamma 1F1G, \delta cd 2F2G, a, \{1,2\}:\{1\} \rangle$:

            Lemma 1: $s=[2:C], t=[1:D], u=[1:E]$:
                \[
                \begin{prooftree}
                    \hypo{[1:\boxed{F}, \boxed{G}; 2:C]}
                    \infer1[1\#]{[1:\boxed{\texttt{\#}}; 2:C]}
                    \hypo{[1:D; 2:\boxed{F}]}
                    \hypo{[1:E;2:\boxed{G}]}
                    \infer2[2\#]{[1: D,E; 2:\boxed{\texttt{\#}}]}
                    \infer2[\textsc{Cut}]{[1:D,E;2:C]}
                \end{prooftree}
\]
can be reduced to
\[
                \begin{prooftree}
                    \hypo{[1:F, G; 2:C]}
                    \hypo{[1:D; 2:F]}
                    \infer2[\textsc{Cut}]{[1:G,D; 2:C]}
                    \hypo{[1:E;2:G]}
                    \infer2[\textsc{Cut}]{[1:D,E;2:C]}
                \end{prooftree}\, .
                \]

            Lemma 3:
                \[
                \begin{prooftree}
                    \hypo{\emptyset}
                    \infer1[\textsc{Id}]{[1:F;2:\boxed{F}]}
                    \hypo{\emptyset}
                    \infer1[\textsc{Id}]{[1:G; 2:\boxed{G}]}
                    \infer2[2$\natural$]{[1:\boxed{F},\boxed{G};2:\boxed{\na}]}
                    \infer1{[1:\boxed{\texttt{\#}};2:\na]}
                \end{prooftree}\, .
                \]
            Special version: $\otimes_{min}=\langle \gamma 1F1G, \delta cd 2F2G, 2, \{1,2\}:\{1\} \rangle$.
        \item $k=l=2, m=n=1,  \#_{min}=\langle \gamma 2F2G, \delta cd 1F1G, a, \{0,1\}:\{1,3\} \rangle$: 

            Lemma 1: $s=[2:C], t=[2:D], u=[2:E]$:
                \[
                \begin{prooftree}
                    \hypo{[2: \boxed{F}, \boxed{G}, C]}
                    \infer1[1\#]{[1:\boxed{\texttt{\#}}; 2:C]}
                    \hypo{[1:\boxed{F}; 2:D]}
                    \hypo{[1:\boxed{G}; 2:E]}
                    \infer2[2\#]{[2:\boxed{\texttt{\#}}, D, E]}
                    \infer2[\textsc{Cut}]{[2:C,D,E]}
                \end{prooftree}
\]
can be reduced to
\[
                \begin{prooftree}
                    \hypo{[2:F, G, C]}
                    \hypo{[1:F; 2: D]}
                    \infer2[\textsc{Cut}]{[2:G, C, D]}
                    \hypo{[1:G; 2:E]}
                    \infer2[\textsc{Cut}]{[2:C, D, E]}
                \end{prooftree}\, .
                \]
            Lemma 3:
                \[
                \begin{prooftree}
                    \hypo{\emptyset}
                    \infer1[\textsc{Id}]{[1:\boxed{F};2:F]}
                    \hypo{\emptyset}
                    \infer1[\textsc{Id}]{[1:\boxed{G};2:G]}
                    \infer2[2$\natural$]{[2: \boxed{F},\boxed{G}, \boxed{\na}]}
                    \infer1[1\#]{[1: \boxed{\texttt{\#}}; 2:\na]}
                \end{prooftree}\, .
                \]
            Special version: $\oplus^{cp}_{min}=\langle \gamma 2A2B, \delta cd 1A1B, 2, \{0,1\}:\{1,3\} \rangle$.
            
    \item $k=n=1, l=m=2, \#_{min}=\langle \gamma 1F2G, \delta cd 2F1G, a, \{1\}:\{1, 2\} \rangle$:

            Lemma 1: $s=[2:C], t=[1:D], u=[2:E]$:
                \[
                \begin{prooftree}
                    \hypo{[1:\boxed{F};2:\boxed{G}, C]}
                    \infer1[1\#]{[1:\boxed{\texttt{\#}};2:C]}
                    \hypo{[1:D; 2:\boxed{F}]}
                    \hypo{[1:\boxed{G}; 2:E]}
                    \infer2[2\#]{[1:D; 2:\boxed{\texttt{\#}}, E]}
                    \infer2[\textsc{Cut}]{[1:D; 2:C,E]}
                \end{prooftree} 
\]
can be reduced to
\[
                \begin{prooftree}
                    \hypo{[1:F;2:G, C]}
                    \hypo{[1:D; 2:F]}
                    \infer2[\textsc{Cut}]{[1:D; 2:G, C]}
                    \hypo{[1:G; 2: E]}
                    \infer2[\textsc{Cut}]{[1:D; 2: C, E]}
                \end{prooftree}\, .
                \]
            Lemma 3:
                \[
                \begin{prooftree}
                    \hypo{\emptyset}
                    \infer1[\textsc{Id}]{[1:F;2:\boxed{F}]}
                    \hypo{\emptyset}
                    \infer1[\textsc{Id}]{[1:\boxed{G};2:G]}
                    \infer2[2$\natural$]{[1:\boxed{F}; 2: \boxed{G}, \boxed{\na}]}
                    \infer1[1\#]{[1: \boxed{\texttt{\#}};2:\na]}
                \end{prooftree}\, .
                \]
            Special version: $\rightarrow^{cp}_{min}=\langle \gamma 1A2B, \delta cd 2A1B, 2, \{1\}:\{1, 2\} \rangle$.
            
    \item $k=n=2, l=m=1, \#_{min}=\langle \gamma 2F1G, \delta cd 1F2G, a, \{1\}:\{1, 2\} \rangle$:

            Lemma 1: $s=[2:C], t=[2:D], u=[1:E]$:
                \[
                \begin{prooftree}
                    \hypo{[1:\boxed{G};2:\boxed{F}, C]}
                    \infer1[1\#]{[1:\boxed{\texttt{\#}}; 2: C]}
                    \hypo{[1:\boxed{F}; 2:D]}
                    \hypo{[1:E; 2:\boxed{G}]}
                    \infer2[2\#]{[1:E; 2:\boxed{\texttt{\#}}, D]}
                    \infer2[\textsc{Cut}]{[1:E; 2:C, D]}
                \end{prooftree}
\]
can be reduced to
\[
                \begin{prooftree}
                    \hypo{[1:G; 2:F, C]}
                    \hypo{[1:F; 2:D]}
                    \infer2[\textsc{Cut}]{[1:G; 2: C,D]}
                    \hypo{[1:E; 2:G]}
                    \infer2[\textsc{Cut}]{[1:E; 2: C,D]}
                \end{prooftree}\, .
                \]
            Lemma 3:
                \[
                \begin{prooftree}
                    \hypo{\emptyset}
                    \infer1[\textsc{Id}]{[1:\boxed{F};2:F]}
                    \hypo{\emptyset}
                    \infer1[\textsc{Id}]{[1:G;2:\boxed{G}]}
                    \infer2[2$\natural$]{[1:\boxed{G};2:\boxed{F},\boxed{\na}]}
                    \infer1[1\#]{[1:\boxed{\texttt{\#}};2:\na]}
                \end{prooftree}\, .
                \]
            Special version: $\leftarrow^{cp}_{min}=\langle \gamma 2A1B, \delta cd 1A2B, 2, \{1\}:\{1, 2\} \rangle$ .
        \end{enumerate}
    \item for $i=2, j=1$ and
    \begin{enumerate}
    \item $k=l=1, m=n=2, \#_{min}= \langle \delta cd 2F2G, \gamma 1F1G, a, \{1,3\}:\{0, 1\} \rangle$: 

        Lemma 1: $s=[1:C], t=[1:D], u=[1:E]$
        \[
            \begin{prooftree}
                \hypo{[1:\boxed{F},\boxed{G},C]}
                \infer1[2\#]{[1: C, 2:\boxed{\texttt{\#}}]}
                \hypo{[1:D; 2:\boxed{F}]}
                \hypo{[1:E; 2:\boxed{G}]}
                \infer2[1\#]{[1:\boxed{\texttt{\#}}, D, E]}
                \infer2[\textsc{Cut}]{[1:C,D,E]}
            \end{prooftree} 
\]
can be reduced to
\[
            \begin{prooftree}
                \hypo{[1:F, G, C]}
                \hypo{[1:D; 2:F]}
                \infer2[\textsc{Cut}]{[1:G, C, D]}
                \hypo{[1:E; 2:G]}
                \infer2[\textsc{Cut}]{[1:C,D,E]}
            \end{prooftree}\, .
            \]
        Lemma 3:
            \[
            \begin{prooftree}
                \hypo{\emptyset}
                \infer1[\textsc{Id}]{[1:F;2:\boxed{F}]}
                \hypo{\emptyset}
                \infer1[\textsc{Id}]{[1:G;2:\boxed{G}]}
                \infer2[1\#]{[1:\boxed{F},\boxed{G},\boxed{\texttt{\#}}]}
                \infer1[2$\natural$]{[1:\texttt{\#};2:\boxed{\na}]}
            \end{prooftree}\, .
            \]
        Special version: $\otimes^{cp}_{min}=\langle \delta cd 2A2B, \gamma 1A1B, 2, \{1,3\}:\{0, 1\} \rangle$ .
    \item $k=l=2, m=n=1, \#_{min}= \langle \delta cd 1F1G, \gamma 2F2G, a, \{1\}:\{1,2\} \rangle$:

            Lemma 1: $s=[1:C], t=[2:D], u=[2:E]$:
            \[
            \begin{prooftree}
                \hypo{[1:C; 2:\boxed{F}, \boxed{G}]}
                \infer1[2\#]{[1:C; 2:\boxed{\texttt{\#}}]}
                \hypo{[1:\boxed{F}; 2:D]}
                \hypo{[1:\boxed{G}; 2:E]}
                \infer2[1\#]{[1:\boxed{\texttt{\#}}; 2:D, E]}
                \infer2[\textsc{Cut}]{[1:C; 2:D,E]}
            \end{prooftree} 
\]
can be reduced to
\[
            \begin{prooftree}
                \hypo{[1:C; 2:F,G]}
                \hypo{[1:F; 2:D]}
                \infer2[\textsc{Cut}]{[1:C; 2:G, D]}
                \hypo{[1:G; 2:E]}
                \infer2[\textsc{Cut}]{[1:C; 2:D,E]}
            \end{prooftree}\, .
            \]
        Lemma 3:
            \[
            \begin{prooftree}
                \hypo{\emptyset}
                \infer1[\textsc{Id}]{[1:\boxed{F};2:F]}
                \hypo{\emptyset}
                \infer1[\textsc{Id}]{[1:\boxed{G};2:G]}
                \infer2[1\#]{[1:\boxed{\texttt{\#}}; 2: \boxed{F}, \boxed{G}]}
                \infer1[2$\natural$]{[1:\texttt{\#};2:\boxed{\na}]}
            \end{prooftree}\, .
            \]
        Special version: $\oplus_{min}=\langle \delta cd 1A1B, \gamma 2A2B, 2, \{1\}:\{1,2\} \rangle$.
    \item $k=n=1, l=m=2, \#_{min}= \langle \delta cd 2F1G, \gamma 1F2G, a, \{1,2\}:\{1\} \rangle$:

         Lemma 1: $s=[1:C], t=[1:D], u=[2:E]$
            \[
            \begin{prooftree}
                \hypo{[1:\boxed{F}, C;2:\boxed{G}]}
                \infer1[2\#]{[1:C; 2:\boxed{\texttt{\#}}]}
                \hypo{[1:D; 2:\boxed{F}]}
                \hypo{[1:\boxed{G}; 2: E]}
                \infer2[1\#]{[1:\boxed{\texttt{\#}}, D; 2: E]}
                \infer2[\textsc{Cut}]{[1: C,D; 2:E]}
            \end{prooftree} 
            \]
can be reduced to
\[
            \begin{prooftree}
                \hypo{[1:F,C;2:G]}
                \hypo{[1: D; 2:F]}
                \infer2[\textsc{Cut}]{[1:C, D; 2:G]}
                \hypo{[1:G; 2:E]}
                \infer2[\textsc{Cut}]{[1:C,D; 2:E]}
            \end{prooftree}\, .
            \]
        Lemma 3:
            \[
            \begin{prooftree}
                \hypo{\emptyset}
                \infer1[\textsc{Id}]{[1:F;2:\boxed{F}]}
                \hypo{\emptyset}
                \infer1[\textsc{Id}]{[1:\boxed{G};2:G]}
                \infer2[1\#]{[\boxed{1:\texttt{\#}},\boxed{F};2:\boxed{G}]}
                \infer1[2$\natural$]{[1:\texttt{\#};2:\boxed{\na}]}
            \end{prooftree}\, .
            \]
        Special version: $\rightarrow_{min}=\langle \delta cd 2A1B, \gamma 1A2B, a, \{1,2\}:\{1\} \rangle $.
    \item $k=n=2, l=m=1, \#_{min}= \langle \delta cd 1F2G, \gamma 2F1G, 2, \{1,2\}:\{1\} \rangle$:

        Lemma 1: $s=[1:C], t=[2:D], u=[1:E]$
            \[
            \begin{prooftree}
                \hypo{[1:\boxed{G}, C; 2:\boxed{F}]}
                \infer1[2\#]{[1:C; 2:\boxed{\texttt{\#}}]}
                \hypo{[1:\boxed{F};2:D]}
                \hypo{[1:E;2:\boxed{G}]}
                \infer2[1\#]{[1:\boxed{\texttt{\#}}, E; 2:D]}
                \infer2[\textsc{Cut}]{[1: C,E; 2:D]}
            \end{prooftree} \]
can be reduced to
\[
            \begin{prooftree}
                \hypo{[1:G,C;2:F]}
                \hypo{[1:F;2:D]}
                \infer2[\textsc{Cut}]{[1:G, C; 2:D]}
                \hypo{[1:E;2:G]}
                \infer2[\textsc{Cut}]{[1:C,E;2:D]}
            \end{prooftree}\, .
            \]
        Lemma 3:
            \[
            \begin{prooftree}
                \hypo{\emptyset}
                \infer1[\textsc{Id}]{[1:\boxed{F};2:F]}
                \hypo{\emptyset}
                \infer1[\textsc{Id}]{[1:G;2:\boxed{G}]}
                \infer2[1\#]{[1:\boxed{\texttt{\#}},\boxed{G}; 2:\boxed{F}]}
                \infer1[2$\natural$]{[1: \texttt{\#};2:\boxed{\na}]}\, .
            \end{prooftree}
            \]
        Special version: $\leftarrow_{min}=\langle \delta cd 1A2B, \gamma 2A1B, 2, \{1,2\}:\{1\} \rangle$.
        \end{enumerate}
    \end{enumerate}
\item Let \texttt{SP}($i\#)=\delta cs kFlG$, \texttt{SP}($j\#)=\epsilon nFmG$.
\begin{enumerate}
    \item for $i=1, j=2$ and
    \begin{enumerate}
        \item $k=l=1, m=n=2, \#_{min}=\langle \delta cs 1F1G, \epsilon 2F2G, a, \{1\}:\{1\} \rangle:$ 

            Lemma 1: $s=[2:C], t=[1:D]$
                \[
                \begin{prooftree}
                    \hypo{[1:\boxed{F}; 2:C]}
                    \hypo{[1:\boxed{G}; 2:C]}
                    \infer2[1\#]{[1:\boxed{\texttt{\#}}; 2:C]}
                    \hypo{[1:D; 2:\boxed{F}]}
                    \infer1[2\#]{[1:D; 2:\boxed{\texttt{\#}}]}
                    \infer2[\textsc{Cut}]{[1:D; 2:C]}
                \end{prooftree} \]
can be reduced to
\[
                \begin{prooftree}
                    \hypo{[1:F;2:C]}
                    \hypo{[1:D;2:F]}
                    \infer2[\textsc{Cut}]{[1:D; 2:C]} 
                \end{prooftree}\, .
                \]
            Lemma 3:
                \[
                \begin{prooftree}
                    \hypo{\emptyset}
                    \infer1[\textsc{Id}]{[1:F;2:\boxed{F}]}
                    \infer1[2$\natural$]{[1:\boxed{F}; 2:\boxed{\na}]}
                    \hypo{\emptyset}
                    \infer1[\textsc{Id}]{[1:G;2:\boxed{G}]}
                    \infer1[2$\natural$]{[1:\boxed{G}; 2:\boxed{\na}]}
                    \infer2[1\#]{[1:\boxed{\texttt{\#}};2:\na]}
                \end{prooftree}\, .
                \]
            Special version: $\sqcup_{min}=\langle \delta cs 1A1B, \epsilon 2A2B, 2, \{1\}:\{1\} \rangle$. 
        \item $k=l=2, m=n=1, \#_{min}=\langle \delta cs 2F2G, \epsilon 1F1G, a, \{0,1\}:\{1,2\} \rangle$:

            Lemma 1: $s=[1:C], t=[2:D]$
                \[
                \begin{prooftree}
                    \hypo{[2:\boxed{F}, C]}
                    \hypo{[2:\boxed{G}, C]}
                    \infer2[1\#]{[1:\boxed{\texttt{\#}}; 2:C]}
                    \hypo{[1:\boxed{F}; 2:D]}
                    \infer1[2\#]{[2:\boxed{\texttt{\#}}, D]}
                    \infer2[\textsc{Cut}]{[2:C,D]}
                \end{prooftree} \]
can be reduced to
\[
                \begin{prooftree}
                    \hypo{[2:F, C]}
                    \hypo{[1:F; 2:D]}
                    \infer2[\textsc{Cut}]{[2:C,D]} 
                \end{prooftree}\, .
                \]
            Lemma 3:
                \[
                \begin{prooftree}
                    \hypo{\emptyset}
                    \infer1[\textsc{Id}]{[1:\boxed{F};2:F]}
                    \infer1[2$\natural$]{[2:\boxed{F}, \boxed{\na}]}
                    \hypo{\emptyset}
                    \infer1[\textsc{Id}]{[1:\boxed{G};2:G]}
                    \infer1{[2:\boxed{G}, \boxed{\na}]}
                    \infer2[1\#]{[1:\boxed{\texttt{\#}};2:\na]}
                \end{prooftree}\, .
                \]
            Special version: $\sqcap^{cp}_{min}=\langle \delta cs 2A2B, \epsilon 1A1B, 2, \{0,1\}:\{1,2\} \rangle$.
        \item $k=n=1, l=m=2, \#_{min}= \langle \delta cs 1F2G, \epsilon 2F1G, a, \{0,1\}:\{1,2\} \rangle$:

            Lemma 1: $s=[2:C], t=[1:D; 2:E]$
                \[
                \begin{prooftree}
                    \hypo{[1:\boxed{F}; 2:C]}
                    \hypo{[2:\boxed{G}, C]}
                    \infer2[1\#]{[1:\boxed{\texttt{\#}}; 2:C]}
                    \hypo{[1:D,2:\boxed{F}, E]}
                    \infer1[2\#]{[1:D; 2:\boxed{\texttt{\#}}, E]}
                    \infer2{[1:D; 2:C, E]}
                \end{prooftree} \]
can be reduced to
\[
                \begin{prooftree}
                    \hypo{[1:F; 2: C]}
                    \hypo{[1:D; 2:F, E]}
                    \infer2[\textsc{Cut}]{[1:D; 2:C, E]} 
                \end{prooftree}
                \]\, .
            Lemma 3:
                \[
                \begin{prooftree}
                    \hypo{\emptyset}
                    \infer1[\textsc{Id}]{[1:F;2:\boxed{F}]}
                    \infer1[2$\natural$]{[1:\boxed{F}; 2:\boxed{\na}]}
                    \hypo{\emptyset}
                    \infer1[\textsc{Id}]{[1:\boxed{G};2:G]}
                    \infer1{[2:\boxed{G}, \boxed{\na}]}
                    \infer2{[1:\boxed{\texttt{\#}};2:\na]}
                \end{prooftree}\, .
                \]
            Special version: $\leftsquigarrow_{min}=\langle \delta cs 1A2B, \epsilon 2A1B, 2, \{0,1\}:\{1,2\} \rangle$.
        \item $k=n=2, l=m=1, \#_{min}= \langle \delta cs 2F1G, \epsilon 1F2G, a, \{0.1\}:\{1,2\} \rangle$:

            Lemma 1: $s=[2:C], t=[2:D]$
                \[
                \begin{prooftree}
                    \hypo{[2:\boxed{F}, C]}
                    \hypo{[1:\boxed{G}; 2:C]}
                    \infer2[1\#]{[1:\boxed{\texttt{\#}}; 2:C]}
                    \hypo{[1:\boxed{F}; 2:D]}
                    \infer1[2\#]{[2:\boxed{\texttt{\#}}, D]}
                    \infer2[\textsc{Cut}]{[2:C, D]}
                \end{prooftree} \]
can be reduced to
\[
                \begin{prooftree}
                    \hypo{[2:F, C]}
                    \hypo{[1:F; 2:D]}
                    \infer2[\textsc{Cut}]{[2:C, D]} 
                \end{prooftree}\, .
                \]
            Lemma 3:
                \[
                \begin{prooftree}
                    \hypo{\emptyset}
                    \infer1[\textsc{Id}]{[1:\boxed{F};2:F]}
                    \infer1[2$\natural$]{[2:\boxed{F},\boxed{\na}]}
                    \hypo{\emptyset}
                    \infer1[\textsc{Id}]{[1:G;2:\boxed{G}]}
                    \infer1[2$\natural$]{[1:\boxed{G}; 2: \boxed{\na}]}
                    \infer2[1\#]{[1:\boxed{\texttt{\#}};2:\na]}
                \end{prooftree}\, .
                \]
            Special version: $\rightsquigarrow^{cp}_{min}= \langle \delta cs 2A1B, \epsilon 1A2B, 2, \{0,1\}:\{1,2\} \rangle$.
        \end{enumerate}
    \item for $i=2, j=1$ and
    \begin{enumerate}
        \item $k=l=1, m=n=2, \#_{min}= \langle \epsilon 1F1G, \delta cs 2F2G, a, \{1,2\}:\{0,1\} \rangle$:

            Lemmata 1 and 3: cf. 5(1)ii, modulo swapping dimensions.\\
            Special version: $\sqcup^{cp}_{min}=\langle \epsilon 1A1B, \delta cs 2A2B,  2, \{1,2\}:\{0,1\} \rangle$ .
        \item $k=l=2, m=n=1, \#_{min}= \langle \epsilon 2F2G, \delta cs 1F1G, a,  \{1\}:\{1\} \rangle$:

            Lemmata 1 and 3: cf. 5(1)i, modulo swapping dimensions.\\
            Special version: $\sqcap_{min}=\langle \epsilon 2A2B,\delta cs 1A1B, 2, \{1\}:\{1\} \rangle$.
        \item $k=n=1, l=m=2, \#_{min}= \langle \epsilon 2F1G, \delta cs 1F2G, a, \{1,2\}:\{0,1\} \rangle$:

            Lemmata 1 and 3: cf. 5(1)iv, modulo swapping dimensions.\\
            Special version: $\leftsquigarrow^{cp}_{min}=\langle \epsilon 2A1B, \delta cs 1A2B,  2, \{1,2\}:\{0,1\} \rangle$.
        \item $k=n=2, l=m=1, \#_{min}= \langle \epsilon 1F2G, \delta cs 2F1G, a, \{1,2\}:\{0,1\} \rangle$:

            Lemmata 1 and 3: cf. 5(1)iii, modulo swapping dimensions.\\
            Special version: $\rightsquigarrow_{min}= \langle \epsilon 1A2B, \delta cs 2A1B, 2, \{1,2\}:\{0,1\} \rangle$.
    \end{enumerate}
\end{enumerate}
\end{enumerate}

\subsection{Equivalent connectives}

In \S5, we found that $150$ formulable operational rule pairs ($\leq 2$ premiss sequents, $\leq 2$ active formulae) are also minimally definable. However, we have so far considered only a subset of these. How should we handle the remaining $\#\in\texttt{PU}$? We find that these rule pairs---let us call them \textit{bloatnectives}---are \textit{equivalent} to those already discussed (even relative to our weak minimal derivability relation) while yielding \textit{non-minimal} inference behaviour. Therefore, bloatnectives do not have distinct semantic clauses.
\begin{corollary}\

    \noindent
    Let $\textbf{Cal}=\{\textsc{Id}, \textsc{Cut} cd, i\#, j\#, i\natural, j\natural\}$.

     \# is \textbf{Cal}-equivalent to $\natural$, for any connectives $\#, \natural$ with equal frameworks and arities such that:
    \begin{enumerate}
        \item \texttt{SP}($i\#)=\gamma kF$, \texttt{SP}($j\#)=\delta cs lF$ and \texttt{SP}($i\natural)=\gamma kF$, \texttt{SP}($j\natural)=\gamma lF$, or
        \item
        \begin{enumerate}
            \item \texttt{SP}($i\#$), \texttt{SP}($i\natural)\in \{ \alpha, \beta kF, \beta kFlG, \gamma 1F2F)\}$ and
            \item \texttt{SP}($j\#$), \texttt{SP}($j\natural)\in \{ \gamma 0, \delta cs 0, \delta cs 1F2F \}$.
        \end{enumerate}
    \end{enumerate}
\end{corollary}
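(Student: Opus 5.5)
The plan is to reduce \textbf{Cal}-equivalence of $\#$ and $\natural$ to the same pattern used for uniqueness in Theorem 5 and Corollary 1. First I show \emph{horizontal} interderivability $[i:\texttt{\#};j:\na]$ and $[i:\na;j:\texttt{\#}]$ in \textbf{Cal}. Then I obtain \emph{vertical} interderivability by a single \textsc{Cut}$cd$ against these sequents, exactly as in the proof of Corollary 1. Since the two connectives share framework and arity, the only thing to check is that each horizontal proof fits the common framework, which holds for the small sequents involved. The work therefore lies in constructing the horizontal proofs for each listed premiss-type pair.

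For case 1, I take $\texttt{SP}(i\#)=\gamma kF$, $\texttt{SP}(j\#)=\delta cs\, lF$ and $\texttt{SP}(i\natural)=\gamma kF$, $\texttt{SP}(j\natural)=\gamma lF$. By Theorem 5 (case 5.1) and Lemma 1 (case 7.2), the relevant constellation is $k\neq l$. To derive $[i:\texttt{\#};j:\na]$, I start from \textsc{Id} $[k:F;l:F]$, apply $j\natural$ to get $[k:F;j:\na]$, and then apply $i\#$. For the converse $[i:\na;j:\texttt{\#}]$, I apply $j\#$ to two copies of $[k:F;l:F]$; the shared context $[k:F]$ makes them merge, as in the proof of Theorem 5, case 7.1. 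This yields $[k:F;j:\texttt{\#}]$, and $i\natural$ then gives the result. The key observation is that $\delta cs\, lF\!-\!lF$ with identical premisses collapses to $\gamma lF$ whenever both premisses are derivable, so the two $j$-rules have the same derivational power.

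For case 2, the $i$-rules are ``axiomatic-like'' and the $j$-rules have no usable active formulae beyond those closed off by \textsc{Id}. I would first show that each $i$-rule in $\{\alpha,\beta kF,\beta kFlG,\gamma 1F2F\}$ yields $[i:\texttt{\#}]$ outright. For $\alpha$ this is immediate. For $\beta$ the $\emptyset$ disjunct suffices. For $\gamma^-1F2F$ I apply the rule to \textsc{Id} $[1:F;2:F]$. Dually, each $j$-rule in $\{\gamma 0,\delta cs 0,\delta cs 1F2F\}$ adds $[j:\texttt{\#}]$ to any derivable context $s$. For $\delta cs 1F2F$ the second premiss must be supplied, which requires either \textsc{Id} instances or a weakening-like use. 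I would handle this last point exactly as in Theorem 5, cases 1.1, 1.2 and 3.5: derive $[i:\texttt{\#}]$, then apply $j\natural$ with context $[i:\texttt{\#}]$, duplicating the premiss for $\delta cs$. The symmetric derivation then gives the converse sequent. Note that this argument never compares $\#$ with $\natural$ rule by rule; it only uses that any $i$-rule produces a context-free principal formula and any $j$-rule accepts it as context.

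The main obstacle is the $j$-rule $\delta cs 1F2F$, and the framework conditions generally. Its premiss $[1:F;2:F]s$ with $s=[i:\texttt{\#}]$ is not an \textsc{Id} instance, so a naive application fails. My first attempt is to obtain $[1:F;2:F;i:\texttt{\#}]$ by taking $F:=\texttt{\#}$ or otherwise exploiting unconstrained principal formulae, since arity and subformula structure are free ($a\in\mathbb{N}^0$). Failing that, I would restrict that subcase to the context-restricted version and argue as in Lemma 1 case 3.5. A further step, checking that the resulting sequents lie within the shared framework, I expect to be routine.
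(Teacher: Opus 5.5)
Your strategy is correct. For disjunct~1 it essentially coincides with the paper, which treats this case as immediate from Corollary~1. The pattern is the same: mixed horizontal derivations followed by one \textsc{Cut}, using that $\delta cs\, lF\!-\!lF$ with two identical premisses has the force of $\gamma lF$.

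For disjunct~2 your route differs. The paper argues at the level of rules:
\begin{itemize}
\item[(a)] every listed $i$-rule acts axiomatically, because its premisses $\emptyset$, $\emptyset+\mathfrak{s}$ and $[1:F;2:F]$ are always provable (the last by \textsc{Id}$^{pr}$, or by Theorem~1 for non-primitive $F$, which you also need when you apply $\gamma^-1F2F$ to `\textsc{Id}');
\item[(b)] every listed $j$-rule collapses to $\gamma 0$, because $s\ \_\ s$ is provable iff $s$ is, and $[1:F;2:F]\ \_\ s$ is provable iff $s$ is.
\end{itemize}
So the rules of $\#$ and $\natural$ are interderivable, and Corollary~1 is invoked. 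You never compare rules. You use only that each $i$-rule yields a context-free $[i:\texttt{\#}]$ and each $j$-rule accepts any derivable context, and you build the two horizontal sequents directly.

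The paper's route gives the stronger rule-level fact that a bloatnective is a notational variant of its minimal connective. That is exactly what its later premiss-cardinality notion of minimality exploits. However, it leaves implicit how Corollary~1, stated for identical rule pairs, transfers to merely equivalent ones. Your route is self-contained and produces precisely the \textsc{Cut}-witnesses needed for vertical interderivability, but it establishes only equivalence of formulae, not of rules.

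Two corrections.

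First, drop the `first attempt' for $\delta cs\, 1F2F$; no choice of $F$ can make it work. In \textbf{Cal} every derivable sequent has exactly one formula in component $i$: \textsc{Id} and the listed $i$-rules produce exactly one, the $j$-rules add nothing there, and \textsc{Cut} preserves the count ($1+1-1$). Hence $[1:F;2:F]s$ and $s$ are never both derivable. Read literally, with premisses $[1:F;2:F]s\ \_\ s$, the rule therefore never fires, and no $\#$-formula ever reaches component $j$. So the claim fails there: writing $\natural$ also for a $\natural$-formula, $[i:C;j:\texttt{\#}]$ is not derivable from $[i:C;j:\natural]$ for primitive $C$.

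The rule with premisses $[1:F;2:F]\ \_\ s$, i.e.\ $\delta cd^- 1F2F$, is the one that actually occurs in \texttt{PU} (items 3.1, 6.1, 8.1, 8.2). It is also the one the paper's proof tacitly uses in (b). Adopt that reading from the outset rather than as a fallback. The model derivation is then Theorem~5, cases 3.1 and 6.1, not Lemma~1, case 3.5, which is a \textsc{Cut}-reduction rather than a horizontal derivation.

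Second, the framework check does depend on the order of rule applications. In $\neg_{min}$'s framework $\{1,2\}:\{0,1\}$, your disjunct-1 derivations pass through $[2:F,\natural]$ and $[2:F,\texttt{\#}]$, which have an empty first component. There you must apply the $i$-rule before the $j$-rule; compare $\mathfrak{p}_4$ with $\mathfrak{p}_1$ in the negation example of \S6.1.
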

\begin{proof} The cases covered by disjunct 1. are immediate from Corollary 1. Those covered by 2. also follow from Corollary 1 and the facts that (a) $i\#$ and  $i\natural$ are equivalent, and (b) $j\#$ and  $j\natural$ are equivalent.

(a) is a consequence of $i\#/i\natural$ always acting axiomatically, regardless of \texttt{SP}: $\emptyset$ is always provable, as is $\emptyset+\mathfrak{s}$, regardless of $\mathfrak{s}$, by virtue of the definition of sequent disjunction $+$. The same holds for $[1:F;2:F]$ for if $F$ is primitive, $[1:F;2:F]$ follows from \textsc{Id}$^{pr}$, and otherwise from the derivability of full \textsc{Id} (Theorem 1).

Regarding (b), $s$ is provable iff $s\ \_ \ s$ is. This follows from the definition of sequent conjunction $\_$ and the fact that our vertical consequence relation (on sequents) is \textit{monotonic} and \textit{contractive} as per Definition 11.\footnote{In fact, our vertical consequence relation is fully Tarskian \cite[cf. also][]{blok2006equivalence, dicher2020variations}.} Moreover, since $[1:F;2:F]$ is \textit{always} provable, $[1:F;2:F]\ \_ \ s$ is provable iff $s$ is. \hfill $\blacksquare$
\end{proof}
Corollary 3 covers exactly the remaining $\#\in\texttt{PU}$. We can think of it as follows: our \textit{bloatnectives} can be grouped into two equivalence classes, defined by disjuncts 1. and 2., respectively. The former maps to our minimal \textit{unary} connectives, i.e. $\neg_{min}/\sim_{min}/\neg^{cv}_{min}$ ($\texttt{SP}(i\#)=\gamma kF, \texttt{SP}(j\#)=\gamma lF$). The latter corresponds to our minimal \textit{nullary} connectives, i.e. $\bot_{min}/\top_{min}$ ($\texttt{SP}(i\#)=\alpha, \texttt{SP}(j\#)=\gamma 0$). Bloatnectives have the same minimal framework as the corresponding minimal connective. The main difference between our minimally meaningful connectives and our bloatnectives is syntactic redundancies of no logical consequence such as redundant or inert branches.\footnote{If we had also allowed for operational rules with more than two premises when setting up this study, we would have also found similar bloatnectives for the remaining minimal binary connectives.}

How should we deal with bloatnectives? There are mainly two options. On one hand, we could consider them minimally meaningful connectives. After all, each bloatnective is minimally definable, and its minimal inference behaviour is different from that of the corresponding minimal connective. Hence, one could argue that they should have their own meaning. For example, comparing the horizontal interderivability proofs for $\langle \alpha, \gamma 0\rangle$ and $\langle \gamma1F2F, \delta cs 0 \rangle$, we can see the difference in $\texttt{ib}$ in the following proofs, as shown by the boxed formulae:
\[
\begin{prooftree}
    \hypo{\emptyset^0}
    \infer1[1\#]{[1:\boxed{\texttt{\#}}]^1}
    \infer1[2$\natural$]{[1:\boxed{\texttt{\#}}; 2:\boxed{\na}]^2}
\end{prooftree} \] 
versus
\[
\begin{prooftree}
    \hypo{\emptyset^0}
    \infer1[\textsc{Id}]{[1:\boxed{F};2:\boxed{F}]^1}
    \infer1[1\#]{[1:\boxed{\texttt{\#}}]^3}
    \hypo{\emptyset^0}
    \infer1[\textsc{Id}]{[1:\boxed{F};2:\boxed{F}]^2}
    \infer1[1\#]{[1:\boxed{\texttt{\#}}]^4}
    \infer2[2$\natural$]{[1:\boxed{\texttt{\#}}; 2:\boxed{\na}]^5}
\end{prooftree}
\]
Moreover, Corollary 3 would no longer hold if we were to adopt a suitably different minimal derivability relation (e.g. one where full \textsc{Id} is unavailable) or, alternatively, a suitably different vertical consequence relation (by altering the proof tree structure). One could contend that, since I-bS makes such fine-grained distinctions, meaning should be individuated at the level of the operational rule differences that bloatnectives make, even if these remain logically irrelevant.

On the other hand, we could exclude bloatnectives from the results of this investigation. Bloatnectives and their corresponding minimal connective have the same minimal inference behaviour \textit{profile}. For example, in both of the above proofs $\texttt{ibp}^\#=\{[1], [1,2]\}$. Hence, the minimal operational rules for both bloatnectives have the same minimal framework (here: $\{1\}:\{0,1\}$). The only difference between them and their corresponding minimal connective is their premiss type. Distinguishing the two, thus,  exclusively rests on our working assumption that all premiss types are of equal size (cf. \S 6.1).

However, this assumption appears to be misplaced in the present context of equivalence: bloatnectives and their corresponding minimal connective play the same inferential role in our minimal derivability relation. Their inferential \textit{use} is interchangeable. If one premiss type, qua equivalence, does not add any new use over another but adds syntactic clutter, we find it plausible to classify the former as `less minimal' than the latter. For instance in our present example, the bloatnective adds an additional `idle' branch to the uniqueness proof of our minimally meaningful connective. The premiss type of the former thus yields more-than-minimal inference behaviour.

We can capture this idea as follows:
\begin{definition}[premiss type minimality, via $\texttt{ib}$-cardinality]\

    \noindent Let $\mathfrak{C}$ be a set of \textbf{Cal}-equivalent operational rule pairs of equal arity and framework size.
    
    We call the premiss type of some $\#\in \mathfrak{C}$ \textit{minimal} iff its minimal inference behaviour has the lowest cardinality, among all $\natural\in \mathfrak{C}$, i.e. $|\texttt{ib}^\#_\mathfrak{p}|=min(\{|\texttt{ib}^{\natural}_\mathfrak{p}|\mid\natural\in \mathfrak{C}\})$.
\end{definition}
Hence, we link premiss type-minimality to the length of the inference behaviour list, i.e. the minimality of the minimal inference behaviour itself. Although this `official' definition thus matches our theoretical setting, it is somewhat tedious to apply. 

Fortunately, as we have seen in the proof of Corollary 3, bloatnective rules with \texttt{SP}($i\#$) are \textbf{Cal}-equivalent to rules with \texttt{SP}($i\natural$) for $i \in \{1,2\}$. Hence, an equivalent definition at the level of a single operational rule suffices for our present purposes:\footnote{One could extend this notion of minimality further to account for differences in active formula numbers; e.g. if two rules have the same premiss cardinality, the one with fewer active formulae is more minimal. However, our present data does not require such additions.}
\begin{definition}[premiss type minimality, via premiss cardinality]\

    \noindent Let $\mathfrak{R}$ be a set of \textbf{Cal}-equivalent operational rules of equal arity and framework size.
    
    We call the premiss type of rule \begin{prooftree} \hypo{\mathfrak{S}} \infer1{\mathfrak{s}} \end{prooftree} \textit{minimal} iff its multiset of premiss sequents $\mathfrak{S}$ has the \textit{lowest cardinality} of all rules in $\mathfrak{R}$, i.e. $|\mathfrak{S}|=min(\{|\mathfrak{S}^\prime|\mid \begin{prooftree} \hypo{\mathfrak{S}^\prime} \infer1{\mathfrak{s}^\prime}\end{prooftree} \in \mathfrak{R}\})$).
\end{definition}
This definition is straightforward to apply. Instead of examining all definability proofs manually to determine minimal inference behaviour \textit{a posteriori}, the simplicity of our minimal derivability relation allows premiss type minimality to be read off the rules \textit{a priori}. If both operational rules have minimal premiss types (which is true for both equivalence classes in our dataset), the premiss type of their corresponding connective is also minimal.

Yet another---and perhaps more intuitive---way to understand premiss type minimality follows if we conceive of the premiss part of a rule not as a multiset of sequents but as a single sequent. This yields:
\begin{definition}[premiss type minimality, via premiss sequent complexity]\

    \noindent Let $\mathfrak{R}$ be a set of \textbf{Cal}-equivalent operational rules of equal arity and framework size.
    
    We call the premiss type of rule \begin{prooftree} \hypo{\mathfrak{s_1}} \infer1{\mathfrak{s_2}} \end{prooftree} \textit{minimal} iff $(\mathfrak{s}_1)$ has the lowest sequent complexity of all rules in $\mathfrak{R}$, i.e. $c(\mathfrak{s})=min(\{ c(\mathfrak{s}_1^\prime)\mid \, \begin{prooftree} \hypo{\mathfrak{s}_1^\prime} \infer1{\mathfrak{s}_2^\prime}\end{prooftree} \in \mathfrak{R}\})$.
\end{definition}
The notion of complexity $c$ we apply here corresponds to the standard complexity measure for formulae albeit applied at the level sequents, i.e. to objects of the vertical rather than the horizontal derivability relation. In addition to `primitive' sequents of form $\mathfrak{s}$ and our binary sequent operators, $+$ and $\_$, we must, therefore, consider the possibility of the \textit{absence} of sequents.
\begin{definition}[sequent complexity]\

    \noindent The complexity of a sequent is a function $c: \texttt{S}\mapsto n\in \mathbb{N}$, such that
    \begin{itemize}
        \item $c(\alpha)=0$ (cf. `$\emptyset$'),
        \item $c(\gamma)=1$ (cf. `$\mathfrak{s}$'),
        \item $c(\beta)=1+c(\alpha)+c(\gamma)=2$ (cf. `$\emptyset + \mathfrak{s}$'),
        \item $c(\delta cs)=c(\delta cd)=c(\epsilon)=1+c(\gamma)+c(\gamma)=2$ (cf. `$\mathfrak{s}\ \_ \ \mathfrak{s}$'/`$\mathfrak{s} + \mathfrak{s}$').
    \end{itemize}
\end{definition}
Whichever perspective on minimal premiss types one prefers, each definition picks out our minimal connective rather than any bloatnective for both equivalence classes in Corollary 3. For equivalence class 1., $c([lF]s\_ s)>c([lF]s)$. Hence, $\texttt{SP}(j\natural)$ is minimal in $\{\texttt{SP}(j\#), \texttt{SP}(i\natural)\}$. \textit{A fortiori}, as $\texttt{SP}(i\#)=\texttt{SP}(i\natural)$, \texttt{SP}($\natural$) is minimal overall. Similarly for equivalence class 2., $\texttt{SP}(\alpha)$ is minimal in 
$\{ \alpha, \beta kF, \beta kFlG, \gamma 1F2F)\}$ and $\texttt{SP}(\gamma 0)$ is minimal in $\{ \gamma 0, \delta cs 0, \delta cs 1F2F \}$. Hence, the connectives $\#$ with premiss type $\texttt{SP}(i\#)=\alpha, \texttt{SP}(j\#)= \gamma 0$ are minimal and have a semantic clause.

We sympathise with both bloatnective policy options. Option 1, \textit{bloatnective inclusion}, is theoretically simpler as it does not require us to add a notion of premiss type minimality while nicely fitting our modular substructural picture. However, option 2, \textit{bloatnective exclusion}, appears to yield deductively more plausible results while the theoretical additions it requires fit nicely with the overall theory. For definiteness, we adopt the latter.

Having analysed our minimal definability proofs by measuring inference behaviour, we find exactly $21$ connectives with a semantic clause. Let us briefly summarise our results before we discuss some main takeaways.

\section{Results: the minimally meaningful connectives in two-dimensional sequent calculi}

In this section, we summarise our results. The following tables give the semantic clauses for our $21$ minimally meaningful connectives. Only these connectives receive distinct semantic clauses in Nagler's \cite{nagler2026measuring} I-bS, among all $10,816$ operational rule pairs one can formulate in two-dimensional sequent calculi ($\leq 2$ premiss sequents, $\leq 2$ active formulae). These results are relative to our underlying notion of definability, i.e. Belnap-style conservativity and uniqueness, as well as the minimal derivability relation of $\{ \textsc{Id}, \textsc{Cut}^{mult}\}$, and the criteria for rule minimality discussed in \S 6.

\subsection{Explanation of notation}
The component-wise table notation is loosely inspired by Smullyan \cite{smullyan1995first}. `$\#$' denotes the connective under study. \textit{Rows} indicate the position of active formulae in the premiss sequent(s) of the \textit{left} $\#$-rule $L\#$; \textit{columns} give the position of active formulae in the premiss sequent(s) of the \textit{right} $\#$-rule $R\#$. We give a legend for the former in the leftmost column and for the latter in the topmost row. For readability, we indicate active formula positions as uncontextualised sequents using standard notation.

The remaining cells show the \textit{minimal frameworks} for each $\langle L\#,R\# \rangle$-pair, i.e. each semantic clause. For connectives with two premiss sequents, we indicate whether these are arranged in a context-sharing ($cs$) or -differentiating ($cd$) manner. We note in brackets if the subformula-connectedness-adjusted special version of the connective corresponds to commonly used connectives, using the following notation:\footnote{The connective names and symbols for our binary connectives are borrowed from \cite{paoli2007implicational}.} `$\bot$' denotes bottom, `$\top$' denotes top, `$\neg$' denotes one kind of negation, `$\sim$' denotes another kind of negation with the same premiss type as $\neg$ but a different minimal framework,\footnote{We say more about the relation between the two negations in \S 8.} `$\sqcap$' (read: `meet') denotes lattice/additive conjunction, `$\otimes$' (`fusion') denotes group/multiplica\-tive conjunction, `$\sqcup$' (`join') denotes lattice/additive disjunction, `$\oplus$' (`fission') denotes group/multiplicative disjunction, `$\rightsquigarrow$' (`right squiggle') denotes lattice/additive (right-)implication, `$\rightarrow$' (`right arrow') denotes group/multiplicative (right-)implication, `$\leftsquigarrow$' (`left squiggle') denotes lattice converse/left-implication, `$\leftarrow$' (`left arrow') denotes group converse/left-implication. For any connective $\natural$, `$\natural^{cv}$' denotes its \textit{converse} (the active formulae are on the other side of $\vdash$), `$\natural^{cp}$' denotes its \textit{contrapositive} (the premiss sequents of the $L\natural / R\natural$ rules are swapped), and `$\natural^{iv}$' denotes its \textit{inverse} (the converse of its contrapositive). If there are two ways to name the same connective using this scheme, we give both names.

Let us give two examples showing how to read the following tables: 
\begin{example}\

    \noindent In Table 15, the following semantic clause gives the meaning of the converse/contrapositive of $\neg$: \[
    \Biggl\langle \begin{prooftree} \hypo{A\vdash C} \infer1{\# A \vdash C} \end{prooftree}\, , \quad \begin{prooftree}\hypo{C\vdash A} \infer1{C \vdash \# A} \end{prooftree} \Biggl\rangle \, .
    \] 
\end{example}
\begin{example}\

    \noindent In Table 16, the following semantic clause defines the meaning of $\otimes$, i.e. group conjunction: \[\Biggl\langle \begin{prooftree} \hypo{A, B \vdash C} \infer1{ A \# B \vdash C} \end{prooftree}\, , \quad \begin{prooftree}\hypo{C\vdash A} \hypo{D \vdash \# A} \infer2{C, D \vdash A \# B} \end{prooftree} \Biggl\rangle\, .\]
\end{example}
\subsection{Result tables}
\begin{threeparttable}
\begin{longtable}{|c|c|c|}
\caption{minimal connective rules, no active formulae}
\endfirsthead
\endhead
\hline
 &&\\
L$\#$ $\backslash$ R$\#$ & $\emptyset$ & $\vdash$  \\
&&\\ \hline
&\cellcolor{gray!25}&\\
$\emptyset$ & \cellcolor{gray!25} & $ \{ 1\} : \{0,  1 \}$ \\
& \cellcolor{gray!25} & $(\bot=\top^{cv/cp}=\bot^{iv})$  \\ \hline
 &  &\cellcolor{gray!25} \\
 $\vdash$ & $\{ 0, 1 \} : \{ 1\}$ & \cellcolor{gray!25}\\
 & $(\top = \bot^{cv/cp}=\top^{iv})$ & \cellcolor{gray!25} \\ \hline
\end{longtable}
\end{threeparttable}
\begin{threeparttable}
\begin{longtable}{|c|c|c|}
\caption{minimal connective rules, one active formula}
\endfirsthead
\endhead
\hline&&\\
L$\#$ $\backslash$ R$\#$ & $F \vdash $ & $\vdash F$ \\ &&\\ \hline
&\cellcolor{gray!25}&\\
$F \vdash $ & \cellcolor{gray!25} & $ \{ 1\} : \{  1 \}$\\ & \cellcolor{gray!25} & $(\neg^{cv}/\sim^{cv}=\neg^{cp}/\sim^{cp})$  \\ \hline
 & $\{ 0, 1 \} : \{ 1, 2\}$\tnote{\dag} &\cellcolor{gray!25} \\ 
 & $(\sim\ =\ \sim^{iv})$  & \cellcolor{gray!25} \\ 
 $\vdash F$ & or & \cellcolor{gray!25} \\
 & $ \{ 1, 2\} : \{ 0, 1 \}$\tnote{\dag} & \cellcolor{gray!25} \\ & $(\neg = \neg^{iv})$ & \cellcolor{gray!25} \\ \hline
\end{longtable}
\begin{tablenotes}
\centering
\item $F\in \{A,B\}$
\item[\dag] First proven in \cite{dicher2016proof}.
\end{tablenotes}
\end{threeparttable}
\begin{sidewaystable}[htbp]
\caption{minimal connective rules, two active formulae}
\begin{threeparttable}
{\tiny
\begin{longtable}{|c|c|c|c|c|c|c|c|c|c|c|c|c|}
\endfirsthead
\endhead
\hline  &&&&&&&&&&&& \\ L$\#$ $\backslash$ R$\#$ & $F, G \vdash$ & $F \vdash G$ & $F \vdash G$ & $ \vdash F, G $ & $F \vdash$ $G \vdash$ & $F \vdash$ $ \vdash G$ & $ \vdash F$ $G \vdash$ & $ \vdash F$ $ \vdash G$ & $F \vdash$ or $G \vdash$ & $F \vdash$ or $ \vdash G$ & $ \vdash F$ or $G \vdash$ & $ \vdash F$ or $ \vdash G$ \\   &&&&&&&&&&&& \\ \hline
& \cellcolor{gray!25} \cellcolor{gray!25}  & \cellcolor{gray!25}  &  \cellcolor{gray!25}  & \cellcolor{gray!25}  & \cellcolor{gray!25}  & \cellcolor{gray!25}  & \cellcolor{gray!25}  & $cd$ & \cellcolor{gray!25}  & \cellcolor{gray!25}  & \cellcolor{gray!25}  & \cellcolor{gray!25} \\
$F, G \vdash$ & \cellcolor{gray!25}  & \cellcolor{gray!25}  & \cellcolor{gray!25}  & \cellcolor{gray!25}  &  \cellcolor{gray!25}  & \cellcolor{gray!25}  & \cellcolor{gray!25}  &  $ \{ 1, 2\} : \{ 1 \}$ & \cellcolor{gray!25} & \cellcolor{gray!25}  & \cellcolor{gray!25}  & \cellcolor{gray!25} \\
& \cellcolor{gray!25} \cellcolor{gray!25}  & \cellcolor{gray!25}  &  \cellcolor{gray!25}  & \cellcolor{gray!25}  & \cellcolor{gray!25}  & \cellcolor{gray!25}  & \cellcolor{gray!25}  & $(\otimes = \oplus^{iv})$ & \cellcolor{gray!25}  & \cellcolor{gray!25}  & \cellcolor{gray!25}  & \cellcolor{gray!25} \\  \hline
&  \cellcolor{gray!25} &  \cellcolor{gray!25} &  \cellcolor{gray!25} &  \cellcolor{gray!25} &  \cellcolor{gray!25} &  \cellcolor{gray!25} & $cd$ & \cellcolor{gray!25} &\cellcolor{gray!25}  & \cellcolor{gray!25} & \cellcolor{gray!25} & \cellcolor{gray!25} \\
$F \vdash G$ & \cellcolor{gray!25} & \cellcolor{gray!25} & \cellcolor{gray!25} & \cellcolor{gray!25} & \cellcolor{gray!25} & \cellcolor{gray!25} & $\{ 1 \} : \{ 1, 2\}$  & \cellcolor{gray!25} & \cellcolor{gray!25} & \cellcolor{gray!25} & \cellcolor{gray!25} & \cellcolor{gray!25} \\  
&  \cellcolor{gray!25} &  \cellcolor{gray!25} &  \cellcolor{gray!25} &  \cellcolor{gray!25} &  \cellcolor{gray!25} &  \cellcolor{gray!25} & ($\rightarrow^{cp}= \leftarrow^{iv}$) & \cellcolor{gray!25} &\cellcolor{gray!25}  & \cellcolor{gray!25} & \cellcolor{gray!25} & \cellcolor{gray!25} \\  \hline
 & \cellcolor{gray!25} & \cellcolor{gray!25} & \cellcolor{gray!25} & \cellcolor{gray!25} & \cellcolor{gray!25} & $cd$ & \cellcolor{gray!25} & \cellcolor{gray!25} & \cellcolor{gray!25} & \cellcolor{gray!25} & \cellcolor{gray!25} & \cellcolor{gray!25} \\ 
$F \vdash G$ & \cellcolor{gray!25} & \cellcolor{gray!25} & \cellcolor{gray!25} & \cellcolor{gray!25} & \cellcolor{gray!25} &  $\{ 1 \} : \{ 1, 2\}$ & \cellcolor{gray!25} &  \cellcolor{gray!25} & \cellcolor{gray!25} & \cellcolor{gray!25} & \cellcolor{gray!25} & \cellcolor{gray!25}   \\ 
 & \cellcolor{gray!25} & \cellcolor{gray!25} & \cellcolor{gray!25} & \cellcolor{gray!25} & \cellcolor{gray!25} & $(\rightarrow^{iv} = \leftarrow^{cp})$ & \cellcolor{gray!25} & \cellcolor{gray!25} & \cellcolor{gray!25} & \cellcolor{gray!25} & \cellcolor{gray!25} & \cellcolor{gray!25} \\ \hline
& \cellcolor{gray!25}  & \cellcolor{gray!25} \cellcolor{gray!25}  &  \cellcolor{gray!25}    &\cellcolor{gray!25}    & $cd$  & \cellcolor{gray!25}  & \cellcolor{gray!25}  & \cellcolor{gray!25} & \cellcolor{gray!25}  & \cellcolor{gray!25}  & \cellcolor{gray!25}  & \cellcolor{gray!25} \\
$ \vdash F, G $ & \cellcolor{gray!25} & \cellcolor{gray!25}  &  \cellcolor{gray!25}    & \cellcolor{gray!25} & $ \{0,  1\} : \{  1, 3 \}$  & \cellcolor{gray!25} & \cellcolor{gray!25} & \cellcolor{gray!25} & \cellcolor{gray!25}   & \cellcolor{gray!25}  & \cellcolor{gray!25}  & \cellcolor{gray!25}   \\  
& \cellcolor{gray!25}  & \cellcolor{gray!25} \cellcolor{gray!25}  &  \cellcolor{gray!25}    &\cellcolor{gray!25}    & $(\otimes^{cv} = \oplus^{cp})$  & \cellcolor{gray!25}  & \cellcolor{gray!25}  & \cellcolor{gray!25} & \cellcolor{gray!25}  & \cellcolor{gray!25}  & \cellcolor{gray!25}  & \cellcolor{gray!25} \\  \hline
& \cellcolor{gray!25} & \cellcolor{gray!25} & \cellcolor{gray!25} & $cd$ & \cellcolor{gray!25} & \cellcolor{gray!25} & \cellcolor{gray!25} & \cellcolor{gray!25} & \cellcolor{gray!25} & \cellcolor{gray!25} & \cellcolor{gray!25} & $cs$ \\
$F \vdash$ $G \vdash$ & \cellcolor{gray!25} & \cellcolor{gray!25} & \cellcolor{gray!25} & $\{ 1 \} : \{ 1, 2\}$\tnote{\dag} & \cellcolor{gray!25} & \cellcolor{gray!25} & \cellcolor{gray!25} & \cellcolor{gray!25}  & \cellcolor{gray!25} & \cellcolor{gray!25} & \cellcolor{gray!25} & $ \{ 1\} : \{  1 \}$\tnote{\dag} \\  
& \cellcolor{gray!25} & \cellcolor{gray!25} & \cellcolor{gray!25} & $(\otimes^{iv}=\oplus)$ & \cellcolor{gray!25} & \cellcolor{gray!25} & \cellcolor{gray!25} & \cellcolor{gray!25} & \cellcolor{gray!25} & \cellcolor{gray!25} & \cellcolor{gray!25} & $(\sqcap^{iv} = \sqcup)$  \\  \hline
  & \cellcolor{gray!25} & \cellcolor{gray!25} & $cd$ & \cellcolor{gray!25} & \cellcolor{gray!25} & \cellcolor{gray!25} & \cellcolor{gray!25} & \cellcolor{gray!25}  & \cellcolor{gray!25} &\cellcolor{gray!25}  & $cs$ & \cellcolor{gray!25} \\
 $F \vdash$ $ \vdash G$  &\cellcolor{gray!25}  & \cellcolor{gray!25} & $ \{ 1, 2\} : \{ 1 \}$ & \cellcolor{gray!25} & \cellcolor{gray!25} & \cellcolor{gray!25} & \cellcolor{gray!25} & \cellcolor{gray!25}  & \cellcolor{gray!25} & \cellcolor{gray!25} & $\{ 0, 1 \} : \{ 1, 2\}$ & \cellcolor{gray!25} \\  
 & \cellcolor{gray!25} & \cellcolor{gray!25} & $(\rightarrow^{cv} = \leftarrow)$ & \cellcolor{gray!25} & \cellcolor{gray!25} & \cellcolor{gray!25} & \cellcolor{gray!25} & \cellcolor{gray!25}  & \cellcolor{gray!25} & \cellcolor{gray!25} & $(\rightsquigarrow^{cv} = \leftsquigarrow)$ & \cellcolor{gray!25} \\  \hline
  & \cellcolor{gray!25} & $cd$ & \cellcolor{gray!25} & \cellcolor{gray!25} & \cellcolor{gray!25} & \cellcolor{gray!25} & \cellcolor{gray!25} & \cellcolor{gray!25}  & \cellcolor{gray!25} & $cs$ & \cellcolor{gray!25}  & \cellcolor{gray!25}  \\
 $ \vdash F$ $G \vdash$ & \cellcolor{gray!25} & $ \{ 1, 2\} : \{ 1 \}$\tnote{\ddag} & \cellcolor{gray!25} & \cellcolor{gray!25} & \cellcolor{gray!25} & \cellcolor{gray!25} & \cellcolor{gray!25} & \cellcolor{gray!25}  & \cellcolor{gray!25}  & $\{ 0, 1 \} : \{ 1, 2\}$  &  \cellcolor{gray!25} &  \cellcolor{gray!25} \\ 
 & \cellcolor{gray!25} & $(\rightarrow = \leftarrow^{cv})$ & \cellcolor{gray!25} & \cellcolor{gray!25} & \cellcolor{gray!25} & \cellcolor{gray!25} & \cellcolor{gray!25} & \cellcolor{gray!25}  &  \cellcolor{gray!25} & $(\rightsquigarrow = \leftsquigarrow^{cv})$ &   \cellcolor{gray!25}&  \cellcolor{gray!25} \\  \hline
  & $cd$ & \cellcolor{gray!25} & \cellcolor{gray!25} & \cellcolor{gray!25} & \cellcolor{gray!25} & \cellcolor{gray!25} & \cellcolor{gray!25} & \cellcolor{gray!25}  & $cs$ & \cellcolor{gray!25} & \cellcolor{gray!25} & \cellcolor{gray!25} \\
 $ \vdash F$ $ \vdash G$ & $ \{  1, 3 \} : \{ 0, 1\}$ & \cellcolor{gray!25} & \cellcolor{gray!25} & \cellcolor{gray!25} & \cellcolor{gray!25} & \cellcolor{gray!25} & \cellcolor{gray!25} & \cellcolor{gray!25}  & $\{ 0, 1 \} : \{ 1, 2\}$  & \cellcolor{gray!25} &\cellcolor{gray!25}  & \cellcolor{gray!25} \\  
 & $(\otimes^{cp} = \oplus^{cv})$ & \cellcolor{gray!25} & \cellcolor{gray!25} & \cellcolor{gray!25} & \cellcolor{gray!25} & \cellcolor{gray!25} & \cellcolor{gray!25} & \cellcolor{gray!25}  & $(\sqcap^{cp} = \sqcup^{cv})$ & \cellcolor{gray!25} & \cellcolor{gray!25} & \cellcolor{gray!25} \\  \hline
  & \cellcolor{gray!25} & \cellcolor{gray!25} & \cellcolor{gray!25} & \cellcolor{gray!25} & \cellcolor{gray!25} & \cellcolor{gray!25} & \cellcolor{gray!25} & $cs$  & \cellcolor{gray!25} & \cellcolor{gray!25}  & \cellcolor{gray!25} & \cellcolor{gray!25} \\
 $F \vdash$ or $G \vdash$ & \cellcolor{gray!25} & \cellcolor{gray!25} & \cellcolor{gray!25} & \cellcolor{gray!25} & \cellcolor{gray!25} & \cellcolor{gray!25} & \cellcolor{gray!25} & $ \{ 1\} : \{  1 \}$\tnote{\dag} & \cellcolor{gray!25} & \cellcolor{gray!25} & \cellcolor{gray!25} & \cellcolor{gray!25} \\  
& \cellcolor{gray!25} & \cellcolor{gray!25} & \cellcolor{gray!25} & \cellcolor{gray!25}  & \cellcolor{gray!25} & \cellcolor{gray!25} &\cellcolor{gray!25}  & $(\sqcap = \sqcup^{iv})$ & \cellcolor{gray!25} & \cellcolor{gray!25} & \cellcolor{gray!25} & \cellcolor{gray!25} \\  \hline
  & \cellcolor{gray!25} & \cellcolor{gray!25} & \cellcolor{gray!25} & \cellcolor{gray!25} & \cellcolor{gray!25} & \cellcolor{gray!25} & $cs$ & \cellcolor{gray!25}  & \cellcolor{gray!25} &\cellcolor{gray!25}  &\cellcolor{gray!25}  & \cellcolor{gray!25} \\
 $F \vdash$ or $ \vdash G$ & \cellcolor{gray!25} & \cellcolor{gray!25} & \cellcolor{gray!25} & \cellcolor{gray!25}  & \cellcolor{gray!25} & \cellcolor{gray!25} & $ \{ 1, 2\} : \{ 0, 1 \}$ & \cellcolor{gray!25}  & \cellcolor{gray!25} &\cellcolor{gray!25}  &\cellcolor{gray!25}  & \cellcolor{gray!25}  \\  
& \cellcolor{gray!25} & \cellcolor{gray!25} & \cellcolor{gray!25} & \cellcolor{gray!25}  & \cellcolor{gray!25}  & \cellcolor{gray!25} &  $(\rightsquigarrow^{cp} = \leftsquigarrow^{iv})$ & \cellcolor{gray!25} & \cellcolor{gray!25}  & \cellcolor{gray!25} &\cellcolor{gray!25}  &\cellcolor{gray!25}   \\  \hline
  & \cellcolor{gray!25} & \cellcolor{gray!25} & \cellcolor{gray!25} & \cellcolor{gray!25} & \cellcolor{gray!25} & $cs$ & \cellcolor{gray!25} & \cellcolor{gray!25}  & \cellcolor{gray!25} & \cellcolor{gray!25} & \cellcolor{gray!25} & \cellcolor{gray!25} \\
 $ \vdash F$ or $G \vdash$ & \cellcolor{gray!25} & \cellcolor{gray!25} & \cellcolor{gray!25} & \cellcolor{gray!25}  & \cellcolor{gray!25} & $ \{ 1, 2\} : \{ 0, 1 \}$ & \cellcolor{gray!25} & \cellcolor{gray!25} & \cellcolor{gray!25} & \cellcolor{gray!25} & \cellcolor{gray!25} & \cellcolor{gray!25} \\  
& \cellcolor{gray!25} & \cellcolor{gray!25} & \cellcolor{gray!25} & \cellcolor{gray!25}  & \cellcolor{gray!25} & $(\rightsquigarrow^{iv} = \leftsquigarrow^{cp})$ & \cellcolor{gray!25} & \cellcolor{gray!25} & \cellcolor{gray!25} & \cellcolor{gray!25} & \cellcolor{gray!25} & \cellcolor{gray!25}  \\  \hline
  & \cellcolor{gray!25} & \cellcolor{gray!25} & \cellcolor{gray!25} & \cellcolor{gray!25} & $cs$ & \cellcolor{gray!25} & \cellcolor{gray!25} & \cellcolor{gray!25}  &  \cellcolor{gray!25} & \cellcolor{gray!25} & \cellcolor{gray!25} & \cellcolor{gray!25} \\
 $ \vdash F$ or $ \vdash G$ & \cellcolor{gray!25} & \cellcolor{gray!25} & \cellcolor{gray!25} & \cellcolor{gray!25} & $ \{ 1, 2\} : \{ 0, 1 \}$ & \cellcolor{gray!25} & \cellcolor{gray!25} & \cellcolor{gray!25}  &  \cellcolor{gray!25} & \cellcolor{gray!25} & \cellcolor{gray!25} & \cellcolor{gray!25}  \\ 
& \cellcolor{gray!25} & \cellcolor{gray!25} & \cellcolor{gray!25} & \cellcolor{gray!25} & $(\sqcap^{cv} = \sqcup^{cp})$ & \cellcolor{gray!25} & \cellcolor{gray!25} & \cellcolor{gray!25}  &  \cellcolor{gray!25} & \cellcolor{gray!25} & \cellcolor{gray!25} & \cellcolor{gray!25}  \\  \hline
\end{longtable}}
\begin{tablenotes}
{\tiny
\item $F, G \in \{A,B\}$
\item[\dag] First proven in \cite{dicher2016proof}.
\item[\ddag] First proven in \cite{nagler2026measuring}.}
\end{tablenotes}
\end{threeparttable}
\end{sidewaystable}

\section{Discussion: meaning-equivalence across logics and its limits}

In this section, we briefly discuss the philosophical implications of our results by relating them to the overarching inferentialist-minimalist story of I-bS. We recall from \S 2 that the \textit{meaning} of a connective \# is its \textit{use}. Specifically, it is the use of \# required to demonstrate that \# \textit{can be used} in a \textit{minimal context of inference}. The \textit{semantic clause} of \# formally encodes this \textit{meaning}-defining use in terms of \#'s \textit{inference behaviour} relative to \#'s \textit{definability} in a \textit{minimal derivability relation}.

\subsection{Interrelations of connective meanings} 

Let us first illustrate the meaning-theoretic implications of our positive results by giving inference-behaviour semantics for four selected calculi. Afterwards, we will discuss what it means that \textit{only these} connectives have semantic clauses.
\begin{example}[\textbf{LL}]\

    \noindent Consider \cite{paoli2007implicational}'s fragment of (classical sentential) linear logic, the sequent calculus \textbf{LL}, consisting of the structural rules $\textsc{Id}, \textsc{Cut}cd$ and operational rules for the connectives:
    \begin{itemize}
        \item $\sqcap=\langle\epsilon 1A-1B, \delta cs 2A-2B, 2, \mathbb{N}^0:\mathbb{N}^0\rangle$,
        \item $\otimes=\langle\gamma 1A1B, \delta cd 2A-2B, 2, \mathbb{N}^0:\mathbb{N}^0\rangle$,
        \item $\sqcup=\langle \delta cs 1A-1B, \epsilon 2A-2B, 2, \mathbb{N}^0:\mathbb{N}^0\rangle$,
        \item $\oplus=\langle \delta cd 1A-1B, \gamma 2A2B, 2, \mathbb{N}^0:\mathbb{N}^0\rangle$,
        \item $\rightsquigarrow=\langle \delta cs 2A-1B, \epsilon 1A-2B, 2, \mathbb{N}^0:\mathbb{N}^0\rangle$,
        \item $\rightarrow=\langle \delta cd 2A-1B, \gamma 1A2B, 2, \mathbb{N}^0:\mathbb{N}^0\rangle$,
        \item $\bot=\langle \alpha, \gamma 0, 0, \mathbb{N}^0:\mathbb{N}^0\rangle$,
        \item $\top=\langle \gamma 0, \alpha, 0, \mathbb{N}^0:\mathbb{N}^0\rangle$.
    \end{itemize}
    We now obtain I-bS \textit{semantic clauses} for the \textbf{LL}-connectives, relative to a Belnap-style notion of definability and a minimal derivability relation of $\{\textsc{Id}, \textsc{Cut}cd\}$.\footnote{\cite{paoli2007implicational}'s \textbf{LL} does not contain exponentials. However, even if it did, we would not consider them connectives given our definition of operational rules (Definition 3).} As \textbf{LL}-connectives are formulable in two-dimensional sequent calculi ($\leq$ 2 premiss sequents, $\leq$ 2 active formulae), we have already measured their minimal inference behaviour in this study, yielding the following semantic clauses (for simplicity, referred to in terms of their special versions): $\sqcap_{min}$, $\otimes_{min}$, $\sqcup_{min}$, $\oplus_{min}$, $\rightsquigarrow_{min}$, $\rightarrow_{min}$, $\bot_{min}$, $\top_{min}$.

    This measurement constitutes what we called a \textit{semantic reduction} in \S2: using I-bS, we have isolated the \textbf{LL}-connectives in a partitioned substructural \textit{kernel} within \textbf{LL}. Inside the kernel, the minimal derivability relation of $\{\textsc{Id}, \textsc{Cut}cd\}$ is available \textit{globally}, i.e. across partitions. However, each \textbf{LL}-connective is isolated in a \textit{partition} of the kernel in the form of its \textit{semantic clause}. Hence, semantic clauses do not \textit{interact} in the \textbf{LL}-kernel. Instead, it only encompasses theorems of $\{\textsc{Id}, \textsc{Cut}cd, 1\#_{min}, 2\#_{min}\}$, for exactly one \textbf{LL}-connective \# per partition. No theorem in the kernel contains more than one connective. The \textbf{LL}-kernel, therefore, yields a different and weaker derivability relation than full \textbf{LL}, even though both have the same structural rules. In this way, I-bS avoids inter-dependent connective meanings which could not be determined without reference to the full logic, as is the case for B-eS and P-tV.

    The reverse of reduction is \textit{semantic emergence}. To obtain the full derivability relation of \textbf{LL} from its semantic kernel, we departition it to introduce the usual connective interactions. Additionally, we unconstrain frameworks by applying the \textsc{Max}(imise) rule to any semantic clause $\#_{min}=\langle\texttt{SP}(\#_{min}), a, L:R \rangle$:
    \[
    \begin{prooftree}
        \hypo{\langle\texttt{SP}(\#_{min}), a, L:R \rangle}
        \infer1[\textsc{Max}]{\langle\texttt{SP}(\#_{min}), a, \mathbb{N}^0:\mathbb{N}^0 \rangle}
    \end{prooftree}\, . 
    \]
    Note that \textsc{Max} is a \textit{meta-structural} rule. It alters the framework of operational rule pairs, hence transforming the (set-theoretic) structure of the calculus. In contrast to (object-)structural rules such as \textsc{Cut}, which encode structural transformations on \textit{sequents}, meta-structural rules transform the structure of \textit{rules}.

    In short, semantic reduction \textit{partitions} connectives and \textit{minimises} their frameworks, while semantic emergence \textit{departitions} connectives and \textit{maximises} their frameworks (Figure 7). One can think of these meta-structural and, hence, meta-logical validation processes in I-bS as rough equivalents of soundness and completeness proofs in orthodox semantic practice, such as M-tS, P-tV or B-eS. In this traditional picture, semantic justification is statically grounded in the sound and complete match between the proof-theoretic base calculus and the interpreting model- or proof-theoretic structure. In contrast, I-bS's epistemic assurance is \textit{procedural}: the semantic justification lies not in the format of the interpreting structure---the semantic kernel---but in the process of reduction, i.e. the \textit{measurement of minimal inference behaviour}. In this way, I-bS tries to live up to its inferentialist promise of connecting meaning to actual connective use.\begin{figure}
    \centering
    \caption{I-bS for \textbf{LL}}
    \begin{tikzpicture}
        \draw (6.5,0) node[anchor=north]{\textbf{LL}};
        \draw (2,0) -- (11,0) -- (11,4) -- (2,4) -- (2,0);
        \draw[<-] (6.25,1) -- (6.25,3);
        \draw (6.2,2.5) node[anchor=east, fill=white]{reduction: };
        \draw (6.2,2) node[anchor=east, fill=white]{\small $\bullet$ partition};
        \draw (6.25,1.5) node[anchor=east, fill=white]{\small $\bullet$ minimise};
        \draw[->] (6.75,1) -- (6.75,3);
        \draw (6.8,2.5) node[anchor=west, fill=white]{emergence:};
        \draw (6.8,2) node[anchor=west, fill=white]{\small $\bullet$ departition};
        \draw (6.8,1.5) node[anchor=west, fill=white]{\small $\bullet$ maximise (\textsc{Max})};
        \draw (6.5,3) node[anchor=south]{$\vdash_{\textbf{LL}}$};
        \draw (6.5,0.5) ellipse (4.45cm and 0.45cm);
        \draw (6.5,0.5) node{$\sqcap_{min} | \otimes_{min} | \sqcup_{min} | \oplus_{min} | \rightsquigarrow_{min} | \rightarrow_{min} | \bot_{min} | \top_{min}$};
    \end{tikzpicture}
\end{figure}
\end{example}
\begin{example}[\textbf{LK}$_{full}$]\

    \noindent Now, consider the sentential fragment of Gentzen's classical \textbf{LK} \cite{gentzen1935auntersuchungen, gentzen1935buntersuchungen}, with both lattice (`additive') and group (`multiplicative') versions of the connectives and added rules for $\bot$ and $\top$. We call this calculus $\textbf{LK}_{full}$. In addition to the structural rules of $\textsc{Id}, \textsc{Cut}cd$, $\textbf{LK}_{full}$ contains \textsc{W(eakening)}  and \textsc{C(ontraction)},
    \[
    \begin{prooftree}
        \hypo{s}
        \infer1[iW]{[i:A]s}
    \end{prooftree}\, , \qquad
   \begin{prooftree}
        \hypo{[i:A,A]s}
        \infer1[iC]{[i:A]s}
    \end{prooftree}\, .
    \]
    for $i \in \{1,2\}$. We use the same operational rules as in Example 14 and add negation:
    \begin{itemize}
        \item  $-=\langle\gamma 2A, \gamma 1A, 1, \mathbb{N}^0:\mathbb{N}^0\rangle$.
    \end{itemize}
    Famously, the lattice and group versions of conjunction, disjunction and implication are each equivalent in the fully Tarskian derivability relation of \textbf{LK}. However, if we measure their minimal inference behaviour, we obtain the same differentiated semantic clauses for them as we did in Example 14.

    Therefore, $\sqcap, \otimes, \sqcup, \oplus, \rightsquigarrow, \rightarrow, \bot, \top$ have the same meaning in \textbf{LL} as they have in $\textbf{LK}_{full}$. This reflects the core feature of I-bS, as discussed in \S2: I-bS allows connectives to share meanings across calculi. `$\otimes$' means the same in classical logic ($\textbf{LK}_{full}$) as it does in linear logic (\textbf{LL}). When one uses I-bS, changing logic does not necessarily mean changing meaning.

    We also noted in \S2 that I-bS enables us to differentiate between the use of a connective stemming from its \textit{meaning} and its use arising from interactions with other connectives or structural features of the derivability relation. Our examples demonstrate this feature: in the present example, the equivalence of group and lattice connectives in $\textbf{LK}_{full}$ is not a consequence of their meaning but rather of the fully Tarskian \textit{derivability relation} of classical logic. In Example 14, the added instances of connective use in \textbf{LL}-proofs with more than one connective do not follow from the connective's meaning but of the \textit{interaction} of different connectives with distinct meanings in linear logic.
    
    \textbf{LK}-negation $-$ provides another interesting case study for our results. Measuring its minimal inference behaviour, we find \textit{two} semantic clauses, $\neg_{min}$ and $\sim_{min}$ (see also Example 11). Thus, I-bS shows that there are two different negations in \textbf{LK} with distinct meanings. The fact that \textbf{LK} conflates the two connectives in the form of $-$ is a consequence of the classical derivability relation (where $[i:\neg A, j:\sim A]$), and not a semantic feature of negation. We will come back to negation in the following example.

    To re-emerge the semantic kernel of $\textbf{LK}_{full}$, we again departition, add \textsc{Max} and add remaining full structural rules ($i\textsc{W}/i\textsc{C}$) (Figure 8).
    \begin{figure}
    \centering
    \caption{I-bS for \textbf{LK}$_{full}$}
    \begin{tikzpicture}
        \draw (6.5,0) node[anchor=north]{\textbf{LK}$_{full}$};
        \draw (1,0) -- (12,0) -- (12,4) -- (1,4) -- (1,0);
        \draw[<-] (6.25,1) -- (6.25,3);
        \draw (6.2,3) node[anchor=east, fill=white]{reduction: };
        \draw (6.2,2.5) node[anchor=east, fill=white]{\small $\bullet$ partition};
        \draw (6.25,2) node[anchor=east, fill=white]{\small $\bullet$ minimise};
        \draw (6.25,1.5) node[anchor=east, fill=white]{\small $\bullet$---$i\textsc{W}/i\textsc{C}$};
        \draw[->] (6.75,1) -- (6.75,3);
        \draw (6.8,3) node[anchor=west, fill=white]{emergence:};
        \draw (6.8,2.5) node[anchor=west, fill=white]{\small $\bullet$ departition};
        \draw (6.8,2) node[anchor=west, fill=white]{\small $\bullet$ maximise (\textsc{Max})};
        \draw (6.8,1.5) node[anchor=west, fill=white]{\small $\bullet$ + $i\textsc{W}/i\textsc{C}$};
        \draw (6.5,3) node[anchor=south]{$\vdash_{\textbf{LK}_{full}}$};
        \draw (6.5,0.5) ellipse (5.45cm and 0.45cm);
        \draw (6.5,0.5) node{$\sqcap_{min} | \otimes_{min} | \sqcup_{min} | \oplus_{min} | \rightsquigarrow_{min} | \rightarrow_{min} | \bot_{min} | \top_{min} | \neg_{min} | \sim_{min}$};
    \end{tikzpicture}
\end{figure}
\end{example}
\begin{example}[\textbf{LJ}$_{full}$/\textbf{LDJ}$_{full}$]\

    \noindent If we restrict all \textbf{LK}$_{full}$-rules to the framework $\mathbb{N}^0:\{0,1\}$ and remove the rules for $\oplus$, $\rightsquigarrow$, and $2\textsc{C}$, we obtain full sentential intuitionistic logic \textbf{LJ}$_{full}$. Once again, group and lattice conjunction are equivalent in the full derivability relation. Applying I-bS, we find the same semantic clauses for \textbf{LJ}$_{full}$ as we did for \textbf{LL} (except that there are no clauses for $\oplus$, $\rightsquigarrow$). Therefore, \textbf{LJ}$_{full}$-connectives have the same meaning as their \textbf{LL}-counterparts and, \textit{a fortiori}, their \textbf{LK}$_{full}$-equivalents.
    
    However, there is one exception: $\neg_{min}$ is the \textit{only} connective corresponding to $-$ in \textbf{LJ}$_{full}$. This is because $\sim_{min}$'s minimal framework is incompatible with that of \textbf{LJ}$_{full}$. I-bS gives a semantic explanation for this: $\neg_{min}$ means the same in \textbf{LK}$_{full}$ and \textbf{LJ}$_{full}$. Although \textbf{LK}$_{full}$'s permissive framework conflates $\neg_{min}$ and $\sim_{min}$, they are semantically \textit{distinct} connectives. As a consequence, `negation' ($\neg_{min}$) in \textbf{LJ} only captures half of the meaning of `negation' ($\neg_{min}$/$\sim_{min}$) in \textbf{LK}.

    Intriguingly, \textbf{LDJ}-negation captures the other half. We obtain sentential dual-intuitionistic logic $\textbf{LDJ}_{full}$  \cite{urbas1996dual} from \textbf{LK}$_{full}$ if we restrict all the rules of the latter to the framework $\{0,1\}:\mathbb{N}^0$ and remove the rules for $\otimes$, $\rightarrow$, and $1\textsc{C}$. Here, group and lattice disjunction are equivalent. Computing the I-bS semantic clauses, we find that each connective in $\textbf{LDJ}_{full}$ shares semantic clause and, thus, meaning with its respective counterpart for $\textbf{LK}_{full}$ and \textbf{LL}, insofar that there is one. However, for negation, we now obtain $\sim_{min}$---the other half of the semantics of $\textbf{LK}_{full}$-negation. 

    A similar taxonomy emerges for the other connectives. In $\textbf{LK}$, `conjunction', `disjunction' and `implication' are each usually presented using a single operational rule pair. However, I-bS finds that each $\textbf{LK}$-rule corresponds to \textit{two} meaningful connectives:  a \textit{group} (`multiplicative') and a \textit{lattice} (`additive') version. The case of implication is fully parallel to that of negation. While only $\rightarrow_{min}$ has meaning in \textbf{LJ},  $\rightsquigarrow_{min}$ only has a semantic clause in \textbf{LDJ}. Hence, intuitionistic `if' and dual-intuitionistic `if' do not share meaning, but both carry half of the meaning of classical `if'. Disjunction and conjunction behave somewhat differently. While there are semantic clauses for both $\oplus_{min}$ and $\sqcup_{min}$ in \textbf{LJ}, only $\sqcup_{min}$ is meaningful in \textbf{LDJ}. Dually, $\otimes_{min}$ and $\sqcap_{min}$ have meaning in \textbf{LDJ}, but only $\sqcap_{min}$ does in \textbf{LJ}.

    In terms of this analysis, I-bS can give a semantic response to the question: `Does classical negation (conjunction, disjunction, implication) mean the same as intuitionistic negation (conjunction, disjunction, implication)?' Only \textit{half} of the classical disjunctions, negations and implications share meaning with the corresponding intuitionistic disjunction, negation and implication, namely $\sqcup_{min}$, $\neg_{min}$ and $\rightarrow_{min}$, respectively. Similarly, only \textit{half} of the conjunctions, negations and implications in \textbf{LK} share meaning with dual-intuitionistic conjunction, negation and implication, namely $\sqcap_{min}$, $\sim_{min}$ and $\rightsquigarrow_{min}$, respectively. At the same time, both intuitionistic conjunctions and both dual-intuitionistic disjunctions share meaning with their classical counterparts.

    Thus, we have validated I-bS's main promise: a precise analysis of the differences and interrelations of connective meanings across different logics. In Table 17, we illustrate this picture by showing which connectives are I-bS compatible with selected calculi, namely: classical  \textbf{LK} (framework: $\mathbb{N}^{0}:\mathbb{N}^{0}$), intuitionistic \textbf{LJ} (framework: $\mathbb{N}^{0}:\{0,1\}$), dual-intuitionistic \textbf{LDJ} (framework: $\{0,1\}:\mathbb{N}^{0}$), minimal \textbf{LM} (framework: $\mathbb{N}^{0}:\{1\}$) \cite{johansson1937minimalkalkul} and dual-minimal logic \textbf{LDM} (framework: $\{1\}:\mathbb{N}^{0}$),\footnote{\textbf{LDM} is defined as the dual of \textbf{LM}, in the way that \textbf{LDJ} is the dual of \textbf{LJ}.} as well as lattice logic \textbf{LaL} \cite{restall2005geometry} (framework: $\{0,1\}:\{0,1\}$) and the minimal version thereof, minimal lattice logic \textbf{LaLM} (framework: $\{1\}:\{1\}$).\footnote{\textbf{LaLM} has the same rules as \textbf{LaL}, modulo the displayed framework restriction.}

\begin{table}
    \centering
    \begin{tabular}{|c|c|c|c|c|c|c|c|}
  \hline
  & \textbf{LK} & \textbf{LJ} & \textbf{LDJ} & \textbf{LM} & \textbf{LDM} & \textbf{LaL} & \textbf{LaLM}
         \\ \hline
       $\bot_{min}$  & $\checkmark$ & $\checkmark$ & $\checkmark$ & \cellcolor{gray!25} & $\checkmark$ & $\checkmark$ & \cellcolor{gray!25} \\ \hline
       $\top_{min}$ & $\checkmark$ & $\checkmark$ & $\checkmark$ & $\checkmark$ & \cellcolor{gray!25} & $\checkmark$ & \cellcolor{gray!25} \\ \hline

       $\neg_{min}$ & $\checkmark$ & $\checkmark$ & \cellcolor{gray!25} & \cellcolor{gray!25} & \cellcolor{gray!25} & \cellcolor{gray!25} & \cellcolor{gray!25} \\ \hline
       $\sim_{min}$ & $\checkmark$ & \cellcolor{gray!25} & $\checkmark$ & \cellcolor{gray!25} & \cellcolor{gray!25} & \cellcolor{gray!25} & \cellcolor{gray!25} \\ \hline
       $\otimes_{min}$ & $\checkmark$ & $\checkmark$ & \cellcolor{gray!25} & $\checkmark$ & \cellcolor{gray!25} & \cellcolor{gray!25} & \cellcolor{gray!25} \\ \hline
        $\neg^{cv}_{min}$ & $\checkmark$ & $\checkmark$ & $\checkmark$ & $\checkmark$ & $\checkmark$ & $\checkmark$ & $\checkmark$ \\ \hline
        $\sqcap_{min}$ & $\checkmark$ & $\checkmark$ & $\checkmark$ & $\checkmark$ & $\checkmark$ & $\checkmark$ & $\checkmark$ \\\hline
        $\oplus_{min}$ & $\checkmark$ & \cellcolor{gray!25} & $\checkmark$ & \cellcolor{gray!25} & $\checkmark$ & \cellcolor{gray!25} & \cellcolor{gray!25} \\ \hline
        $\sqcup_{min}$ & $\checkmark$ & $\checkmark$ & $\checkmark$ & $\checkmark$ & $\checkmark$ & $\checkmark$ & $\checkmark$ \\ \hline 
        $\rightarrow_{min}$ & $\checkmark$ & $\checkmark$ & \cellcolor{gray!25} & $\checkmark$ & \cellcolor{gray!25} & \cellcolor{gray!25} & \cellcolor{gray!25} \\ \hline
        $\rightsquigarrow_{min}$ & $\checkmark$ & \cellcolor{gray!25} & $\checkmark$ & \cellcolor{gray!25} & \cellcolor{gray!25} & \cellcolor{gray!25} & \cellcolor{gray!25} \\ \hline
        $\leftarrow_{min}$ & $\checkmark$ & $\checkmark$ & \cellcolor{gray!25} & $\checkmark$ & \cellcolor{gray!25} & \cellcolor{gray!25} & \cellcolor{gray!25} \\ \hline
        $\leftsquigarrow_{min}$ & $\checkmark$ & \cellcolor{gray!25} & $\checkmark$ & \cellcolor{gray!25} & \cellcolor{gray!25} & \cellcolor{gray!25} & \cellcolor{gray!25} \\ \hline
        
       $\otimes_{min}^{cv}$ & $\checkmark$ & \cellcolor{gray!25} & $\checkmark$ & \cellcolor{gray!25} & \cellcolor{gray!25} & \cellcolor{gray!25} & \cellcolor{gray!25} \\ \hline
    $\sqcap^{cv}_{min}$ & $\checkmark$ & $\checkmark$ & \cellcolor{gray!25} & \cellcolor{gray!25} & \cellcolor{gray!25} & \cellcolor{gray!25} & \cellcolor{gray!25} \\ \hline
       $\oplus_{min}^{cv}$ & $\checkmark$ & $\checkmark$ & \cellcolor{gray!25} & \cellcolor{gray!25} & \cellcolor{gray!25} & \cellcolor{gray!25} & \cellcolor{gray!25} \\ \hline
       $\sqcup^{cv}_{min}$ & $\checkmark$ & \cellcolor{gray!25} & $\checkmark$ & \cellcolor{gray!25} & \cellcolor{gray!25} & \cellcolor{gray!25} & \cellcolor{gray!25} \\ \hline
    $\rightarrow_{min}^{iv}$ & $\checkmark$ & \cellcolor{gray!25} & $\checkmark$ & \cellcolor{gray!25} & $\checkmark$ & \cellcolor{gray!25} & \cellcolor{gray!25} \\ \hline
       $\rightsquigarrow^{iv}_{min}$ & $\checkmark$ & $\checkmark$ & \cellcolor{gray!25} & \cellcolor{gray!25} & \cellcolor{gray!25} & \cellcolor{gray!25} & \cellcolor{gray!25} \\ \hline
       $\leftarrow_{min}^{iv}$ & $\checkmark$ & \cellcolor{gray!25} & $\checkmark$ & \cellcolor{gray!25} & $\checkmark$ & \cellcolor{gray!25} & \cellcolor{gray!25} \\ \hline
       $\leftsquigarrow^{iv}_{min}$ & $\checkmark$ & $\checkmark$ & \cellcolor{gray!25} & \cellcolor{gray!25} & \cellcolor{gray!25} & \cellcolor{gray!25} & \cellcolor{gray!25} \\ \hline
    \end{tabular}
    \caption{Overview of Semantic Compatibilities among Selected Calculi}
\end{table}

    Before reflecting on the \textit{exhaustivity} of our results, let us finish giving I-bS for $\textbf{LJ}_{full}$ and  $\textbf{LDJ}_{full}$. To re-emerge the semantic kernel, we again departition, add structural rules, and apply suitable maximisation rules. To complete the semantic picture for $\textbf{LJ}_{full}$ (Figure 9), we apply the rule \textsc{Maxi} to any semantic clause $\#_{min}=\langle\texttt{SP}(\#_{min}), a, L:R \rangle$. To get the scheme for $\textbf{LDJ}_{full}$ (Figure 10), we apply \textsc{Maxdi} instead.
    \[
    \begin{prooftree}
        \hypo{\langle\texttt{SP}(\#_{min}), a, L:R \rangle}
        \infer1[\textsc{Maxi}]{\langle\texttt{SP}(\#_{min}), a, \mathbb{N}^0:\{0,1\}\rangle}
    \end{prooftree}\, ,
    \qquad 
    \begin{prooftree}
        \hypo{\langle\texttt{SP}(\#_{min}), a, L:R \rangle}
        \infer1[\textsc{Maxdi}]{\langle\texttt{SP}(\#_{min}), a, \{0,1\}:\mathbb{N}^0\rangle}
    \end{prooftree}\, .
    \]
       \begin{figure}
    \centering
    \caption{I-bS for \textbf{LJ}$_{full}$}
    \begin{tikzpicture}
        \draw (6.5,0) node[anchor=north]{\textbf{LJ}$_{full}$};
        \draw (2.5,0) -- (10.5,0) -- (10.5,4) -- (2.5,4) -- (2.5,0);
        \draw[<-] (6.25,1) -- (6.25,3);
        \draw (6.2,3) node[anchor=east, fill=white]{reduction: };
        \draw (6.2,2.5) node[anchor=east, fill=white]{\small $\bullet$ partition};
        \draw (6.25,2) node[anchor=east, fill=white]{\small $\bullet$ minimise};
        \draw (6.25,1.5) node[anchor=east, fill=white]{\small $\bullet$ - $i\textsc{W}/2\textsc{C}$};
        \draw[->] (6.75,1) -- (6.75,3);
        \draw (6.8,3) node[anchor=west, fill=white]{emergence:};
        \draw (6.8,2.5) node[anchor=west, fill=white]{\small $\bullet$ departition};
        \draw (6.8,2) node[anchor=west, fill=white]{\small $\bullet$ maximise (\textsc{Maxi})};
        \draw (6.8,1.5) node[anchor=west, fill=white]{\small $\bullet$ + $i\textsc{W}/2\textsc{C}$};
        \draw (6.5,3) node[anchor=south]{$\vdash_{\textbf{LJ}_{full}}$};
        \draw (6.5,0.5) ellipse (3.75cm and 0.45cm);
        \draw (6.5,0.5) node{$\sqcap_{min} | \otimes_{min} | \sqcup_{min} | \rightarrow_{min} | \bot_{min} | \top_{min} | \neg_{min}$};
    \end{tikzpicture}
    \end{figure}
    
    \begin{figure}
    \centering
    \caption{I-bS for \textbf{LDJ}$_{full}$}
    \begin{tikzpicture}
        \draw (6.5,0) node[anchor=north]{\textbf{LDJ}$_{full}$};
        \draw (2.4,0) -- (10.6,0) -- (10.6,4) -- (2.4,4) -- (2.4,0);
        \draw[<-] (6.25,1) -- (6.25,3);
        \draw (6.2,3) node[anchor=east, fill=white]{reduction: };
        \draw (6.2,2.5) node[anchor=east, fill=white]{\small $\bullet$ partition};
        \draw (6.25,2) node[anchor=east, fill=white]{\small $\bullet$ minimise};
        \draw (6.25,1.5) node[anchor=east, fill=white]{\small $\bullet$ - $i\textsc{W}/1\textsc{C}$};
        \draw[->] (6.75,1) -- (6.75,3);
        \draw (6.8,3) node[anchor=west, fill=white]{emergence:};
        \draw (6.8,2.5) node[anchor=west, fill=white]{\small $\bullet$ departition};
        \draw (6.8,2) node[anchor=west, fill=white]{\small $\bullet$ maximise (\textsc{Maxdi})};
        \draw (6.8,1.5) node[anchor=west, fill=white]{\small $\bullet$ + $i\textsc{W}/1\textsc{C}$};
        \draw (6.5,3) node[anchor=south]{$\vdash_{\textbf{LDJ}_{full}}$};
        \draw (6.5,0.5) ellipse (3.75cm and 0.45cm);
        \draw (6.5,0.5) node{$\sqcap_{min} | \sqcup_{min} | \oplus_{min} | \rightsquigarrow_{min} | \bot_{min}| \top_{min} | \sim_{min}$};
    \end{tikzpicture}
\end{figure}
\end{example}

\subsection{Exclusivity of meaning}

We have seen how the positive results of this study deliver on some of the key promises of I-bS. We have provided a frugal ontology of $21$ minimal connectives meanings. Using the corresponding semantic clauses, we have given I-bS for a range of calculi, including ones for linear, classical, (dual-)intuitionistic, minimal, and lattice logics. We have also shown how I-bS can be conceptualised as a reductive substructural kernel within the full derivability relation of a logic. Based on our results, we have provided semantic explanations for the use differences of connectives across derivability relations. Most importantly, we have demonstrated how I-bS provides connective meanings without recourse to a specific calculus.

Thus, we have \textit{validated} that the positive results do what we wanted our semantics to do. Let us now argue that, moreover, the negative part of our findings \textit{verifies} I-bS: starting from a vast dataset of $10,816$ connectives, I-bS finds exactly and only $21$ fixed points, which correspond exactly and only to the connectives in the standard calculi discussed in the previous section (and their converses/inverses) and provide exactly and only the meaning-theoretic explanations we were seeking. In this sense, we have proven the \textit{correctness} of I-bS as a theory of meaning.

We designed this study actively so to minimise bias in favour of any particular logic or semantic theory. The data for our measurement of inference behaviour are generated from only on four assumptions:
\begin{itemize}
    \item \textit{Operationality:} separate, symmetrical and explicit (Definition 3) operational rules adequately specify a connective.
    \item \textit{Formulability:} Table 3 covers all operational rules that can be formulated (in two-dimensional sequent calculi using $\leq 2$ premiss sequents and $\leq 2$ active formulae),
    \item \textit{Definability:} conservativity and uniqueness (Definitions 14-16) capture the conditions under which an operational rule can be defined.
    \item \textit{Substructurality:} $\textsc{Id}$ and $\textsc{Cut} cd$ form an adequate substructural derivability relation.
\end{itemize}
We argue that none of these notions introduces an impermissible bias in favour of I-bS. \textit{Operationality} captures how connectives are usually defined in sequent calculi, including in all calculi mentioned thus far. Even when some of the norms in Definition 3 are violated---e.g. explicitness for implication in \textbf{G3ip} \cite{negri2001structural} or \textbf{BiInt} \cite{pinto2011relating}---the resulting `deviance' of the connective is typically assessed against a canonical benchmark of separate, explicit, and symmetrical rules. The definitional character of such rules is tightly linked to the origins of contemporary Gentzen-style calculi \cite{gentzen1935auntersuchungen, gentzen1935buntersuchungen} and forms a focal point of the literature on rule-based meaning-determination \cite{wansing2000idea, zucker1978adequacy, prawitz1979proofs}.

We take \textit{formulability} to be justified \textit{eo ipso}. In \S3, we have already discussed the pragmatic restrictions on premiss sequents and active formulae, as has our use of standard two-dimensional sequent calculi. The latter design choice, someone might object, runs the risk of introducing some latent bias in favour of classical logic. However, no such bias is reflected in our results as we account for the meaning of classical, intuitionistic and other connectives on equal grounds. I-bS alone gives no reason to favour one logic over another. In contrast, alternative forms of P-tS such as B-eS or P-tV have traditionally been argued to favour intuitionistic logic \cite{sep-proof-theoretic-semantics, barroso2025sequent}.

While the semantic significance of \textit{definability} might be debatable, conservative extensions and uniqueness up to isomorphism are established assurance concepts in mathematics. As such, they are standard sanity checks for the non-expansive deductive power and determinacy of deductive systems, including a formal semantic taxonomy.

Although we justified the minimality of $\{\textsc{Id}, \textsc{Cut}\}$ on inner-theoretic, exceptionalist grounds (\S 4.3), the \textit{substructurality} of the rule pair itself is not inherently biased in favour of I-bS. Not only are $\textsc{Id}$ and $\textsc{Cut}$ part of most standard sequent calculi, including all calculi discussed thus far, but they also form the derivability relation of linear logic, the weakest element of the standard substructural hierarchy (in the sense of the least structural resources and the most distinctions made) \cite{restall2002introduction}.\footnote{Disregarding \textsc{Exchange}-free ordered logic (see also \S 4.3).}

Therefore, we did not introduce any unwanted bias in favour of I-bS when we generated the the $10,816$ candidate connectives and their corresponding definability proofs within which we measured inference behaviour. 

Given this background, we find it \textit{genuinely surprising} that as we apply I-bS to these data, we find exactly and only $21$ semantic clauses. Not only do all of these correspond to the connectives of standard logics such as classical, intuitionistic and dual-intuitionistic logic (or their converses/inverses), but I-bS classifies them as the \textit{minimally meaningful} connectives of sentential logics in two-dimensional derivability relations.\footnote{The dimensionality of a calculus may be associated with the number of truth values. For a compact overview, see \cite{zach2005manysided}.} 

We believe that this result shows that I-bS \textit{adequately} and \textit{correctly} captures at least \textit{some} core feature of the connectives. If one trusts the foundational ideas of P-tS which I-bS seeks to implement (\S\S2, 4.1) as well as its technical toolbox (\S4) and semantic narrative (\S 8.1, \cite{nagler2026measuring}), then this study validates and verifies I-bS as a proof-theoretic semantic approach. It also motivates further research into the power of the I-bS approach, for instance through multi-dimensional extensions or a theory for quantification or modal operators. Moreover, some conjectures in this study are yet to be verified, and replication studies with varied parameters would be worthwhile, such as using natural deduction formalisations, an alternative minimal derivability (e.g. context-sharing \textsc{Cut}) relation or another notion of harmony (e.g. invertibility).

\section*{Summary}
In this paper, we have formalised inference-behaviour semantics (I-bS) and validated and verified it in a broad proof-theoretic setting.

We applied I-bS to all connectives whose operational rules are explicit, separate and symmetrical, and that use at most two premiss sequents and at most two active formulae in a two-dimensional sequent calculus. As a result, we identified $21$ minimally meaningful connectives: bottom and top, intuitionistic and dual-intuitionistic negation, and the additive and multiplicative versions of conjunction, disjunction, right- and left-implication, together with their converses and inverses.

We then explored semantic compatibilities across logics. All $21$ connectives are compatible with classical logic and the connective-only fragment of linear logic. By contrast, intuitionistic and dual-intuitionistic logic are each compatible with only one of the two forms of negation and implication available in classical logic. Moreover, intuitionistic logic is compatible with both conjunctions but only one disjunction, whereas dual-intuitionistic logic is compatible with both disjunctions but only one conjunction.
\subsection*{Acknowledgements}
I sincerely thank Greg Restall, Luca Incurvati, and Franz Berto for their invaluable mentorship. Special thanks to Sara Ayhan, Victor Barroso-Nascimento, Alexander Gheorghiu, Timo Lang, Dale Miller, Thomas Piecha, Elaine Pimentel, David Pym, Sebastian Speitel, Richard Zach, the \textit{Proof Society} workshop 2024, the \textit{Proof-Theoretic Semantics Seminar Series}, the \textit{Logic Colloquium} 2025, the \textit{5th Symposium on Proof-theoretic Semantics}, the Amsterdam \textit{Meaning, Language and Cognition Seminar}, and the St Andrews \textit{Metaphysics and Logic Group} for their helpful comments and feedback on earlier versions of this material. This research was supported by the Linda and Gordon Bonnyman Charitable Trust via the University of St Andrews.
\printbibliography
\end{document}